\providecommand{\U}[1]{\protect\rule{.1in}{.1in}}
\newtheorem{theorem}{Theorem}
\newtheorem{proposition}[theorem]{Proposition}
\newtheorem{remark}[theorem]{Remark}
\newenvironment{proof}[1][Proof]{\noindent\textbf{#1.} }{\ \rule{0.5em}{0.5em}}
\begin{document}
\preprint{ }
\title[Quantifying the performance of bidirectional quantum teleportation]{Quantifying the performance of bidirectional quantum teleportation}
\author{Aliza U.~Siddiqui}
\affiliation{Division of Computer Science and Engineering, Louisiana State University,
Baton Rouge, Louisiana 70803, USA}
\affiliation{Hearne Institute for Theoretical Physics, Department of Physics and Astronomy,
and Center for Computation and Technology, Louisiana State University, Baton
Rouge, Louisiana 70803, USA}
\author{Mark M.~Wilde}
\affiliation{Hearne Institute for Theoretical Physics, Department of Physics and Astronomy,
and Center for Computation and Technology, Louisiana State University, Baton
Rouge, Louisiana 70803, USA}
\affiliation{Stanford Institute for Theoretical Physics, Stanford University, Stanford,
California 94305, USA}

\begin{abstract}
Bidirectional teleportation is a fundamental protocol for exchanging quantum
information between two parties by means of a shared resource state and local
operations and classical communication (LOCC). In this paper, we develop two
seemingly different ways of quantifying the simulation error of unideal
bidirectional teleportation by means of the normalized diamond distance and
the channel infidelity, and we prove that they are equivalent. By relaxing the
set of operations allowed from LOCC\ to those that completely preserve the
positivity of the partial transpose, we obtain semi-definite programming lower
bounds on the simulation error of unideal bidirectional teleportation. We
evaluate these bounds for several key examples: when there is no resource
state at all and for isotropic and Werner states, in each case finding an
analytical solution. The first aforementioned example establishes a benchmark
for classical versus quantum bidirectional teleportation. Another example
consists of a resource state resulting from the action of a generalized
amplitude damping channel on two Bell states, for which we find an analytical
expression for the simulation error that is in agreement with numerical
estimates (up to numerical precision). We then evaluate the performance of
some schemes for bidirectional teleportation due to [Kiktenko \textit{et al}.,
Phys.~Rev.~A \textbf{93}, 062305 (2016)] and find that they are suboptimal and
do not go beyond the aforementioned classical limit for bidirectional
teleportation. We offer a scheme alternative to theirs that is provably
optimal. Finally, we generalize the whole development to the setting of
bidirectional controlled teleportation, in which there is an additional
assisting party who helps with the exchange of quantum information, and we
establish semi-definite programming lower bounds on the simulation error for
this task. More generally, we provide semi-definite programming lower bounds
on the performance of bipartite and multipartite channel simulation using a
shared resource state and LOCC.

\end{abstract}
\date{\today}
\maketitle
\tableofcontents

\section{Introduction}

Quantum teleportation is one of the most remarkable protocols in quantum
information \cite{BBC+93}, and it is one of the earliest examples of the
fascinating possibilities that local operations and classical communication
processing \cite{BDSW96}\ on entangled states offers. Due to the fragile
nature of qubits, the protocol was established as an alternative to direct
transmission of a quantum state between two parties, as depicted in
Figure~\ref{fig: Ideal QT}. Teleportation is now used routinely as a basic
primitive in quantum information science, with applications in quantum
communication, quantum error correction, quantum networking, etc. See
Figure~\ref{fig: Quantum Teleportation Circuit} for a depiction of the
teleportation protocol.

Teleportation has been extended in various ways, and one interesting way of
doing so in a basic quantum network is via the method of bidirectional
teleportation. In the ideal version of this protocol, Alice and Bob share two
ebits of entanglement and teleport qubits to each other in opposite
directions. The ideal protocol realizes a perfect swap channel, as shown in
Figure~\ref{fig: Ideal BQT Channel}. This possibility was observed early on
\cite{V94}, and it was subsequently considered in
\cite{PhysRevA.63.042303,PhysRevA.65.042316}. More recently, there has been a
flurry of research on the topic, with various proposals for bidirectional
teleportation \cite{Fu2014,Hassanpour2016}. There has been even more interest
recently in a variation called bidirectional controlled teleportation, using
five-qubit \cite{ZZQS13,Shukla2013,Li2013a,Li2013,Chen2014}, six-qubit
\cite{Yan2013,Sun2013,Duan2014,Li2016a,ZXL19}, seven-qubit
\cite{Duan2014a,Hong2016,Sang2016}, eight-qubit
\cite{Zhang2015,SadeghiZadeh2017}, and nine-qubit\ \cite{Li2016}\ entangled
resource states (see also \cite{Thapliyal2015}). Bidirectional controlled
teleportation is a three-party protocol in which three parties, typically
called Alice, Bob, and Charlie, share an entangled resource state, and they
use local operations and classical communication to exchange qubits between
Alice and Bob. See also \cite{Gou2017} for other variations of bidirectional teleportation.

The applications of bidirectional teleportation align with those of standard,
unidirectional teleportation, but they apply in a basic quantum network
setting in which two parties would like to exchange quantum information.
Although the ideal version of bidirectional teleportation is manifestly a
trivial extension of the original protocol in which it is simply conducted
twice (but in opposite directions), the situation becomes less trivial and
more relevant to experimental practice when the resource state shared by the
two parties deviates from the ideal resource of two maximally entangled
states. Indeed, much of the prior work cited above focuses on precisely this
kind of case, when the resource state is different from two maximally
entangled states, either by being a different pure state, a mixed state, or a
state with insufficient entanglement to accomplish the task. These kinds of
investigations are important for understanding ways to simulate the ideal
protocol approximately in an experimental setting. More generally, a way of
framing bidirectional teleportation is that Alice and Bob share a resource
state $\rho_{AB}$ and they are allowed local operations and classical
communication (LOCC) for free, with the goal of simulating an ideal swap
channel, as depicted in Figure~\ref{fig: Unideal BQT LOCC}.

\begin{figure}[ptb]
\centering
\includegraphics[scale=0.5]{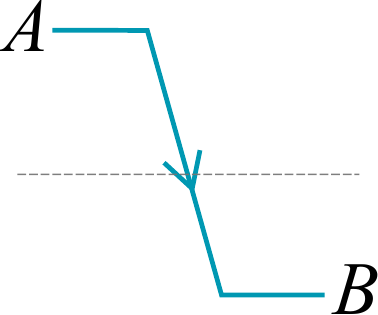}\caption{Ideal
unidirectional quantum channel from one party, Alice, to another party Bob.}%
\label{fig: Ideal QT}%
\end{figure}

\begin{figure}[ptb]
\begin{center}
\includegraphics[scale=0.30]{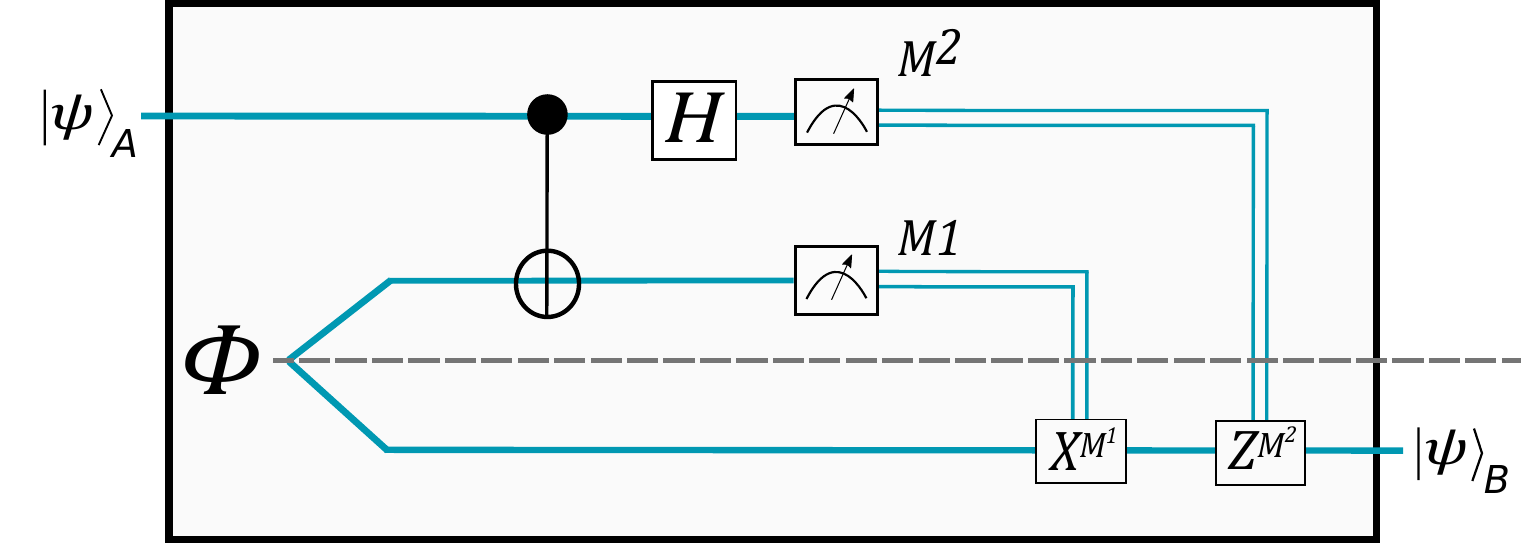}
\end{center}
\caption{Due to the fragile nature of quantum bits, the unidirectional quantum
teleportation protocol was devised as a method for simulating an ideal
unidirectional quantum channel, i.e., to transmit quantum information from one
party Alice, to another party Bob.}%
\label{fig: Quantum Teleportation Circuit}%
\end{figure}

In spite of the many works listed above on the topic of bidirectional
teleportation, what appears to be missing is a systematic method for
quantifying its performance in the case that it does not operate perfectly.
There is a need for this because any experimental implementation of
bidirectional teleportation will necessarily be imperfect. Indeed, entangled
states generated in experimental protocols such as spontaneous parametric
down-conversion are only approximations to ideal maximally entangled states
\cite{Cout18}. Our aim here is to fill this void.

The contributions of our paper are as follows:

\begin{enumerate}
\item After reviewing ideal bidirectional teleportation in
Section~\ref{sec:ideal-BQT}\ and recognizing that it implements a unitary swap
channel between Alice and Bob (see Figure~\ref{fig: SWAP Gate}), we define two
seemingly different ways to quantify the performance of unideal bidirectional
teleportation by means of the normalized diamond distance and the channel
infidelity (Sections~\ref{sec:q-error-NDD} and \ref{sec:q-error-CIF}). We
provide definitions of channel infidelity and diamond distance here, but they
can also be found in Sections~3.5.2 and 3.5.3 of \cite{KW20book},
respectively. Even though these performance measures are generally different,
we prove that they lead to the same values for the case of simulating the swap
channel (Section~\ref{sec:equality-sim-errors-LOCC-BQT}). More generally, we
also discuss how these measures can be employed for quantifying the
performance of bipartite channel simulation by means of a shared quantum state
and local operations and classical communication (LOCC)
(Section~\ref{sec:LOCC-sim-gen-bi}).

\begin{figure}[ptb]
\par
\begin{center}
\includegraphics[scale=0.40]{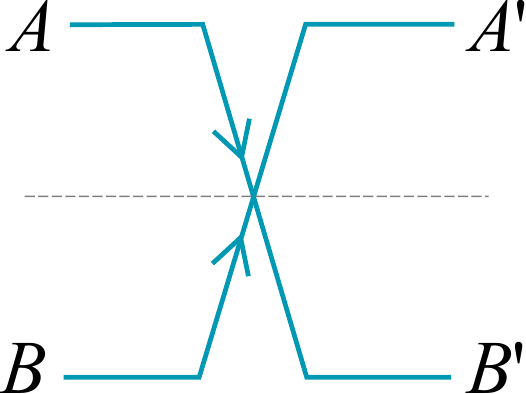}
\end{center}
\caption{Ideal swap channel between two parties, Alice and Bob, realized by
ideal bidirectional teleportation. This is a two-party generalization of the
ideal unidirectional channel depicted in Figure~\ref{fig: Ideal QT}.}%
\label{fig: Ideal BQT Channel}%
\end{figure}

\begin{figure}[ptb]
\par
\begin{center}
\includegraphics[scale=0.40]{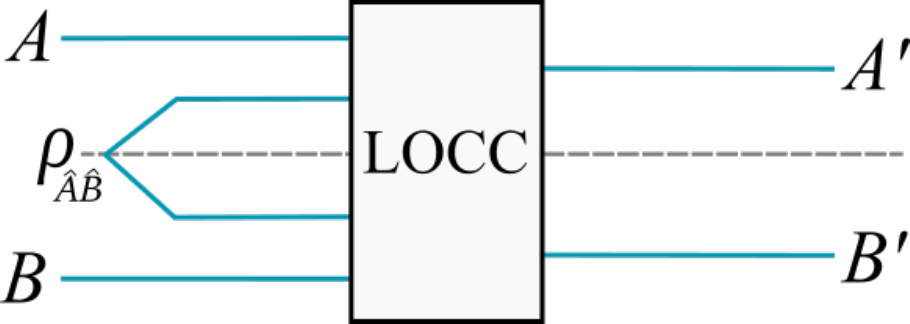}
\end{center}
\caption{The figure depicts a general framework for understanding the
simulation of bipartite quantum channels, realized by combining an LOCC
protocol and a quantum resource state $\rho_{\hat{A}\hat{B}}$. In experimental
implementations, the resource state $\rho_{\hat{A}\hat{B}}$ is imperfect. An
example of a target bipartite channel to simulate is the ideal bidirectional
teleportation, which is equivalent to a swap channel, as depicted in
Figure~\ref{fig: Ideal BQT Channel}.}%
\label{fig: Unideal BQT LOCC}%
\end{figure}

\item Optimizing these performance measures is in general challenging because
such optimizations are conducted over the set of LOCC channels, and it is
known that optimizing over LOCC channels is difficult. We then relax the
optimization problem such that it is conducted over the larger set of channels
that completely preserve the positivity of the partial transpose (C-PPT-P
channels), as shown in Figure~\ref{fig:Relax LOCC}
(Section~\ref{sec:SDP-lower-bound-general}). For both error measures
(normalized diamond distance and channel infidelity), we show how the relaxed
optimization problems can be evaluated by means of semi-definite programs
(SDPs). See Sections~\ref{sec:SDP-lower-bound-general}\ and
\ref{sec:SDP-ch-infid}. SDPs are optimization problems in which the cost or
objective function is linear, along with constraints and optimization
variables that are semi-definite. This optimization technique is widely used
in quantum information theory due to the semi-definite constraints that apply
to the basic constituents of quantum mechanics (including states and
channels). More information on semi-definite programs and C-PPT-P channels can
be found in Sections~2.4 and 3.2.12 of \cite{KW20book}, respectively.

\begin{figure}[ptb]
\begin{center}
\includegraphics[scale=0.40]{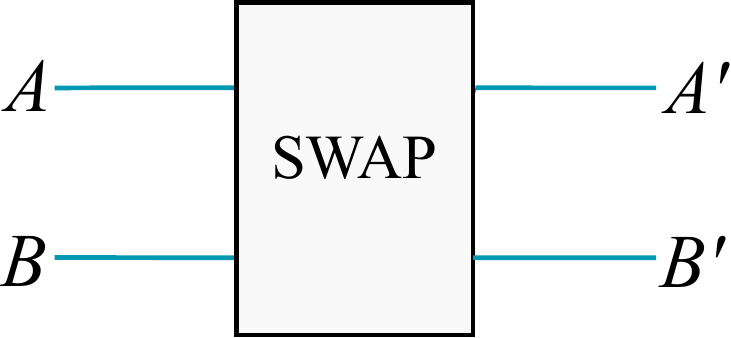}
\end{center}
\caption{Ideal bidirectional quantum teleportation realizes a perfect SWAP
channel between two parties. This swap channel is the same as depicted in
Figure~\ref{fig: Ideal BQT Channel}, but throughout the paper, we often think
of the swap operation as a bipartite channel and depict it as shown above.}%
\label{fig: SWAP Gate}%
\end{figure}

\begin{figure}[ptb]
\begin{center}
\includegraphics[scale=0.30]{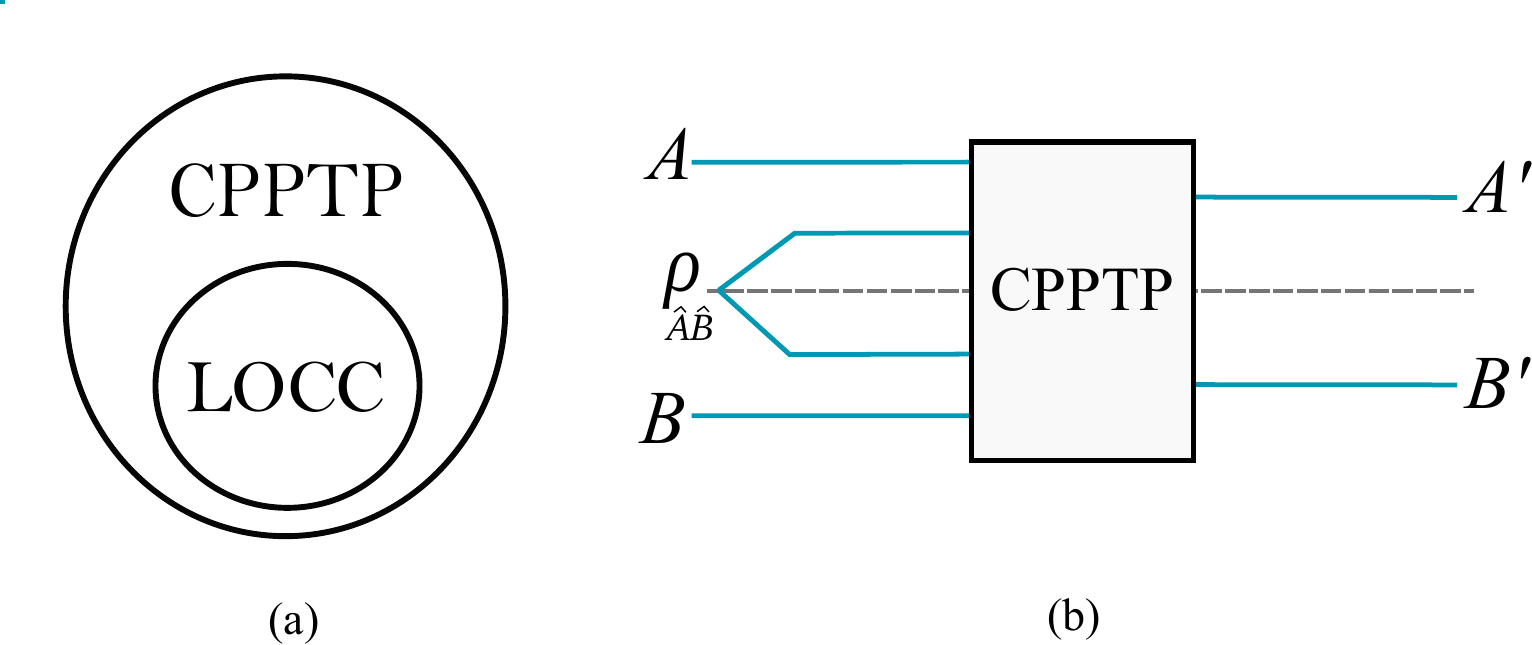}
\end{center}
\caption{Optimizing over all LOCC channels is known to be computationally
intensive. (a) We utilize the fact that LOCC channels are a subset of channels
that completely preserve the positivity of the partial transpose (C-PPT-P).
(b) Instead of optimizing over all LOCC channels and the ensuing protocols
depicted in Figure~\ref{fig: Unideal BQT LOCC}, we relax the optimization to
the larger set of C-PPT-P channels. Conducting the optimization problem over
this larger set can be solved in time polynomial in the dimension of the
resource state and the swap channel to be simulated.}%
\label{fig:Relax LOCC}%
\end{figure}


\item For the specific case of bidirectional teleportation, we show how
symmetries of the unitary swap channel, some of which are depicted in
Figure~\ref{fig: Symmetry of SWAP}, lead to a much simpler semi-definite
program for quantifying performance (Section~\ref{sec:SDP-lower-bound-BTP}).
The resulting semi-definite program has significantly reduced complexity that
is polynomial in the dimension of the resource state being used for
bidirectional teleportation. We also prove that the error measures based on
normalized diamond distance and channel infidelity coincide in this case
(Section~\ref{sec:SDP-ch-infid}).

\begin{figure}[ptb]
\begin{center}
\includegraphics[scale=0.30]{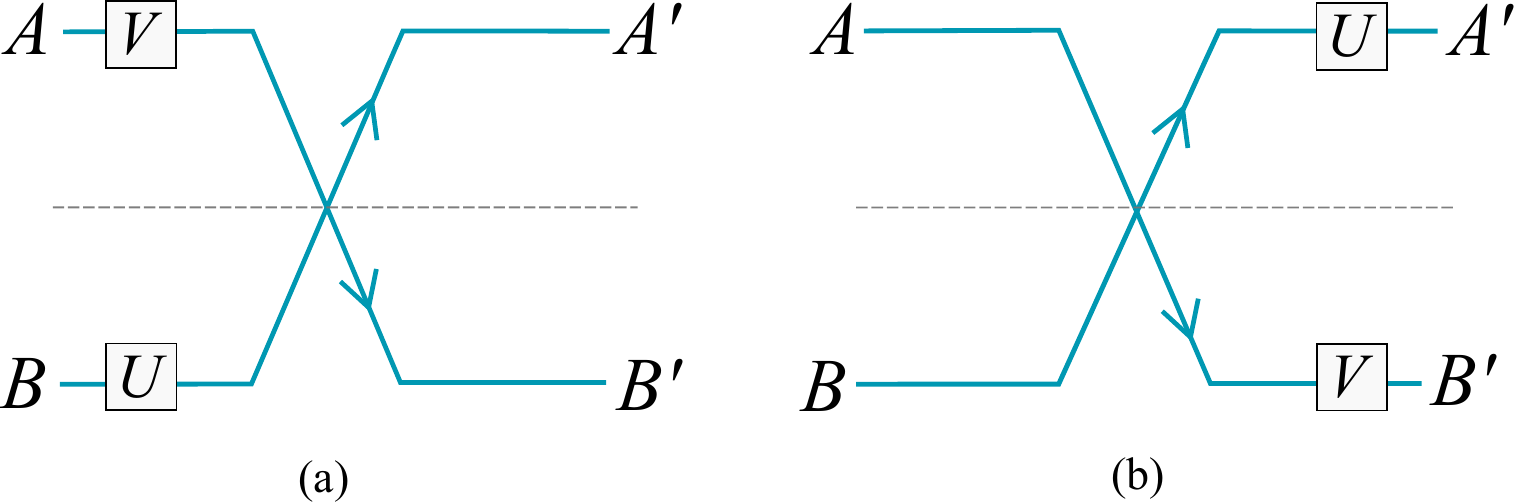}
\end{center}
\caption{Some symmetries of the SWAP channel are depicted in this figure. By
exploiting this unique property, (a) and (b) above are equivalent and the
optimization problem for quantifying the performance of unideal bidirectional
teleportation can be greatly simplified. (a) Alice and Bob perform unitary
operations $V$ and $U$, respectively, and then a SWAP operation to exchange
information. (b) Alice and Bob first perform a SWAP operation and then unitary
operations $U$ and $V$, respectively, on their individual qubits.}%
\label{fig: Symmetry of SWAP}%
\end{figure}



\item We then consider some specific examples of resource states for
bidirectional teleportation (Section~\ref{sec:examples}). These include the
case when no resource state is available, as well as isotropic and Werner
states (Sections~\ref{sec:no-res-state}, \ref{sec:isotropic-example}, and
\ref{sec:werner-example}, respectively). We also consider a resource state
resulting from the action of a generalized amplitude damping channel on two
Bell states (Section~\ref{sec:gadc}). In the first three cases, we reduce the
semi-definite program for quantifying performance to a linear program, which
we then solve analytically. The case when no resource state is available
provides a benchmark for classical versus quantum bidirectional teleportation,
and it is thus important for assessing any experimental implementation of
bidirectional teleportation. Specifically, we prove that, when no resource
state is available, the simulation error when simulating bidirectional
teleportation of $d$-dimensional systems cannot be smaller than $1-\frac
{1}{d^{2}}$. For some particular values of the parameters for isotropic and
Werner states, we also prove that the simulation error when using LOCC
channels is equal to the simulation error when using channels that completely
preserve the positivity of the partial transpose. For the generalized
amplitude damping channel example, we find an analytical expression for the
simulation error, which is correct up to numerical precision (thus we expect
there to be an analytical proof).

\item Next, in Section~\ref{sec:KPF16}, we use our previous results to assess
the performance of previous proposals for bidirectional teleportation from
\cite{KPF16}, which are for the case when the resource state available is a
single ebit, instead of the required two ebits that are necessary for a
perfect implementation of bidirectional teleportation. We find that the
proposals from \cite{KPF16} are suboptimal and do not go beyond the classical
limit for bidirectional teleportation. We also provide a simple protocol that
is provably optimal.

\item We finally generalize the whole development to quantify the performance
of multipartite channel simulation, when using LOCC channels and a shared
resource state for channel simulation (Section~\ref{sec:MP-gen}). We then
analyze the specific case of bidirectional controlled teleportation and
provide a semi-definite program that can be used to assess the performance of
this latter protocol.
\end{enumerate}

We note here that our general approach is similar in spirit to the approach
taken in \cite{IP05,II15}, but the applications we consider here are
different. We begin in the next section with some preliminary material and set
some notation. The rest of our paper proceeds in the order presented above,
and we finally conclude in Section~\ref{sec:conclusion}\ with a summary and a
list of open questions for future work.

\section{Preliminaries}

\label{sec:prelim}In this section, we review some preliminary concepts in
quantum information and set some notation used throughout the rest of our
paper. More background on quantum information is available in
\cite{H17,H13book,Watrous2018,W17,KW20book}.

A quantum state is a positive semi-definite operator with trace equal to one.
A bipartite quantum state $\rho_{AB}$ acts on a tensor-product Hilbert space
$\mathcal{H}_{A}\otimes\mathcal{H}_{B}$, and we denote the dimension of system
$A$ by $d_{A}$ and that of $B$ by $d_{B}$. A bipartite state $\rho_{AB}$ is
entangled if it cannot be written in the following form:
\begin{equation}
\sum_{x} p(x) \sigma^{x}_{A} \otimes\tau^{x}_{B},
\end{equation}
where $\{p(x)\}_{x}$ is a probability distribution and $\{\sigma^{x}_{A}%
\}_{x}$ and $\{\tau^{x}_{B}\}_{x}$ are sets of states. For many discussions
and applications of entanglement, see the review \cite{Horodecki2009a}.

A quantum channel is a completely positive and trace-preserving map. We denote
the unnormalized maximally entangled operator by%
\begin{align}
\Gamma_{RC}  &  \coloneqq|\Gamma\rangle\!\langle\Gamma|_{RC},\\
|\Gamma\rangle_{RC}  &  \coloneqq\sum_{i=0}^{d-1}|i\rangle_{R}|i\rangle_{C},
\end{align}
where $R\simeq C$ with dimension $d$ and $\{|i\rangle_{R}\}_{i=0}^{d-1}$ and
$\{|i\rangle_{C}\}_{i=0}^{d-1}$ are orthonormal bases. The notation $R\simeq
C$\ means that the systems $R$ and $C$ are isomorphic. The maximally entangled
state is denoted by%
\begin{equation}
\Phi_{RC}\coloneqq\frac{1}{d}\Gamma_{RC},
\end{equation}
and the maximally mixed state by%
\begin{equation}
\pi_{C}\coloneqq\frac{1}{d}I_{C}.
\end{equation}
The Choi operator of a quantum channel $\mathcal{N}_{C\rightarrow D}$ (and
more generally a linear map) is defined as%
\begin{equation}
\Gamma_{RD}^{\mathcal{N}}\coloneqq\mathcal{N}_{C\rightarrow D}(\Gamma_{RC}).
\end{equation}
In the notation above, there is an implicit identity channel acting on the
reference system $R$, so that%
\begin{equation}
\mathcal{N}_{C\rightarrow D}(\Gamma_{RC})\equiv(\operatorname{id}_{R}%
\otimes\mathcal{N}_{C\rightarrow D})(\Gamma_{RC}),
\end{equation}
and we employ this convention throughout our paper. A linear map
$\mathcal{M}_{C\rightarrow D}$ is completely positive if and only if its Choi
operator $\Gamma_{RD}^{\mathcal{M}}$ is positive semi-definite, and
$\mathcal{M}_{C\rightarrow D}$ is trace preserving if and only if its Choi
operator satisfies $\operatorname{Tr}_{D}[\Gamma_{RD}^{\mathcal{M}}]=I_{R}$.

An LOCC channel $\mathcal{L}_{A^{\prime}B^{\prime}\rightarrow AB}$ is a
bipartite channel that can be written in the following form:%
\begin{equation}
\mathcal{L}_{A^{\prime}B^{\prime}\rightarrow AB}=\sum_{y}\mathcal{E}%
_{A^{\prime}\rightarrow A}^{y}\otimes\mathcal{F}_{B^{\prime}\rightarrow B}%
^{y}, \label{eq:LOCC-channel-decomp}%
\end{equation}
where $\{\mathcal{E}_{A^{\prime}\rightarrow A}^{y}\}_{y}$ and $\{\mathcal{F}%
_{B^{\prime}\rightarrow B}^{y}\}_{y}$ are sets of completely positive,
trace-non-increasing maps, such that the sum map $\sum_{y}\mathcal{E}%
_{A^{\prime}\rightarrow A}^{y}\otimes\mathcal{F}_{B^{\prime}\rightarrow B}%
^{y}$ is a quantum channel (completely positive and trace preserving)
\cite{CLM+14}. However, not every channel of the form in
\eqref{eq:LOCC-channel-decomp} is an LOCC channel (there are separable
channels of the form in \eqref{eq:LOCC-channel-decomp} that are not
implementable by LOCC \cite{PhysRevA.59.1070}).

We make extensive use of the following bilateral unitary twirl channel in our
paper:%
\begin{equation}
\widetilde{\mathcal{T}}_{CD}(X_{CD})\coloneqq \int dU\ (\mathcal{U}_{C}%
\otimes\overline{\mathcal{U}}_{D})(X_{CD}), \label{eq:bilat-twirl}%
\end{equation}
where $\mathcal{U}(\cdot)=U(\cdot)U^{\dag}$, $\overline{\mathcal{U}}%
(\cdot)=\overline{U}(\cdot)U^{T}$, the overline indicates the complex
conjugate, and $dU$ denotes the Haar measure (uniform distribution on unitary
operators). This channel is an LOCC channel, in the sense that Alice can pick
a unitary at random according to the Haar measure, apply it to her system,
report to Bob which one she selected, who can then apply the complex conjugate
unitary to his system. In order to make this process feasible in practice,
note that the channel in \eqref{eq:bilat-twirl} can be simulated by a unitary
two-design \cite{DM14}, in which only a finite amount of classical data is
required to communicate to Bob when implementing \eqref{eq:bilat-twirl} via
LOCC. The following identity from \cite{W89,Horodecki99,Watrous2018}
simplifies the calculation of the action of $\widetilde{\mathcal{T}}_{CD}$ on
an arbitrary input operator $X_{CD}$:%
\begin{multline}
\widetilde{\mathcal{T}}_{CD}(X_{CD})=\Phi_{CD}\operatorname{Tr}_{CD}[\Phi
_{CD}X_{CD}]\label{eq:simple-formula-bilat-twirl}\\
+\frac{I_{CD}-\Phi_{CD}}{d^{2}-1}\operatorname{Tr}_{CD}[(I_{CD}-\Phi
_{CD})X_{CD}].
\end{multline}

We denote the transpose map acting on the quantum system $C$ by%
\begin{equation}
T_{C}(\cdot)\coloneqq\sum_{i,j=0}^{d-1}|i\rangle\!\langle j|_{C}%
(\cdot)|i\rangle\!\langle j|_{C}.
\end{equation}
A state $\rho_{CD}$ is a positive partial transpose (PPT) state if $T_{D}%
(\rho_{CD})$ is positive semi-definite. The partial transpose is its own
adjoint, in the sense that%
\begin{equation}
\operatorname{Tr}[Y_{CD}T_{C}(X_{CD})]=\operatorname{Tr}[T_{C}(Y_{CD})X_{CD}]
\end{equation}
for all linear operators $X_{CD}$ and $Y_{CD}$.

The following post-selected teleportation identity \cite{B05}\ plays a role in
our analysis:%
\begin{equation}
\mathcal{N}_{C\rightarrow D}(\rho_{SC})=\langle\Gamma|_{CR}\rho_{SC}%
\otimes\Gamma_{RD}^{\mathcal{N}}|\Gamma\rangle_{CR}.
\label{eq:post-selected-TP-id}%
\end{equation}
We also make frequent use of the identities%
\begin{align}
\operatorname{Tr}_{C}[X_{CD}]  &  =\langle\Gamma|_{RC}(I_{R}\otimes
X_{CD})|\Gamma\rangle_{RC},\\
X_{CD}|\Gamma\rangle_{CR}  &  =T_{R}(X_{RD})|\Gamma\rangle_{CR}.
\end{align}
Given channels $\mathcal{N}_{C\rightarrow D}$ and $\mathcal{M}_{D\rightarrow
E}$, the Choi operator $\Gamma_{RE}^{\mathcal{M}\circ\mathcal{N}}$ of the
serial composition $\mathcal{M}_{D\rightarrow E}\circ\mathcal{N}_{C\rightarrow
D}$ is given by%
\begin{align}
\Gamma_{RE}^{\mathcal{M}\circ\mathcal{N}}  &  =\langle\Gamma|_{DS}\Gamma
_{RD}^{\mathcal{N}}\otimes\Gamma_{SE}^{\mathcal{M}}|\Gamma\rangle
_{DS}\label{eq:serial-compose-Choi}\\
&  =\operatorname{Tr}_{D}[\Gamma_{RD}^{\mathcal{N}}T_{D}(\Gamma_{DE}%
^{\mathcal{M}})],
\end{align}
where $D\simeq S$, the operator $\Gamma_{RD}^{\mathcal{N}}$ is the Choi
operator of $\mathcal{N}_{C\rightarrow D}$, and $\Gamma_{SE}^{\mathcal{M}}$ is
the Choi operator of $\mathcal{M}_{D\rightarrow E}$.

\section{Ideal bidirectional teleportation}

\label{sec:ideal-BQT}Let us examine the case of ideal bidirectional
teleportation on two qudits in detail \cite{V94}. Doing so is helpful for us
in establishing a basic metric for the performance of unideal bidirectional
teleportation. Put simply, ideal bidirectional teleportation consists of an
ideal unidirectional teleportation \cite{BBC+93} from Alice to Bob and an
ideal unidirectional teleportation from Bob to Alice, where Alice and Bob are
two spatially separated parties. As such, the protocol uses entanglement and
classical communication to simulate the following unitary swap channel:%
\begin{equation}
\mathcal{S}_{AB}^{d}(\rho_{AB})\coloneqq F_{AB}(\rho_{AB})F_{AB}^{\dag},
\label{eq:ideal-swap-channel}%
\end{equation}
where the unitary swap or flip operator $F$\ is defined as%
\begin{equation}
F_{AB}\coloneqq\sum_{i,j=0}^{d-1}|i\rangle\!\langle j|_{A}\otimes
|j\rangle\!\langle i|_{B}. \label{eq:swap-op-def}%
\end{equation}
In the above, $A\simeq B$, and $\{|i\rangle_{A}\}_{i=0}^{d-1}$ and
$\{|i\rangle_{B}\}_{i=0}^{d-1}$\ are orthonormal bases. Denoting the identity
operator from Alice to Bob by $I_{A\rightarrow B}$%
\begin{equation}
I_{A\rightarrow B}\coloneqq \sum_{i=0}^{d-1}|i\rangle_{B}\langle i|_{A},
\end{equation}
and the identity operator from Bob to Alice by $I_{B\rightarrow A}$:%
\begin{equation}
I_{B\rightarrow A}\coloneqq \sum_{i=0}^{d-1}|i\rangle_{A}\langle i|_{B},
\end{equation}
we see that%
\begin{align}
F_{AB}  &  =\sum_{i,j=0}^{d-1}|i\rangle_{A}\langle j|_{A}\otimes|j\rangle
_{B}\langle i|_{B}\\
&  =\sum_{i,j=0}^{d-1} |j\rangle_{B}\langle j|_{A} \otimes|i\rangle_{A}\langle
i|_{B}\\
&  =\sum_{j=0}^{d-1}|j\rangle_{B}\langle j|_{A} \otimes\sum_{i=0}%
^{d-1}|i\rangle_{A}\langle i|_{B}\\
&  =I_{A\rightarrow B}\otimes I_{B\rightarrow A}.
\end{align}
The identity operators realize the following identity channels%
\begin{align}
\operatorname{id}_{A\rightarrow B}(\cdot)  &  \coloneqq I_{A\rightarrow
B}(\cdot)(I_{A\rightarrow B})^{\dag},\\
\operatorname{id}_{B\rightarrow A}(\cdot)  &  \coloneqq I_{B\rightarrow
A}(\cdot)(I_{B\rightarrow A})^{\dag},
\end{align}
and we see that the ideal swap\ channel is equivalent to%
\begin{equation}
\mathcal{S}_{AB}^{d}=\operatorname{id}_{B\rightarrow A}\otimes
\operatorname{id}_{A\rightarrow B}.
\end{equation}
Even though our choice of notation might suggest that the swap channel is a
tensor product of local identity channels, we should note that this is not the
case:\ the swap channel is a global channel that cannot be realized by local
actions alone. Our notation $\operatorname{id}_{A\rightarrow B}$ indicates
that Alice's input system $A$ is placed at Bob's output $B$ and the notation
$\operatorname{id}_{B\rightarrow A}$ indicates that Bob's input system $B$ is
placed at Alice's output $A$.

In more detail, recall that the standard, ideal unidirectional teleportation
protocol \cite{BBC+93}\ begins with Alice and Bob sharing the following
maximally entangled resource state:%
\begin{equation}
\Phi_{\hat{A}B}^{d}\coloneqq\frac{1}{d}\sum_{i=0}^{d-1}|i\rangle\!\langle
j|_{\hat{A}}\otimes|i\rangle\!\langle j|_{B}, \label{eq:max-ent-state}%
\end{equation}
where $\hat{A}\simeq B$ and $\{|i\rangle_{\hat{A}}\}_{i=0}^{d-1}$ and
$\{|i\rangle_{B}\}_{i=0}^{d-1}$\ are orthonormal bases. The amount of
entanglement in this state is $\log_{2}d$ \cite{BBPS96}, and so the state
above is said to be equivalent to $\log_{2}d$ ebits. Alice then prepares the
system $A$ in the state $\rho_{A}$, where $A\simeq\hat{A}$, so that the
overall state is%
\begin{equation}
\rho_{A}\otimes\Phi_{\hat{A}B}^{d}.
\end{equation}
Alice performs a Bell measurement on systems $A\hat{A}$, which is specified in
terms of the following measurement operators:%
\begin{equation}
\{\Phi_{A\hat{A}}^{z,x}\}_{z,x\in\left\{  0,\ldots,d-1\right\}  },
\end{equation}
where%
\begin{align}
\Phi_{A\hat{A}}^{z,x}  &  \coloneqq(W_{A}^{z,x}\otimes I_{\hat{A}})\Phi
_{A\hat{A}}^{d}(W_{A}^{z,x}\otimes I_{\hat{A}})^{\dag},\\
W^{z,x}  &  \coloneqq Z(z)X(x),\\
Z(z)  &  \coloneqq\sum_{k=0}^{d-1}e^{\frac{2\pi ikz}{d}}|k\rangle\!\langle
k|,\\
X(x)  &  \coloneqq\sum_{k=0}^{d-1}|k\oplus_{d}x\rangle\!\langle k|,
\end{align}
and $\oplus_{d}$ denotes addition modulo $d$. Defining the Bell measurement in
terms of the following quantum instrument:%
\begin{equation}
\mathcal{B}_{A\hat{A}\rightarrow A\hat{A}C_{A}}(\omega_{A\hat{A}%
})\coloneqq\sum_{z,x=0}^{d-1}\Phi_{A\hat{A}}^{z,x}\omega_{A\hat{A}}\Phi
_{A\hat{A}}^{z,x}\otimes|z,x\rangle\!\langle z,x|_{C_{A}},
\end{equation}
the following identity holds%
\begin{align}
&  \mathcal{B}_{A\hat{A}\rightarrow A\hat{A}C_{A}}(\rho_{A}\otimes\Phi
_{\hat{A}B}^{d})\nonumber\\
&  =\sum_{z,x=0}^{d-1}\Phi_{A\hat{A}}^{z,x}(\rho_{A}\otimes\Phi_{\hat{A}B}%
^{d})\Phi_{A\hat{A}}^{z,x}\otimes|z,x\rangle\!\langle z,x|_{C_{A}}\\
&  =\frac{1}{d^{2}}\sum_{z,x=0}^{d-1}\Phi_{A\hat{A}}^{z,x}\otimes(W_{B}%
^{z,x})^{\dag}\rho_{B}W_{B}^{z,x}\otimes|z,x\rangle\!\langle z,x|_{C_{A}}.
\end{align}
Alice traces out the systems $A\hat{A}$, leaving the state%
\begin{multline}
(\operatorname{Tr}_{A\hat{A}}\circ\mathcal{B}_{A\hat{A}\rightarrow A\hat
{A}C_{A}})(\rho_{A}\otimes\Phi_{\hat{A}B}^{d})\\
=\frac{1}{d^{2}}\sum_{z,x=0}^{d-1}(W_{B}^{z,x})^{\dag}\rho_{B}W_{B}%
^{z,x}\otimes|z,x\rangle\!\langle z,x|_{C_{A}}%
\end{multline}
Alice then communicates the classical register $C_{A}$ to Bob over a $d^{2}%
$-dimensional classical channel, defined by%
\begin{equation}
\overline{\Delta}_{C_{A}\rightarrow C_{B}}(\cdot)\coloneqq \sum_{z,x=0}%
^{d-1}|z,x\rangle_{C_{B}}\langle z,x|_{C_{A}}(\cdot)|z,x\rangle_{C_{A}}\langle
z,x|_{C_{B}}.
\end{equation}
The amount of classical information that can be communicated by this channel
is $2\log_{2}d$ bits. Bob finally performs the following correction channel on
systems $BC_{B}$:%
\begin{multline}
\mathcal{C}_{BC_{B}\rightarrow B}(\omega_{BZX})\coloneqq\\
\sum_{z,x}W_{B}^{z,x}\langle z,x|_{C_{B}}\omega_{BC_{B}}|z,x\rangle_{C_{B}%
}(W_{B}^{z,x})^{\dag},
\end{multline}
so that%
\begin{multline}
\mathcal{C}_{BC_{B}\rightarrow B}\!\left(  \frac{1}{d^{2}}\sum_{z,x=0}%
^{d-1}(W_{B}^{z,x})^{\dag}\rho_{B}W_{B}^{z,x}\otimes|z,x\rangle\!\langle
z,x|_{C_{B}}\right) \\
=\rho_{B}.
\end{multline}
Since the state $\rho_{A}$ on Alice's system is perfectly reconstructed on
Bob's system $B$ at the end of this process, we conclude that the whole
process simulates the identity channel $\operatorname{id}_{A\rightarrow B}$.
In more detail, let us denote the channel realized by the whole protocol as%
\begin{multline}
\mathcal{T}_{A\rightarrow B}\coloneqq\\
\mathcal{C}_{BC_{B}\rightarrow B}\circ\overline{\Delta}_{C_{A}\rightarrow
C_{B}}\circ\operatorname{Tr}_{A\hat{A}}\circ\mathcal{B}_{A\hat{A}\rightarrow
A\hat{A}C_{A}}\circ\mathcal{P}_{A\rightarrow A\hat{A}B}^{\Phi}.
\end{multline}
The discussion above then argues that%
\begin{equation}
\mathcal{T}_{A\rightarrow B}=\operatorname{id}_{A\rightarrow B}.
\end{equation}
See, e.g., \cite[Section~6.5.3]{W17} for a more detailed argument.

The unidirectional teleportation protocol can be run in the opposite
direction, from Bob to Alice, and this process realizes the channel
$\mathcal{T}_{B\rightarrow A}$. Following the same argument above, but
swapping the roles of Alice and Bob, the equality $\mathcal{T}_{B\rightarrow
A}=\operatorname{id}_{B\rightarrow A}$ holds. Thus, by consuming $2\log_{2}d$
ebits, $2\log_{2}d$ bits of classical communication from Alice to Bob, and
$2\log_{2}d$ bits of classical communication from Bob to Alice, they can
realize the ideal swap channel in \eqref{eq:ideal-swap-channel}:%
\begin{equation}
\mathcal{S}_{AB}^{d}=\mathcal{T}_{A\rightarrow B}\otimes\mathcal{T}%
_{B\rightarrow A}.
\end{equation}

\section{Quantifying the performance of unideal bidirectional teleportation}

Having established that ideal bidirectional teleportation realizes an ideal
swap channel, let us now discuss unideal bidirectional teleportation.
Succinctly, the goal of unideal bidirectional teleportation is to use a
resource state $\rho_{\hat{A}\hat{B}}$ and local operations and classical
communication (LOCC) to simulate a $d$-dimensional swap channel of the form in
\eqref{eq:ideal-swap-channel}, where%
\begin{equation}
d=d_{A}=d_{B}. \label{eq:dim-swap-chan}%
\end{equation}
The goal is to make the error between the simulation and the ideal swap
channel as small as possible.

In more detail, we assume that Alice and Bob share a quantum state $\rho
_{\hat{A}\hat{B}}$, instead of two maximally entangled states of the form in
\eqref{eq:max-ent-state}. Such a state could be generated by an unideal
experimental process, such as spontaneous parametric down-conversion
\cite{Cout18}. We make no assumption about the dimensions of systems $\hat{A}$
and $\hat{B}$, other than that they are finite dimensional. In particular, it
need not be the case that $d_{\hat{A}}$ is equal to $d_{\hat{B}}$. There are
two other systems $A$ and $B$, that serve as inputs for Alice and Bob,
respectively, to the unideal bidirectional teleportation. They then act with
an LOCC channel $\mathcal{L}_{AB\hat{A}\hat{B}\rightarrow AB}$, on their input
systems $A$ and $B$ and their shares $\hat{A}$ and $\hat{B}$ of the resource
state $\rho_{\hat{A}\hat{B}}$, to produce the output systems $A$ and $B$. As
mentioned in Section~\ref{sec:prelim}, the LOCC channel $\mathcal{L}%
_{AB\hat{A}\hat{B}\rightarrow A^{\prime}B^{\prime}}$ can be written as%
\begin{equation}
\mathcal{L}_{AB\hat{A}\hat{B}\rightarrow AB}=\sum_{y}\mathcal{E}_{A\hat
{A}\rightarrow A}^{y}\otimes\mathcal{F}_{B\hat{B}\rightarrow B}^{y},
\end{equation}
where $\{\mathcal{E}_{A\hat{A}\rightarrow A}^{y}\}_{y}$ and $\{\mathcal{F}%
_{B\hat{B}\rightarrow B}^{y}\}_{y}$ are sets of completely positive maps such
that $\mathcal{L}_{AB\hat{A}\hat{B}\rightarrow AB}$ is a quantum channel.
Thus, the overall channel realized by the simulation is as follows:%
\begin{equation}
\widetilde{\mathcal{S}}_{AB}(\omega_{AB})\coloneqq \mathcal{L}_{AB\hat{A}%
\hat{B}\rightarrow AB}(\omega_{AB}\otimes\rho_{\hat{A}\hat{B}}),
\label{eq:swap-sim}%
\end{equation}
which is depicted in Figure~\ref{fig: Unideal BQT LOCC}. In the simulation, we
allow for classical communication between Alice and Bob for free, so that
$\mathcal{L}_{AB\hat{A}\hat{B}\rightarrow AB}$ can be considered a free
channel, as is common in the resource theory of
entanglement~\cite{BDSW96,CG18}.

\subsection{Quantifying error with normalized diamond distance}

\begin{figure}[ptb]
\begin{center}
\includegraphics[scale=1]{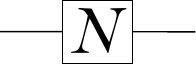}
\end{center}
\caption{Quantum channel $\mathcal{N}$.}%
\label{Ideal Channel N}%
\end{figure}

\begin{figure}[ptb]
\begin{center}
\includegraphics[scale=1]{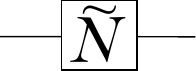}
\end{center}
\caption{$\widetilde{\mathcal{N}}$ is an imperfect simulation of the ideal
quantum channel $\mathcal{N}$.}%
\label{Noisy Channel N Tilde}%
\end{figure}

\label{sec:q-error-NDD}Let us now discuss how to quantify the simulation
error. The standard metric for doing so is the normalized diamond distance
\cite{Kit97}, having been used in both quantum computation \cite{Kit97} and
quantum information \cite{Watrous2018,W17}. In short,
this metric quantifies the maximum absolute deviation between the
probabilities of observing the same outcome when each quantum channel is
applied to the same input state and the same measurement is made.

Let us elaborate upon the explanation of diamond distance given above, in a
similar way to the motivation presented in \cite{Wang2019a,W20ISIT}. Suppose
that the ideal channel to be implemented is $\mathcal{N}_{C\rightarrow D}$,
shown in Figure~\ref{Ideal Channel N}. Suppose further that $\widetilde
{\mathcal{N}}_{C\rightarrow D}$ is the simulation of $\mathcal{N}%
_{C\rightarrow D}$, shown in Figure~\ref{Noisy Channel N Tilde}. To interface
with these channels and obtain classical data for the purpose of
distinguishing them, the most general way for doing so is to prepare a state
$\rho_{RC}$ of a reference system $R$ and the channel input system $C$, feed
system~$C$ into the unknown channel (either $\mathcal{N}_{C\rightarrow D}$ or
$\widetilde{\mathcal{N}}_{C\rightarrow D}$), and then perform a quantum
measurement $\{\Lambda_{RD}^{x}\}_{x\in\mathcal{X}}$ on the channel output
system $D$ and the reference system $R$. To be a legitimate quantum
measurement, the set $\{\Lambda_{RD}^{x}\}_{x\in\mathcal{X}}$ of operators
should satisfy $\sum_{x\in\mathcal{X}}\Lambda_{RD}^{x}=I_{RD}$ and
$\Lambda_{RD}^{x}\geq0$ for all $x\in\mathcal{X}$. The result of this
procedure (preparation, channel evolution, and measurement) is a classical
outcome $x\in\mathcal{X}$ that occurs with probability $\operatorname{Tr}%
[\Lambda_{RD}^{x}\mathcal{N}_{C\rightarrow D}(\rho_{RC})]$ if the channel
$\mathcal{N}_{C\rightarrow D}$ is applied, while the outcome $x\in\mathcal{X}$
occurs with probability $\operatorname{Tr}[\Lambda_{RD}^{x}\widetilde
{\mathcal{N}}_{C\rightarrow D}(\rho_{RC})]$ if the channel $\widetilde
{\mathcal{N}}_{C\rightarrow D}$ is applied. The error or difference between
these probabilities is naturally quantified by the absolute deviation%
\begin{equation}
\left\vert \operatorname{Tr}[\Lambda_{RD}^{x}\mathcal{N}_{C\rightarrow D}%
(\rho_{RC})]-\operatorname{Tr}[\Lambda_{RD}^{x}\widetilde{\mathcal{N}%
}_{C\rightarrow D}(\rho_{RC})]\right\vert . \label{eq:ch-error-proc-depend}%
\end{equation}
We can then quantify the maximum possible error between the channels
$\mathcal{N}_{C\rightarrow D}$ and $\widetilde{\mathcal{N}}_{C\rightarrow D}$
by optimizing \eqref{eq:ch-error-proc-depend} with respect to all preparations
and measurements:%
\begin{multline}
\sup_{\substack{\rho_{RC},\\\{\Lambda_{RD}^{x}\}_{x}}}\left\vert
\operatorname{Tr}[\Lambda_{RD}^{x}\mathcal{N}(\rho_{RC})]-\operatorname{Tr}%
[\Lambda_{RD}^{x}\widetilde{\mathcal{N}}(\rho_{RC})]\right\vert
\label{eq:ch-error}\\
=\sup_{\substack{\rho_{RC},\\0\leq\Lambda_{RD}\leq I_{RD}}}\left\vert
\operatorname{Tr}[\Lambda_{RD}\mathcal{N}(\rho_{RC})]-\operatorname{Tr}%
[\Lambda_{RD}\widetilde{\mathcal{N}}(\rho_{RC})]\right\vert ,
\end{multline}
where it is implicit that the channels $\mathcal{N}_{C\rightarrow D}$ and
$\widetilde{\mathcal{N}}_{C\rightarrow D}$ above have input system $C$ and
output system $D$. Mathematically, this optimization has the effect of
removing the dependence on the preparation and measurement such that the error
is a function solely of the two channels $\mathcal{N}_{C\rightarrow D}$ and
$\widetilde{\mathcal{N}}_{C\rightarrow D}$. It is a fundamental and well known
result in quantum information theory \cite{Kit97,AKN98}\ that the error in
\eqref{eq:ch-error} is equal to the normalized diamond distance:%
\begin{equation}
\text{Eq.}~\eqref{eq:ch-error}=\frac{1}{2}\left\Vert \mathcal{N}%
-\widetilde{\mathcal{N}}\right\Vert _{\diamond},
\end{equation}
where the diamond distance $\left\Vert \mathcal{N}-\widetilde{\mathcal{N}%
}\right\Vert _{\diamond}$ is defined as%
\begin{equation}
\left\Vert \mathcal{N}-\widetilde{\mathcal{N}}\right\Vert _{\diamond
}\coloneqq \sup_{\rho_{RC}}\left\Vert \mathcal{N}_{C\rightarrow D}(\rho
_{RC})-\widetilde{\mathcal{N}}_{C\rightarrow D}(\rho_{RC})\right\Vert _{1},
\label{eq:def-diamond-distance}%
\end{equation}
and the trace norm of an operator $X$ is given by $\left\Vert X\right\Vert
_{1}=\operatorname{Tr}[\left\vert X\right\vert ]$, where $\left\vert
X\right\vert \coloneqq \sqrt{X^{\dag}X}$. It is well known that the
calculation of the diamond distance simplifies as follows:%
\begin{equation}
\left\Vert \mathcal{N}-\widetilde{\mathcal{N}}\right\Vert _{\diamond}%
=\sup_{\psi_{RC}}\left\Vert \mathcal{N}_{C\rightarrow D}(\psi_{RC}%
)-\widetilde{\mathcal{N}}_{C\rightarrow D}(\psi_{RC})\right\Vert _{1},
\label{eq:d-dist-pure-states}%
\end{equation}
where the optimization is with respect to every pure bipartite state
$\psi_{RC}$ with system $R$ isomorphic to the channel input system $C$.

Intuitively, the diamond distance can be thought of as a metric characterizing
the distinguishability of two quantum channels. As indicated in the previous
paragraph, this metric quantifies the maximum absolute deviation between the
probabilities of observing the same outcome when each quantum channel is
applied to the same input state and the same measurement is made. It is used
as a way to quantify the distance between two quantum channels and is a
standard metric used in quantum computation and quantum information.

The normalized diamond distance can be computed by means of the following
semi-definite program \cite{Wat09}:%
\begin{multline}
\frac{1}{2}\left\Vert \mathcal{N}-\widetilde{\mathcal{N}}\right\Vert
_{\diamond}=\\
\inf_{Z_{RD}\geq0}\left\{  \left\Vert \operatorname{Tr}_{D}[Z_{RD}]\right\Vert
_{\infty}:Z_{RD}\geq\Gamma_{RD}^{\mathcal{N}}-\Gamma_{RD}^{\widetilde
{\mathcal{N}}}\right\}  ,
\end{multline}
where $\Gamma_{RD}^{\mathcal{N}}$ and $\Gamma_{RD}^{\widetilde{\mathcal{N}}}$
are the Choi operators of $\mathcal{N}$ and $\widetilde{\mathcal{N}}$,
respectively. This will be helpful for us later on in
Sections~\ref{sec:SDP-lower-bound-general} and \ref{sec:SDP-lower-bound-BTP}.

Returning to our case of interest, the simulation error when employing a
specific LOCC channel $\mathcal{L}_{AB\hat{A}\hat{B}\rightarrow AB}$ is
quantified as follows:%
\begin{equation}
e_{\operatorname{LOCC}}(\mathcal{S}_{AB}^{d},\rho_{\hat{A}\hat{B}}%
,\mathcal{L}_{AB\hat{A}\hat{B}\rightarrow AB})\coloneqq \frac{1}{2}\left\Vert
\mathcal{S}^{d}-\widetilde{\mathcal{S}}\right\Vert _{\diamond},
\end{equation}
where $d$ is the dimension of the swap channel $\mathcal{S}^{d}$\ (see
\eqref{eq:dim-swap-chan}), $\mathcal{S}^{d}$ is defined in
\eqref{eq:ideal-swap-channel}, and $\widetilde{\mathcal{S}}$ in
\eqref{eq:swap-sim}. Since we are interested in the minimum possible
simulation error, taken over all possible LOCC channels, we define the
simulation error of bidirectional teleportation, when employing the resource
state $\rho_{\hat{A}\hat{B}}$, as follows:%
\begin{multline}
e_{\operatorname{LOCC}}(\mathcal{S}_{AB}^{d},\rho_{\hat{A}\hat{B}%
})\coloneqq\label{eq:sim-err-swap-DD}\\
\inf_{\mathcal{L}\in\operatorname{LOCC}}e_{\operatorname{LOCC}}(\mathcal{S}%
_{AB}^{d},\rho_{\hat{A}\hat{B}},\mathcal{L}_{AB\hat{A}\hat{B}\rightarrow AB}).
\end{multline}

This simulation error is difficult to compute as $d$, $d_{\hat{A}}$, and
$d_{\hat{B}}$ become larger. This computational difficulty is related to how
it is difficult to optimize functions over the set of separable states
\cite{Gur04,Ghar10,Harrow:2013}. In Section~\ref{sec:SDP-lower-bound-BTP}, we
determine a lower bound on the simulation error $e_{\operatorname{LOCC}%
}(\mathcal{S}_{AB}^{d},\rho_{\hat{A}\hat{B}})$ that can be computed by means
of semi-definite programming \cite{Vandenberghe1996}\ and is thus efficiently
computable. For some states $\rho_{\hat{A}\hat{B}}$ of interest, the lower
bound is achievable, so that we can determine the error of unideal
bidirectional teleportation precisely in these cases.

\subsection{Quantifying error with channel infidelity}

\label{sec:q-error-CIF}

Another way to quantify error between channels is by using the fidelity.
Recall that the fidelity of quantum states $\omega$ and $\tau$ is defined as
follows \cite{Uhl76}:%
\begin{equation}
F(\omega,\tau)\coloneqq \left\Vert \sqrt{\omega}\sqrt{\tau}\right\Vert
_{1}^{2}.
\end{equation}
This quantity is equal to one if and only if the states $\omega$ and $\tau$
are the same, and it is equal to zero if and only if the states are
orthogonal. If $\omega$ is a pure state, i.e., equal to $|\psi\rangle
\!\langle\psi|$ for some unit vector $|\psi\rangle$, then the fidelity reduces
to the following expression:%
\begin{equation}
F(|\psi\rangle\!\langle\psi|,\tau)=\langle\psi|\tau|\psi\rangle.
\end{equation}
In this case, it has the operational meaning that $F(|\psi\rangle\!\langle
\psi|,\tau)$ is the probability with which the state $\tau$ passes a test for
being the state $|\psi\rangle\!\langle\psi|$. The test in this case is given
by the binary measurement $\{|\psi\rangle\!\langle\psi|,I-|\psi\rangle
\!\langle\psi|\}$, and the first outcome corresponds to the decision
\textquotedblleft pass.\textquotedblright\ So the probability of passing is
equal to $F(|\psi\rangle\!\langle\psi|,\tau)$.

We can then lift this to a measure of similarity for quantum channels
$\mathcal{N}_{C\rightarrow D}$ and $\widetilde{\mathcal{N}}_{C\rightarrow D}$
as follows:%
\begin{equation}
F(\mathcal{N},\widetilde{\mathcal{N}})\coloneqq \inf_{\rho_{RC}}%
F(\mathcal{N}_{C\rightarrow D}(\rho_{RC}),\widetilde{\mathcal{N}%
}_{C\rightarrow D}(\rho_{RC})),
\end{equation}
which can be viewed as the fidelity counterpart of the diamond distance in
\eqref{eq:def-diamond-distance}. Just like \eqref{eq:d-dist-pure-states}, the
following simplification holds%
\begin{equation}
F(\mathcal{N},\widetilde{\mathcal{N}})=\inf_{\psi_{RC}}F(\mathcal{N}%
_{C\rightarrow D}(\psi_{RC}),\widetilde{\mathcal{N}}_{C\rightarrow D}%
(\psi_{RC})),
\end{equation}
where the optimization is with respect to all pure bipartite states $\psi
_{RC}$ with system $R$ isomorphic to the channel input system $C$. We note
here that the channel infidelity is defined as
\begin{equation}
1 - F(\mathcal{N},\widetilde{\mathcal{N}}).
\end{equation}
Since the channel fidelity is a measure of similarity, the channel infidelity
is a measure of distinguishability and thus can be understood as an error
measure in our context. Thus, in what shortly follows, we employ it as a
simulation error, with the goal of minimizing it.

The root fidelity of channels can computed by means of the following
semi-definite program \cite{Yuan2017,KW20}:%
\begin{multline}
\sqrt{F}(\mathcal{N},\widetilde{\mathcal{N}})=\\
\frac{1}{2}\inf_{\rho_{R},W_{RD},Z_{RD}}\operatorname{Tr}[W_{RD}\Gamma
_{RD}^{\mathcal{N}}]+\operatorname{Tr}[Z_{RD}\Gamma_{RD}^{\widetilde
{\mathcal{N}}}],
\end{multline}
subject to%
\begin{equation}
\rho_{R}\geq0,\quad\operatorname{Tr}[\rho_{R}]=1,\quad%
\begin{bmatrix}
W_{RD} & \rho_{R}\otimes I_{D}\\
\rho_{R}\otimes I_{D} & Z_{RD}%
\end{bmatrix}
\geq0.
\end{equation}
In the above, the optimization is over all linear operators $W_{RD}$ and
$Z_{RD}$, and $\Gamma_{RD}^{\mathcal{N}}$ and $\Gamma_{RD}^{\widetilde
{\mathcal{N}}}$ are the Choi operators of $\mathcal{N}$ and $\widetilde
{\mathcal{N}}$, respectively. The dual of this semi-definite program is given
by%
\begin{equation}
\sup_{\lambda\geq0,Q_{RD}}\lambda\label{eq:SDP-ch-fid-1}%
\end{equation}
subject to%
\begin{align}
\lambda I_{R}  &  \leq\operatorname{Re}[\operatorname{Tr}_{D}[Q_{RD}]],\\%
\begin{bmatrix}
\Gamma_{RD}^{\widetilde{\mathcal{N}}} & Q_{RD}^{\dag}\\
Q_{RD} & \Gamma_{RD}^{\mathcal{N}}%
\end{bmatrix}
&  \geq0, \label{eq:SDP-ch-fid-3}%
\end{align}

Using the infidelity of channels, we can define an alternate notion of
simulation error as follows:%
\begin{equation}
e_{\operatorname{LOCC}}^{F}(\mathcal{S}_{AB}^{d},\rho_{\hat{A}\hat{B}%
},\mathcal{L}_{AB\hat{A}\hat{B}\rightarrow AB})\coloneqq 1-F(\mathcal{S}%
^{d},\widetilde{\mathcal{S}}),
\end{equation}
where $d$ is the dimension of the swap channel $\mathcal{S}^{d}$\ (see
\eqref{eq:dim-swap-chan}), $\mathcal{S}^{d}$ is defined in
\eqref{eq:ideal-swap-channel}, and $\widetilde{\mathcal{S}}$ in
\eqref{eq:swap-sim}. Minimizing this error with respect to all LOCC\ channels,
we arrive at the following:%
\begin{multline}
e_{\operatorname{LOCC}}^{F}(\mathcal{S}_{AB}^{d},\rho_{\hat{A}\hat{B}%
})\coloneqq\label{eq:sim-err-swap-ch-infid}\\
\inf_{\mathcal{L}\in\operatorname{LOCC}}e_{\operatorname{LOCC}}^{F}%
(\mathcal{S}_{AB}^{d},\rho_{\hat{A}\hat{B}},\mathcal{L}_{AB\hat{A}\hat
{B}\rightarrow AB}).
\end{multline}
For the same reasons given previously, this quantity is difficult to compute,
and so we seek alternative ways to estimate it.

\subsection{Equality of simulation errors when simulating the swap channel}

\label{sec:equality-sim-errors-LOCC-BQT}

Even though we have defined two different notions of LOCC\ simulation error of
bidirectional teleportation based on the normalized diamond distance and
channel infidelity, it turns out that they are equal. This result follows as a
consequence of the swap channel $\mathcal{S}_{AB}^{d}$ in
\eqref{eq:ideal-swap-channel} having the following symmetry:%
\begin{equation}
\mathcal{S}_{AB}^{d}\left(  \mathcal{U}_{A}\otimes\mathcal{V}_{B}\right)
=\left(  \mathcal{V}_{A}\otimes\mathcal{U}_{B}\right)  \mathcal{S}_{AB}^{d},
\label{eq:swap-ch-symmetry-for-LOCC}%
\end{equation}
holding for all unitary channels $\mathcal{U}_{A}$ and $\mathcal{V}_{B}$. An
additional symmetry of the swap channel $\mathcal{S}_{AB}^{d}$ is that it
commutes with itself:%
\begin{equation}
\mathcal{S}_{AB}^{d}\circ\mathcal{S}_{AB}^{d}=\mathcal{S}_{AB}^{d}%
\circ\mathcal{S}_{AB}^{d}. \label{eq:swap-ch-add-symmetry-for-LOCC}%
\end{equation}
Although at first glance this latter symmetry might seem trivial, it is
actually helpful in further simplifying the optimization problem for
bidirectional teleportation. More generally, a unitary $U$ is a symmetry of a
channel $\mathcal{N}$ if it commutes with the action of the channel:
$\mathcal{N}\circ U=U\circ\mathcal{N}$. Clearly, the equalities in
\eqref{eq:swap-ch-symmetry-for-LOCC} and
\eqref{eq:swap-ch-add-symmetry-for-LOCC} represent symmetries of the swap
channel $\mathcal{S}_{AB}^{d}$.

By exploiting the symmetries in \eqref{eq:swap-ch-symmetry-for-LOCC} and
\eqref{eq:swap-ch-add-symmetry-for-LOCC}, we prove in
Appendix~\ref{app:simplify-LOCC-err-swap}\ that it suffices to optimize both
the normalized diamond distance and the channel infidelity with respect to
LOCC\ channels $\mathcal{L}_{AB\hat{A}\hat{B}\rightarrow AB}$ having the
following form:%
\begin{multline}
\mathcal{L}_{AB\hat{A}\hat{B}\rightarrow AB}(\omega_{AB}\otimes\rho_{\hat
{A}\hat{B}})=\mathcal{S}_{AB}^{d}(\omega_{AB})\operatorname{Tr}[K_{\hat{A}%
\hat{B}}\rho_{\hat{A}\hat{B}}]\label{eq:sim-ch-LOCC-opt}\\
+\frac{1}{2}\left(  \operatorname{id}_{A\rightarrow B}\otimes\mathcal{D}%
_{B\rightarrow A}+\mathcal{D}_{A\rightarrow B}\otimes\operatorname{id}%
_{B\rightarrow A}\right)  (\omega_{AB})\operatorname{Tr}[L_{\hat{A}\hat{B}%
}\rho_{\hat{A}\hat{B}}]\\
+\left(  \mathcal{D}_{A\rightarrow B}\otimes\mathcal{D}_{B\rightarrow
A}\right)  (\omega_{AB})\operatorname{Tr}[N_{\hat{A}\hat{B}}\rho_{\hat{A}%
\hat{B}}],
\end{multline}
where $\mathcal{D}$ denotes the following generalized Pauli channel:%
\begin{equation}
\mathcal{D}(\sigma)\coloneqq\frac{1}{d^{2}-1}\sum_{\left(  x,z\right)
\neq\left(  0,0\right)  }W^{z,x}\sigma(W^{z,x})^{\dag}.
\end{equation}
Thus, the interpretation of the simulating channel is that it measures the
resource state $\rho_{\hat{A}\hat{B}}$ according to the POVM\ $\{K_{\hat
{A}\hat{B}},L_{\hat{A}\hat{B}},N_{\hat{A}\hat{B}}\}$, which is subject to the
constraint that the overall channel $\mathcal{L}_{AB\hat{A}\hat{B}\rightarrow
A^{\prime}B^{\prime}}$ is LOCC. After that, it takes the following action:

\begin{enumerate}
\item If the first outcome $K_{\hat{A}\hat{B}}$ occurs, then apply the ideal
swap channel to the input state $\omega_{AB}$.

\item If the second outcome $L_{\hat{A}\hat{B}}$ occurs, then with probability
1/2, apply the identity channel $\operatorname{id}_{A\rightarrow B}$ to
transfer Alice's input system $A$ to Bob, but then garble Bob's input system
$B$ by applying the channel $\mathcal{D}$ and transfer the resulting system to
Alice; with probability 1/2, apply the identity channel $\operatorname{id}%
_{B\rightarrow A}$ to transfer Bob's input system $B$ to Alice, but then
garble Alice's input system $A$ by applying the channel $\mathcal{D}$ and
transfer the resulting system to Bob.

\item If the third outcome $N_{\hat{A}\hat{B}}$ occurs, then apply the
garbling channel $\mathcal{D}$ to both Alice and Bob's systems individually
and exchange them.
\end{enumerate}

\noindent We again stress that the constraint on the set $\{K_{\hat{A}\hat{B}%
},L_{\hat{A}\hat{B}},N_{\hat{A}\hat{B}}\}$ is that the overall channel
$\mathcal{L}_{AB\hat{A}\hat{B}\rightarrow AB}$ is LOCC.

We state the equality of the simulation errors as follows and prove this
result in Appendix~\ref{app:simplify-LOCC-err-swap}:

\begin{proposition}
\label{prop:err-collapse-LOCC}The optimization problems in
\eqref{eq:sim-err-swap-DD} and \eqref{eq:sim-err-swap-ch-infid},\ for the
error in simulating the unitary SWAP channel $\mathcal{S}_{AB}^{d}$ in
\eqref{eq:ideal-swap-channel}, simplify as follows:%
\begin{align}
&  e_{\operatorname{LOCC}}(\mathcal{S}_{AB}^{d},\rho_{\hat{A}\hat{B}%
})\nonumber\\
&  =e_{\operatorname{LOCC}}^{F}(\mathcal{S}_{AB}^{d},\rho_{\hat{A}\hat{B}})\\
&  =1-\sup_{K_{\hat{A}\hat{B}},L_{\hat{A}\hat{B}},N_{\hat{A}\hat{B}}\geq
0}\operatorname{Tr}[\rho_{\hat{A}\hat{B}}K_{\hat{A}\hat{B}}],
\end{align}
subject to $K_{\hat{A}\hat{B}}+L_{\hat{A}\hat{B}}+N_{\hat{A}\hat{B}}%
=I_{\hat{A}\hat{B}}$ and the following channel $\mathcal{L}_{AB\hat{A}\hat
{B}\rightarrow AB}$ being LOCC:%
\begin{align}
&  \mathcal{L}_{AB\hat{A}\hat{B}\rightarrow AB}(\omega_{AB\hat{A}\hat{B}%
})\nonumber\\
&  =\mathcal{S}_{AB}^{d}(\operatorname{Tr}_{\hat{A}\hat{B}}[K_{\hat{A}\hat{B}%
}\omega_{AB\hat{A}\hat{B}}])\nonumber\\
&  \qquad+\frac{1}{2}\left(
\begin{array}
[c]{c}%
\operatorname{id}_{A\rightarrow B}\otimes\mathcal{D}_{B\rightarrow A}\\
+\mathcal{D}_{A\rightarrow B}\otimes\operatorname{id}_{B\rightarrow A}%
\end{array}
\right)  (\operatorname{Tr}_{\hat{A}\hat{B}}[L_{\hat{A}\hat{B}}\omega
_{AB\hat{A}\hat{B}}])\nonumber\\
&  \qquad+\left(  \mathcal{D}_{A\rightarrow B}\otimes\mathcal{D}_{B\rightarrow
A}\right)  (\operatorname{Tr}_{\hat{A}\hat{B}}[N_{\hat{A}\hat{B}}%
\omega_{AB\hat{A}\hat{B}}]).
\end{align}

\end{proposition}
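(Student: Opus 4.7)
The strategy is a twirling argument exploiting the covariance \eqref{eq:swap-ch-symmetry-for-LOCC}, followed by direct evaluation of both error measures on the resulting canonical form.

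First, given any LOCC channel $\mathcal{L}_{AB\hat{A}\hat{B}\to AB}$ I would pass to the twirled channel
\begin{equation}
\widetilde{\mathcal{L}} \coloneqq \int dU\, dV\ (\mathcal{V}_A^{\dag} \otimes \mathcal{U}_B^{\dag})_{AB} \circ \mathcal{L} \circ (\mathcal{U}_A \otimes \mathcal{V}_B)_{AB},
\end{equation}
which is still LOCC (Alice and Bob share classical randomness for $U,V$, or use a unitary $2$-design as in Section~\ref{sec:prelim}). Using \eqref{eq:swap-ch-symmetry-for-LOCC} to commute $\mathcal{U}_A \otimes \mathcal{V}_B$ through $\mathcal{S}^{d}$ as $\mathcal{V}_A \otimes \mathcal{U}_B$, and invoking convexity of the diamond norm and joint concavity of channel fidelity under convex combinations composed with unitary conjugation, the twirled simulation has diamond-distance error no larger and channel fidelity no smaller than the original. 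Hence it suffices to restrict the infimum in \eqref{eq:sim-err-swap-DD} and \eqref{eq:sim-err-swap-ch-infid} to LOCC channels whose induced simulation $\widetilde{\mathcal{S}}$ satisfies $\widetilde{\mathcal{S}} \circ (\mathcal{U}_A \otimes \mathcal{V}_B) = (\mathcal{V}_A \otimes \mathcal{U}_B) \circ \widetilde{\mathcal{S}}$ for every unitary pair $U,V$.

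Second, I would identify the structure imposed by this twirl on the Choi operator of $\widetilde{\mathcal{S}}$. Applying the bilateral twirl identity \eqref{eq:simple-formula-bilat-twirl} separately to the reference--output pairs $(R_A,B)$ and $(R_B,A)$ collapses the Choi operator onto the four-dimensional affine span of tensor products of $\Phi$ and $(I-\Phi)/(d^{2}-1)$. Each of these four basis Choi operators corresponds exactly to one of the four channels $\mathcal{S}^{d}$, $\operatorname{id}_{A\to B}\otimes \mathcal{D}_{B\to A}$, $\mathcal{D}_{A\to B}\otimes \operatorname{id}_{B\to A}$, $\mathcal{D}_{A\to B}\otimes \mathcal{D}_{B\to A}$ appearing in \eqref{eq:sim-ch-LOCC-opt}. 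The four coefficients depend linearly on $\rho_{\hat{A}\hat{B}}$, so by the Riesz representation applied to these functionals they can be written as $\operatorname{Tr}[K\rho], \operatorname{Tr}[L\rho], \operatorname{Tr}[M\rho], \operatorname{Tr}[N\rho]$; complete positivity (valid for every $\rho$) forces $K,L,M,N \geq 0$, and trace preservation of $\mathcal{L}$ forces $K+L+M+N = I_{\hat{A}\hat{B}}$, giving the POVM constraint.

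Third, I would evaluate both error measures for any $\widetilde{\mathcal{S}}$ of the form \eqref{eq:sim-ch-LOCC-opt} with weights $p_K,p_L,p_M,p_N$. The Choi operators of the four component channels project onto mutually orthogonal subspaces of $R_A B \otimes R_B A$ (a product of $\Phi$ or $I-\Phi$ factors on each pair), so $\Gamma^{\mathcal{S}^{d}} - \Gamma^{\widetilde{\mathcal{S}}}$ decomposes orthogonally with negative weight $-(1-p_K)$ on the $\Phi\otimes\Phi$ block and nonnegative weights on the remaining three blocks. Feeding this decomposition into the Watrous SDP for the diamond norm and checking both primal feasibility and dual tightness yields $\tfrac{1}{2}\lVert \mathcal{S}^{d} - \widetilde{\mathcal{S}} \rVert_{\diamond} = 1 - p_K$. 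For the channel fidelity, the covariance from Step 1 together with Uhlmann's theorem concentrates the infimum over input pure states on the maximally entangled input, where the ``pass'' probability for the ideal output is precisely $p_K$, so $1 - F(\mathcal{S}^{d},\widetilde{\mathcal{S}}) = 1 - p_K$ as well. Optimizing over LOCC-compatible POVMs gives the claimed formula.

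The main obstacle is the fidelity half of the equality: in general the channel fidelity infimum need not be attained at the maximally entangled input, so the crucial leverage is the full unitary covariance established in Step 1, which forces the optimizer to be symmetric and thereby pins it to the maximally entangled state. A secondary technical point is verifying that positivity of the simulation's Choi operator against arbitrary $\rho_{\hat{A}\hat{B}}$ is exactly what produces the POVM constraint on $\{K,L,M,N\}$, and that this constraint is preserved (rather than tightened or loosened) by the twirl.
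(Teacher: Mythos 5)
Your plan is correct, and for the symmetrization step and the diamond-distance half it is essentially the paper's own proof in Appendix~\ref{app:simplify-LOCC-err-swap}: the channel twirl $\mathcal{L}\mapsto\int dU\,dV\,(\mathcal{V}_{A}^{\dag}\otimes\mathcal{U}_{B}^{\dag})\circ\mathcal{L}\circ(\mathcal{U}_{A}\otimes\mathcal{V}_{B})$ justified by unitary invariance and convexity, the four-block Choi operator whose blocks define the POVM $\{K,L,M,N\}$, and the Jordan--Hahn/Watrous-SDP computation giving $\tfrac{1}{2}\Vert\mathcal{S}^{d}-\widetilde{\mathcal{S}}\Vert_{\diamond}=1-\operatorname{Tr}[\rho_{\hat{A}\hat{B}}K_{\hat{A}\hat{B}}]$ all match. (One sign slip: the positive part of $\Gamma^{\mathcal{S}^{d}}-\Gamma^{\widetilde{\mathcal{S}}}$ is the $\Phi\otimes\Phi$ block with weight $+(1-p_{K})$; the other three blocks carry the negative weight.) Where you genuinely diverge is the fidelity half. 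The paper symmetrizes the primal and dual variables of the root-fidelity SDP, collapses the off-diagonal block $Q$ to four scalars, and extracts $\lambda=\sqrt{\operatorname{Tr}[\rho_{\hat{A}\hat{B}}K_{\hat{A}\hat{B}}]}$ from four $2\times2$ matrix inequalities. You instead use the joint irreducible covariance of $\mathcal{S}^{d}$ and the symmetrized simulation to pin the fidelity-optimal input to $\Phi_{R_{A}A}\otimes\Phi_{R_{B}B}$ and read off $F=\operatorname{Tr}[\rho_{\hat{A}\hat{B}}K_{\hat{A}\hat{B}}]$ from the orthogonality of the three garbled output blocks against the pure ideal output. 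That route is valid and arguably more transparent, but the tool that pins the input is the covariance reduction for generalized channel divergences, i.e.\ \cite[Corollary~II.5]{LKDW18} (which the paper itself invokes in its analysis of the KPF16 protocols), not Uhlmann's theorem; naive input averaging goes the wrong way for an infimum of a concave function, so you do need that specific result. Two small points to tighten: the twirl improves the fidelity because $\mathcal{M}\mapsto F(\mathcal{N},\mathcal{M})$ is concave (an infimum of jointly concave functions), which you should state as the reason the averaged channel is at least as good; and the residual constraint on $\{K_{\hat{A}\hat{B}},L_{\hat{A}\hat{B}},M_{\hat{A}\hat{B}},N_{\hat{A}\hat{B}}\}$ is not merely the POVM condition but that the induced channel of the form in \eqref{eq:sim-ch-LOCC-opt} remain LOCC, as the proposition states.
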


As a consequence of Proposition~\ref{prop:err-collapse-LOCC}, there is no need
for two different notions of simulation error when considering the simulation
of the swap channel.

\begin{remark}
If one had to pick one error metric over the other, we think the diamond
distance is preferable for comparing general channels. It captures a notion of
error that makes physical sense as the largest deviation in outcome
probabilities that could be observed by performing the most general physical
procedure to distinguish an ideal channel from its simulation. Related to
this, it has an operational interpretation in terms of hypothesis testing of
channels. It also has nice properties like the triangle inequality, data
processing under the action of a superchannel, and stability under tensoring
with the identity. For these reasons, it is the standard theoretical tool used
in the study of fault tolerant quantum computation, and one can consult
\cite{Watrous2018} and find it used to define quantum channel capacities. Thus
we are using it here also.

The channel fidelity has a sensible operational interpretation if the target
channel is a unitary channel, as the probability with which the simulation
channel can pass a test for being the unitary channel. It also possesses the
properties of stability and data processing mentioned above, and if one takes
the square root of the infidelity (often called sine distance), then it also
obeys the triangle inequality.

So we view both of these error metrics as being important and thus we have
considered them both here. In light of the fact that these error metrics are
generally different, we find it an interesting conclusion that the normalized
diamond distance and the infidelity give the same value when considering
simulation of the SWAP channel.
\end{remark}

\subsection{LOCC\ simulation of general bipartite
channels\label{sec:LOCC-sim-gen-bi}}

In the previous sections, we discussed how to quantify the simulation error
for unideal bidirectional teleportation. In this section, we generalize the
task to the LOCC\ simulation of an arbitrary bipartite channel.

A bipartite channel $\mathcal{N}_{AB\rightarrow A^{\prime}B^{\prime}}$,
depicted in Figure~\ref{fig: Bipartite Channel N}, is a quantum channel with
input systems $A$ and $B$ and output systems $A^{\prime}$ and $B^{\prime}$
\cite{BHLS03,CLL06}. Alice has control of the systems $A$ and $A^{\prime}$,
and Bob has control of the output systems $B$ and $B^{\prime}$. The swap
channel in \eqref{eq:ideal-swap-channel} is a particular example of a
bipartite channel, but of course there are many other interesting examples,
such as the controlled-NOT gate.

We can thus generalize the simulation task in the previous section to be about
simulating a general bipartite channel $\mathcal{N}_{AB\rightarrow A^{\prime
}B^{\prime}}$. This was considered for point-to-point channels in
\cite{BDSW96,HHH99} and for bipartite channels in \cite{BHLS03,BDWW19,GMS19}.
In this case, the simulating channel is defined similarly to
\eqref{eq:swap-sim}:%
\begin{equation}
\widetilde{\mathcal{N}}_{AB\rightarrow A^{\prime}B^{\prime}}(\omega
_{AB})\coloneqq \mathcal{L}_{AB\hat{A}\hat{B}\rightarrow A^{\prime}B^{\prime}%
}(\omega_{AB}\otimes\rho_{\hat{A}\hat{B}}),
\end{equation}
where $\rho_{\hat{A}\hat{B}}$ is a resource state and $\mathcal{L}_{AB\hat
{A}\hat{B}\rightarrow A^{\prime}B^{\prime}}$ is an LOCC channel. The
simulation error when employing a specific LOCC channel $\mathcal{L}%
_{AB\hat{A}\hat{B}\rightarrow A^{\prime}B^{\prime}}$ is quantified as follows:%
\begin{equation}
e_{\operatorname{LOCC}}(\mathcal{N}_{AB\rightarrow A^{\prime}B^{\prime}}%
,\rho_{\hat{A}\hat{B}},\mathcal{L}_{AB\hat{A}\hat{B}\rightarrow A^{\prime
}B^{\prime}})\coloneqq \frac{1}{2}\left\Vert \mathcal{N}-\widetilde
{\mathcal{N}}\right\Vert _{\diamond}, \label{eq:err-sim-bi-ch-fixed-locc}%
\end{equation}
and the simulation error minimized over all possible LOCC channels is%
\begin{multline}
e_{\operatorname{LOCC}}(\mathcal{N}_{AB\rightarrow A^{\prime}B^{\prime}}%
,\rho_{\hat{A}\hat{B}})\coloneqq\label{eq:sim-err-bipartite}\\
\inf_{\mathcal{L}\in\operatorname{LOCC}}e_{\operatorname{LOCC}}(\mathcal{N}%
_{AB\rightarrow A^{\prime}B^{\prime}},\rho_{\hat{A}\hat{B}},\mathcal{L}%
_{AB\hat{A}\hat{B}\rightarrow A^{\prime}B^{\prime}}).
\end{multline}
Just as before, the simulation error is difficult to compute because it
involves an optimization over LOCC channels.

\begin{figure}[ptb]
\begin{center}
\includegraphics[scale=1]{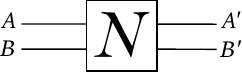}
\end{center}
\caption{An arbitary bipartite channel $\mathcal{N}$ with input systems $A$,
$B$ and output systems $A^{\prime}$, $B^{\prime}$.}%
\label{fig: Bipartite Channel N}%
\end{figure}


We could alternatively employ the infidelity to quantify the simulation error
(note that for channels other than the swap channel, the simulation errors
resulting from diamond distance and infidelity are generally different).\ For
this case, the infidelity simulation error is defined similarly to
\eqref{eq:err-sim-bi-ch-fixed-locc} and \eqref{eq:sim-err-bipartite} as
follows:%
\begin{multline}
e_{\operatorname{LOCC}}^{F}(\mathcal{N}_{AB\rightarrow A^{\prime}B^{\prime}%
},\rho_{\hat{A}\hat{B}},\mathcal{L}_{AB\hat{A}\hat{B}\rightarrow A^{\prime
}B^{\prime}})\coloneqq\\
1-F(\mathcal{N},\widetilde{\mathcal{N}}),
\end{multline}%
\begin{multline}
e_{\operatorname{LOCC}}^{F}(\mathcal{N}_{AB\rightarrow A^{\prime}B^{\prime}%
},\rho_{\hat{A}\hat{B}})\coloneqq\\
\inf_{\mathcal{L}\in\operatorname{LOCC}}e_{\operatorname{LOCC}}^{F}%
(\mathcal{N}_{AB\rightarrow A^{\prime}B^{\prime}},\rho_{\hat{A}\hat{B}%
},\mathcal{L}_{AB\hat{A}\hat{B}\rightarrow A^{\prime}B^{\prime}}).
\label{eq:sim-err-bipartite-infid}%
\end{multline}

\section{Semi-definite programming lower bounds}

\subsection{Semi-definite programming lower bound on the error in
LOCC\ simulation of bipartite channels\label{sec:SDP-lower-bound-general}}

As discussed above, it is challenging to compute the simulation error
$e_{\operatorname{LOCC}}(\mathcal{N}_{AB\rightarrow A^{\prime}B^{\prime}}%
,\rho_{\hat{A}\hat{B}})$ in \eqref{eq:sim-err-bipartite}\ because it is
difficult to optimize over the set of LOCC\ channels \cite{Gur04,Ghar10}. For
this reason, we follow the approach of \cite{Rai99,Rai01} and enlarge the set
LOCC\ to the set of completely positive-partial-transpose-preserving channels
(denoted as C-PPT-P for short). Then we can optimize with respect to this
larger set of channels and obtain a lower bound on the error in \eqref{eq:sim-err-bipartite}.

In more detail, recall that a bipartite channel $\mathcal{P}_{AB\rightarrow
A^{\prime}B^{\prime}}$\ is defined to be C-PPT-P \cite{Rai99,Rai01} if the map%
\begin{equation}
T_{B^{\prime}}\circ\mathcal{P}_{AB\rightarrow A^{\prime}B^{\prime}}\circ
T_{B}\text{ is completely positive}, \label{eq:C-PPT-P-condition}%
\end{equation}
where $T_{B}$ denotes the transpose map, defined by%
\begin{equation}
T_{B}(\omega_{B})=\sum_{i,j}|i\rangle\!\langle j|_{B}\omega_{B}|i\rangle
\!\langle j|_{B}, \label{eq:transpose-map}%
\end{equation}
and with $T_{B^{\prime}}$ defined similarly on the system $B^{\prime}$. It is
well known that every LOCC\ channel is a C-PPT-P channel \cite{Rai99,Rai01},
but the reverse containment does not hold. Thus,%
\begin{equation}
\text{LOCC\ }\subset\ \text{C-PPT-P}, \label{eq:LOCC-in-CPPTP}%
\end{equation}
as depicted in Figure~\ref{fig:Relax LOCC}.

Having defined this set of channels, we define the simulation error under
C-PPT-P\ channels as follows:%
\begin{multline}
e_{\operatorname{PPT}}(\mathcal{N}_{AB\rightarrow A^{\prime}B^{\prime}}%
,\rho_{\hat{A}\hat{B}})\coloneqq\label{eq:c-ppt-p-error}\\
\frac{1}{2}\inf_{\mathcal{P}\in\text{C-PPT-P}}\left\Vert \mathcal{N}%
_{AB\rightarrow A^{\prime}B^{\prime}}-\widetilde{\mathcal{N}}_{AB\rightarrow
A^{\prime}B^{\prime}}\right\Vert _{\diamond},
\end{multline}
where the optimization is with respect to C-PPT-P\ channels $\mathcal{P}%
_{AB\hat{A}\hat{B}\rightarrow A^{\prime}B^{\prime}}$ and%
\begin{equation}
\widetilde{\mathcal{N}}_{AB\rightarrow A^{\prime}B^{\prime}}(\omega
_{AB})\coloneqq \mathcal{P}_{AB\hat{A}\hat{B}\rightarrow A^{\prime}B^{\prime}%
}(\omega_{AB}\otimes\rho_{\hat{A}\hat{B}}).
\end{equation}
We note here that both of the optimization problems given in
\eqref{eq:sim-err-bipartite} and \eqref{eq:c-ppt-p-error}\ are special cases
of the optimization discussed in \cite[Section~II]{FWTB18}.

Furthermore, the following bound holds due to the containment in
\eqref{eq:LOCC-in-CPPTP}:%
\begin{equation}
e_{\operatorname{PPT}}(\mathcal{N}_{AB\rightarrow A^{\prime}B^{\prime}}%
,\rho_{AB})\leq e_{\operatorname{LOCC}}(\mathcal{N}_{AB\rightarrow A^{\prime
}B^{\prime}},\rho_{AB}). \label{eq:ppt-err-locc-err}%
\end{equation}

We now discuss how the error in \eqref{eq:c-ppt-p-error} can be computed by
means of a semi-definite program. To do so, we need to review a few key
concepts. First, recall from \cite{Wat09} that the normalized diamond distance
between channels $\mathcal{N}_{C\rightarrow D}$ and $\mathcal{M}_{C\rightarrow
D}$ can be computed by means of the following semi-definite program:%
\begin{multline}
\frac{1}{2}\left\Vert \mathcal{N}_{C\rightarrow D}-\mathcal{M}_{C\rightarrow
D}\right\Vert _{\diamond}=\label{eq:SDP-diamond}\\
\inf_{\mu\geq0,Z_{RD}\geq0}\left\{  \mu:\mu I_{R}\geq Z_{R},\ Z_{RD}\geq
\Gamma_{RD}^{\mathcal{N}}-\Gamma_{RD}^{\mathcal{M}}\right\}  ,
\end{multline}
where $\Gamma_{RD}^{\mathcal{N}}$ is the Choi operator of the channel
$\mathcal{N}_{C\rightarrow D}$, defined as%
\begin{align}
\Gamma_{RD}^{\mathcal{N}}  &  \coloneqq \mathcal{N}_{C\rightarrow D}%
(\Gamma_{RC}),\\
\Gamma_{RC}  &  \coloneqq |\Gamma\rangle\!\langle\Gamma|_{RC},\\
|\Gamma\rangle_{RC}  &  \coloneqq \sum_{i}|i\rangle_{R}|i\rangle_{C},
\end{align}
and system $R$ is isomorphic to the channel input system $A$. The Choi
operator $\Gamma_{RD}^{\mathcal{M}}$ is defined similarly. Thus, when
calculating the error in \eqref{eq:c-ppt-p-error}, we can employ this
semi-definite program, as well as the semi-definite constraints corresponding
to the optimization over C-PPT-P\ channels.

To arrive at the desired conclusion, let us recall some facts about quantum
channels and their Choi operators. A quantum channel has two properties:\ it
should be completely positive and trace preserving. The relation of these
properties to the Choi operator is given by the following:

\begin{enumerate}
\item a linear map $\mathcal{N}_{C\rightarrow D}$\ is completely positive if
and only if its Choi operator $\Gamma_{RD}^{\mathcal{N}}$ is positive
semi-definite, and

\item it is trace preserving if and only if $\operatorname{Tr}_{D}[\Gamma
_{RD}^{\mathcal{N}}]=I_{R}$.
\end{enumerate}

Each of these conditions is semi-definite and can be incorporated into an
optimization over C-PPT-P channels. The condition in
\eqref{eq:C-PPT-P-condition} corresponds to the Choi operator being PPT
\cite{Rai99,Rai01}; that is,

\begin{enumerate}
\item A bipartite channel satisfies \eqref{eq:C-PPT-P-condition} if and only
if its Choi operator $\Gamma_{ABA^{\prime}B^{\prime}}^{\mathcal{N}}$ satisfies
$T_{BB^{\prime}}(\Gamma_{ABA^{\prime}B^{\prime}}^{\mathcal{N}})\geq0$.
\end{enumerate}

Finally, suppose that we have a bipartite channel $\mathcal{N}_{AB\rightarrow
CD}$ and a channel $\mathcal{M}_{E\rightarrow A}$. Then the serial composition
of them leads to the channel%
\begin{equation}
\mathcal{R}_{EB\rightarrow CD}=\mathcal{N}_{AB\rightarrow CD}\circ
\mathcal{M}_{E\rightarrow A}.
\end{equation}
It is natural then to express the Choi operator of $\mathcal{R}_{EB\rightarrow
CD}$ in terms of those for $\mathcal{M}_{E\rightarrow A}$ and $\mathcal{N}%
_{AB\rightarrow CD}$. It is known that%
\begin{equation}
\Gamma_{EB\rightarrow CD}^{\mathcal{R}}=\operatorname{Tr}_{A}[T_{A}%
(\Gamma_{EA}^{\mathcal{M}})\Gamma_{ABCD}^{\mathcal{N}}],
\label{eq:Choi-op-comp}%
\end{equation}
where $T_{A}$ is the transpose map from \eqref{eq:transpose-map}.

Combining all of the above, we arrive at the following:

\begin{proposition}
\label{prop:gen-bi-SDP}The simulation error in \eqref{eq:c-ppt-p-error} can be
computed by means of the following semi-definite program:%
\begin{equation}
e_{\operatorname{PPT}}(\mathcal{N}_{AB\rightarrow A^{\prime}B^{\prime}}%
,\rho_{\hat{A}\hat{B}})=\inf_{\substack{\mu\geq0,Z_{ABA^{\prime}B^{\prime}%
}\geq0,\\P_{AB\hat{A}\hat{B}A^{\prime}B^{\prime}}\geq0}}\mu,
\label{eq:SDP-sim-error}%
\end{equation}
subject to%
\begin{align}
\mu I_{AB}  &  \geq Z_{AB},\\
T_{B\hat{B}B^{\prime}}(P_{AB\hat{A}\hat{B}A^{\prime}B^{\prime}})  &  \geq0,\\
\operatorname{Tr}_{A^{\prime}B^{\prime}}[P_{AB\hat{A}\hat{B}A^{\prime
}B^{\prime}}]  &  =I_{AB\hat{A}\hat{B}},\\
Z_{ABA^{\prime}B^{\prime}}  &  \geq\Gamma_{ABA^{\prime}B^{\prime}%
}^{\mathcal{N}}\nonumber\\
&  \qquad-\operatorname{Tr}_{\hat{A}\hat{B}}[T_{\hat{A}\hat{B}}(\rho_{\hat
{A}\hat{B}})P_{AB\hat{A}\hat{B}A^{\prime}B^{\prime}}].
\label{eq:main-SDP-Z-constraint-diamond-norm}%
\end{align}

\end{proposition}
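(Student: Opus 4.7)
The plan is to express the simulation error as a composition of (a) the Watrous semi-definite program for the normalized diamond distance and (b) the semi-definite characterization of C-PPT-P channels acting on the resource state. The optimization variables will be a scalar $\mu \geq 0$, an operator $Z_{ABA'B'} \geq 0$ coming from the Watrous SDP, and the Choi operator $P_{AB\hat{A}\hat{B}A'B'}$ of the C-PPT-P channel $\mathcal{P}_{AB\hat{A}\hat{B}\to A'B'}$.

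First, I would parameterize the simulation channel $\widetilde{\mathcal{N}}_{AB\to A'B'}$ in Choi form. Writing $\mathcal{P}(\sigma) = \operatorname{Tr}_{AB\hat{A}\hat{B}}[T_{AB\hat{A}\hat{B}}(\sigma)\,P_{AB\hat{A}\hat{B}A'B'}]$ via the post-selected teleportation identity and applying this to $\sigma = \omega_{AB} \otimes \rho_{\hat{A}\hat{B}}$, a short calculation (using $T_R(\Gamma_{RC}) = T_C(\Gamma_{RC})$ and the partial trace against the swap operator) gives
\begin{equation}
\Gamma^{\widetilde{\mathcal{N}}}_{ABA'B'} = \operatorname{Tr}_{\hat{A}\hat{B}}\bigl[T_{\hat{A}\hat{B}}(\rho_{\hat{A}\hat{B}})\,P_{AB\hat{A}\hat{B}A'B'}\bigr],
\end{equation}
after relabeling the reference system as $AB$. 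This is essentially the composition formula \eqref{eq:Choi-op-comp} applied to the serial composition of the state-appending channel $\omega_{AB} \mapsto \omega_{AB}\otimes \rho_{\hat{A}\hat{B}}$ with $\mathcal{P}_{AB\hat{A}\hat{B}\to A'B'}$.

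Next, I would encode the constraint that $\mathcal{P}$ is a C-PPT-P channel entirely in terms of $P_{AB\hat{A}\hat{B}A'B'}$: complete positivity is $P \geq 0$, trace preservation is $\operatorname{Tr}_{A'B'}[P_{AB\hat{A}\hat{B}A'B'}] = I_{AB\hat{A}\hat{B}}$, and the C-PPT-P condition \eqref{eq:C-PPT-P-condition} (with Bob holding $B$, $\hat{B}$, $B'$) translates into $T_{B\hat{B}B'}(P_{AB\hat{A}\hat{B}A'B'}) \geq 0$ by the standard Choi-operator characterization of PPT-preservation from \cite{Rai99,Rai01}. Each of these is a semi-definite constraint on $P$.

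Finally, I would substitute the Choi operator $\Gamma^{\widetilde{\mathcal{N}}}_{ABA'B'}$ from the first step into the Watrous SDP \eqref{eq:SDP-diamond} for the diamond norm, with reference system relabeled to $AB$ and output system $A'B'$. This produces the constraints $\mu I_{AB} \geq Z_{AB}$, $Z_{ABA'B'} \geq 0$, and
\begin{equation}
Z_{ABA'B'} \geq \Gamma^{\mathcal{N}}_{ABA'B'} - \operatorname{Tr}_{\hat{A}\hat{B}}\bigl[T_{\hat{A}\hat{B}}(\rho_{\hat{A}\hat{B}})\,P_{AB\hat{A}\hat{B}A'B'}\bigr],
\end{equation}
with objective $\mu$ to be minimized, and the infimum is then jointly over $\mu$, $Z$, and $P$ subject to the C-PPT-P constraints on $P$. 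Combining the two infima (over channels $\mathcal{P}$ and over $(\mu,Z)$) into a single joint minimization is justified because the constraints decouple in $(\mu,Z)$ versus $P$ up to the coupling through $\Gamma^{\widetilde{\mathcal{N}}}$, and this yields exactly the SDP in the proposition. The main conceptual step is the first one — writing $\Gamma^{\widetilde{\mathcal{N}}}$ as an affine function of $P$ through the partial transpose of $\rho_{\hat{A}\hat{B}}$ — while the rest is a direct assembly of known semi-definite reformulations.
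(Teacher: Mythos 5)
Your proposal is correct and follows essentially the same route as the paper: it combines the Watrous semi-definite program for the normalized diamond distance in \eqref{eq:SDP-diamond} with the Choi-operator characterization of C-PPT-P channels (positivity, the trace-preservation condition $\operatorname{Tr}_{A'B'}[P]=I_{AB\hat{A}\hat{B}}$, and the partial-transpose constraint $T_{B\hat{B}B'}(P)\geq 0$), and obtains $\Gamma^{\widetilde{\mathcal{N}}}_{ABA'B'}=\operatorname{Tr}_{\hat{A}\hat{B}}[T_{\hat{A}\hat{B}}(\rho_{\hat{A}\hat{B}})P_{AB\hat{A}\hat{B}A'B'}]$ from the composition formula \eqref{eq:Choi-op-comp} by viewing the resource state as a channel with trivial input. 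The paper's own justification is the brief paragraph following the proposition statement, and your more explicit derivation of the affine dependence of $\Gamma^{\widetilde{\mathcal{N}}}$ on $P$ fills in the same steps without deviating from them.
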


The objective function and the first two constraints come from the
semi-definite program for the normalized diamond distance in
\eqref{eq:SDP-diamond}. The quantity $\operatorname{Tr}_{\hat{A}\hat{B}%
}[T_{\hat{A}\hat{B}}(\rho_{\hat{A}\hat{B}})P_{AB\hat{A}\hat{B}A^{\prime
}B^{\prime}}]$ comes about from the Choi operator of the composition of two
channels in \eqref{eq:Choi-op-comp}, while noting that a state $\rho_{\hat
{A}\hat{B}}$ is a particular kind of channel from a trivial system to the
systems $\hat{A}\hat{B}$. The third constraint comes from the fact that
$P_{AB\hat{A}\hat{B}A^{\prime}B^{\prime}}$ should be the Choi operator for a
C-PPT-P\ map, and the fourth constraint comes from the fact that $P_{AB\hat
{A}\hat{B}A^{\prime}B^{\prime}}$ should be the Choi operator for a
trace-preserving map.

As we show in Appendix~\ref{sec:SDP-dual-gen-bi}, the SDP\ dual to
\eqref{eq:SDP-sim-error} is as follows:%
\begin{equation}
\sup_{\substack{X_{AB}^{1},X_{ABA^{\prime}B^{\prime}}^{2},\\X_{AB\hat{A}%
\hat{B}A^{\prime}B^{\prime}}^{3}\geq0,\\W_{AB\hat{A}\hat{B}}\in
\operatorname{Herm}}}\operatorname{Tr}[\Gamma_{ABA^{\prime}B^{\prime}%
}^{\mathcal{N}}X_{ABA^{\prime}B^{\prime}}^{2}]-\operatorname{Tr}[W_{AB\hat
{A}\hat{B}}], \label{eq:SDP-dual}%
\end{equation}
subject to%
\begin{equation}
\operatorname{Tr}[X_{AB}^{1}]\leq1,\qquad X_{ABA^{\prime}B^{\prime}}^{2}\leq
X_{AB}^{1}\otimes I_{A^{\prime}B^{\prime}},
\end{equation}%
\begin{multline}
X_{ABA^{\prime}B^{\prime}}^{2}\otimes T_{\hat{A}\hat{B}}(\rho_{\hat{A}\hat{B}%
})+T_{B\hat{B}B^{\prime}}(X_{AB\hat{A}\hat{B}A^{\prime}B^{\prime}}^{3})\\
\leq W_{AB\hat{A}\hat{B}}\otimes I_{A^{\prime}B^{\prime}}.
\end{multline}

\subsection{Semi-definite programming lower bound on the simulation error of
unideal bidirectional teleportation\label{sec:SDP-lower-bound-BTP}}

The semi-definite program in Proposition~\ref{prop:gen-bi-SDP} can be
evaluated for our bipartite channel of interest, i.e., the unitary swap
channel $\mathcal{S}_{AB}^{d}$ in \eqref{eq:ideal-swap-channel}. Even though
the semi-definite program is efficiently computable with respect to the
dimensions of systems $A$, $B$, $A^{\prime}$, $B^{\prime}$, $\hat{A}$, and
$\hat{B}$, one finds that it can take some time in practice when evaluated for
states of interest. Thus, we are interested in ways of reducing its complexity.

To reduce its complexity, we recall that the unitary swap channel
$\mathcal{S}_{AB}^{d}$ in \eqref{eq:ideal-swap-channel} has the following
symmetry:%
\begin{equation}
\mathcal{S}_{AB}^{d}\left(  \mathcal{U}_{A}\otimes\mathcal{V}_{B}\right)
=\left(  \mathcal{V}_{A}\otimes\mathcal{U}_{B}\right)  \mathcal{S}_{AB}^{d},
\label{eq:swap-ch-symmetry}%
\end{equation}
where $\mathcal{U}_{A}$ and $\mathcal{V}_{B}$ are arbitrary unitary channels.
Furthermore, it commutes with itself, i.e.,%
\begin{equation}
\mathcal{S}_{AB}^{d}\circ\mathcal{S}_{AB}^{d}=\mathcal{S}_{AB}^{d}%
\circ\mathcal{S}_{AB}^{d}. \label{eq:swap-ch-add-symmetry}%
\end{equation}
By exploiting these symmetries, we arrive at a semi-definite program for
evaluating $e_{\operatorname{PPT}}(\mathcal{S}_{AB}^{d},\rho_{\hat{A}\hat{B}%
})$, which has significantly lower complexity than the generic semi-definite
program in Proposition~\ref{prop:gen-bi-SDP}. In particular, its complexity is
polynomial in the dimension of the systems $\hat{A}$ and $\hat{B}$. We provide
a proof of Proposition~\ref{prop:swap-sdp-simplify}\ in
Appendix~\ref{app:SDP-simplify-swap}.

\begin{proposition}
\label{prop:swap-sdp-simplify}The semi-definite program in
Proposition~\ref{prop:gen-bi-SDP},\ for the error in simulating the unitary
SWAP channel $\mathcal{S}_{AB}^{d}$ in \eqref{eq:ideal-swap-channel},
simplifies as follows:%
\begin{equation}
e_{\operatorname{PPT}}(\mathcal{S}_{AB}^{d},\rho_{\hat{A}\hat{B}}%
)=1-\sup_{\substack{K_{\hat{A}\hat{B}},L_{\hat{A}\hat{B}},\\N_{\hat{A}\hat{B}%
}\geq0}}\operatorname{Tr}[\rho_{\hat{A}\hat{B}}K_{\hat{A}\hat{B}}],
\label{eq:main-SDP-paper}%
\end{equation}
subject to
\begin{align}
T_{\hat{B}}\!\left(  K_{\hat{A}\hat{B}}+\frac{L_{\hat{A}\hat{B}}}{d+1}%
+\frac{N_{\hat{A}\hat{B}}}{\left(  d+1\right)  ^{2}}\right)   &
\geq0,\nonumber\\
\frac{1}{d^{2}-1}T_{\hat{B}}\!\left(  L_{\hat{A}\hat{B}}+N_{\hat{A}\hat{B}%
}\right)   &  \geq T_{\hat{B}}\!\left(  K_{\hat{A}\hat{B}}\right)
,\nonumber\\
T_{\hat{B}}\!\left(  K_{\hat{A}\hat{B}}+\frac{N_{\hat{A}\hat{B}}}{\left(
d-1\right)  ^{2} }\right)   &  \geq\frac{1}{d-1}T_{\hat{B}}\!\left(
L_{\hat{A}\hat{B}}\right)  ,\nonumber\\
K_{\hat{A}\hat{B}}+L_{\hat{A}\hat{B}}+N_{\hat{A}\hat{B}}  &  =I_{\hat{A}%
\hat{B}}.
\end{align}

\end{proposition}

\begin{remark}
\label{rem:channel-form}The proof of Proposition~\ref{prop:swap-sdp-simplify}%
\ demonstrates that an optimal C-PPT-P\ channel for simulating the unitary
swap channel has the following structure:%
\begin{multline}
\mathcal{P}_{AB\hat{A}\hat{B}\rightarrow AB}(\omega_{AB}\otimes\rho_{\hat
{A}\hat{B}})=\mathcal{S}_{AB}^{d}(\omega_{AB})\operatorname{Tr}[K_{\hat{A}%
\hat{B}}\rho_{\hat{A}\hat{B}}]\\
+\frac{1}{2}\left(
\begin{array}
[c]{c}%
\operatorname{id}_{A\rightarrow B}\otimes\mathcal{D}_{B\rightarrow A}\\
+\mathcal{D}_{A\rightarrow B}\otimes\operatorname{id}_{B\rightarrow A}%
\end{array}
\right)  (\omega_{AB})\operatorname{Tr}[L_{\hat{A}\hat{B}}\rho_{\hat{A}\hat
{B}}]\\
+\left(  \mathcal{D}_{A\rightarrow B}\otimes\mathcal{D}_{B\rightarrow
A}\right)  (\omega_{AB})\operatorname{Tr}[N_{\hat{A}\hat{B}}\rho_{\hat{A}%
\hat{B}}],
\end{multline}
where $\mathcal{D}$ denotes the following generalized Pauli channel:%
\begin{equation}
\mathcal{D}(\sigma)\coloneqq\frac{1}{d^{2}-1}\sum_{\left(  x,z\right)
\neq\left(  0,0\right)  }W^{z,x}\sigma(W^{z,x})^{\dag}%
.\label{eq:gen-Pauli-channel}%
\end{equation}
Thus, the interpretation of the simulating channel is that it measures the
resource state $\rho_{\hat{A}\hat{B}}$ according to the POVM\ $\{K_{\hat
{A}\hat{B}},L_{\hat{A}\hat{B}},N_{\hat{A}\hat{B}}\}$, which is subject to the
inequality constraints in Proposition~\ref{prop:swap-sdp-simplify}. After
that, it takes the following action:

\begin{enumerate}
\item If the first outcome $K_{\hat{A}\hat{B}}$ occurs, then apply the ideal
swap channel to the input state $\omega_{AB}$.

\item If the second outcome $L_{\hat{A}\hat{B}}$ occurs, then with probability
1/2, apply the identity channel $\operatorname{id}_{A\rightarrow B}$ to
transfer Alice's input system $A$ to Bob, but then garble Bob's input system
$B$ by applying the channel $\mathcal{D}$ and transfer the resulting system to
Alice; with probability 1/2, apply the identity channel $\operatorname{id}%
_{B\rightarrow A}$ to transfer Bob's input system $B$ to Alice, but then
garble Alice's input system $A$ by applying the channel $\mathcal{D}$ and
transfer the resulting system to Bob.

\item If the third outcome $N_{\hat{A}\hat{B}}$ occurs, then apply the
garbling channel $\mathcal{D}$ to both Alice and Bob's systems individually
and exchange them.
\end{enumerate}

\noindent The fact that the measurement operators obey the inequality
constraints in Proposition~\ref{prop:swap-sdp-simplify} implies that the
quantum channel $\mathcal{P}_{AB\hat{A}\hat{B}\rightarrow AB}$ is C-PPT-P.
\end{remark}

\subsection{Semi-definite programming lower bounds when using channel
infidelity}

\label{sec:SDP-ch-infid}As mentioned at the end of
Section~\ref{sec:LOCC-sim-gen-bi}, we can also employ infidelity to quantify
the simulation error. In this section, we briefly detail the semi-definite
programming lower bound that results when using the infidelity to measure
simulation error.

First, consider that the simulation error is defined as follows, when
optimizing over C-PPT-P channels and using the infidelity error measure:%
\begin{multline}
e_{\operatorname{PPT}}^{F}(\mathcal{N}_{AB\rightarrow A^{\prime}B^{\prime}%
},\rho_{\hat{A}\hat{B}})\coloneqq\label{eq:infid-sim-err-CPPTP}\\
\inf_{\mathcal{P}\in\text{C-PPT-P}}1-F(\mathcal{N}_{AB\rightarrow A^{\prime
}B^{\prime}},\widetilde{\mathcal{N}}_{AB\rightarrow A^{\prime}B^{\prime}}),
\end{multline}
where the optimization is with respect to C-PPT-P\ channels $\mathcal{P}%
_{AB\hat{A}\hat{B}\rightarrow A^{\prime}B^{\prime}}$ and%
\begin{equation}
\widetilde{\mathcal{N}}_{AB\rightarrow A^{\prime}B^{\prime}}(\omega
_{AB})\coloneqq \mathcal{P}_{AB\hat{A}\hat{B}\rightarrow A^{\prime}B^{\prime}%
}(\omega_{AB}\otimes\rho_{\hat{A}\hat{B}}).
\end{equation}
It is clear that%
\begin{equation}
e_{\operatorname{PPT}}^{F}(\mathcal{N}_{AB\rightarrow A^{\prime}B^{\prime}%
},\rho_{\hat{A}\hat{B}})\leq e_{\operatorname{LOCC}}^{F}(\mathcal{N}%
_{AB\rightarrow A^{\prime}B^{\prime}},\rho_{\hat{A}\hat{B}}),
\end{equation}
for the same reason that \eqref{eq:ppt-err-locc-err} holds.

Then we find the following result, from combining the semi-definite program
for channel fidelity in \eqref{eq:SDP-ch-fid-1}--\eqref{eq:SDP-ch-fid-3},
along with reasoning similar to that used to justify
Proposition~\ref{prop:gen-bi-SDP}:

\begin{proposition}
\label{prop:gen-bi-SDP-infidelity}The simulation error in
\eqref{eq:infid-sim-err-CPPTP} can be computed by means of the following
semi-definite program:%
\begin{multline}
e_{\operatorname{PPT}}^{F}(\mathcal{N}_{AB\rightarrow A^{\prime}B^{\prime}%
},\rho_{\hat{A}\hat{B}})=\\
1-\left[  \sup_{\lambda\geq0,P_{AB\hat{A}\hat{B}A^{\prime}B^{\prime}}%
\geq0,Q_{ABA^{\prime}B^{\prime}}}\lambda\right]  ^{2},
\end{multline}
subject to%
\begin{align}
\lambda I_{AB}  &  \leq\operatorname{Re}[\operatorname{Tr}_{A^{\prime
}B^{\prime}}[Q_{ABA^{\prime}B^{\prime}}]],\\
T_{B\hat{B}B^{\prime}}(P_{AB\hat{A}\hat{B}A^{\prime}B^{\prime}})  &  \geq0,\\
\operatorname{Tr}_{A^{\prime}B^{\prime}}[P_{AB\hat{A}\hat{B}A^{\prime
}B^{\prime}}]  &  =I_{AB\hat{A}\hat{B}},
\end{align}%
\begin{equation}%
\begin{bmatrix}
\Gamma_{ABA^{\prime}B^{\prime}}^{\mathcal{N}} & Q_{ABA^{\prime}B^{\prime}%
}^{\dag}\\
Q_{ABA^{\prime}B^{\prime}} & \operatorname{Tr}_{\hat{A}\hat{B}}[T_{\hat{A}%
\hat{B}}(\rho_{\hat{A}\hat{B}})P_{AB\hat{A}\hat{B}A^{\prime}B^{\prime}}]
\end{bmatrix}
\geq0.
\end{equation}

\end{proposition}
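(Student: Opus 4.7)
The plan is to derive this SDP by combining three ingredients already available: (i) the dual SDP for the root channel fidelity from \eqref{eq:SDP-ch-fid-1}--\eqref{eq:SDP-ch-fid-3}, (ii) the Choi-operator formula \eqref{eq:Choi-op-comp} for a serial composition of maps, and (iii) the standard semi-definite characterizations of completely positive, trace-preserving, and C-PPT-P maps reviewed in Section~\ref{sec:SDP-lower-bound-general}.

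First I would rewrite the simulating channel $\widetilde{\mathcal{N}}_{AB\to A'B'}(\omega_{AB})=\mathcal{P}_{AB\hat A\hat B\to A'B'}(\omega_{AB}\otimes\rho_{\hat A\hat B})$ as a composition of the state preparation $\rho_{\hat A\hat B}$ (viewed as a channel from the trivial system to $\hat A\hat B$, whose Choi operator is $\rho_{\hat A\hat B}$ itself) with the bipartite channel $\mathcal{P}$. Applying \eqref{eq:Choi-op-comp} with $\mathcal{M}=\rho_{\hat A\hat B}$ and $\mathcal{N}=\mathcal{P}$, the Choi operator of $\widetilde{\mathcal{N}}$ becomes
\begin{equation}
\Gamma_{ABA'B'}^{\widetilde{\mathcal{N}}}=\operatorname{Tr}_{\hat A\hat B}[T_{\hat A\hat B}(\rho_{\hat A\hat B})\,P_{AB\hat A\hat B A'B'}],
\end{equation}
where $P_{AB\hat A\hat B A'B'}$ is the Choi operator of $\mathcal{P}$. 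This is exactly the operator appearing in the lower-right block of the matrix constraint in the proposition.

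Next I would substitute this expression into the dual SDP for $\sqrt{F}(\mathcal{N},\widetilde{\mathcal{N}})$ given in \eqref{eq:SDP-ch-fid-1}--\eqref{eq:SDP-ch-fid-3}, and simultaneously promote $P_{AB\hat A\hat B A'B'}$ from a fixed operator to an additional optimization variable, subject to the three semi-definite constraints that characterize $\mathcal{P}$ as a C-PPT-P channel: $P\ge 0$ (complete positivity), $T_{B\hat B B'}(P)\ge 0$ (the C-PPT-P condition stated in \eqref{eq:C-PPT-P-condition} translated to the Choi operator), and $\operatorname{Tr}_{A'B'}[P]=I_{AB\hat A\hat B}$ (trace preservation). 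Because the map $P\mapsto \operatorname{Tr}_{\hat A\hat B}[T_{\hat A\hat B}(\rho_{\hat A\hat B})P]$ is linear and Hermiticity-preserving in $P$, the combined program remains semi-definite. The optimization over $\mathcal{P}$ in \eqref{eq:infid-sim-err-CPPTP} is thereby realized as the outer optimization over $P$ in the joint SDP; the infimum of $1-F$ over $\mathcal{P}$ equals one minus the square of the supremum of $\sqrt{F}$, which accounts for the outer $1-[\,\cdot\,]^2$ in the proposition.

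The main obstacle is largely bookkeeping: one must verify that the identification of $\Gamma^{\widetilde{\mathcal{N}}}$ via \eqref{eq:Choi-op-comp} is valid for the degenerate ``input is trivial'' case and keep the partial-transpose systems straight (only the $B$-side systems $B,\hat B,B'$ are transposed in the C-PPT-P constraint, whereas the $\hat A\hat B$ systems of $\rho$ are transposed because of the $T_{A}$ in \eqref{eq:Choi-op-comp} combined with our convention that $\rho_{\hat A\hat B}$ plays the role of $\Gamma^{\mathcal{M}}_{EA}$). Once these book-keeping items are settled, strong duality of the resulting SDP follows from Slater's condition in the usual way, e.g.\ by choosing $\mathcal{P}$ to be a product of depolarizing channels on $AB\to A'B'$ and the discard map on $\hat A\hat B$, which yields a strictly feasible interior point for both the primal and dual programs and thereby justifies writing the optimum as either a supremum or an infimum.
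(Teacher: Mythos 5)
Your proposal is correct and follows essentially the same route as the paper, which justifies this proposition by combining the dual (maximization) form of the root-fidelity SDP in \eqref{eq:SDP-ch-fid-1}--\eqref{eq:SDP-ch-fid-3} with the Choi-composition identity \eqref{eq:Choi-op-comp} and the semi-definite characterization of C-PPT-P channels, exactly as in the reasoning for Proposition~\ref{prop:gen-bi-SDP}; the key point you correctly exploit is that using the supremum form of $\sqrt{F}$ lets the optimization over $\mathcal{P}$ merge with the inner optimization into a single joint SDP, with the outer $1-[\,\cdot\,]^{2}$ accounting for the passage from $\sup\sqrt{F}$ to $\inf(1-F)$. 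The closing appeal to Slater's condition is unnecessary, since strong duality is only needed for the fidelity SDP itself, which the paper already takes as given.
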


\begin{remark}
There are important implications of
Proposition~\ref{prop:gen-bi-SDP-infidelity}\ beyond the problems considered
in this paper and which are relevant for quantum resource theories of channels
(see, e.g., \cite{LW19,LY19,Wang2019a}). In prior work, the normalized diamond
distance was used to measure simulation error
\cite{FWTB18,LW19,LY19,Wang2019a}, and if the class of channels being
considered for the simulation obeys semi-definite constraints, then it follows
that the simulation error can be computed by means of a semi-definite program.
What Proposition~\ref{prop:gen-bi-SDP-infidelity}\ demonstrates is that the
same is true when using channel infidelity for the simulation error. As a key
example, Proposition~\ref{prop:gen-bi-SDP-infidelity}\ demonstrates that the
channel box transformation optimization problem from \cite{Wang2019a}\ can be
computed by means of a semi-definite program when using channel infidelity for
approximation error. We show this explicitly in
Appendix~\ref{app:channel-box-transform-infid}.
\end{remark}

Applying similar reasoning used to arrive at
Propositions~\ref{prop:err-collapse-LOCC} and \ref{prop:swap-sdp-simplify}, we
find that the PPT\ simulation error of the unitary swap channel is equal to
the expression from Proposition~\ref{prop:swap-sdp-simplify}.

\begin{proposition}
\label{prop:sim-err-infid-swap-simplify}The semi-definite program in
Proposition~\ref{prop:gen-bi-SDP-infidelity},\ for the error in simulating the
unitary SWAP channel $\mathcal{S}_{AB}^{d}$ in \eqref{eq:ideal-swap-channel},
simplifies to the expression from Proposition~\ref{prop:swap-sdp-simplify}.
That is,
\begin{equation}
e_{\operatorname{PPT}}^{F}(\mathcal{S}_{AB}^{d},\rho_{\hat{A}\hat{B}%
})=e_{\operatorname{PPT}}(\mathcal{S}_{AB}^{d},\rho_{\hat{A}\hat{B}}).
\end{equation}

\end{proposition}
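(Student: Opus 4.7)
The plan is to run the two arguments used in Proposition~\ref{prop:err-collapse-LOCC} and Proposition~\ref{prop:swap-sdp-simplify} in parallel, but now within the set of C-PPT-P channels and with the infidelity as the cost. The key observation is that the symmetry in \eqref{eq:swap-ch-symmetry} depends only on local unitary channels acting before and after the simulating map, and both the channel infidelity and the property of being C-PPT-P are invariant under such local unitary conjugations.

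First I would set up the symmetrization. Given any candidate C-PPT-P channel $\mathcal{P}_{AB\hat{A}\hat{B}\to AB}$ achieving value $1-F(\mathcal{S}^d,\widetilde{\mathcal{S}})$, define the twirled channel
\begin{equation}
\overline{\mathcal{P}} \coloneqq \int\!\! dU\,dV\,(\mathcal{V}_A^\dagger\otimes \mathcal{U}_B^\dagger)\circ \mathcal{P}\circ (\mathcal{U}_A\otimes \mathcal{V}_B\otimes \mathrm{id}_{\hat{A}\hat{B}}),
\end{equation}
where the Haar integrals are over unitary channels on $A$ and $B$. Because conjugation by local unitaries preserves complete positivity of $T_{B\hat{B}B'}\circ\mathcal{P}\circ T_{B\hat{B}}$ (the transpose just gets shuffled by the complex conjugate of the local unitary, which is still local), the twirled channel is again C-PPT-P. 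Using the symmetry \eqref{eq:swap-ch-symmetry-for-LOCC} together with joint concavity of the root fidelity, $F(\mathcal{S}^d,\overline{\widetilde{\mathcal{S}}})\geq F(\mathcal{S}^d,\widetilde{\mathcal{S}})$, so we may restrict the optimization to twirl-invariant C-PPT-P channels without loss of generality.

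Next I would show that a twirl-invariant bipartite channel simulating the swap is forced into the four-outcome measure-and-replace form of Remark~\ref{rem:channel-form}. This is exactly the step carried out in Appendix~\ref{app:simplify-LOCC-err-swap} and Appendix~\ref{app:SDP-simplify-swap}: the bilateral twirl \eqref{eq:bilat-twirl} on each side, together with the simple formula \eqref{eq:simple-formula-bilat-twirl}, collapses the Choi operator into a span of the four Choi operators of $\mathcal{S}^d$, $\mathrm{id}_{A\to B}\otimes\mathcal{D}_{B\to A}$, $\mathcal{D}_{A\to B}\otimes \mathrm{id}_{B\to A}$, and $\mathcal{D}_{A\to B}\otimes\mathcal{D}_{B\to A}$, with coefficients $\mathrm{Tr}[K_{\hat{A}\hat{B}}\rho_{\hat{A}\hat{B}}]$, $\mathrm{Tr}[L\rho]$, $\mathrm{Tr}[M\rho]$, $\mathrm{Tr}[N\rho]$. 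The C-PPT-P constraint on $\mathcal{P}$ translates, as in Proposition~\ref{prop:swap-sdp-simplify}, into the four partial-transpose inequalities on $K,L,M,N$ together with $K+L+M+N=I_{\hat{A}\hat{B}}$.

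Finally, I would evaluate the channel infidelity on such a reduced channel. Writing $p_K=\mathrm{Tr}[K\rho]$ etc., the simulating channel is $\widetilde{\mathcal{S}}=p_K\mathcal{S}^d+p_L\mathcal{T}_L+p_M\mathcal{T}_M+p_N\mathcal{T}_N$ with $\mathcal{T}_L,\mathcal{T}_M,\mathcal{T}_N$ the three garbling variants. Since $\mathcal{S}^d$ is unitary, for any pure $\psi_{RAB}$ the vector $\mathcal{S}^d(\psi)$ is pure and
\begin{equation}
F(\mathcal{S}^d(\psi),\widetilde{\mathcal{S}}(\psi))=p_K+\sum_{J\in\{L,M,N\}} p_J\,\langle \mathcal{S}^d(\psi)|\mathcal{T}_J(\psi)|\mathcal{S}^d(\psi)\rangle.
\end{equation}
Each garbling term is a nonnegative operator, which gives $F\geq p_K$ for every $\psi$. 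Plugging in the product of two maximally entangled states $\psi=\Phi_{R_AA}\otimes\Phi_{R_BB}$ and using that $\mathcal{D}(\Phi)$ is orthogonal to $\Phi$, all three overlap terms vanish, so the infimum over $\psi$ equals $p_K$. Hence $1-F(\mathcal{S}^d,\widetilde{\mathcal{S}})=1-\mathrm{Tr}[K\rho]$, which is the same objective, with the same constraints on $K,L,M,N$, as in Proposition~\ref{prop:swap-sdp-simplify}, giving the claimed equality.

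The main obstacle is the symmetrization step: one must carefully verify that the bilateral twirl of a C-PPT-P channel remains C-PPT-P and that the infimum of the infidelity does not increase under twirling, so that the restriction to the four-operator form is truly without loss of generality. Once this is in place, the explicit fidelity calculation above is routine and the equivalence with the diamond-distance SDP from Proposition~\ref{prop:swap-sdp-simplify} is immediate.
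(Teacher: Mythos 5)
Your argument is correct, and its first two stages (symmetrizing the C-PPT-P simulating channel under the bilateral twirl generated by the covariance \eqref{eq:swap-ch-symmetry}, then collapsing the Choi operator to the four-outcome measure-and-prepare form with POVM $\{K,L,M,N\}$ subject to the partial-transpose inequalities) coincide with what the paper does in Appendices~\ref{app:simplify-LOCC-err-swap} and \ref{app:SDP-simplify-swap}; your observations that the twirl preserves the C-PPT-P property and that concavity of fidelity in its second argument makes the twirl lossless are exactly the ingredients needed there. Where you genuinely diverge is the final evaluation. The paper (Appendix~\ref{app:ch-infid-simplify-swap-LOCC}) stays inside the semi-definite program for the root channel fidelity: it symmetrizes the dual variable $Q_{ABA'B'}$ down to four scalars $q_1,\ldots,q_4$, decomposes the $2\times2$ block constraint along the orthogonal projectors $\Gamma_{AB'}\otimes\Gamma_{BA'}$, etc., and deduces $q_2=q_3=q_4=0$ with $|q_1|^2\le\operatorname{Tr}[\rho_{\hat A\hat B}K_{\hat A\hat B}]$, so that the optimal $\lambda$ is $\sqrt{\operatorname{Tr}[\rho K]}$. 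You instead compute the channel fidelity of the reduced channel in closed form: since the target output $\mathcal{S}^d(\psi)$ is pure, $F(\mathcal{S}^d(\psi),\widetilde{\mathcal{S}}(\psi))=p_K+\sum_J p_J\langle\phi|\mathcal{T}_J(\psi)|\phi\rangle\ge p_K$ for every input, with equality at the tensor product of maximally entangled states because $\mathcal{D}(\Phi)\perp\Phi$. This is more elementary and makes the coincidence with the diamond-distance value transparent (both errors reduce to $1-\operatorname{Tr}[\rho K]$ for structural reasons), at the cost of being specific to a unitary target channel; the paper's SDP-symmetrization route is heavier but is the template that generalizes to Proposition~\ref{prop:gen-bi-SDP-infidelity} and the multipartite setting. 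The only point you should spell out more carefully is the twirl step: concavity of $F(\rho,\cdot)$ in the second argument, together with the fact that each conjugated channel achieves the same channel fidelity by unitary invariance, is what gives $F(\mathcal{S}^d,\overline{\widetilde{\mathcal{S}}})\ge F(\mathcal{S}^d,\widetilde{\mathcal{S}})$; "joint concavity of the root fidelity" is the right tool but the chain of inequalities deserves a line.
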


As a consequence of Proposition~\ref{prop:sim-err-infid-swap-simplify}, there
is no need for different notions of simulation error when considering the
simulation of the unitary swap channel using C-PPT-P channels.

\section{Examples}

\label{sec:examples}In this section, we consider several examples of resource
states that can be used for bidirectional teleportation, and we evaluate their
performance. For several cases of interest, we establish an exact evaluation
not only for the error when using a PPT\ simulation, but also when using an LOCC\ simulation.

One key example resource state, considered in Section~\ref{sec:no-res-state}%
\ below, is when there is in fact no resource state at all. In such a
situation, Alice and Bob can only employ a PPT or LOCC simulation of
bidirectional teleportation. In doing so, they could prepare a PPT\ or
separable resource state for free, respectively, by means of a C-PPT-P\ or
LOCC channel, and so this situation also captures the case in which the
resource state that they share is a PPT\ or separable state, respectively. We
find that the simulation error in both cases is equal to $1-\frac{1}{d^{2}}$,
where $d$ is the dimension of the swap channel $\mathcal{S}_{AB}^{d}$ to be
simulated. The importance of this result is that it establishes a dividing
line between a classical and quantum implementation of bidirectional
teleportation, which can be used by experimentalists to assess the performance
of an implementation of bidirectional teleportation. That is, in the case that
Alice and Bob share no resource state or in the case that they share a
separable state, the simulation error that they can achieve is no smaller than
$1-\frac{1}{d^{2}}$. If they share an entangled state, then it is possible for
the simulation error to be smaller than this.

Other resource states that we consider are isotropic \cite{Horodecki99}\ and
Werner \cite{W89}\ states, which are states that are simply characterized by a
single parameter, due to the large amount of symmetry that they possess. We
consider these states in Sections~\ref{sec:isotropic-example}\ and
\ref{sec:werner-example}, respectively.

We note here that we arrived at the analytical conclusions in
Sections~\ref{sec:no-res-state}, \ref{sec:isotropic-example},\ and
\ref{sec:werner-example} by employing Matlab and Mathematica. We used Matlab
to implement the linear programs numerically, whose numerical solutions were
subsequently used to infer analytical solutions through the use of
Mathematica. We have included all of these files with the arXiv posting of our
paper. The Mathematica files are especially useful for verifying that the
values of primal and dual variables are feasible for the linear programs given.

\subsection{No resource state:\ Benchmark for classical versus quantum
bidirectional teleportation\label{sec:no-res-state}}

We begin by considering the scenario in which there is no resource state at
all. As mentioned above, the utility of this scenario is that it provides a
dividing line between a classical and quantum implementation of bidirectional
teleportation. The proof is sufficiently short that we provide it below.
Figure~\ref{fig:noResourceError} plots the expression in
\eqref{eq:err-no-res-thm} for the simulation error.

\begin{proposition}
\label{prop:no-res-sim-err}If there is no resource state, then the error in
simulating the unitary SWAP channel $\mathcal{S}_{AB}^{d}$ in
\eqref{eq:ideal-swap-channel} is equal to $1-1/d^{2}$:%
\begin{equation}
e_{\operatorname{PPT}}(\mathcal{S}_{AB}^{d},\emptyset)=e_{\operatorname{LOCC}%
}(\mathcal{S}_{AB}^{d},\emptyset)=1-\frac{1}{d^{2}}, \label{eq:err-no-res-thm}%
\end{equation}
where the notation $\emptyset$ indicates the absence of a resource state.
\end{proposition}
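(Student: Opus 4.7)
My plan is to sandwich both errors via $1 - 1/d^2 \leq e_{\operatorname{PPT}}(\mathcal{S}_{AB}^d, \emptyset) \leq e_{\operatorname{LOCC}}(\mathcal{S}_{AB}^d, \emptyset) \leq 1 - 1/d^2$: the lower bound comes from specializing the SDP of Proposition~\ref{prop:swap-sdp-simplify}, and the upper bound from exhibiting an explicit LOCC channel realizing equality.

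For the lower bound, observe that when no resource state is available the system $\hat A\hat B$ is one-dimensional, so the operator variables $K_{\hat A\hat B}, L_{\hat A\hat B}, M_{\hat A\hat B}, N_{\hat A\hat B}$ in Proposition~\ref{prop:swap-sdp-simplify} reduce to nonnegative scalars $K, L, M, N$ with $K + L + M + N = 1$, and the partial-transpose operator inequalities collapse to scalar inequalities. The two cross-inequalities (symmetric under $L \leftrightarrow M$) read
\begin{align*}
\tfrac{1}{d-1}\bigl(L + \tfrac{N}{d+1}\bigr) &\geq K + \tfrac{M}{d+1},\\
\tfrac{1}{d-1}\bigl(M + \tfrac{N}{d+1}\bigr) &\geq K + \tfrac{L}{d+1}.
\end{align*}
Adding them and simplifying using $\tfrac{1}{d-1} - \tfrac{1}{d+1} = \tfrac{2}{d^2-1}$ produces $(L+M+N)/(d^2-1) \geq K$, and substituting $L+M+N = 1 - K$ gives $K \leq 1/d^2$. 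Therefore $e_{\operatorname{PPT}}(\mathcal{S}_{AB}^d, \emptyset) \geq 1 - 1/d^2$, and by \eqref{eq:ppt-err-locc-err} the same bound holds for $e_{\operatorname{LOCC}}$.

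For the upper bound, I would take the product channel $\widetilde{\mathcal{S}} := \mathcal{T}_{A \to B} \otimes \mathcal{T}_{B \to A}$ with $\mathcal{T}(\rho) := (\rho + I_d)/(d+1)$. This $\mathcal{T}$ is the depolarizing channel precisely at the entanglement-breaking threshold (its normalized Choi operator is the isotropic state with parameter $1/(d+1)$), so it admits a measure-and-prepare realization and is hence one-way LOCC; consequently $\widetilde{\mathcal{S}}$ is LOCC across the Alice-Bob cut. A direct calculation with the generalized Pauli channel $\mathcal{D}$ from Remark~\ref{rem:channel-form} yields the identity $\mathcal{T} = \tfrac{1}{d}\operatorname{id} + \tfrac{d-1}{d}\mathcal{D}$ (both sides equal $(\rho + I)/(d+1)$). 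Expanding the tensor product and matching against the canonical decomposition of Proposition~\ref{prop:err-collapse-LOCC} identifies the coefficients $K = 1/d^2$, $L = M = (d-1)/d^2$, $N = (d-1)^2/d^2$, which correctly sum to $(1 + (d-1))^2/d^2 = 1$. The simulation error under this LOCC channel is then $1 - K = 1 - 1/d^2$, closing the sandwich.

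The most delicate step is verifying that $\mathcal{T}$ genuinely lies at the entanglement-breaking threshold so that it is LOCC-implementable without any resource state; this is what allows us to convert the abstract feasibility point $(K,L,M,N) = (1/d^2,(d-1)/d^2,(d-1)/d^2,(d-1)^2/d^2)$ of the PPT program into an actual LOCC protocol. The remainder is bookkeeping: the LP manipulation above for the PPT lower bound and the algebraic matching of coefficients for the LOCC construction.
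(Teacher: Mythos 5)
Your proof is correct, but it departs from the paper's argument in both halves, and in each case in a way worth noting. For the lower bound, the paper writes out the linear program in standard form, constructs its dual, and exhibits a feasible dual point; you instead add the two cross-constraints of the primal directly, using $\tfrac{1}{d-1}-\tfrac{1}{d+1}=\tfrac{2}{d^{2}-1}$ to get $\tfrac{L+M+N}{d^{2}-1}\geq K$ and hence $K\leq\tfrac{1}{d^{2}}$ from normalization. This is a genuinely shorter route: it proves that \emph{every} feasible point obeys $K\leq 1/d^{2}$ without invoking weak duality, at the cost of not producing a certificate reusable for the isotropic/Werner cases where the paper's dual-variable bookkeeping pays off. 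For the achievability half, the paper prepares $|0\rangle\!\langle 0|_{\hat A}\otimes|0\rangle\!\langle 0|_{\hat B}$ on $d^{2}$-dimensional systems, bilaterally twirls it to an isotropic state of fidelity $1/d^{2}$, and runs two teleportations, bounding the error by data processing of the trace distance; you instead take the explicit product channel $\mathcal{T}_{A\to B}\otimes\mathcal{T}_{B\to A}$ with $\mathcal{T}(\rho)=(\rho+I)/(d+1)$, which is entanglement-breaking (its Choi state is the isotropic state at the separability boundary) and hence measure-and-prepare, so the product is LOCC with no resource. Your identity $\mathcal{T}=\tfrac{1}{d}\operatorname{id}+\tfrac{d-1}{d}\mathcal{D}$ checks out, the coefficients $(K,L,M,N)=(1/d^{2},(d-1)/d^{2},(d-1)/d^{2},(d-1)^{2}/d^{2})$ are correct, and the error $1-K$ then follows either from the formula established in Appendix~\ref{app:simplify-LOCC-err-swap} for channels in the canonical form, or more directly from writing $\widetilde{\mathcal{S}}=\tfrac{1}{d^{2}}\mathcal{S}^{d}+(1-\tfrac{1}{d^{2}})\mathcal{E}$ and using homogeneity of the diamond norm. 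Your construction is arguably more concrete than the paper's (no Haar twirl or unitary two-design is needed) and makes the saturating measurement structure transparent. One small slip: the normalized Choi operator of $\mathcal{T}$ is the isotropic state with \emph{fidelity} $F=1/d$ (the separability threshold $F=1/d_{\hat A}$ of \eqref{eq:isotropic-def} with $d_{\hat A}=d$); the value $1/(d+1)$ you quote is the depolarizing parameter, not the isotropic fidelity. This does not affect the argument, since the substantive claim---that $\mathcal{T}$ sits exactly at the entanglement-breaking threshold---is correct.
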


\begin{figure}[ptb]
\begin{center}
\includegraphics[
width=\linewidth
]{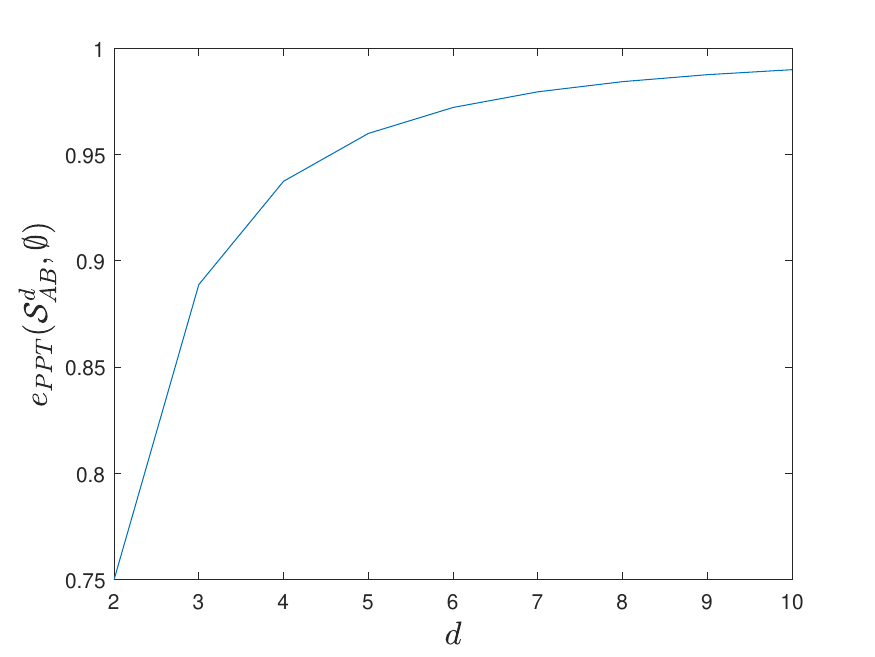}
\end{center}
\caption{Plot of the simulation error of bidirectional teleportation when
using no quantum resource state or, equivalently when LOCC is free, a
separable state. The quantity $d$ is the dimension of the SWAP channel.}%
\label{fig:noResourceError}%
\end{figure}

\begin{proof}
If there is no resource state, this is equivalent to the state $\rho_{\hat
{A}\hat{B}}$ simply being equal to the number one, and the three operators
$K_{\hat{A}\hat{B}}$, $L_{\hat{A}\hat{B}}$, and $N_{\hat{A}\hat{B}}$, from
Proposition~\ref{prop:swap-sdp-simplify}, reduce to real numbers $p_{1}$,
$p_{2}$, and $p_{3}$. Then the semi-definite program from
Proposition~\ref{prop:swap-sdp-simplify}\ simplifies to the following linear
program:%
\begin{equation}
1-\sup_{p_{1},p_{2},p_{3}\geq0}p_{1}, \label{eq:LP-no-res-primal}%
\end{equation}
subject to%
\begin{align}
p_{1}+\frac{p_{2}}{d+1}+\frac{p_{3}}{\left(  d+1\right)  ^{2}}  &
\geq0,\label{eq:redundant-constraint}\\
\frac{1}{d^{2}-1}\left(  p_{2}+p_{3}\right)   &  \geq p_{1},\\
p_{1}+\frac{p_{3}}{\left(  d-1\right)  ^{2}}  &  \geq\frac{p_{2}}{d-1},\\
p_{1}+p_{2}+p_{3}  &  =1. \label{eq:LP-no-res-primal-last}%
\end{align}
Note that the first inequality constraint is redundant (a trivial consequence
of $p_{1},p_{2},p_{3}\geq0$), and so it can be eliminated. A feasible point of
the linear program above is given by%
\begin{equation}
p_{1}=\frac{1}{d^{2}},\qquad p_{2}=0,\qquad p_{3}=1-\frac{1}{d^{2}},
\end{equation}
implying that%
\begin{equation}
e_{\operatorname{PPT}}(\mathcal{S}_{AB}^{d},\emptyset)\leq1-\frac{1}{d^{2}}.
\label{eq:no-res-upper-bnd}%
\end{equation}

To find a matching lower bound, we determine the linear program dual to that
in \eqref{eq:LP-no-res-primal}. Consider that
\eqref{eq:LP-no-res-primal}--\eqref{eq:LP-no-res-primal-last}\ can be written
in the standard form of a linear program as follows \cite{BV04}:%
\begin{equation}
1-\sup_{x\geq0}\left\{  c^{T}x:Ax\leq b\right\}  ,
\label{eq:LP-no-res-primal-standard-form}%
\end{equation}
where%
\begin{align}
x^{T}  &  =%
\begin{bmatrix}
p_{1} & p_{2} & p_{3}%
\end{bmatrix}
,\label{eq:LP-no-res-standard-1}\\
c^{T}  &  =%
\begin{bmatrix}
1 & 0 & 0
\end{bmatrix}
,\\
A  &  =%
\begin{bmatrix}
1 & -\frac{1}{d^{2}-1} & -\frac{1}{d^{2}-1}\\
-1 & \frac{1}{d-1} & -\frac{1}{\left(  d-1\right)  ^{2}}\\
1 & 1 & 1\\
-1 & -1 & -1
\end{bmatrix}
,\\
b^{T}  &  =%
\begin{bmatrix}
0 & 0 & 1 & -1
\end{bmatrix}
. \label{eq:LP-no-res-standard-last}%
\end{align}
Note that, in deriving the matrix $A$, we have eliminated the redundant
constraint in \eqref{eq:redundant-constraint}. The dual of this linear program
is given by \cite{BV04}%
\begin{equation}
1-\inf_{y\geq0}\left\{  b^{T}y:A^{T}y\geq c\right\}  .
\label{eq:LP-dual-no-res}%
\end{equation}
Due to weak duality \cite{BV04}, the optimal value of
\eqref{eq:LP-dual-no-res} does not exceed the optimal value of
\eqref{eq:LP-no-res-primal-standard-form}. Writing
\eqref{eq:LP-dual-no-res}\ out in detail using the definitions in
\eqref{eq:LP-no-res-standard-1}--\eqref{eq:LP-no-res-standard-last}\ gives%
\begin{equation}
1-\inf_{y_{1},\ldots,y_{4}\geq0}y_{3}-y_{4}%
\end{equation}
subject to%
\begin{align}
y_{1}-y_{2}+y_{3}-y_{4}  &  \geq1,\\
-\frac{y_{1}}{d^{2}-1}+\frac{y_{2}}{d-1}+y_{3}-y_{4}  &  \geq0,\\
-\frac{y_{1}}{d^{2}-1}+\frac{y_{2}}{d-1}+y_{3}-y_{4}  &  \geq0,\\
-\frac{y_{1}}{d^{2}-1}-\frac{y_{2}}{\left(  d-1\right)  ^{2}}+y_{3}-y_{4}  &
\geq0.
\end{align}
A solution is given by%
\begin{equation}
y_{1}=1-\frac{1}{d^{2}},\quad y_{2}=y_{4}=0,\quad y_{3}=\frac{1}{d^{2}},
\end{equation}
implying that%
\begin{equation}
e_{\operatorname{PPT}}(\mathcal{S}_{AB}^{d},\emptyset)\geq1-\frac{1}{d^{2}}.
\label{eq:no-res-lower-bnd}%
\end{equation}
Putting together \eqref{eq:no-res-upper-bnd} and \eqref{eq:no-res-lower-bnd},
we conclude the equality%
\begin{equation}
e_{\operatorname{PPT}}(\mathcal{S}_{AB}^{d},\emptyset)=1-\frac{1}{d^{2}}.
\end{equation}

By applying the inequality in \eqref{eq:ppt-err-locc-err}, we conclude that%
\begin{equation}
e_{\operatorname{PPT}}(\mathcal{S}_{AB}^{d},\emptyset)\leq
e_{\operatorname{LOCC}}(\mathcal{S}_{AB}^{d},\emptyset).
\end{equation}
Thus, it remains to establish an upper bound on $e_{\operatorname{LOCC}%
}(\mathcal{S}_{AB}^{d},\emptyset)$, by demonstrating a scheme for
bidirectional teleportation that achieves the simulation error $1-\frac
{1}{d^{2}}$.

A scheme to achieve this error using LOCC\ consists of Alice and Bob preparing
the $d^{2}$-dimensional states $|0\rangle\!\langle0|_{\hat{A}}\otimes
|0\rangle\!\langle0|_{\hat{B}}$, applying the bilateral twirl in
\eqref{eq:bilat-twirl}\ to get the state%
\begin{align}
\omega_{\hat{A}\hat{B}}  &  \coloneqq \widetilde{\mathcal{T}}_{\hat{A}\hat{B}%
}(|0\rangle\!\langle0|_{\hat{A}}\otimes|0\rangle\!\langle0|_{\hat{B}})\\
&  =\frac{1}{d^{2}}\Phi_{\hat{A}\hat{B}}^{d^{2}}+\left(  1-\frac{1}{d^{2}%
}\right)  \frac{\left(  I_{\hat{A}\hat{B}}-\Phi_{\hat{A}\hat{B}}^{d^{2}%
}\right)  }{d^{4}-1},
\end{align}
separating out $\Phi_{\hat{A}\hat{B}}^{d^{2}}$ to two e-dits $\Phi_{\hat
{A}_{1}\hat{B}_{1}}^{d}\otimes\Phi_{\hat{A}_{2}\hat{B}_{2}}^{d}$ via local
isometries, and then performing teleportation in opposite directions.

Let us now prove that this simulation achieves the simulation error
$1-\frac{1}{d^{2}}$. Consider that the action of the bidirectional
teleportation operations is equivalent to an LOCC\ channel $\mathcal{L}%
_{AB\hat{A}\hat{B}\rightarrow AB}$. Furthermore, we apply the same LOCC
channel $\mathcal{L}_{AB\hat{A}\hat{B}\rightarrow AB}$ regardless of whether
we are conducting ideal or unideal bidirectional teleportation. So it follows
that ideal bidirectional teleportation is given by%
\begin{equation}
\mathcal{S}_{AB}^{d}(\cdot)=\mathcal{L}_{AB\hat{A}\hat{B}\rightarrow
AB}((\cdot)\otimes\Phi_{\hat{A}\hat{B}}^{d^{2}}),
\end{equation}
and the channel realized by unideal bidirectional teleportation is%
\begin{equation}
\widetilde{\mathcal{S}}_{AB}^{d}(\cdot)=\mathcal{L}_{AB\hat{A}\hat
{B}\rightarrow AB}((\cdot)\otimes\omega_{\hat{A}\hat{B}}).
\end{equation}
Let $\psi_{RAB}$ be an arbitrary input state for these channels. Then we find
that%
\begin{align}
&  \frac{1}{2}\left\Vert \mathcal{S}_{AB}^{d}(\psi_{RAB})-\widetilde
{\mathcal{S}}_{AB}^{d}(\psi_{RAB})\right\Vert _{1}
\label{eq:err-eval-no-res-1}\\
&  =\frac{1}{2}\left\Vert \mathcal{L}_{AB\hat{A}\hat{B}\rightarrow AB}%
(\psi_{RAB}\otimes(\Phi_{\hat{A}\hat{B}}^{d^{2}}-\omega_{\hat{A}\hat{B}%
}))\right\Vert _{1}\\
&  \leq\frac{1}{2}\left\Vert \psi_{RAB}\otimes(\Phi_{\hat{A}\hat{B}}^{d^{2}%
}-\omega_{\hat{A}\hat{B}})\right\Vert _{1}\\
&  =\frac{1}{2}\left\Vert \Phi_{\hat{A}\hat{B}}^{d^{2}}-\omega_{\hat{A}\hat
{B}}\right\Vert _{1}.
\end{align}
The inequality follows from the data-processing inequality for trace distance
under the action of the quantum channel $\mathcal{L}_{AB\hat{A}\hat
{B}\rightarrow AB}$. The final equality follows from the multiplicativity of
the trace norm and the fact that it is equal to one for the quantum state
$\psi_{RAB}$. Since the state $\psi_{RAB}$ is arbitrary, we conclude that%
\begin{equation}
\frac{1}{2}\left\Vert \mathcal{S}_{AB}^{d}-\widetilde{\mathcal{S}}_{AB}%
^{d}\right\Vert _{\diamond}\leq\frac{1}{2}\left\Vert \Phi_{\hat{A}\hat{B}%
}^{d^{2}}-\omega_{\hat{A}\hat{B}}\right\Vert _{1},
\end{equation}
by applying the equality in \eqref{eq:d-dist-pure-states}. Continuing, we find
that%
\begin{align}
&  \frac{1}{2}\left\Vert \Phi_{\hat{A}\hat{B}}^{d^{2}}-\omega_{\hat{A}\hat{B}%
}\right\Vert _{1}\nonumber\\
&  =\frac{1}{2}\left\Vert \Phi_{\hat{A}\hat{B}}^{d^{2}}-\left(  \frac{1}%
{d^{2}}\Phi_{\hat{A}\hat{B}}^{d^{2}}+\left(  1-\frac{1}{d^{2}}\right)
\frac{\left(  I_{\hat{A}\hat{B}}-\Phi_{\hat{A}\hat{B}}^{d^{2}}\right)  }%
{d^{4}-1}\right)  \right\Vert _{1}\nonumber\\
&  =\frac{1}{2}\left\Vert \left(  1-\frac{1}{d^{2}}\right)  \Phi_{\hat{A}%
\hat{B}}^{d^{2}}-\left(  1-\frac{1}{d^{2}}\right)  \frac{\left(  I_{\hat
{A}\hat{B}}-\Phi_{\hat{A}\hat{B}}^{d^{2}}\right)  }{d^{4}-1}\right\Vert
_{1}\nonumber\\
&  =1-\frac{1}{d^{2}}.
\end{align}
The final equality follows because $\Phi_{\hat{A}\hat{B}}^{d^{2}}$ is
orthogonal to $I_{\hat{A}\hat{B}}-\Phi_{\hat{A}\hat{B}}^{d^{2}}$, and both
operators are positive semi-definite, so that the trace norm is equal to the
sum of the traces of the individual operators. Thus, we have proven that%
\begin{equation}
\frac{1}{2}\left\Vert \mathcal{S}_{AB}^{d}-\widetilde{\mathcal{S}}_{AB}%
^{d}\right\Vert _{\diamond}\leq1-\frac{1}{d^{2}},
\label{eq:err-eval-no-res-last}%
\end{equation}
establishing an upper bound on the LOCC\ simulation error that matches the
lower bound in \eqref{eq:no-res-lower-bnd}.
\end{proof}

\subsection{Isotropic states}

\label{sec:isotropic-example}A general class of bipartite states of interest
in quantum information is the class of isotropic states \cite{Horodecki99}. An
isotropic state of fidelity $F\in\left[  0,1\right]  $ and dimension
$d_{\hat{A}}\in\left\{  2,3,4,\ldots\right\}  $ is defined as follows:%
\begin{equation}
\rho_{\hat{A}\hat{B}}^{(F,d_{\hat{A}})}\coloneqq F\Phi_{\hat{A}\hat{B}%
}+\left(  1-F\right)  \frac{I_{\hat{A}\hat{B}}-\Phi_{\hat{A}\hat{B}}}%
{d_{\hat{A}}^{2}-1}. \label{eq:isotropic-def}%
\end{equation}
The general interest in isotropic states stems from the fact that an arbitrary
state of systems $\hat{A}\hat{B}$ can be twirled to an isotropic state, which
follows from an application of \eqref{eq:simple-formula-bilat-twirl}.

The following proposition establishes a simple expression for the simulation
error when using an isotropic state for bidirectional teleportation. It is
given exclusively in terms of the dimension $d$ of the swap channel that is
being simulated and the two parameters $F$ and $d_{\hat{A}}$ that characterize
the isotropic resource state. A proof is available in
Appendix~\ref{app:isotropic-proof}. The proof exploits the symmetries of an
isotropic state to reduce a variation of the semi-definite program in
Proposition~\ref{prop:swap-sdp-simplify}\ to a linear program, which we then
solve analytically.

\begin{proposition}
\label{prop:isotropic-sim-perf}The simulation error for the unitary swap
channel when using an isotropic resource state $\rho_{\hat{A}\hat{B}%
}^{(F,d_{\hat{A}})}$ is as follows:%
\begin{multline}
\label{eq:isotropic-error}e_{\operatorname{PPT}}(\mathcal{S}_{AB}^{d}%
,\rho_{\hat{A}\hat{B}}^{(F,d_{\hat{A}})})=\\
\left\{
\begin{array}
[c]{cc}%
1-\frac{1}{d^{2}} & \text{if }F\leq\frac{1}{d_{\hat{A}}}\\
1-\frac{Fd_{\hat{A}}}{d^{2}} & \text{if }F>\frac{1}{d_{\hat{A}}}\text{ and
}d_{\hat{A}}\leq d^{2}\\
\frac{\left(  1-\frac{1}{d^{2}}\right)  \left(  1-F\right)  }{1-\frac
{1}{d_{\hat{A}}}} & \text{if }F>\frac{1}{d_{\hat{A}}}\text{ and }d_{\hat{A}%
}>d^{2}%
\end{array}
\right.  .
\end{multline}
We also have that%
\begin{equation}
e_{\operatorname{PPT}}(\mathcal{S}_{AB}^{d},\rho_{\hat{A}\hat{B}}%
^{(F,d_{\hat{A}})})=e_{\operatorname{LOCC}}(\mathcal{S}_{AB}^{d},\rho_{\hat
{A}\hat{B}}^{(F,d_{\hat{A}})}) \label{eq:ppt-err-equals-locc-err-iso}%
\end{equation}
if $F\leq\frac{1}{d_{\hat{A}}}$ or if $F>\frac{1}{d_{\hat{A}}}$ and
$d_{\hat{A}}\leq d^{2}$.
\end{proposition}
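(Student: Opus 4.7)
The plan is to prove the proposition in two parts: establish the stated value of $e_{\operatorname{PPT}}$ by solving the SDP of Proposition~\ref{prop:swap-sdp-simplify} analytically, and then construct explicit LOCC protocols that saturate the PPT bound in the two indicated regimes.

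For the PPT computation, the key idea is that an isotropic state is invariant under the bilateral twirl $\mathcal{U}_{\hat{A}} \otimes \overline{\mathcal{U}}_{\hat{B}}$. Averaging each of $K_{\hat{A}\hat{B}}, L_{\hat{A}\hat{B}}, M_{\hat{A}\hat{B}}, N_{\hat{A}\hat{B}}$ over this twirl leaves the objective $\operatorname{Tr}[\rho_{\hat{A}\hat{B}}^{(F,d_{\hat{A}})} K_{\hat{A}\hat{B}}]$ unchanged, and preserves all of the semi-definite and PPT constraints of Proposition~\ref{prop:swap-sdp-simplify}; this is because conjugation by $U \otimes \overline{U}$ intertwines with the partial transpose on $\hat{B}$ to yield conjugation by $U \otimes U$, which preserves positivity. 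Consequently each of $K, L, M, N$ may be restricted to the two-dimensional commutant of the twirl, i.e.\ to operators of the form $a_i \Phi_{\hat{A}\hat{B}} + b_i (I_{\hat{A}\hat{B}} - \Phi_{\hat{A}\hat{B}})$. Applying $T_{\hat{B}}$ to such operators produces explicit combinations of the identity and the swap operator on $\hat{A}\hat{B}$, and decomposing in the symmetric/antisymmetric projector basis collapses the SDP to an explicit linear program in the eight scalars $\{a_i, b_i\}_{i=1}^{4}$. I would then solve this LP by exhibiting matching primal-dual feasible points in each of the three regimes; the thresholds $F = 1/d_{\hat{A}}$ and $d_{\hat{A}} = d^{2}$ arise naturally from when particular inequality constraints become active.

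For the LOCC upper bound in the regime $F \leq 1/d_{\hat{A}}$, I would use the fact \cite{Horodecki99} that isotropic states with $F \leq 1/d_{\hat{A}}$ are separable, and hence producible by LOCC from nothing. The resource state is therefore useless, and the LOCC simulation error is bounded above by $e_{\operatorname{LOCC}}(\mathcal{S}_{AB}^{d}, \emptyset) = 1 - 1/d^{2}$ via Proposition~\ref{prop:no-res-sim-err}. For the LOCC upper bound in the regime $F > 1/d_{\hat{A}}$ and $d_{\hat{A}} \leq d^{2}$, I would adapt the achievability scheme from the proof of Proposition~\ref{prop:no-res-sim-err}: Alice and Bob each locally append an ancilla in the pure state $|0\rangle\!\langle 0|$ so as to embed their shares in a $d^{2}$-dimensional space, then apply the bilateral twirl $\widetilde{\mathcal{T}}$. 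A short calculation using $\langle \Gamma|_{\hat{A}\hat{B}} \rho_{\hat{A}\hat{B}}^{(F,d_{\hat{A}})} |\Gamma\rangle_{\hat{A}\hat{B}} = d_{\hat{A}} F$ shows that the overlap of the extended state with the $d^{2}$-dimensional maximally entangled state is exactly $F' = F d_{\hat{A}}/d^{2}$, so the twirled state $\sigma$ is isotropic in $d^{2} \times d^{2}$ with fidelity $F'$. They then run ideal bidirectional teleportation on $\sigma$; arguing via data processing and linearity exactly as in \eqref{eq:err-eval-no-res-1}--\eqref{eq:err-eval-no-res-last} bounds the simulation error by $\tfrac{1}{2}\|\Phi_{\hat{A}\hat{B}}^{d^{2}} - \sigma\|_{1} = 1 - F'$, matching the PPT value.

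The principal obstacle will be the detailed case analysis of the reduced linear program: each of the three regimes requires identifying the correct optimal primal vertex and exhibiting a matching dual certificate, and verifying dual feasibility requires careful algebra in $d$, $d_{\hat{A}}$, and $F$. A secondary technical nuisance is handling the case when $d_{\hat{A}}$ does not divide $d^{2}$ in the LOCC extension; this can be circumvented by embedding isometrically into a $d^{2}$-dimensional subspace of an ancilla-extended space, which remains a legitimate local operation and does not affect the overlap computation above.
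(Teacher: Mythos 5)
Your proposal follows essentially the same route as the paper's Appendix~\ref{app:isotropic-proof}: twirl the POVM elements $K,L,M,N$ to isotropic form using the $\mathcal{U}\otimes\overline{\mathcal{U}}$ invariance of the resource state (noting that $T_{\hat{B}}$ intertwines this with $\mathcal{U}\otimes\mathcal{U}$ so the C-PPT-P constraints survive), reduce the SDP of Proposition~\ref{prop:swap-sdp-simplify} to a linear program in eight scalars solved by matching primal--dual feasible points, invoke separability for $F\leq 1/d_{\hat{A}}$, and achieve the bound for $d_{\hat{A}}\leq d^{2}$ by embedding, twirling to a $d^{2}\times d^{2}$ isotropic state of fidelity $Fd_{\hat{A}}/d^{2}$, and teleporting, exactly as in \eqref{eq:err-eval-no-res-1}--\eqref{eq:err-eval-no-res-last}. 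The only content you defer---the explicit primal and dual certificates in each regime---is precisely what the paper's appendix writes out, so your plan is correct and complete in structure.
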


It remains open to determine if the equality in
\eqref{eq:ppt-err-equals-locc-err-iso} holds when $F>\frac{1}{d_{\hat{A}}}$
and $d_{\hat{A}}>d^{2}$. The main question is to design an LOCC-assisted
scheme that achieves a simulation error of $(1-\frac{1}{d^{2}})\left(
1-F\right)  /(1-\frac{1}{d_{\hat{A}}})$ when $F>\frac{1}{d_{\hat{A}}}$ and
$d_{\hat{A}}>d^{2}$.

\begin{figure}[ptb]
\begin{center}
\includegraphics[
width=\linewidth
]{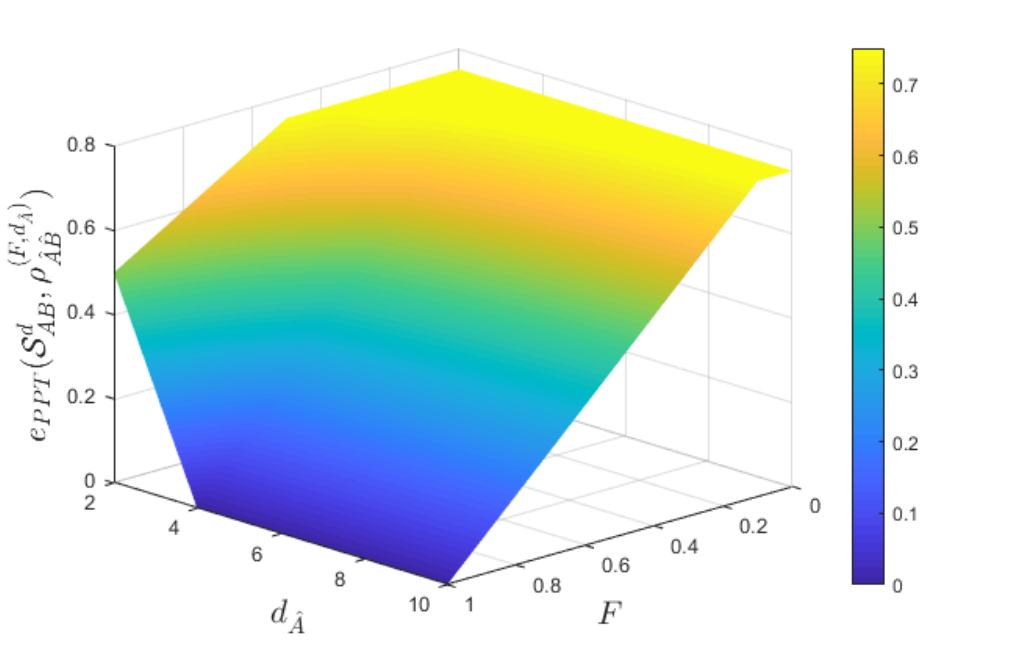}
\end{center}
\caption{Plot of the simulation error of bidirectional teleportation when
using the isotropic resource state defined in $\eqref{eq:isotropic-def}$ where
$F$ is the fidelity parameter and $d_{\hat{A}}$ is the dimension of Alice's
system of the resource state.}%
\label{fig:isotropicError}%
\end{figure}

Figure~\ref{fig:isotropicError} plots the expression given in
\eqref{eq:isotropic-error} for the simulation error.

\begin{remark}
[Optimal strategy with a single e-dit]\label{rem:single-edit-strat}A special
case of Proposition~\ref{prop:isotropic-sim-perf} above occurs when Alice and
Bob share a single e-dit, so that $F=1$ and $d_{\hat{A}}=d$. In this case, the
following equality holds%
\begin{equation}
e_{\operatorname{PPT}}(\mathcal{S}_{AB}^{d},\rho_{\hat{A}\hat{B}}%
^{(1,d)})=e_{\operatorname{LOCC}}(\mathcal{S}_{AB}^{d},\rho_{\hat{A}\hat{B}%
}^{(1,d)})=1-\frac{1}{d}.
\end{equation}
An optimal protocol to achieve this simulation error is for Alice and Bob to
embed the the resource state $\rho_{\hat{A}\hat{B}}^{(1,d)}=\Phi_{\hat{A}%
\hat{B}}^{d}$ in larger Hilbert spaces, each of dimension $d^{2}$ and labeled
by $\hat{A}$ and $\hat{B}$ without loss of generality, twirl it according to
\eqref{eq:bilat-twirl} and \eqref{eq:simple-formula-bilat-twirl}, which
results in the following resource state:%
\begin{equation}
\frac{1}{d}\Phi_{\hat{A}\hat{B}}^{d^{2}}+\left(  1-\frac{1}{d}\right)
\frac{\left(  I_{\hat{A}\hat{B}}-\Phi_{\hat{A}\hat{B}}^{d^{2}}\right)  }%
{d^{4}-1}.
\end{equation}
From there, Alice and Bob can locally separate out $\Phi_{\hat{A}\hat{B}%
}^{d^{2}}$ to two e-dits $\Phi_{A_{1}B_{1}}^{d}\otimes\Phi_{A_{2}B_{2}}^{d}$
via local isometries and each perform a teleportation in opposite directions.
By following an error analysis similar to that in
\eqref{eq:err-eval-no-res-1}--\eqref{eq:err-eval-no-res-last}, we conclude
that this scheme achieves a simulation error equal to $1-\frac{1}{d}$.
\end{remark}

\subsection{Werner states}

\label{sec:werner-example}Another general class of bipartite states of
interest in quantum information is the class of Werner states \cite{W89}. A
Werner state of parameter $p\in\left[  0,1\right]  $ and dimension $d_{\hat
{A}}\in\left\{  2,3,4,\ldots\right\}  $ is defined as follows:%
\begin{multline}
W_{\hat{A}\hat{B}}^{(p,d_{\hat{A}})}\coloneqq\label{eq:Werner-state-def}\\
\left(  1-p\right)  \frac{2}{d_{\hat{A}}\left(  d_{\hat{A}}+1\right)  }%
\Pi_{\hat{A}\hat{B}}^{\mathcal{S}}+p\frac{2}{d_{\hat{A}}\left(  d_{\hat{A}%
}-1\right)  }\Pi_{\hat{A}\hat{B}}^{\mathcal{A}},
\end{multline}
where $\Pi_{\hat{A}\hat{B}}^{\mathcal{S}}\coloneqq (I_{\hat{A}\hat{B}}%
+F_{\hat{A}\hat{B}})/2$, $\Pi_{\hat{A}\hat{B}}^{\mathcal{A}}\coloneqq (I_{\hat
{A}\hat{B}}-F_{\hat{A}\hat{B}})/2$, and $F_{\hat{A}\hat{B}}$ is the unitary
swap operator defined in \eqref{eq:swap-op-def}. The general interest in
Werner states stems from the fact that an arbitrary state of systems $\hat
{A}\hat{B}$ can be twirled in a different way to a Werner state. That is, a
Werner state results from the following bilateral twirl:%
\begin{equation}
\widetilde{\mathcal{W}}_{CD}(X_{CD})\coloneqq \int dU\ (\mathcal{U}_{C}%
\otimes\mathcal{U}_{D})(X_{CD}), \label{eq:werner-twirl-1}%
\end{equation}
where the notation is defined similarly to that in \eqref{eq:bilat-twirl}. It
is well known that the action of a Werner twirl as above, on an arbitrary
input operator $X_{CD}$, is as follows \cite{Watrous2018}:%
\begin{multline}
\widetilde{\mathcal{W}}_{CD}(X_{CD})=\operatorname{Tr}[\Pi_{\hat{A}\hat{B}%
}^{\mathcal{S}}X_{CD}]\frac{2}{d_{\hat{A}}\left(  d_{\hat{A}}+1\right)  }%
\Pi_{\hat{A}\hat{B}}^{\mathcal{S}}\label{eq:werner-twirl-2}\\
+\operatorname{Tr}[\Pi_{\hat{A}\hat{B}}^{\mathcal{A}}X_{CD}]\frac{2}%
{d_{\hat{A}}\left(  d_{\hat{A}}-1\right)  }\Pi_{\hat{A}\hat{B}}^{\mathcal{A}}.
\end{multline}
Just as with isotropic states, the general interest in them stems from the
fact that they result from a twirl of an arbitrary bipartite state according
to $\widetilde{\mathcal{W}}_{CD}$. Also, the states are easily characterized
in terms of just two parameters.

The following proposition establishes a simple expression for the simulation
error when using a Werner state for bidirectional teleportation. It is given
exclusively in terms of the dimension $d$ of the swap channel that is being
simulated and the two parameters $p$ and $d_{\hat{A}}$ that characterize the
Werner resource state. A proof is available in
Appendix~\ref{app:werner-state-proof}. The proof exploits the symmetries of a
Werner state to reduce a variation of the semi-definite program in
Proposition~\ref{prop:swap-sdp-simplify}\ to a linear program, which we then
solve analytically.

\begin{proposition}
\label{prop:werner-sim-perf}The simulation error for the unitary swap channel
when using a Werner resource state $W_{\hat{A}\hat{B}}^{(p,d_{\hat{A}})}$ is
as follows:%
\begin{equation}
\label{eq:werner-error}e_{\operatorname{PPT}}(\mathcal{S}_{AB}^{d},W_{\hat
{A}\hat{B}}^{(p,d_{\hat{A}})})=\left\{
\begin{array}
[c]{cc}%
1-\frac{1}{d^{2}} & \text{if }p\leq\frac{1}{2}\\
1-\frac{4p-2+d_{\hat{A}}}{d^{2}d_{\hat{A}}} & \text{if }p>\frac{1}{2}%
\end{array}
\right.  .
\end{equation}
If $p\leq\frac{1}{2}$, then%
\begin{equation}
e_{\operatorname{PPT}}(\mathcal{S}_{AB}^{d},W_{\hat{A}\hat{B}}^{(p,d_{\hat{A}%
})})=e_{\operatorname{LOCC}}(\mathcal{S}_{AB}^{d},W_{\hat{A}\hat{B}%
}^{(p,d_{\hat{A}})}).
\end{equation}

\end{proposition}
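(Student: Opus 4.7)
The plan is to exploit the $(U\otimes U)$ symmetry of the Werner resource state to reduce the semi-definite program of Proposition~\ref{prop:swap-sdp-simplify} to a linear program in a constant number of variables, solve that LP in closed form, and handle the LOCC-equality statement when $p\le 1/2$ by a separate short argument using the fact that Werner states are separable in that regime.

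For the symmetry reduction, I would use that $W_{\hat A\hat B}^{(p,d_{\hat A})}$ is invariant under the Werner twirl $\widetilde{\mathcal W}$ of \eqref{eq:werner-twirl-1}, and that $\widetilde{\mathcal W}$ is a unital self-adjoint quantum channel. Replacing each of $K_{\hat A\hat B}, L_{\hat A\hat B}, M_{\hat A\hat B}, N_{\hat A\hat B}$ by its Werner twirl preserves the objective $\operatorname{Tr}[W K]$, the completeness relation $K+L+M+N = I_{\hat A\hat B}$, and positivity. It also preserves every PPT inequality in Proposition~\ref{prop:swap-sdp-simplify}, because the identity
\[
T_{\hat B}\bigl((U\otimes U)X(U\otimes U)^\dagger\bigr) = (U\otimes\bar U)\,T_{\hat B}(X)\,(U\otimes\bar U)^\dagger
\]
turns an average over $U\otimes U$ before partial-transposing into an isotropic twirl after partial-transposing, and conjugation and averaging both preserve positive semi-definiteness. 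Hence I may restrict to Werner-invariant measurement operators of the form $X = x_{\mathcal S}\Pi^{\mathcal S}_{\hat A\hat B} + x_{\mathcal A}\Pi^{\mathcal A}_{\hat A\hat B}$ with $x_{\mathcal S},x_{\mathcal A}\ge 0$.

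Substituting this parametrization and using
\[
T_{\hat B}(\Pi^{\mathcal S}_{\hat A\hat B}) = \tfrac12\bigl[(d_{\hat A}+1)\Phi_{\hat A\hat B}+(I_{\hat A\hat B}-\Phi_{\hat A\hat B})\bigr],
\]
\[
T_{\hat B}(\Pi^{\mathcal A}_{\hat A\hat B}) = \tfrac12\bigl[(1-d_{\hat A})\Phi_{\hat A\hat B}+(I_{\hat A\hat B}-\Phi_{\hat A\hat B})\bigr],
\]
each PPT inequality in Proposition~\ref{prop:swap-sdp-simplify} splits, by comparing coefficients of the orthogonal projectors $\Phi_{\hat A\hat B}$ and $I_{\hat A\hat B}-\Phi_{\hat A\hat B}$, into two scalar linear inequalities. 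The objective collapses to $(1-p)k_{\mathcal S} + p k_{\mathcal A}$, and $K+L+M+N=I_{\hat A\hat B}$ becomes two linear equalities $k_{\mathcal S}+l_{\mathcal S}+m_{\mathcal S}+n_{\mathcal S}=1$ and $k_{\mathcal A}+l_{\mathcal A}+m_{\mathcal A}+n_{\mathcal A}=1$. The result is an LP in eight non-negative variables with $O(1)$ constraints, which I would solve by splitting into the regimes $p\le 1/2$ and $p> 1/2$, guessing primal optimizers in each regime, constructing matching dual-feasible vectors following exactly the same standard-form LP/weak-duality route used in the proof of Proposition~\ref{prop:no-res-sim-err}, and verifying that the two values coincide with $1-1/d^2$ and $1-(4p-2+d_{\hat A})/(d^2 d_{\hat A})$ respectively. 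The numerical LP output that accompanies the paper will indicate which constraints are active at each optimum.

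The LOCC equality for $p\le 1/2$ requires no further LP work: a short direct calculation of $T_{\hat B}(W^{(p,d_{\hat A})}_{\hat A\hat B})$ in the $\{\Phi_{\hat A\hat B},I_{\hat A\hat B}-\Phi_{\hat A\hat B}\}$ basis shows that the Werner state is PPT precisely when $p\le 1/2$, and Werner states are separable iff they are PPT. So for $p\le 1/2$ Alice and Bob may simply discard $W^{(p,d_{\hat A})}_{\hat A\hat B}$ and run the optimal no-resource LOCC protocol from Proposition~\ref{prop:no-res-sim-err}, giving $e_{\operatorname{LOCC}}(\mathcal S^d_{AB},W^{(p,d_{\hat A})})\le 1-1/d^2$; combined with $e_{\operatorname{PPT}}\le e_{\operatorname{LOCC}}$ and the PPT value $1-1/d^2$ just established, the two errors coincide. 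The main obstacle is Step 3: guessing the right primal/dual LP optimizers in the $p>1/2$ regime, where the vertex structure is nontrivial and where the answer must land on exactly the closed form $1-(4p-2+d_{\hat A})/(d^2 d_{\hat A})$; a secondary subtlety is making the Werner-twirl/partial-transpose interaction argument airtight so that the reduced LP truly certifies the C-PPT-P feasible set.
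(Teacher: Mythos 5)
Your plan is correct and follows essentially the same route as the paper's proof in Appendix~\ref{app:werner-state-proof}: Werner-twirl the POVM elements to reduce to the two-parameter form $x_{\mathcal S}\Pi^{\mathcal S}+x_{\mathcal A}\Pi^{\mathcal A}$, split each PPT constraint into two scalar inequalities via the orthogonal projectors $\Phi_{\hat A\hat B}$ and $I_{\hat A\hat B}-\Phi_{\hat A\hat B}$, solve the resulting eight-variable LP by exhibiting matching primal and dual feasible points, and handle $p\le\tfrac12$ by separability of the Werner state together with Proposition~\ref{prop:no-res-sim-err}. The only piece you leave open, the explicit primal and dual certificates for $p>\tfrac12$, is exactly what the paper supplies (e.g.\ the dual choice $y_1=\frac{(d_{\hat A}+2)p-1}{d^2 d_{\hat A}}$, $y_3=\frac{d_{\hat A}(1-p)+2p-1}{d^2 d_{\hat A}}$, with $y_1+y_3=\frac{4p-2+d_{\hat A}}{d^2 d_{\hat A}}$), so no conceptual gap remains.
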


It is an open question to determine if
\begin{equation}
e_{\operatorname{PPT}}(\mathcal{S}_{AB}^{d},W_{\hat{A}\hat{B}}^{(p,d_{\hat{A}%
})})=e_{\operatorname{LOCC}}(\mathcal{S}_{AB}^{d},W_{\hat{A}\hat{B}%
}^{(p,d_{\hat{A}})})
\end{equation}
for $p>\frac{1}{2}$. The main question is to design an LOCC-assisted scheme
that achieves a simulation error of $1-\frac{4p-2+d_{\hat{A}}}{d^{2}d_{\hat
{A}}}$ when $p>\frac{1}{2}$.

\begin{figure}[pt]
\begin{center}
\includegraphics[
width=\linewidth
]{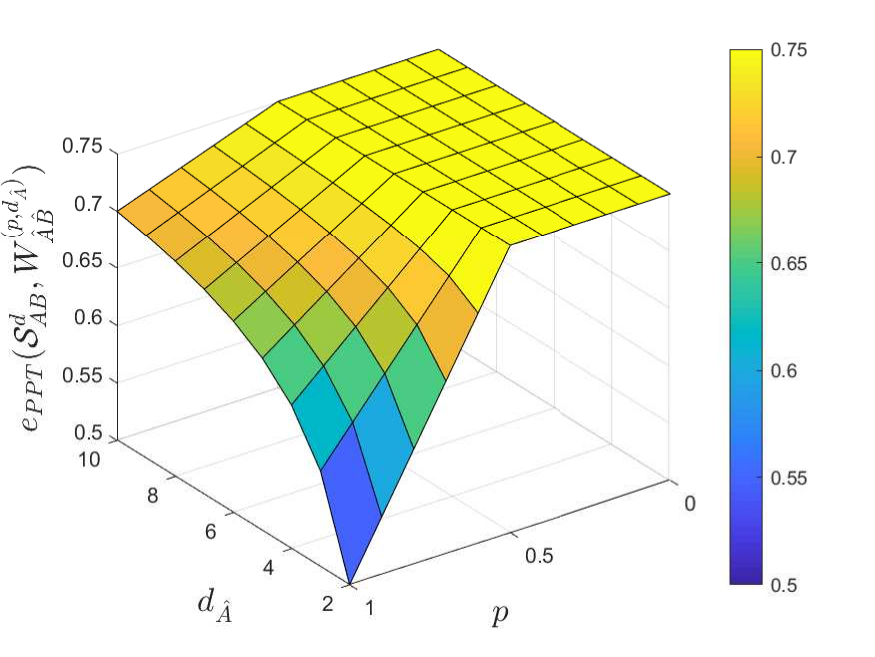}
\end{center}
\caption{Plot of the simulation error of bidirectional teleportation when
using the Werner resource state defined in $\eqref{eq:Werner-state-def}$,
where $p$ is the relative weight parameterizing the state and~$d_{\hat{A}}$ is
the dimension of Alice's system of the resource state.}%
\label{fig:wernerError}%
\end{figure}

Figure~\ref{fig:wernerError} plots the expression given in
\eqref{eq:werner-error} for the simulation error.

\subsection{Resource state resulting from generalized amplitude damping
channel}

\label{sec:gadc}

In this section, we consider a numerical example in which we can apply the
semi-definite program from Proposition~\ref{prop:swap-sdp-simplify}. This
example involves a resource state resulting from two Bell states affected by
noise from a generalized amplitude damping channel (GADC). The GADC can be
understood as a qubit thermal channel, in which the input qubit interacts with
a thermal qubit environment according to a beamsplitter-like interaction,
after which the environment qubit is discarded \cite{KSW19}. In more detail,
recall that the GADC has the following form (see, e.g., \cite{KSW19}):%
\begin{equation}
\mathcal{A}_{\gamma,N}(\rho)\coloneqq \sum_{i=1}^{4}A_{i}\rho A_{i}^{\dag},
\end{equation}
where $\gamma\in\left[  0,1\right]  $ is the damping parameter, $N\in\left[
0,1\right]  $ is the noise parameter, and%
\begin{align}
A_{1}  &  \coloneqq \sqrt{1-N}\left(  |0\rangle\!\langle0|+\sqrt{1-\gamma
}|1\rangle\!\langle1|\right)  ,\\
A_{2}  &  \coloneqq \sqrt{\gamma\left(  1-N\right)  }|0\rangle\!\langle1|,\\
A_{3}  &  \coloneqq \sqrt{N}\left(  \sqrt{1-\gamma}|0\rangle\!\langle
0|+|1\rangle\langle1|\right)  ,\\
A_{4}  &  \coloneqq \sqrt{\gamma N}|1\rangle\!\langle0|.
\end{align}
The resource state we consider is then%
\begin{equation}
\mathcal{A}_{\gamma,N}^{\otimes4}(\Phi^{\otimes2}), \label{eq:GADC-res-state}%
\end{equation}
where the maximally entangled state $\Phi$ is defined as%
\begin{equation}
\Phi\coloneqq \frac{1}{2}\sum_{i,j=0}^{1}|i\rangle\!\langle j|\otimes
|i\rangle\!\langle j|.
\end{equation}
The resource state in \eqref{eq:GADC-res-state} is equivalent to two ebits,
consisting of four qubits in total, each of which is acted upon by a GADC with
the same parameters $\gamma$ and $N$. When $\gamma$ and $N$ are both equal to
zero, the resource state is equivalent to two ebits and perfect bidirectional
teleportation is possible. As the noise parameters increase, the bidirectional
teleportation is imperfect and occurs with some error.

By evaluating the semi-definite program in
Proposition~\ref{prop:swap-sdp-simplify}\ for this resource state, we obtain a
lower bound on the simulation error of bidirectional teleportation. We obtain
an upper bound by demonstrating a protocol that uses this resource state. If
Alice and Bob perform a bilateral twirl on their state, specifically, the
channel in \eqref{eq:bilat-twirl}, where $U$ is a unitary that acts on two
qubits, then the resulting state is an isotropic state of the following form:%
\begin{equation}
F(\gamma,N)\Phi^{\otimes2}+\left(  1-F(\gamma,N)\right)  \frac{I^{\otimes
4}-\Phi^{\otimes2}}{15},
\end{equation}
where%
\begin{align}
F(\gamma,N)  &  \coloneqq \operatorname{Tr}[\Phi^{\otimes2}\mathcal{A}%
_{\gamma,N}^{\otimes4}(\Phi^{\otimes2})]\\
&  =\left[  1+\frac{\gamma}{2}\left(  \gamma-2\left[  1+\gamma N\left(
1-N\right)  \right]  \right)  \right]  ^{2}.
\end{align}
By applying Proposition~\ref{prop:isotropic-sim-perf}\ and noting that
$d_{\hat{A}}=d^{2}=4$ for this example, we find that the simulation error,
when using this protocol, is given by%
\begin{equation}
1-\max\left\{  F(\gamma,N),\frac{1}{16}\right\}  . \label{eq:upper-bnd-GADC}%
\end{equation}
Up to numerical precision, we find that the upper bound in
\eqref{eq:upper-bnd-GADC} and the SDP\ lower bound from
Proposition~\ref{prop:swap-sdp-simplify}\ match, so that
\eqref{eq:upper-bnd-GADC} should in fact be an exact analytical expression for
the simulation error when using this resource state.

Figure~\ref{fig:gadc-values} plots the expression in \eqref{eq:upper-bnd-GADC}
for the simulation error. The simulation error tends to zero as the damping
parameter $\gamma$ approaches zero (so that the channel $\mathcal{A}_{\gamma,
N}$ is converging to an identity channel and thus the resource state to two
ebits). For fixed $\gamma$ and the noise parameter $N$ converging to 1/2, the
simulation error increases.

\begin{figure}[ptb]
\begin{center}
\includegraphics[
width=\linewidth
]{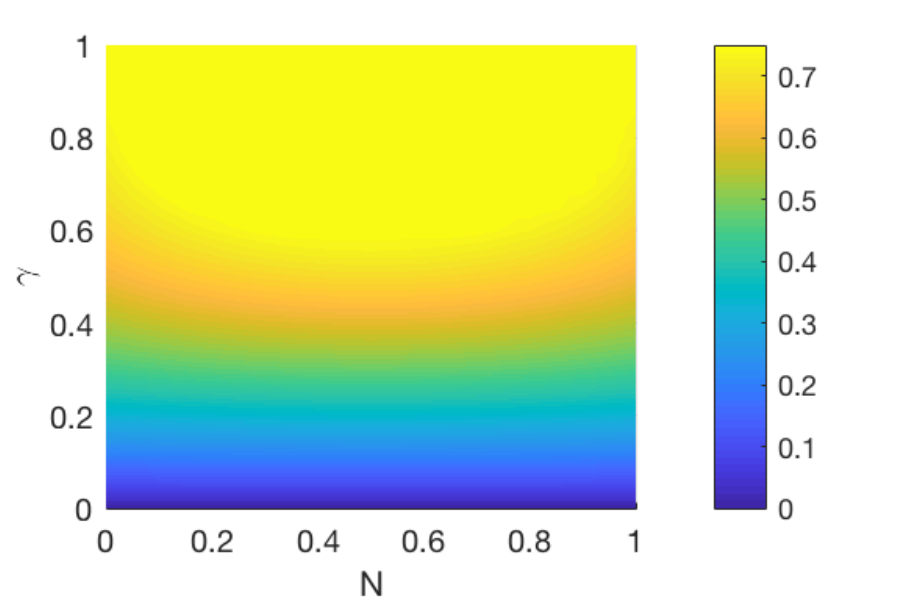}
\end{center}
\caption{Plot of the simulation error of bidirectional teleportation when
using the resource state in \eqref{eq:GADC-res-state}, for all $\gamma
,N\in\left[  0,1\right]  $.}%
\label{fig:gadc-values}%
\end{figure}

\section{On the performance of the KPF16~protocols for bidirectional
teleportation}

\label{sec:KPF16}We now use our framework to evaluate the performance of some
proposals for bidirectional teleportation from \cite{KPF16}. Let us call these
proposals KPF16, after the authors and year of publication of \cite{KPF16}.
Our main conclusion is that the schemes from \cite{KPF16} are suboptimal
according to the performance metric in \eqref{eq:sim-err-bipartite}, whereas
our simple strategy proposed in Remark~\ref{rem:single-edit-strat}\ is optimal.

\subsection{Bipartite channel for the first KPF16 protocol}

\begin{figure}[ptb]
\begin{center}
\includegraphics[scale=0.75]{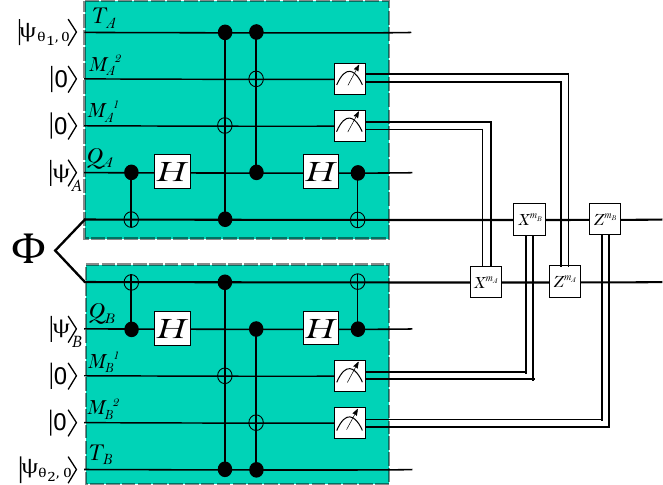}
\end{center}
\caption{Shown above is the first proposal for bidirectional teleportation
from Kiktenko~\emph{et al}. The scheme utilizes ten qubits with subindices $A$
and $B$ to denote Alice and Bob, respectively. $T_{A}$ and $T_{B}$ denote the
trigger qubits of Alice and Bob, respectively, whose states dictate the
actions of both parties. If either party's trigger qubit is in the state~1,
that party performs unidirectional quantum teleportation on their end. There
are four auxiliary qubits, labeled by $M_{A}^{1}$, $M_{A}^{2}$, $M_{B}^{1}$,
and $M_{B}^{2}$, which are each initialized to the state $|0\rangle$ and are
used to store projective measurement outcomes. The states $|\psi\rangle
_{Q_{A}}$ and $|\psi\rangle_{Q_{B}}$ are those that Alice and Bob wish to swap
(more generally, the state can be a joint state $|\phi\rangle_{AB}$ of both
systems). The resource state shared by Alice and Bob to help with the task is
a single Bell state $|\Phi\rangle_{C_{A}C_{B}}\coloneqq (|00\rangle
_{C_{A}C_{B}}+|11\rangle_{C_{A}C_{B}})/\sqrt{2}$. }%
\label{fig: BQT Circuit}%
\end{figure}

To see this, let us recall the first proposal for bidirectional teleportation
from \cite{KPF16} shown in Figure~\ref{fig: BQT Circuit}. The scheme presented
in Figure~2(a) of \cite{KPF16} utilizes ten qubits with subindices $A$ and $B$
to denote Alice and Bob, respectively. The system label $T_{A}$ denotes a
\textquotedblleft trigger qubit\textquotedblright\ of Alice, and the system
label $T_{B}$ denotes a trigger qubit of Bob (the notation here overlaps with
our notation for partial transpose, but it should be clear from the context).
There are four auxiliary qubits, labeled by $M_{A}^{1}$, $M_{A}^{2}$,
$M_{B}^{1}$, and $M_{B}^{2}$, that are each initialized to the state
$|0\rangle$ and are used to store projective measurement outcomes. The states
$|\psi\rangle_{Q_{A}}$ and $|\psi\rangle_{Q_{B}}$ are those that Alice and Bob
wish to swap (more generally, the state can be a joint state $|\phi
\rangle_{AB}$ of both systems). The resource state shared by Alice and Bob to
help with the task is a single Bell state $|\Phi\rangle_{C_{A}C_{B}%
}\coloneqq (|00\rangle_{C_{A}C_{B}}+|11\rangle_{C_{A}C_{B}})/\sqrt{2}$. The
trigger qubits of Alice and Bob are initialized to the states $|\psi
_{\theta_{1}}\rangle$ and $|\psi_{\theta_{2}}\rangle$, respectively, where%
\begin{equation}
|\psi_{\theta}\rangle\coloneqq \cos(\theta/2)|0\rangle+\sin(\theta
/2)|1\rangle.
\end{equation}

The trigger qubits act as controls for local Bell measurements. That is, if
the trigger qubit of Alice is in the state $|0\rangle$, then a Bell
measurement is not performed on her systems $Q_{A}$ and $C_{A}$. If the
trigger qubit of Alice is in the state $|1\rangle$, then a Bell measurement is
performed on her systems $Q_{A}$ and $C_{A}$. The quantum circuit in
Figure~2(a) of \cite{KPF16} indicates that these actions happen coherently.
However, the trigger qubits are discarded at the end of the circuit, and so
they really just end up playing the role of random control bits. A similar
description applies to Bob's side.

Thus, the following actions are taken, depending on the value of the trigger
qubits. If both are zero, then no action is taken and the output of the
circuit is a Bell state on systems $C_{A}$ and $C_{B}$. If Alice's trigger
qubit is equal to one and Bob's is equal to zero, then unidirectional
teleportation is performed from Alice to Bob. The result is that an identity
channel takes system $C_{A}$ to system $C_{B}$, while a completely
depolarizing channel is performed on system $C_{B}$ and takes it to system
$C_{A}$. If Alice's trigger qubit is equal to zero and Bob's is equal to one,
then the actions are exactly opposite of the previous setting. Finally, if
both Alice and Bob's trigger qubits are equal to one, then completely
depolarizing channels act on both systems $C_{A}$ and $C_{B}$, so that the
output in this case is the maximally mixed state of two qubits. In this last
case, the protocol is such that the output states become \textquotedblleft
jammed\textquotedblright\ due to Alice and Bob both trying to teleport at the
same time. The various actions are summarized in Table~\ref{table:nonlin}%
.\begin{table}[t]
\caption{Actions realized by KPF16 protocol for bidirectional teleportation.}%
\label{table:nonlin}
\centering
\begin{tabular}
[c]{ccc}\hline\hline
Trigger Q.'s & Probability & Bipartite Channel Term\\[0.5ex]\hline
00 & $\cos^{2}(\theta_{1}/2)\cos^{2}(\theta_{2}/2)$ & $\rho_{Q_{A}Q_{B}%
}\rightarrow{\Phi}$\\
01 & $\cos^{2}(\theta_{1}/2)\sin^{2}(\theta_{2}/2)$ & ${\rho}_{{Q}_{A}{Q}_{B}%
}\rightarrow{\rho}_{{Q}_{B}}\otimes{\pi}$\\
10 & $\sin^{2}(\theta_{1}/2)\cos^{2}(\theta_{2}/2)$ & ${\rho}_{{Q}_{A}{Q}_{B}%
}\rightarrow{\pi}\otimes{\rho}_{{Q}_{A}}$\\
11 & $\sin^{2}(\theta_{1}/2)\sin^{2}(\theta_{2}/2)$ & ${\pi}\otimes{\pi}%
$\\[1ex]\hline
\end{tabular}
\end{table}

The bipartite channel realized by the first KPF16 protocol is thus as follows:%
\begin{multline}
\mathcal{K}_{Q_{A}Q_{B}\rightarrow C_{A}C_{B}}(\rho_{Q_{A}Q_{B}}%
)\coloneqq\label{eq:KPF16-bi-ch}\\
\left(  1-p_{1}\right)  \left(  1-p_{2}\right)  \operatorname{Tr}[\rho
_{Q_{A}Q_{B}}]\Phi_{C_{A}C_{B}}\\
+\left(  1-p_{1}\right)  p_{2}(\mathcal{R}_{Q_{A}\rightarrow C_{B}}^{\pi
}\otimes\operatorname{id}_{Q_{B}\rightarrow C_{A}})(\rho_{Q_{A}Q_{B}})\\
+p_{1}\left(  1-p_{2}\right)  (\operatorname{id}_{Q_{A}\rightarrow C_{B}%
}\otimes\mathcal{R}_{Q_{B}\rightarrow C_{A}}^{\pi})(\rho_{Q_{A}Q_{B}})\\
+p_{1}p_{2}\mathcal{R}_{Q_{A}Q_{B}\rightarrow C_{A}C_{B}}^{\pi\otimes\pi}%
(\rho_{Q_{A}Q_{B}}),
\end{multline}
where $p_{i}=\sin^{2}(\theta_{i}/2)$ for $i\in\left\{  1,2\right\}  $,
$\mathcal{R}^{\pi}$ is a replacer channel that traces out its input and
replaces it with the maximally mixed qubit state $\pi=I/2$, and $\mathcal{R}%
^{\pi\otimes\pi}$ is a replacer channel that traces out its two-qubit input
and replaces with the maximally mixed state $\pi\otimes\pi$ of two qubits.

One of the first observations that we can make about the bipartite channel in
\eqref{eq:KPF16-bi-ch} is that none of the terms contain the ideal swap
channel. That is, with the first KPF16 protocol, the ideal swap channel does
not occur even probabilistically. Furthermore, the bipartite channel in
\eqref{eq:KPF16-bi-ch} does not obey the symmetry of the swap channel in
\eqref{eq:swap-ch-symmetry}, due to the presence of the first term in
\eqref{eq:KPF16-bi-ch}, whereas we know from the analysis in
\eqref{eq:app-swap-analysis-1}--\eqref{eq:app-swap-analysis-last} that it
should if it is to be an optimal simulation. As such, these are strong
indicators that this protocol will not perform well as an approximation of
bidirectional teleportation, according to the metric defined in
\eqref{eq:err-sim-bi-ch-fixed-locc}. In what follows, we prove that this is
the case.

Let us now calculate the Choi operator of the channel~$\mathcal{K}_{Q_{A}%
Q_{B}\rightarrow C_{A}C_{B}}$, which is defined as%
\begin{equation}
K_{Q_{A}C_{A}C_{B}Q_{B}}\coloneqq \mathcal{K}_{\bar{Q}_{A}\bar{Q}%
_{B}\rightarrow C_{A}C_{B}}(\Gamma_{Q_{A}\bar{Q}_{A}}\otimes\Gamma_{\bar
{Q}_{B}Q_{B}}),
\end{equation}
where%
\begin{equation}
\Gamma_{Q_{A}\bar{Q}_{A}}\coloneqq \sum_{i,j\in\left\{  0,1\right\}
}|i\rangle\!\langle j|_{Q_{A}}\otimes|i\rangle\!\langle j|_{\bar{Q}_{A}},
\end{equation}
and $\Gamma_{\bar{Q}_{B}Q_{B}}$ is similarly defined. To do so, let us
consider the four terms in \eqref{eq:KPF16-bi-ch}\ separately. We find that
the first term is%
\begin{multline}
\operatorname{Tr}_{\bar{Q}_{A}\bar{Q}_{B}}[\Gamma_{Q_{A}\bar{Q}_{A}}%
\otimes\Gamma_{\bar{Q}_{B}Q_{B}}]\Phi_{C_{A}C_{B}}=\\
I_{Q_{A}}\otimes\Phi_{C_{A}C_{B}}\otimes I_{Q_{B}}.
\end{multline}
The second term is%
\begin{multline}
(\mathcal{R}_{\bar{Q}_{A}\rightarrow C_{B}}^{\pi}\otimes\operatorname{id}%
_{\bar{Q}_{B}\rightarrow C_{A}})(\Gamma_{Q_{A}\bar{Q}_{A}}\otimes\Gamma
_{\bar{Q}_{B}Q_{B}})=\\
I_{Q_{A}}\otimes\Gamma_{C_{A}Q_{B}}\otimes\pi_{C_{B}}.
\end{multline}
The third term is%
\begin{multline}
(\operatorname{id}_{\bar{Q}_{A}\rightarrow C_{B}}\otimes\mathcal{R}_{\bar
{Q}_{B}\rightarrow C_{A}}^{\pi})(\Gamma_{Q_{A}\bar{Q}_{A}}\otimes\Gamma
_{\bar{Q}_{B}Q_{B}})=\\
\Gamma_{Q_{A}C_{B}}\otimes\pi_{C_{A}}\otimes I_{Q_{B}}.
\end{multline}
The fourth term is%
\begin{equation}
\mathcal{R}_{Q_{A}Q_{B}\rightarrow C_{A}C_{B}}^{\pi\otimes\pi}(\Gamma
_{Q_{A}\bar{Q}_{A}}\otimes\Gamma_{\bar{Q}_{B}Q_{B}})=I_{Q_{A}}\otimes
\pi_{C_{A}}\otimes\pi_{C_{B}}\otimes I_{Q_{B}}.
\end{equation}
Putting everything together, we conclude that the Choi operator $K_{Q_{A}%
C_{A}C_{B}Q_{B}}$ of the channel $\mathcal{K}_{\bar{Q}_{A}\bar{Q}%
_{B}\rightarrow C_{A}C_{B}}$ is as follows:%
\begin{multline}
K_{Q_{A}C_{A}C_{B}Q_{B}}=\left(  1-p_{1}\right)  \left(  1-p_{2}\right)
I_{Q_{A}}\otimes\Phi_{C_{A}C_{B}}\otimes I_{Q_{B}}\label{eq:Choi-op-KPF16}\\
+\left(  1-p_{1}\right)  p_{2}I_{Q_{A}}\otimes\Gamma_{C_{A}Q_{B}}\otimes
\pi_{C_{B}}\\
+p_{1}\left(  1-p_{2}\right)  \Gamma_{Q_{A}C_{B}}\otimes\pi_{C_{A}}\otimes
I_{Q_{B}}\\
+p_{1}p_{2}I_{Q_{A}}\otimes\pi_{C_{A}}\otimes\pi_{C_{B}}\otimes I_{Q_{B}}.
\end{multline}
This is helpful not only for calculating the normalized diamond distance
between the ideal swap channel and the first KPF16 protocol, but also for
estimating the fidelity when sending in shares of maximally entangled states.

\subsection{Fidelity of the first KPF16 protocol to ideal bidirectional
teleportation}

We now exploit the Choi operator in \eqref{eq:Choi-op-KPF16} in order to
compare the first KPF16 protocol with ideal bidirectional teleportation via
fidelity. As recalled in Section~\ref{sec:ideal-BQT}, ideal bidirectional
teleportation realizes the following state when acting on maximally entangled
inputs:%
\begin{equation}
\mathcal{S}_{\bar{Q}_{A}\bar{Q}_{B}\rightarrow C_{A}C_{B}}(\Phi_{Q_{A}\bar
{Q}_{A}}\otimes\Phi_{\bar{Q}_{B}Q_{B}})=\Phi_{Q_{A}C_{B}}\otimes\Phi
_{C_{A}Q_{B}}.
\end{equation}
Since the state above is a pure state, we can calculate the fidelity by means
of the following formula:%
\begin{equation}
\operatorname{Tr}[(\Phi_{Q_{A}C_{B}}\otimes\Phi_{C_{A}Q_{B}})K_{Q_{A}%
C_{A}C_{B}Q_{B}}/4], \label{eq:fid-KPF16-to-swap}%
\end{equation}
where we have normalized the Choi operator to become the Choi state of the
channel $\mathcal{K}_{\bar{Q}_{A}\bar{Q}_{B}\rightarrow C_{A}C_{B}}$. To be
clear, the Choi state is as follows:%
\begin{multline}
\frac{1}{4}K_{Q_{A}C_{A}C_{B}Q_{B}}=\left(  1-p_{1}\right)  \left(
1-p_{2}\right)  \pi_{Q_{A}}\otimes\Phi_{C_{A}C_{B}}\otimes\pi_{Q_{B}%
}\label{eq:Choi-state-KPF16}\\
+\left(  1-p_{1}\right)  p_{2}\pi_{Q_{A}}\otimes\Phi_{C_{A}Q_{B}}\otimes
\pi_{C_{B}}\\
+p_{1}\left(  1-p_{2}\right)  \Phi_{Q_{A}C_{B}}\otimes\pi_{C_{A}}\otimes
\pi_{Q_{B}}\\
+p_{1}p_{2}\pi_{Q_{A}}\otimes\pi_{C_{A}}\otimes\pi_{C_{B}}\otimes\pi_{Q_{B}}.
\end{multline}

Although the fidelity formula in \eqref{eq:fid-KPF16-to-swap} can be readily
calculated by hand, it is helpful to employ a diagrammatic calculus to do so.
Scalable ZX (SZX) calculus is a low-level graphical language with a written
set of rules utilized for the verification of quantum computations
\cite{Coecke_2011}. This technique, in which each wire in a diagram represents
a qubit, allows for one diagram to be transformed into another if both
represent the same quantum process. This makes SZX-calculus useful for a
variety of applications ranging from error correction to circuit optimization.
The basic rules of the SZX calculus that we need are summarized in
Figure~\ref{fig:ZX Calculus}.

\begin{figure}[ptb]
\begin{center}
\includegraphics[width=\columnwidth]{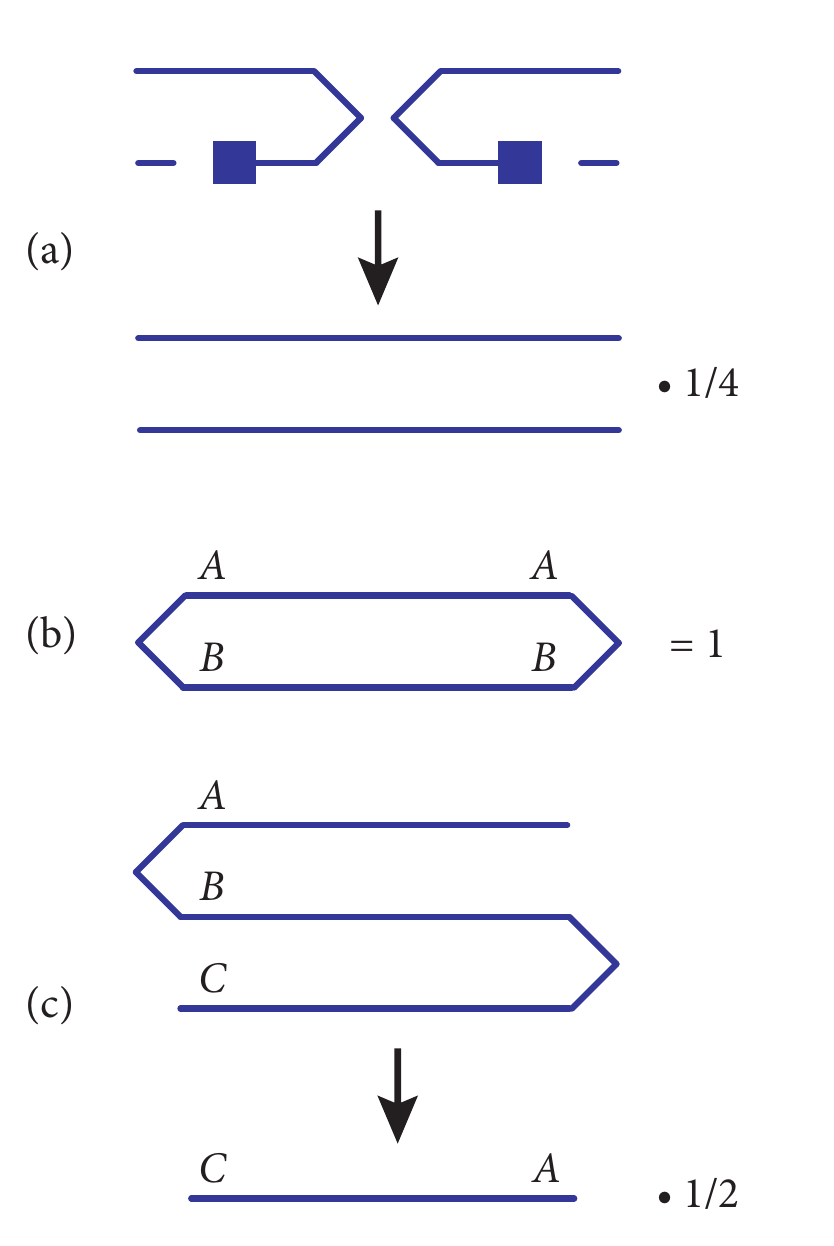}
\end{center}
\caption{(a) Tracing out one share of the maximally entangled state and
replacing with a maximally mixed state leads to two maximally mixed states.
The diagram depicts this transformation and the diagrammatic representations.
The right-hand side of the bottom diagram indicates the normalization factor
needed for two maximally mixed states. (b) The overlap of the maximally
entangled state $|\Phi\rangle$ with itself is 1. (c) Postselected
teleportation transformation, in which the overap of maximally entangled
states is reduced to an identity channel with normalization factor $1/2$.}%
\label{fig:ZX Calculus}%
\end{figure}

In what follows, we utilize the language to calculate the fidelity of the
ideal versus unideal state for each individual scenario in
Table~\ref{table:nonlin}. To be clear, each diagram in
Figures~\ref{fig:FidPoss1}--\ref{fig:FidPoss4} is a visual representation of
one term in the following formula:%
\begin{multline}
\left(  \langle\Phi|_{Q_{A}C_{B}}\otimes\langle\Phi|_{C_{A}Q_{B}}\right)
\frac{1}{4}K\left(  |\Phi\rangle_{Q_{A}C_{B}}\otimes|\Phi\rangle_{C_{A}Q_{B}%
}\right) \\
=(1-p_{1})(1-p_{2})F_{1}+p_{2}(1-p_{1})F_{2}\\
+p_{1}(1-p_{2})F_{3}+p_{1}p_{2}F_{4},
\end{multline}
where we have omitted the system labels $Q_{A}C_{A}C_{B}Q_{B}$ of $K$ for
brevity. That is, our goal is to calculate $F_{1}$, $F_{2}$, $F_{3}$, and
$F_{4}$. Figures~\ref{fig:FidPoss1}--\ref{fig:FidPoss4} accomplish this goal,
from which we conclude that $F_{1}=F_{4}=\frac{1}{16}$ and $F_{2}=F_{3}%
=\frac{1}{4}$. See the captions of Figures~\ref{fig:FidPoss1}%
--\ref{fig:FidPoss4} for explanations of these calculations. Then we find that
the fidelity is equal to
\begin{multline}
(1-p_{1})(1-p_{2})\frac{1}{16}+p_{2}(1-p_{1})\frac{1}{4}\label{eq:fid-KPF16}\\
+p_{1}(1-p_{2})\frac{1}{4}+p_{1}p_{2}\frac{1}{16}\\
=\frac{1}{16}\left(  1+3\left(  p_{1}+p_{2}\right)  -6p_{1}p_{2}\right)  .
\end{multline}
This function has a maximum at either $p_{1}=1$ and $p_{2}=0$ or $p_{1}=0$ and
$p_{1}=1$, with both cases leading to a maximum fidelity of $\frac{1}{4}$ for
the KPF16 protocol. These optimal values of $p_{1}$ and $p_{2}$ correspond to
a unidirectional teleportation strategy for one party and a replacer channel
for the other and vice versa, which clearly do not suffice for bidirectional teleportation.

In contrast, our simple approach from Remark~\ref{rem:single-edit-strat}%
\ achieves a fidelity of $\frac{1}{2}$, which follows from a straightforward
calculation. Thus, our approach provides a significant improvement over the
first KPF16 protocol when only a single ebit is available for bidirectional teleportation.



\begin{figure}[ptb]
\begin{center}
\includegraphics[width=\columnwidth]{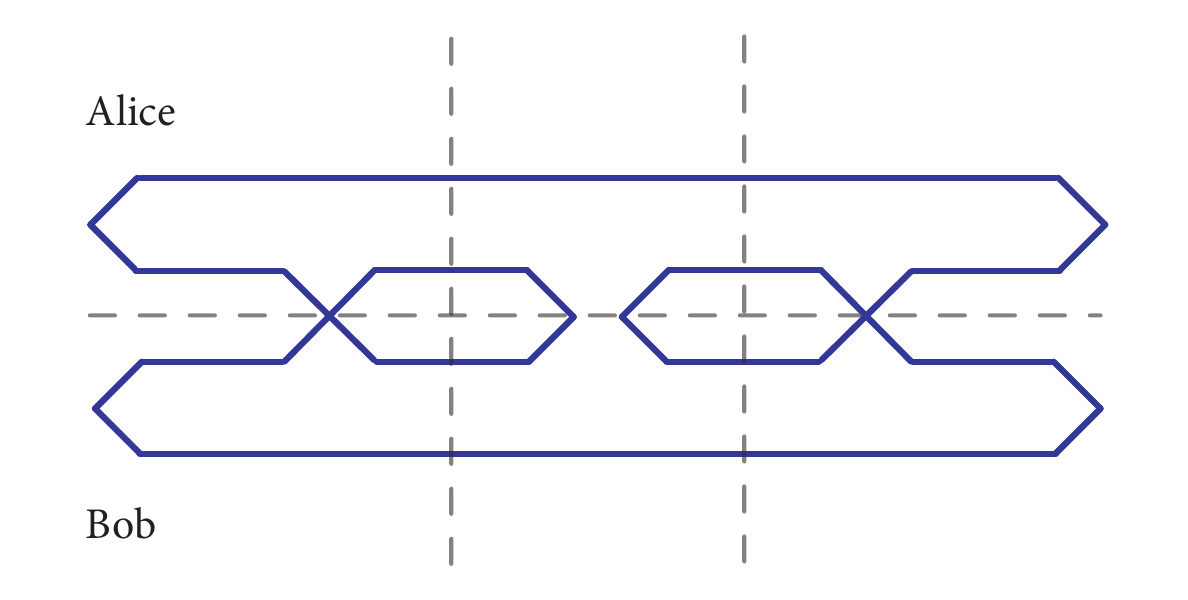}
\end{center}
\caption{The two horizontal lines in the middle section, on the top and
bottom, are each weighted by $\frac{1}{2}$ due to the presence of maximally
mixed states. Twice applying the third rule in Figure~\ref{fig:ZX Calculus}
leads to two factors of $\frac{1}{2}$ and finally applying the second rule in
Figure~\ref{fig:ZX Calculus} completes the calculation, leading to a value of
$\frac{1}{16}$.}%
\label{fig:FidPoss1}%
\end{figure}

\begin{figure}[ptb]
\begin{center}
\includegraphics[width=\columnwidth]{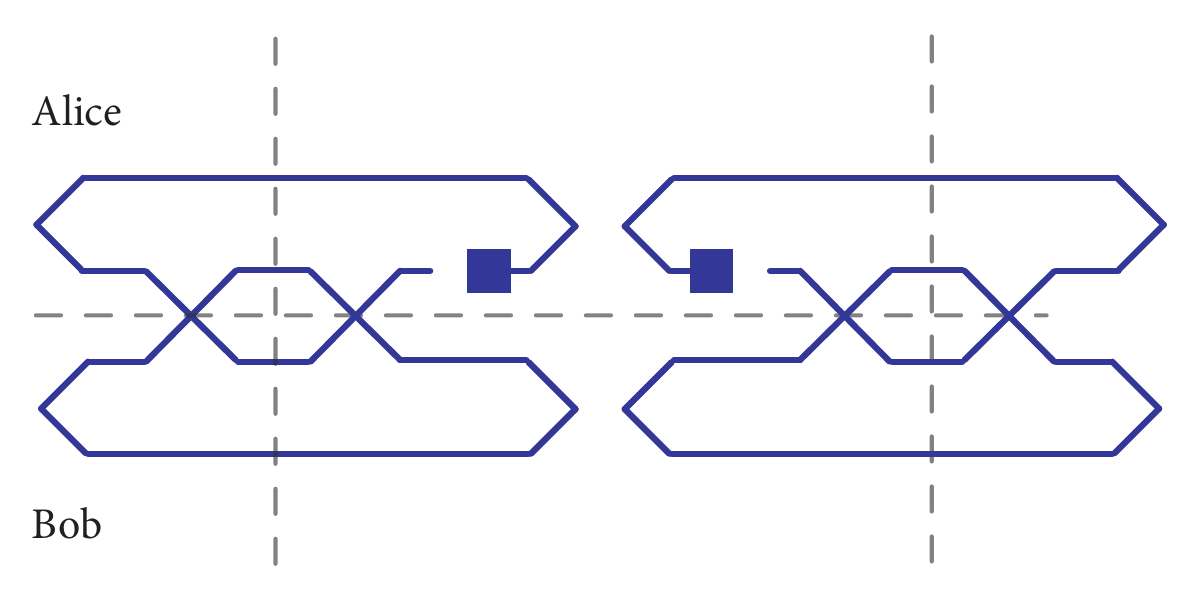}
\end{center}
\caption{ Applying the first rule from Figure~\ref{fig:ZX Calculus} collapses
the upper part of the middle section to two horizonal lines each weighted by
$\frac{1}{2}$. Then we apply the second rule from Figure~\ref{fig:ZX Calculus}
to collapse the remaining three loops, leading to the value of$~\frac{1}{4}$.}%
\label{fig:FidPoss2}%
\end{figure}

\begin{figure}[ptb]
\begin{center}
\includegraphics[width=\columnwidth]{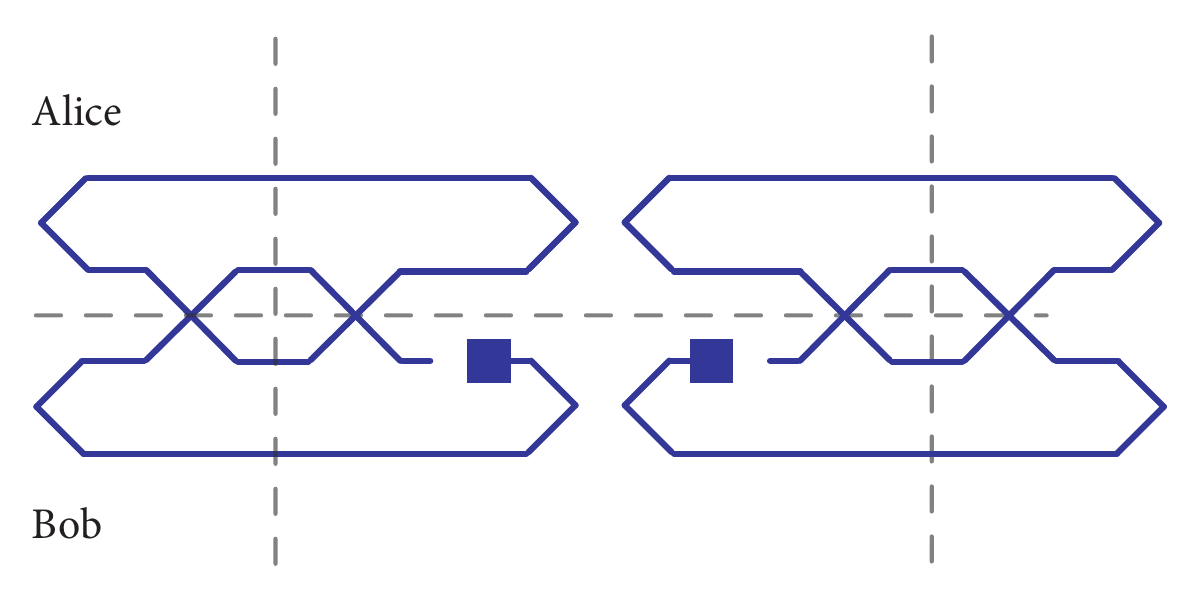}
\end{center}
\caption{ The analysis here is the same as that in Figure~\ref{fig:FidPoss2},
leading to a value of $\frac{1}{4}$.}%
\label{fig:FidPoss3}%
\end{figure}

\begin{figure}[ptb]
\begin{center}
\includegraphics[width=\columnwidth]{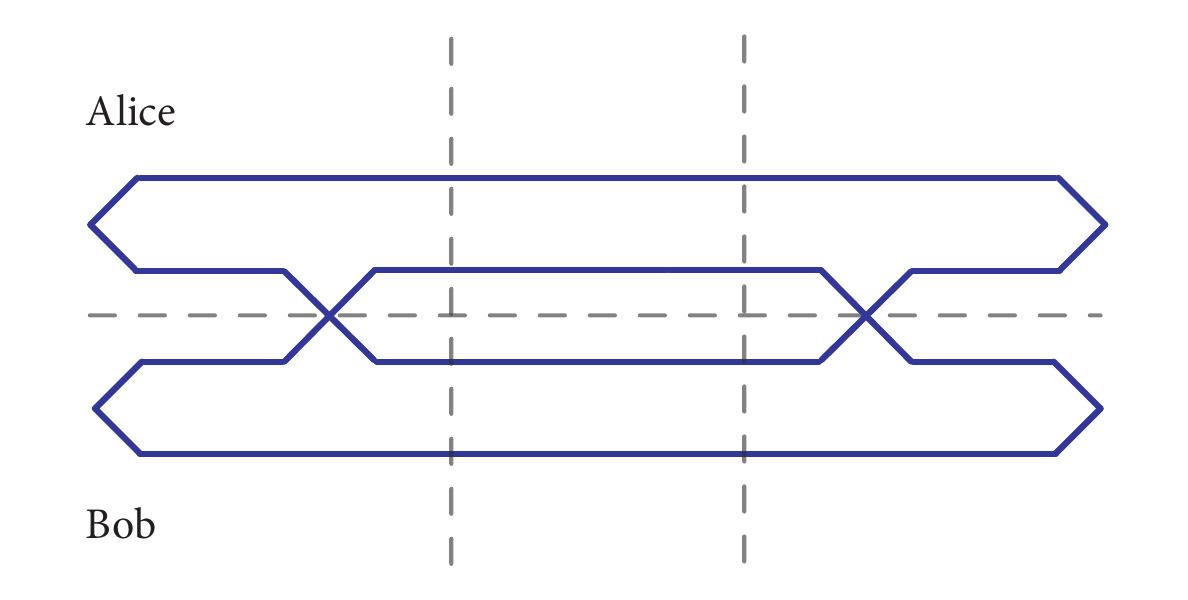}
\end{center}
\caption{ The four horizontal lines in the middle section are each weighted by
$\frac{1}{2}$, due the presence of maximally mixed states. Then we apply the
second rule from Figure~\ref{fig:ZX Calculus}~to collapse the remaining two
loops, leading to a value of $\frac{1}{16}$.}%
\label{fig:FidPoss4}%
\end{figure}

\subsection{Normalized diamond distance of the first KPF16 protocol to ideal
bidirectional teleportation}

We now consider the normalized diamond distance between the KPF16 protocol and
ideal bidirectional teleportation, i.e.,%
\begin{equation}
\frac{1}{2}\left\Vert \mathcal{K}_{\bar{Q}_{A}\bar{Q}_{B}\rightarrow
C_{A}C_{B}}^{p_{1},p_{2}}-\mathcal{S}_{\bar{Q}_{A}\bar{Q}_{B}\rightarrow
C_{A}C_{B}}\right\Vert _{\diamond}, \label{eq:KPF16-sim-err}%
\end{equation}
where we have included the dependence of $\mathcal{K}_{\bar{Q}_{A}\bar{Q}%
_{B}\rightarrow C_{A}C_{B}}^{p_{1},p_{2}}$ on $p_{1},p_{2}\in\left[
0,1\right]  $ for clarity. We have numerically implemented the semi-definite
program to calculate the simulation error above for all values of $p_{1}%
,p_{2}\in\left[  0,1\right]  $. The results are displayed in
Figure~\ref{fig:num-sim-err-KPF16}. We observe that the simulation error is
never below $\frac{3}{4}$ and this lowest value is achieved at either
$p_{1}=1$ and $p_{2}=0$ or $p_{1}=0$ and $p_{2}=1$.

\begin{figure}[ptb]
\begin{center}
\includegraphics[
width=\columnwidth
]{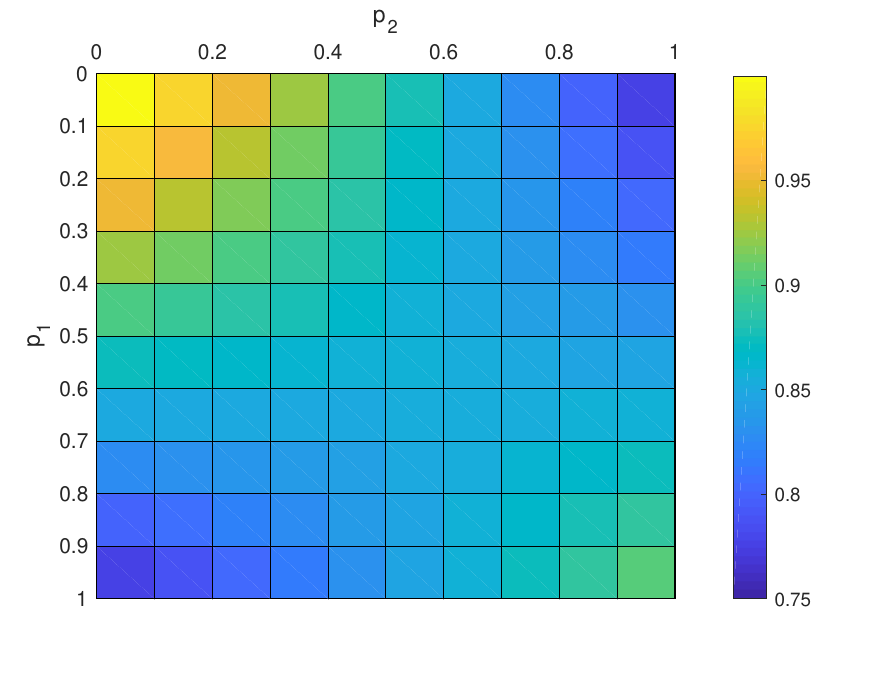}
\end{center}
\caption{Numerical calculation of normalized diamond distance in
\eqref{eq:KPF16-sim-err} for all $p_{1},p_{2}\in\lbrack0,1]$. The lowest
simulation error is achieved at either $p_{1}=1,p_{2}=0$ or $p_{1}=0,p_{2}%
=1$.}%
\label{fig:num-sim-err-KPF16}%
\end{figure}

We now prove that this is the case. That is, we prove that%
\begin{equation}
\min_{p_{1},p_{2}\in\lbrack0,1]}\frac{1}{2}\left\Vert \mathcal{K}_{\bar{Q}%
_{A}\bar{Q}_{B}\rightarrow C_{A}C_{B}}^{p_{1},p_{2}}-\mathcal{S}_{\bar{Q}%
_{A}\bar{Q}_{B}\rightarrow C_{A}C_{B}}\right\Vert _{\diamond}=\frac{3}{4}.
\label{eq:KPF16-error}%
\end{equation}
As such, the first KPF16 protocol does not perform better for bidirectional
teleportation than the classical limit from
Proposition~\ref{prop:no-res-sim-err}, which is contrary to the claim from
\cite{KPF16}. This discrepancy is due to the different performance metrics
being used in \cite{KPF16} and in our paper to quantify the performance of
bidirectional teleportation. One of the main themes of our paper is that
\eqref{eq:sim-err-bipartite} or \eqref{eq:sim-err-bipartite-infid} with
$\mathcal{N}$ set to the unitary swap channel is the correct way of
quantifying the performance of bidirectional teleportation.

To establish \eqref{eq:KPF16-error}, recall from
\eqref{eq:d-dist-pure-states}\ that the simulation error involves an
optimization over all input states to the channels. Let us pick the input
state to be a tensor product of maximally entangled states $\Phi_{Q_{A}\bar
{Q}_{A}}\otimes\Phi_{\bar{Q}_{B}Q_{B}}$. Then we find that%
\begin{align}
&  \frac{1}{2}\left\Vert \mathcal{K}_{\bar{Q}_{A}\bar{Q}_{B}\rightarrow
C_{A}C_{B}}^{p_{1},p_{2}}-\mathcal{S}_{\bar{Q}_{A}\bar{Q}_{B}\rightarrow
C_{A}C_{B}}\right\Vert _{\diamond}\nonumber\\
&  \geq\frac{1}{2}\left\Vert (\mathcal{K}^{p_{1},p_{2}}-\mathcal{S}%
)(\Phi_{Q_{A}\bar{Q}_{A}}\otimes\Phi_{\bar{Q}_{B}Q_{B}})\right\Vert _{1}\\
&  =\frac{1}{2}\left\Vert \frac{1}{4}K_{Q_{A}C_{A}C_{B}Q_{B}}^{p_{1},p_{2}%
}-\Phi_{Q_{A}C_{B}}\otimes\Phi_{C_{A}Q_{B}}\right\Vert _{1},
\end{align}
where we have omitted system labels in the second line. We can now apply the
following measurement channel:%
\begin{multline}
\omega_{Q_{A}C_{A}C_{B}Q_{B}}\rightarrow\\
\operatorname{Tr}[\omega_{Q_{A}C_{A}C_{B}Q_{B}}(\Phi_{Q_{A}C_{B}}\otimes
\Phi_{C_{A}Q_{B}})]|1\rangle\!\langle1|+\\
\operatorname{Tr}[\omega_{Q_{A}C_{A}C_{B}Q_{B}}(I_{Q_{A}C_{B}C_{A}Q_{B}}%
-\Phi_{Q_{A}C_{B}}\otimes\Phi_{C_{A}Q_{B}})]|0\rangle\!\langle0|.
\end{multline}
For the state $\Phi_{Q_{A}C_{B}}\otimes\Phi_{C_{A}Q_{B}}$, the output is
$|1\rangle\!\langle1|$, while for $\frac{1}{4}K_{Q_{A}C_{A}C_{B}Q_{B}}%
^{p_{1},p_{2}}$, the output is $F|1\rangle\!\langle1|+\left(  1-F\right)
|0\rangle\!\langle0|$, where $F$ is the fidelity formula in
\eqref{eq:fid-KPF16}. Since the trace distance does not increase under the
action of a measurement channel, we conclude that%
\begin{align}
&  \frac{1}{2}\left\Vert \frac{1}{4}K_{Q_{A}C_{A}C_{B}Q_{B}}^{p_{1},p_{2}%
}-\Phi_{Q_{A}C_{B}}\otimes\Phi_{C_{A}Q_{B}}\right\Vert _{1}\nonumber\\
&  \geq\frac{1}{2}\left\Vert F|1\rangle\!\langle1|+\left(  1-F\right)
|0\rangle\!\langle0|-|1\rangle\!\langle1|\right\Vert _{1}\\
&  =1-F\\
&  =1-\frac{1}{16}\left(  1+3\left(  p_{1}+p_{2}\right)  -6p_{1}p_{2}\right)
.
\end{align}
Since we already argued that the function in \eqref{eq:fid-KPF16} takes its
maximum value of $1/4$ at either $p_{1}=1,p_{2}=0$ or $p_{1}=0,p_{2}=1$, we
conclude that%
\begin{equation}
\min_{p_{1},p_{2}\in\lbrack0,1]}\frac{1}{2}\left\Vert \mathcal{K}_{\bar{Q}%
_{A}\bar{Q}_{B}\rightarrow C_{A}C_{B}}^{p_{1},p_{2}}-\mathcal{S}_{\bar{Q}%
_{A}\bar{Q}_{B}\rightarrow C_{A}C_{B}}\right\Vert _{\diamond}\geq\frac{3}{4}.
\end{equation}
This value is actually achieved because the optimal input state for
$p_{1}=1,p_{2}=0$ or $p_{1}=0,p_{2}=1$ is a tensor product of maximally
entangled states. This is due to the covariance of the bipartite channel at
these special points and from an application of \cite[Corollary~II.5]{LKDW18}.

In contrast, we know from Proposition~\ref{prop:isotropic-sim-perf} and
Remark~\ref{rem:single-edit-strat} that the optimal simulation error for
bidirectional teleportation of qubits, when only a single ebit is available,
is equal to $\frac{1}{2}$. Furthermore, this simulation error is achievable
using the simple strategy outlined in Remark~\ref{rem:single-edit-strat}.

\subsection{Second protocol of KPF16}

\begin{figure}[ptb]
\begin{center}
\includegraphics[scale = 0.75]{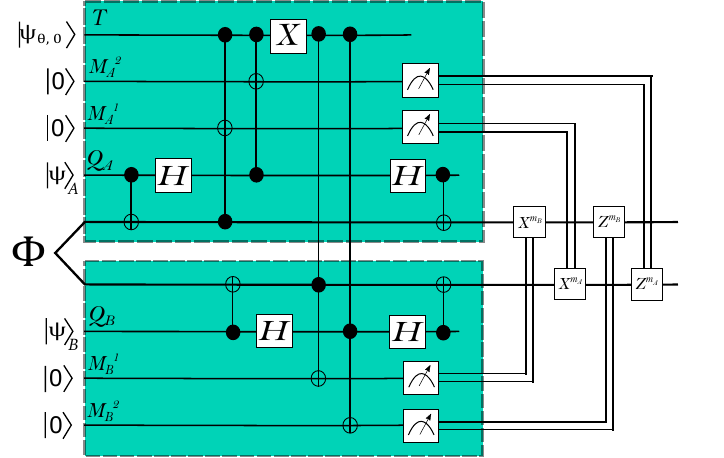}
\end{center}
\caption{Shown above is the second proposal for bidirectional teleportation
from Kiktenko~\emph{et al}. The scheme utilizes nine qubits with subindices
$A$ and $B$ to denote Alice and Bob, respectively. $T$ denotes the trigger
qubit whose state dictates whether Alice teleports to Bob or vice versa. There
are four auxiliary qubits, labeled by $M_{A}^{1}$, $M_{A}^{2}$, $M_{B}^{1}$,
and $M_{B}^{2}$, that are each initialized to the state $|0\rangle$ and are
used to store projective measurement outcomes. The states $|\psi\rangle
_{Q_{A}}$ and $|\psi\rangle_{Q_{B}}$ are those that Alice and Bob wish to swap
(more generally, the state can be a joint state $|\phi\rangle_{AB}$ of both
systems). The resource state shared by Alice and Bob to help with the task is
a single Bell state $|\Phi\rangle_{C_{A}C_{B}}\coloneqq (|00\rangle
_{C_{A}C_{B}}+|11\rangle_{C_{A}C_{B}})/\sqrt{2}$.}%
\label{fig: BQT Circuit 2}%
\end{figure}

We now analyze a second protocol from \cite{KPF16}, shown in
Figure~\ref{fig: BQT Circuit 2}, in which there is a single trigger qubit that
controls whether Alice teleports to Bob or Bob teleports to Alice. As
discussed in \cite{KPF16}, this scheme is equivalent to one in which a
classical random bit selects whether Alice teleports or Bob does, and it can
be communicated via a classical channel. The scheme thus realized a bipartite
channel that is a probabilistic mixture of the two middle terms in
\eqref{eq:KPF16-bi-ch}. The bipartite channel realized by this second protocol
is as follows:%
\begin{multline}
\mathcal{K}_{Q_{A}Q_{B}\rightarrow C_{A}C_{B}}(\rho_{Q_{A}Q_{B}})\coloneqq\\
p(\mathcal{R}_{Q_{A}\rightarrow C_{B}}^{\pi}\otimes\operatorname{id}%
_{Q_{B}\rightarrow C_{A}})(\rho_{Q_{A}Q_{B}})\\
+\left(  1-p\right)  (\operatorname{id}_{Q_{A}\rightarrow C_{B}}%
\otimes\mathcal{R}_{Q_{B}\rightarrow C_{A}}^{\pi})(\rho_{Q_{A}Q_{B}}),
\end{multline}
where $p\in\left[  0,1\right]  $. This channel is covariant and so
\cite[Corollary~II.5]{LKDW18} applies; we thus conclude that the tensor
product of maximally entangled states is an optimal input for either the
normalized diamond distance or the channel infidelity to the ideal swap
channel. By applying the analysis in Figures~\ref{fig:FidPoss2} and
\ref{fig:FidPoss3}, we conclude that both terms above have a fidelity of
$\frac{1}{4}$. Thus, the overall channel infidelity is equal to $\frac{3}{4}$
for all $p\in\left[  0,1\right]  $. One can also check that the normalized
diamond distance is equal to $\frac{3}{4}$. Thus, this second protocol of
\cite{KPF16} does not go beyond the classical limit from
Proposition~\ref{prop:no-res-sim-err}.

Another protocol suggested in \cite{KPF16} is to mix the two previous
protocols (i.e., to use one or the other probabilistically). However, such a
strategy never achieves a fidelity higher than $\frac{1}{4}$, and by previous
analyses that we have given, such a strategy does not go beyond the classical
limit from Proposition~\ref{prop:no-res-sim-err}.

As indicated previously, an optimal strategy in this scenario is to employ
that given in Remark~\ref{rem:single-edit-strat}.

\section{Generalization to multipartite channel simulation and bidirectional
controlled teleportation}

In this section, we generalize the development in
Section~\ref{sec:LOCC-sim-gen-bi}\ to the case of multipartite channel
simulation, and then we consider the specific case of bidirectional controlled
teleportation. The latter has been considered extensively in the literature
\cite{ZZQS13,Shukla2013,Li2013a,Li2013,Chen2014,Yan2013,Sun2013,Duan2014,Li2016a,ZXL19,Duan2014a,Hong2016,Sang2016,Zhang2015,SadeghiZadeh2017,Li2016}%
.

\subsection{Multipartite channel simulation}

\label{sec:MP-gen}Let us begin by recalling that a multipartite LOCC channel
can be written in the following form \cite{CLM+14}:%
\begin{equation}
\mathcal{L}_{A_{1}\cdots A_{M}\rightarrow A_{1}^{\prime}\cdots A_{M}^{\prime}%
}=\sum_{y}\bigotimes\limits_{i=1}^{M}\mathcal{E}_{A_{i}\rightarrow
A_{i}^{\prime}}^{y}, \label{eq:multip-LOCC}%
\end{equation}
where $M$ is the number of parties and the set $\{\mathcal{E}_{A_{i}%
\rightarrow A_{i}^{\prime}}^{y}\}_{y}$, for each $i\in\left\{  1,\ldots
,M\right\}  $, is a set of completely positive maps such that the sum map in
\eqref{eq:multip-LOCC} is trace preserving. However, just as in the bipartite
case, there exist channels of the form in \eqref{eq:multip-LOCC} that are not
implementable by means of LOCC.

Let us now define LOCC simulation of a multipartite channel. Let
$\mathcal{N}_{A_{1}\cdots A_{M}\rightarrow A_{1}^{\prime}\cdots A_{M}^{\prime
}}$ be a multipartite channel, in the sense that there are $M$ input systems
$A_{1}\cdots A_{M}$ and $M$ output systems $A_{1}^{\prime}\cdots A_{M}%
^{\prime}$. Furthermore, the $i$th party controls the $i$th input $A_{i}$ to
the channel, as well as the $i$th output $A_{i}^{\prime}$, for $i\in\left\{
1,\ldots,M\right\}  $. Suppose now that the $M$ parties share a resource state
$\rho_{\hat{A}_{1}\cdots\hat{A}_{M}}$. Then an LOCC simulation of
$\mathcal{N}_{A_{1}\cdots A_{M}\rightarrow A_{1}^{\prime}\cdots A_{M}^{\prime
}}$ consists of an LOCC channel $\mathcal{L}_{A_{1}\cdots A_{M}\hat{A}%
_{1}\cdots\hat{A}_{M}\rightarrow A_{1}^{\prime}\cdots A_{M}^{\prime}}$ such
that%
\begin{equation}
\widetilde{\mathcal{N}}_{A_{1}^{M}\rightarrow A_{1}^{\prime M}}(\omega
_{A_{1}^{M}})\coloneqq \mathcal{L}_{A_{1}^{M}\hat{A}_{1}^{M}\rightarrow
A_{1}^{\prime M}}(\omega_{A_{1}^{M}}\otimes\rho_{\hat{A}_{1}^{M}}),
\label{eq:LOCC-approx-MP}%
\end{equation}
where we have employed the shorthand%
\begin{align}
A_{1}^{M}  &  \equiv A_{1}\cdots A_{M},\\
A_{1}^{\prime M}  &  \equiv A_{1}^{\prime}\cdots A_{M}^{\prime},\\
\hat{A}_{1}^{M}  &  \equiv\hat{A}_{1}\cdots\hat{A}_{M}.
\end{align}
Note that $\mathcal{L}_{A_{1}^{M}\hat{A}_{1}^{M}\rightarrow A_{1}^{\prime M}}$
takes the following form:%
\begin{equation}
\mathcal{L}_{A_{1}^{M}\hat{A}_{1}^{M}\rightarrow A_{1}^{\prime M}}=\sum
_{y}\bigotimes\limits_{i=1}^{M}\mathcal{E}_{A_{i}\hat{A}_{i}\rightarrow
A_{i}^{\prime}}^{y}.
\end{equation}

The simulation error when using the LOCC channel $\mathcal{L}_{A_{1}^{M}%
\hat{A}_{1}^{M}\rightarrow A_{1}^{\prime M}}$ is then defined as follows:%
\begin{multline}
e_{\operatorname{LOCC}}(\mathcal{N}_{A_{1}^{M}\rightarrow A_{1}^{\prime M}%
},\rho_{\hat{A}_{1}^{M}},\mathcal{L}_{A_{1}^{M}\hat{A}_{1}^{M}\rightarrow
A_{1}^{\prime M}})\coloneqq\\
\frac{1}{2}\left\Vert \mathcal{N}_{A_{1}^{M}\rightarrow A_{1}^{\prime M}%
}-\widetilde{\mathcal{N}}_{A_{1}^{M}\rightarrow A_{1}^{\prime M}}\right\Vert
_{\diamond}.
\end{multline}
The simulation error minimized over all LOCC channels is defined as follows:%
\begin{multline}
e_{\operatorname{LOCC}}(\mathcal{N}_{A_{1}^{M}\rightarrow A_{1}^{\prime M}%
},\rho_{\hat{A}_{1}^{M}})\coloneqq\label{eq:MP-sim-err-LOCC}\\
\inf_{\mathcal{L}\in\operatorname{LOCC}}e_{\operatorname{LOCC}}(\mathcal{N}%
_{A_{1}^{M}\rightarrow A_{1}^{\prime M}},\rho_{\hat{A}_{1}^{M}},\mathcal{L}%
_{A_{1}^{M}\hat{A}_{1}^{M}\rightarrow A_{1}^{\prime M}}).
\end{multline}

Just as in the bipartite case, the LOCC simulation error in
\eqref{eq:MP-sim-err-LOCC} is generally hard to calculate. So we instead seek
ways of estimating or bounding it, and we can make use of an idea from
\cite{IP05}, which is helpful for developing a multipartite extension of the
bipartite case presented in Section~\ref{sec:LOCC-sim-gen-bi}. Let us define a
multipartite quantum channel $\mathcal{P}_{A_{1}^{M}\rightarrow A_{1}^{\prime
M}}$ to be completely PPT preserving if the following maps are completely
positive:%
\begin{equation}
T_{S^{\prime}}\circ\mathcal{P}_{A_{1}^{M}\rightarrow A_{1}^{\prime M}}\circ
T_{S},
\end{equation}
for all $S\in\mathbb{P}(\{A_{1},\ldots,A_{M}\})$, where $\mathbb{P}$ denotes
the power set. In the above, $T_{S}$ and $T_{S^{\prime}}$ denote partial
transpose maps acting on all subsystems in $S$ and $S^{\prime}$, respectively,
and the subset $S^{\prime}$ is chosen to correspond to the same systems in $S$
but for the channel output. Note that there is some redundancy in this
specification. There is no need to include the null set or full set from
$\mathbb{P}(\{A_{1},\ldots,A_{M}\})$ because the map $\mathcal{P}_{A_{1}%
^{M}\rightarrow A_{1}^{\prime M}}$ is completely positive. That is, the null
set corresponds to no transposes being taken, and the full set corresponds to
a full transpose on both the input and output, and it is known that the
resulting map is completely positive if and only if the original map is
completely positive. There is further redundancy in the sense that
$T_{S^{\prime}}\circ\mathcal{P}_{A_{1}^{M}\rightarrow A_{1}^{\prime M}}\circ
T_{S}$ is completely positive if and only if $T_{S^{\prime c}}\circ
\mathcal{P}_{A_{1}^{M}\rightarrow A_{1}^{\prime M}}\circ T_{S^{c}}$ is.

Let us illustrate this concept with an example. Suppose that $M=4$ and the
four parties are labeled as $ABCD$. Then a four-partite channel $\mathcal{P}%
_{ABCD\rightarrow A^{\prime}B^{\prime}C^{\prime}D^{\prime}}$ is completely PPT
preserving if the maps%
\begin{equation}
T_{S^{\prime}}\circ\mathcal{P}_{ABCD\rightarrow A^{\prime}B^{\prime}C^{\prime
}D^{\prime}}\circ T_{S}%
\end{equation}
are completely positive, where $S\in\left\{  A,B,C,D,AD,AC,DC\right\}  $ and
$S^{\prime}$ is the corresponding subset on the output systems. As stated
above, there is no need to include all of the elements of the power set.

Extending the bipartite case, we can consider this concept from the
perspective of the Choi operator. The Choi operator of a multipartite linear
map $\mathcal{P}_{A_{1}^{M}\rightarrow A_{1}^{\prime M}}$ is defined as
follows:%
\begin{equation}
P_{A_{1}^{M}A_{1}^{\prime M}}\coloneqq \mathcal{P}_{A_{1}^{M}\rightarrow
A_{1}^{\prime M}}\!\left(  \bigotimes\limits_{i=1}^{M}\Gamma_{A_{i}A_{i}%
}\right)  .
\end{equation}
The multipartite linear map $\mathcal{P}_{A_{1}^{M}\rightarrow A_{1}^{\prime
M}}$ is completely positive if and only if its Choi operator $P_{A_{1}%
^{M}A_{1}^{\prime M}}$ is positive semi-definite%
\begin{equation}
P_{A_{1}^{M}A_{1}^{\prime M}}\geq0, \label{eq:multipartite-choi-ppt-1}%
\end{equation}
and $\mathcal{P}_{A_{1}^{M}\rightarrow A_{1}^{\prime M}}$ is trace preserving
if and only if $P_{A_{1}^{M}A_{1}^{\prime M}}$ satisfies%
\begin{equation}
\operatorname{Tr}_{A_{1}^{\prime M}}[P_{A_{1}^{M}A_{1}^{\prime M}}%
]=I_{A_{1}^{M}}.
\end{equation}
Furthermore, a multipartite channel $\mathcal{P}_{A_{1}^{M}\rightarrow
A_{1}^{\prime M}}$ is completely PPT-preserving if and only if its Choi
operator satisfies%
\begin{equation}
T_{SS^{\prime}}(P_{A_{1}^{M}A_{1}^{\prime M}})\geq0
\label{eq:multipartite-choi-ppt-3}%
\end{equation}
for all $S\in\mathbb{P}(\{A_{1},\ldots,A_{M}\})$ and with $S^{\prime}$
corresponding to $S$.

We then define a PPT simulation when using a multipartite C-PPT-P channel
$\mathcal{P}_{A_{1}^{M}\hat{A}_{1}^{M}\rightarrow A_{1}^{\prime M}}$ along
with a resource state $\rho_{\hat{A}_{1}^{M}}$:%
\begin{equation}
\widetilde{\mathcal{N}}_{A_{1}^{M}\rightarrow A_{1}^{\prime M}}(\omega
_{A_{1}^{M}})\coloneqq \mathcal{P}_{A_{1}^{M}\hat{A}_{1}^{M}\rightarrow
A_{1}^{\prime M}}(\omega_{A_{1}^{M}}\otimes\rho_{\hat{A}_{1}^{M}}),
\label{eq:MP-PPT-sim-channel}%
\end{equation}
the simulation error as%
\begin{multline}
e_{\operatorname{PPT}}(\mathcal{N}_{A_{1}^{M}\rightarrow A_{1}^{\prime M}%
},\rho_{\hat{A}_{1}^{M}},\mathcal{P}_{A_{1}^{M}\hat{A}_{1}^{M}\rightarrow
A_{1}^{\prime M}})\coloneqq\\
\frac{1}{2}\left\Vert \mathcal{N}_{A_{1}^{M}\rightarrow A_{1}^{\prime M}%
}-\widetilde{\mathcal{N}}_{A_{1}^{M}\rightarrow A_{1}^{\prime M}}\right\Vert
_{\diamond},
\end{multline}
and the simulation error minimized over all C-PPT-P channels $\mathcal{P}%
_{A_{1}^{M}\hat{A}_{1}^{M}\rightarrow A_{1}^{\prime M}}$ as%
\begin{multline}
e_{\operatorname{PPT}}(\mathcal{N}_{A_{1}^{M}\rightarrow A_{1}^{\prime M}%
},\rho_{\hat{A}_{1}^{M}})\coloneqq\label{eq:sim-err-MP-PPT}\\
\inf_{\mathcal{P}\in\text{C-}\operatorname{PPT}\text{-P}}e_{\operatorname{PPT}%
}(\mathcal{N}_{A_{1}^{M}\rightarrow A_{1}^{\prime M}},\rho_{\hat{A}_{1}^{M}%
},\mathcal{P}_{A_{1}^{M}\hat{A}_{1}^{M}\rightarrow A_{1}^{\prime M}}).
\end{multline}

One of the main observations of \cite{IP05} is that the set of multipartite
LOCC channels is contained in the set of multipartite C-PPT-P channels:%
\begin{equation}
\operatorname{LOCC}\subset\text{C-PPT-P}, \label{eq:LOCC-in-PPT-MP}%
\end{equation}
which is the main reason that this concept is useful for providing bounds on
what is achievable using LOCC. As a direct consequence of
\eqref{eq:LOCC-in-PPT-MP}, we conclude that%
\begin{equation}
e_{\operatorname{PPT}}(\mathcal{N}_{A_{1}^{M}\rightarrow A_{1}^{\prime M}%
},\rho_{\hat{A}_{1}^{M}})\leq e_{\operatorname{LOCC}}(\mathcal{N}_{A_{1}%
^{M}\rightarrow A_{1}^{\prime M}},\rho_{\hat{A}_{1}^{M}}).
\end{equation}

The main advantage of this approach is due to the following generalization of
Proposition~\ref{prop:gen-bi-SDP}:

\begin{proposition}
\label{prop:gen-MP-SDP}The simulation error in \eqref{eq:sim-err-MP-PPT} can
be computed by means of the following semi-definite program:%
\begin{equation}
e_{\operatorname{PPT}}(\mathcal{N}_{A_{1}^{M}\rightarrow A_{1}^{\prime M}%
},\rho_{\hat{A}_{1}^{M}})=\inf_{\substack{\mu,Z_{A_{1}^{M}A_{1}^{\prime M}%
},\\P_{A_{1}^{M}\hat{A}_{1}^{M}A_{1}^{\prime M}}\geq0}}\mu,
\end{equation}
subject to%
\begin{align}
\mu I_{A_{1}^{M}}  &  \geq Z_{A_{1}^{M}},\\
\operatorname{Tr}_{A_{1}^{\prime M}}[P_{A_{1}^{M}\hat{A}_{1}^{M}A_{1}^{\prime
M}}]  &  =I_{A_{1}^{M}\hat{A}_{1}^{M}},
\end{align}%
\begin{equation}
Z_{A_{1}^{M}A_{1}^{\prime M}}\geq\Gamma_{A_{1}^{M}A_{1}^{\prime M}%
}^{\mathcal{N}}-\operatorname{Tr}_{\hat{A}_{1}^{M}}[T_{\hat{A}_{1}^{M}}%
(\rho_{\hat{A}_{1}^{M}})P_{A_{1}^{M}\hat{A}_{1}^{M}A_{1}^{\prime M}}],
\end{equation}
and%
\begin{equation}
T_{S\hat{S}S^{\prime}}(P_{A_{1}^{M}\hat{A}_{1}^{M}A_{1}^{\prime M}})\geq0,
\end{equation}
for all $S\in\mathbb{P}(\{A_{1},\ldots,A_{M}\})$, with $\hat{S}$ and
$S^{\prime}$ corresponding to $S$.
\end{proposition}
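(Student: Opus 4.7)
The plan is to assemble the SDP by combining three ingredients, each of which is already available from earlier in the paper: (i) the semi-definite program for the normalized diamond distance from \eqref{eq:SDP-diamond}, (ii) the Choi-operator characterization of multipartite C-PPT-P channels from \eqref{eq:multipartite-choi-ppt-1}--\eqref{eq:multipartite-choi-ppt-3}, and (iii) the Choi-operator formula for serial composition from \eqref{eq:Choi-op-comp}, used to express the Choi operator of the simulating channel $\widetilde{\mathcal{N}}_{A_1^M \to A_1^{\prime M}}$ in \eqref{eq:MP-PPT-sim-channel} in terms of the Choi operator $P_{A_1^M \hat{A}_1^M A_1^{\prime M}}$ of $\mathcal{P}$ and the resource state $\rho_{\hat{A}_1^M}$. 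This mirrors exactly the strategy used to prove Proposition~\ref{prop:gen-bi-SDP}; the only difference is that ``$AB$'' and ``$A'B'$'' are replaced by the multipartite tuples $A_1^M$ and $A_1^{\prime M}$, and the single PPT constraint $T_{B\hat{B}B'}(\cdot)\geq 0$ is replaced by a collection of PPT constraints indexed by subsets $S\in\mathbb{P}(\{A_1,\ldots,A_M\})$.

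Concretely, I would first introduce variables $\mu\geq 0$, $Z_{A_1^M A_1^{\prime M}}\geq 0$, and $P_{A_1^M \hat{A}_1^M A_1^{\prime M}}\geq 0$, and record that $P$ is the Choi operator of a candidate C-PPT-P channel $\mathcal{P}_{A_1^M \hat{A}_1^M \to A_1^{\prime M}}$. Positivity of $P$ enforces complete positivity of $\mathcal{P}$, the partial-trace condition $\operatorname{Tr}_{A_1^{\prime M}}[P]=I_{A_1^M\hat{A}_1^M}$ enforces trace preservation, and the family of constraints $T_{S\hat{S}S'}(P)\geq 0$ for all $S\in\mathbb{P}(\{A_1,\ldots,A_M\})$ enforces the multipartite C-PPT-P property via the Choi characterization \eqref{eq:multipartite-choi-ppt-3}. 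Next, since the resource state $\rho_{\hat{A}_1^M}$ can be viewed as a channel from a trivial input system, formula \eqref{eq:Choi-op-comp} yields
\begin{equation}
\Gamma^{\widetilde{\mathcal{N}}}_{A_1^M A_1^{\prime M}} = \operatorname{Tr}_{\hat{A}_1^M}\!\left[ T_{\hat{A}_1^M}(\rho_{\hat{A}_1^M})\, P_{A_1^M \hat{A}_1^M A_1^{\prime M}} \right],
\end{equation}
so substituting this expression into the diamond-norm SDP \eqref{eq:SDP-diamond} produces the constraint
\begin{equation}
Z_{A_1^M A_1^{\prime M}} \geq \Gamma^{\mathcal{N}}_{A_1^M A_1^{\prime M}} - \operatorname{Tr}_{\hat{A}_1^M}\!\left[ T_{\hat{A}_1^M}(\rho_{\hat{A}_1^M})\, P_{A_1^M \hat{A}_1^M A_1^{\prime M}} \right],
\end{equation}
together with $\mu I_{A_1^M} \geq Z_{A_1^M}$ and the objective $\inf \mu$. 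Any feasible $(\mu,Z,P)$ corresponds to a valid multipartite C-PPT-P simulation, and conversely any multipartite C-PPT-P simulation yields feasible $(\mu,Z,P)$ via its Choi operator and the diamond-norm SDP variables, establishing equality with $e_{\operatorname{PPT}}(\mathcal{N}_{A_1^M \to A_1^{\prime M}},\rho_{\hat{A}_1^M})$.

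The only step that requires any genuine attention beyond transcription is the verification that the collection of constraints $T_{S\hat{S}S'}(P)\geq 0$, indexed over $S\in\mathbb{P}(\{A_1,\ldots,A_M\})$, is exactly equivalent to $\mathcal{P}$ being multipartite C-PPT-P in the sense defined around \eqref{eq:LOCC-in-PPT-MP}. This reduces to the bipartite Choi characterization applied separately to each bipartition induced by $(S,S^c)$, using the fact that the partial transpose is its own adjoint and that complete positivity of $T_{S'}\circ\mathcal{P}\circ T_S$ is equivalent to positivity of $T_{S\hat{S}S'}$ acting on the Choi operator, as argued for the bipartite case just before Proposition~\ref{prop:gen-bi-SDP}. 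Thus I do not anticipate any substantive obstacle; the main bookkeeping burden is simply keeping track of which systems are being transposed on the input versus output sides and noting the redundancy (already mentioned in the paper) that one need not include the empty set or the full set in $\mathbb{P}(\{A_1,\ldots,A_M\})$.
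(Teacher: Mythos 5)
Your proposal is correct and follows essentially the same route as the paper, which proves Proposition~\ref{prop:gen-MP-SDP} by combining the diamond-distance SDP in \eqref{eq:SDP-diamond} with the multipartite Choi-operator constraints in \eqref{eq:multipartite-choi-ppt-1}--\eqref{eq:multipartite-choi-ppt-3} and the composition formula \eqref{eq:Choi-op-comp}, exactly as in the proof of Proposition~\ref{prop:gen-bi-SDP}. Your added remark verifying that the family of constraints $T_{S\hat{S}S^{\prime}}(P)\geq 0$ over $S\in\mathbb{P}(\{A_{1},\ldots,A_{M}\})$ matches the definition of a multipartite C-PPT-P channel is the right point to check and is consistent with the paper's treatment.
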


The proof of Proposition~\ref{prop:gen-MP-SDP}\ is very similar to the proof
of Proposition~\ref{prop:gen-bi-SDP}. It just combines the semi-definite
program for the diamond distance in \eqref{eq:SDP-diamond}\ along with the
constraints in \eqref{eq:multipartite-choi-ppt-1}--\eqref{eq:multipartite-choi-ppt-3}.

As we did previously, we can also define simulation error in terms of
infidelity. For this case, we define%
\begin{multline}
e_{\operatorname{LOCC}}^{F}(\mathcal{N}_{A_{1}^{M}\rightarrow A_{1}^{\prime
M}},\rho_{\hat{A}_{1}^{M}},\mathcal{L}_{A_{1}^{M}\hat{A}_{1}^{M}\rightarrow
A_{1}^{\prime M}})\coloneqq\\
1-F(\mathcal{N}_{A_{1}^{M}\rightarrow A_{1}^{\prime M}},\widetilde
{\mathcal{N}}_{A_{1}^{M}\rightarrow A_{1}^{\prime M}}),
\end{multline}
where $\widetilde{\mathcal{N}}_{A_{1}^{M}\rightarrow A_{1}^{\prime M}}$ is
defined in \eqref{eq:LOCC-approx-MP}, and the simulation error minimized over
all LOCC channels as%
\begin{multline}
e_{\operatorname{LOCC}}^{F}(\mathcal{N}_{A_{1}^{M}\rightarrow A_{1}^{\prime
M}},\rho_{\hat{A}_{1}^{M}})\coloneqq\\
\inf_{\mathcal{L}\in\operatorname{LOCC}}e_{\operatorname{LOCC}}^{F}%
(\mathcal{N}_{A_{1}^{M}\rightarrow A_{1}^{\prime M}},\rho_{\hat{A}_{1}^{M}%
},\mathcal{L}_{A_{1}^{M}\hat{A}_{1}^{M}\rightarrow A_{1}^{\prime M}}).
\end{multline}
We also define the following simulation error:%
\begin{multline}
e_{\operatorname{PPT}}^{F}(\mathcal{N}_{A_{1}^{M}\rightarrow A_{1}^{\prime M}%
},\rho_{\hat{A}_{1}^{M}},\mathcal{P}_{A_{1}^{M}\hat{A}_{1}^{M}\rightarrow
A_{1}^{\prime M}})\coloneqq\\
1-F(\mathcal{N}_{A_{1}^{M}\rightarrow A_{1}^{\prime M}},\widetilde
{\mathcal{N}}_{A_{1}^{M}\rightarrow A_{1}^{\prime M}}),
\end{multline}
where $\widetilde{\mathcal{N}}_{A_{1}^{M}\rightarrow A_{1}^{\prime M}}$ is
defined in \eqref{eq:MP-PPT-sim-channel}, and the simulation error minimized
over all C-PPT-P channels $\mathcal{P}_{A_{1}^{M}\hat{A}_{1}^{M}\rightarrow
A_{1}^{\prime M}}$ as%
\begin{multline}
e_{\operatorname{PPT}}^{F}(\mathcal{N}_{A_{1}^{M}\rightarrow A_{1}^{\prime M}%
},\rho_{\hat{A}_{1}^{M}})\coloneqq\label{eq:infid-sim-err-MP}\\
\inf_{\mathcal{P}\in\operatorname{PPT}}e_{\operatorname{PPT}}^{F}%
(\mathcal{N}_{A_{1}^{M}\rightarrow A_{1}^{\prime M}},\rho_{\hat{A}_{1}^{M}%
},\mathcal{P}_{A_{1}^{M}\hat{A}_{1}^{M}\rightarrow A_{1}^{\prime M}}).
\end{multline}

By reasoning similar to that used to conclude
Proposition~\ref{prop:gen-bi-SDP-infidelity}, we conclude the following:

\begin{proposition}
\label{prop:gen-MP-SDP-infid}The simulation error in
\eqref{eq:infid-sim-err-MP} can be computed by means of the following
semi-definite program:%
\begin{multline}
e_{\operatorname{PPT}}^{F}(\mathcal{N}_{A_{1}^{M}\rightarrow A_{1}^{\prime M}%
},\rho_{\hat{A}_{1}^{M}})=\\
1-\left[  \sup_{_{\lambda\geq0,P_{A_{1}^{M}\hat{A}_{1}^{M}A_{1}^{\prime M}%
}\geq0,Q_{A_{1}^{M}A_{1}^{\prime M}}}}\lambda\right]  ^{2},
\end{multline}
subject to%
\begin{align}
\lambda I_{A_{1}^{M}}  &  \leq\operatorname{Re}[\operatorname{Tr}%
_{A_{1}^{\prime M}}[Q_{A_{1}^{M}A_{1}^{\prime M}}]],\\
\operatorname{Tr}_{A_{1}^{\prime M}}[P_{A_{1}^{M}\hat{A}_{1}^{M}A_{1}^{\prime
M}}]  &  =I_{A_{1}^{M}\hat{A}_{1}^{M}},
\end{align}%
\begin{equation}%
\begin{bmatrix}
\Gamma_{A_{1}^{M}A_{1}^{\prime M}}^{\mathcal{N}} & Q_{A_{1}^{M}A_{1}^{\prime
M}}^{\dag}\\
Q_{A_{1}^{M}A_{1}^{\prime M}} & \operatorname{Tr}_{\hat{A}_{1}^{M}}[T_{\hat
{A}_{1}^{M}}(\rho_{\hat{A}_{1}^{M}})P_{A_{1}^{M}\hat{A}_{1}^{M}A_{1}^{\prime
M}}]
\end{bmatrix}
\geq0,
\end{equation}
and%
\begin{equation}
T_{S\hat{S}S^{\prime}}(P_{A_{1}^{M}\hat{A}_{1}^{M}A_{1}^{\prime M}})\geq0,
\end{equation}
for all $S\in\mathbb{P}(\{A_{1},\ldots,A_{M}\})$, with $\hat{S}$ and
$S^{\prime}$ corresponding to $S$.
\end{proposition}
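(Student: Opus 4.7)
The plan is to extend the reasoning behind Proposition~\ref{prop:gen-bi-SDP-infidelity} from the bipartite to the multipartite setting in a systematic way. First I would note that, by definition, minimizing $1 - F(\mathcal{N},\widetilde{\mathcal{N}})$ over all admissible simulation channels is equivalent to maximizing the root fidelity $\sqrt{F}(\mathcal{N},\widetilde{\mathcal{N}})$ over the same set and then squaring the optimum. Invoking the dual SDP for root fidelity in \eqref{eq:SDP-ch-fid-1}--\eqref{eq:SDP-ch-fid-3}, this furnishes the objective $\sup \lambda$, the spectral constraint $\lambda I_{A_1^M} \leq \operatorname{Re}[\operatorname{Tr}_{A_1^{\prime M}}[Q_{A_1^M A_1^{\prime M}}]]$, and the block-positivity constraint involving $\Gamma^{\mathcal{N}}$ and $\Gamma^{\widetilde{\mathcal{N}}}$ as the two diagonal blocks with $Q$ and $Q^{\dag}$ off the diagonal.

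Next, I would rewrite the Choi operator $\Gamma^{\widetilde{\mathcal{N}}}_{A_1^M A_1^{\prime M}}$ of the simulation channel defined in \eqref{eq:MP-PPT-sim-channel} in terms of the Choi operator $P_{A_1^M \hat{A}_1^M A_1^{\prime M}}$ of the underlying C-PPT-P channel $\mathcal{P}$ and the resource state $\rho_{\hat{A}_1^M}$. Applying the Choi-operator composition formula \eqref{eq:Choi-op-comp}, while treating the state $\rho_{\hat{A}_1^M}$ as the Choi operator of a preparation channel from the trivial system into $\hat{A}_1^M$, delivers the identification
\begin{equation}
\Gamma^{\widetilde{\mathcal{N}}}_{A_1^M A_1^{\prime M}} = \operatorname{Tr}_{\hat{A}_1^M}\!\left[T_{\hat{A}_1^M}(\rho_{\hat{A}_1^M})\, P_{A_1^M \hat{A}_1^M A_1^{\prime M}}\right],
\end{equation}
which is precisely the $(2,2)$ block appearing in the claimed positive semi-definite constraint.

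Then I would enforce that the free variable $P_{A_1^M \hat{A}_1^M A_1^{\prime M}}$ is the Choi operator of a genuine C-PPT-P channel by attaching the semi-definite constraints \eqref{eq:multipartite-choi-ppt-1}--\eqref{eq:multipartite-choi-ppt-3}: namely positivity $P_{A_1^M \hat{A}_1^M A_1^{\prime M}}\geq 0$, trace preservation $\operatorname{Tr}_{A_1^{\prime M}}[P_{A_1^M \hat{A}_1^M A_1^{\prime M}}] = I_{A_1^M \hat{A}_1^M}$, and positivity of every relevant partial transpose $T_{S\hat{S}S^{\prime}}(P_{A_1^M \hat{A}_1^M A_1^{\prime M}}) \geq 0$ for all $S \in \mathbb{P}(\{A_1,\ldots,A_M\})$, with $\hat{S}$ and $S^{\prime}$ corresponding to $S$ on the resource and output systems, respectively. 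Taking the joint supremum over $\lambda$, $Q_{A_1^M A_1^{\prime M}}$, and $P_{A_1^M \hat{A}_1^M A_1^{\prime M}}$ subject to these constraints, and finally applying the $1 - (\cdot)^2$ conversion from root fidelity to infidelity, yields the claimed SDP.

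The main obstacle is simply careful bookkeeping of the multipartite system labels and verifying that the power-set worth of partial-transpose conditions can indeed be incorporated as jointly linear (hence semi-definite) constraints on $P$; each individual transpose is a linear map, so the feasible region remains a spectrahedron and the overall optimization remains an SDP. All other steps are routine generalizations of the bipartite argument used to establish Proposition~\ref{prop:gen-bi-SDP-infidelity}, with the multipartite LOCC/C-PPT-P containment \eqref{eq:LOCC-in-PPT-MP} guaranteeing that the resulting value is a bona fide lower bound on the LOCC infidelity simulation error.
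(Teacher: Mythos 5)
Your proposal is correct and follows essentially the same route as the paper, which itself justifies this proposition by combining the dual SDP for root channel fidelity in \eqref{eq:SDP-ch-fid-1}--\eqref{eq:SDP-ch-fid-3} with the Choi-operator composition identity \eqref{eq:Choi-op-comp} (treating the resource state as a preparation channel from a trivial system) and the multipartite C-PPT-P Choi constraints \eqref{eq:multipartite-choi-ppt-1}--\eqref{eq:multipartite-choi-ppt-3}. Your identification of the $(2,2)$ block and your remark that the power-set of partial-transpose conditions are linear, hence preserve the semi-definite structure, match the paper's (only sketched) argument.
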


\subsection{Bidirectional controlled teleportation}

\begin{figure}[ptb]
\begin{center}
\includegraphics[scale=0.5]{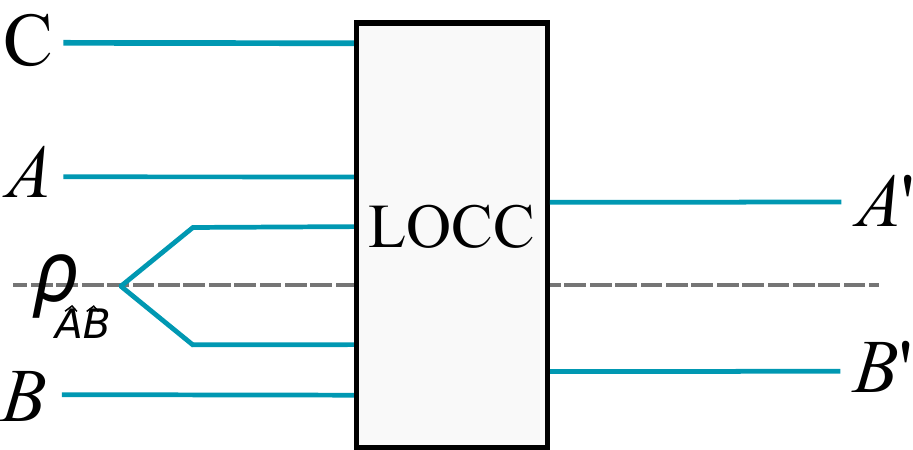}
\end{center}
\caption{Shown above is a diagram for bidirectional controlled teleportation.
The goal of this protocol is the same as the usual BQT protocol, but there is
a third party, Charlie, who helps Alice and Bob with the task. The three
parties share a resource state $\rho_{\hat{A}\hat{B}\hat{C}}$ and perform an
LOCC protocol to approximate a swap of the input systems $A$ and $B$.}%
\label{fig: Controlled BQT}%
\end{figure}

An important special case of LOCC simulation of a multipartite channel is
bidirectional controlled teleportation, depicted in
Figure~\ref{fig: Controlled BQT}. For this problem, the goal is the same as
that of bidirectional teleportation, but there is a third party Charlie who
helps with the task. In more detail, suppose that Alice, Bob, and Charlie
share a resource state $\rho_{\hat{A}\hat{B}\hat{C}}$. They then perform a
multipartite LOCC channel $\mathcal{L}_{AB\hat{A}\hat{B}\hat{C}\rightarrow
AB}$ on input systems $AB$ and the systems $\hat{A}\hat{B}\hat{C}$ of the
resource state, with the goal of simulating an ideal swap channel
$\mathcal{S}_{AB}^{d}$:%
\begin{equation}
\widetilde{\mathcal{S}}_{AB}^{d}(\omega_{AB})\coloneqq \mathcal{L}_{AB\hat
{A}\hat{B}\hat{C}\rightarrow AB}(\omega_{AB}\otimes\rho_{\hat{A}\hat{B}\hat
{C}}).
\end{equation}
Note that the LOCC channel $\mathcal{L}_{AB\hat{A}\hat{B}\hat{C}\rightarrow
AB}$ has a trivial output system for Charlie. In more detail, it can be
written in the following form:%
\begin{multline}
\mathcal{L}_{AB\hat{A}\hat{B}\hat{C}\rightarrow AB}(\tau_{A\hat{A}B\hat{B}%
\hat{C}})=\label{eq:LOCC-BTP-controlled}\\
\sum_{y}(\mathcal{E}_{A\hat{A}\rightarrow A}^{y}\otimes\mathcal{F}_{B\hat
{B}\rightarrow B}^{y})\operatorname{Tr}_{\hat{C}}[\Lambda_{\hat{C}}^{y}%
\tau_{A\hat{A}B\hat{B}\hat{C}}],
\end{multline}
where $\{\mathcal{E}_{A\hat{A}\rightarrow A}^{y}\}_{y}$ and $\{\mathcal{F}%
_{B\hat{B}\rightarrow B}^{y}\}_{y}$ are sets of completely positive maps and
$\{\Lambda_{\hat{C}}^{y}\}_{y}$ is a set of positive semi-definite operators
such that the sum map in \eqref{eq:LOCC-BTP-controlled}\ is a quantum channel.

The simulation error for bidirectional controlled teleportation, when using a
particular LOCC channel $\mathcal{L}_{AB\hat{A}\hat{B}\hat{C}\rightarrow AB}$,
is given by%
\begin{multline}
e_{\operatorname{LOCC}}(\mathcal{S}_{AB}^{d},\rho_{\hat{A}\hat{B}\hat{C}%
},\mathcal{L}_{AB\hat{A}\hat{B}\hat{C}\rightarrow AB})\coloneqq\\
\frac{1}{2}\left\Vert \mathcal{S}_{AB}^{d}-\widetilde{\mathcal{S}}_{AB}%
^{d}\right\Vert _{\diamond},
\end{multline}
and the simulation error minimized over all LOCC channels $\mathcal{L}%
_{AB\hat{A}\hat{B}\hat{C}\rightarrow AB}$ is given by%
\begin{multline}
e_{\operatorname{LOCC}}(\mathcal{S}_{AB}^{d},\rho_{\hat{A}\hat{B}\hat{C}%
})\coloneqq\label{eq:BCQT-sim-err-LOCC}\\
\inf_{\mathcal{L}\in\operatorname{LOCC}}e_{\operatorname{LOCC}}(\mathcal{S}%
_{AB}^{d},\rho_{\hat{A}\hat{B}\hat{C}},\mathcal{L}_{AB\hat{A}\hat{B}\hat
{C}\rightarrow AB}).
\end{multline}
We can also define simulation errors in terms of channel infidelity:%
\begin{equation}
e_{\operatorname{LOCC}}^{F}(\mathcal{S}_{AB}^{d},\rho_{\hat{A}\hat{B}\hat{C}%
},\mathcal{L}_{AB\hat{A}\hat{B}\hat{C}\rightarrow AB}%
)\coloneqq 1-F(\mathcal{S}_{AB}^{d},\widetilde{\mathcal{S}}_{AB}^{d}),
\end{equation}%
\begin{multline}
e_{\operatorname{LOCC}}^{F}(\mathcal{S}_{AB}^{d},\rho_{\hat{A}\hat{B}\hat{C}%
})\coloneqq\\
\inf_{\mathcal{L}\in\operatorname{LOCC}}e_{\operatorname{LOCC}}^{F}%
(\mathcal{S}_{AB}^{d},\rho_{\hat{A}\hat{B}\hat{C}},\mathcal{L}_{AB\hat{A}%
\hat{B}\hat{C}\rightarrow AB}).
\end{multline}

As before, the simulation error in \eqref{eq:BCQT-sim-err-LOCC} is difficult
to compute, and so we bound it from below by the PPT\ simulation error, which
is defined from \eqref{eq:sim-err-MP-PPT} and denoted by
$e_{\operatorname{PPT}}(\mathcal{S}_{AB}^{d},\rho_{\hat{A}\hat{B}\hat{C}})$.
As a generalization of Proposition~\ref{prop:swap-sdp-simplify}, we show that
the semi-definite program for calculating the PPT simulation error
$e_{\operatorname{PPT}}(\mathcal{S}_{AB}^{d},\rho_{\hat{A}\hat{B}\hat{C}})$
simplifies as given in Proposition~\ref{prop:swap-sdp-simplify-BCQT} below. A
proof is available in Appendix~\ref{app:simplified-SDP-BCQT}\ and is similar
to the proof of Proposition~\ref{prop:swap-sdp-simplify}. The interpretation
of an optimal simulating channel is the same as given in
Remark~\ref{rem:channel-form}, except with respect to the measurement
operators $K_{\hat{A}\hat{B}\hat{C}}$, $L_{\hat{A}\hat{B}\hat{C}}$,
$M_{\hat{A}\hat{B}\hat{C}}$, and $N_{\hat{A}\hat{B}\hat{C}}$ being subject to
the conditions in Proposition~\ref{prop:swap-sdp-simplify-BCQT} below.
Furthermore, the two different notions of simulation error based on
$e_{\operatorname{PPT}}(\mathcal{S}_{AB}^{d},\rho_{\hat{A}\hat{B}\hat{C}})$
and $e_{\operatorname{PPT}}^{F}(\mathcal{S}_{AB}^{d},\rho_{\hat{A}\hat{B}%
\hat{C}})$ coincide again.

\begin{proposition}
\label{prop:swap-sdp-simplify-BCQT}The semi-definite programs in
Propositions~\ref{prop:gen-MP-SDP} and \ref{prop:gen-MP-SDP-infid},\ for the
error in simulating the unitary SWAP channel $\mathcal{S}_{AB}^{d}$, using a
resource state $\rho_{\hat{A}\hat{B}\hat{C}}$\ simplifies as follows:%
\begin{align}
&  e_{\operatorname{PPT}}(\mathcal{S}_{AB}^{d},\rho_{\hat{A}\hat{B}\hat{C}%
})\nonumber\\
&  =e_{\operatorname{PPT}}^{F}(\mathcal{S}_{AB}^{d},\rho_{\hat{A}\hat{B}%
\hat{C}})\\
&  =1-\sup_{\substack{K_{\hat{A}\hat{B}\hat{C}},L_{\hat{A}\hat{B}\hat{C}%
},\\N_{\hat{A}\hat{B}\hat{C}}\geq0}}\operatorname{Tr}[\rho_{\hat{A}\hat{B}%
\hat{C}}K_{\hat{A}\hat{B}\hat{C}}],
\end{align}
subject to%
\begin{align}
T_{\hat{S}}\!\left(  K_{\hat{A}\hat{B}\hat{C}}+\frac{L_{\hat{A}\hat{B}\hat{C}%
}}{d+1}+\frac{N_{\hat{A}\hat{B}\hat{C}}}{\left(  d+1\right)  ^{2}}\right)   &
\geq0,\nonumber\\
\frac{1}{d^{2}-1}T_{\hat{S}}\!\left(  L_{\hat{A}\hat{B}\hat{C}}+N_{\hat{A}%
\hat{B}\hat{C}}\right)   &  \geq T_{\hat{S}}\!\left(  K_{\hat{A}\hat{B}\hat
{C}}\right)  ,\\
T_{\hat{S}}\!\left(  K_{\hat{A}\hat{B}\hat{C}}+\frac{N_{\hat{A}\hat{B}\hat{C}%
}}{\left(  d-1\right)  ^{2}}\right)   &  \geq\frac{1}{d-1}T_{\hat{S}}\!\left(
L_{\hat{A}\hat{B}\hat{C}}\right)  ,
\end{align}
for $\hat{S}\in\{\hat{A},\hat{B}\}$ and%
\begin{align}
K_{\hat{A}\hat{B}\hat{C}}+L_{\hat{A}\hat{B}\hat{C}}+N_{\hat{A}\hat{B}\hat{C}}
&  =I_{\hat{A}\hat{B}\hat{C}},\\
T_{\hat{C}}(K_{\hat{A}\hat{B}\hat{C}}),T_{\hat{C}}(L_{\hat{A}\hat{B}\hat{C}%
}),T_{\hat{C}}(N_{\hat{A}\hat{B}\hat{C}})  &  \geq0.
\end{align}

\end{proposition}

\begin{proposition}
If there is no resource state, then the error in \eqref{eq:sim-err-MP-PPT} and
\eqref{eq:MP-sim-err-LOCC}\ for simulating the unitary SWAP channel
$\mathcal{S}_{AB}^{d}$ in \eqref{eq:ideal-swap-channel} is equal to
$1-1/d^{2}$:%
\begin{equation}
e_{\operatorname{PPT}}(\mathcal{S}_{AB}^{d},\emptyset)=e_{\operatorname{LOCC}%
}(\mathcal{S}_{AB}^{d},\emptyset)=1-\frac{1}{d^{2}},
\end{equation}
where the notation $\emptyset$ indicates the absence of a resource state.
\end{proposition}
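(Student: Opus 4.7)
The plan is to reduce this statement to the analogous two-party result already proven in Proposition~\ref{prop:no-res-sim-err}. Since the ``resource state'' is empty, every hat system $\hat{A}$, $\hat{B}$, $\hat{C}$ is trivial (one-dimensional), so Charlie has no correlations to distribute and no subsystem to transpose. I would first apply the simplified three-party semi-definite program in Proposition~\ref{prop:swap-sdp-simplify-BCQT} with the trivial resource state, noting that when $\hat{A}\hat{B}\hat{C}$ is one-dimensional, the operators $K_{\hat{A}\hat{B}\hat{C}}$, $L_{\hat{A}\hat{B}\hat{C}}$, $M_{\hat{A}\hat{B}\hat{C}}$, $N_{\hat{A}\hat{B}\hat{C}}$ collapse to nonnegative scalars $p_1, p_2, p_3, p_4$, and the partial transposes $T_{\hat{A}}$, $T_{\hat{B}}$, $T_{\hat{C}}$ act as the identity. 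The objective becomes $1-p_1$, and the six inequality constraints indexed by $\hat{S}\in\{\hat{A},\hat{B}\}$ together with the C-PPT-P constraint on $\hat{C}$ and the normalization all reduce to exactly the scalar linear program \eqref{eq:LP-no-res-primal}--\eqref{eq:LP-no-res-primal-last} that appears in the proof of Proposition~\ref{prop:no-res-sim-err}.

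At that point the analytical work has already been carried out: the primal feasible point $(p_1,p_2,p_3,p_4)=(1/d^2,0,0,1-1/d^2)$ gives the upper bound $e_{\operatorname{PPT}}(\mathcal{S}_{AB}^d,\emptyset)\leq 1-1/d^2$, and the dual feasible point exhibited in the proof of Proposition~\ref{prop:no-res-sim-err} gives the matching lower bound. Hence $e_{\operatorname{PPT}}(\mathcal{S}_{AB}^d,\emptyset)=1-1/d^2$ in the three-party setting as well.

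For the LOCC value, I would establish the two inequalities separately. The lower bound is immediate from the containment LOCC $\subset$ C-PPT-P established in \eqref{eq:LOCC-in-PPT-MP}, which yields $e_{\operatorname{PPT}}(\mathcal{S}_{AB}^d,\emptyset)\leq e_{\operatorname{LOCC}}(\mathcal{S}_{AB}^d,\emptyset)$. For the upper bound, I would invoke the concrete achievability scheme from the proof of Proposition~\ref{prop:no-res-sim-err}: Alice and Bob prepare $|0\rangle\!\langle 0|_{\hat{A}}\otimes|0\rangle\!\langle 0|_{\hat{B}}$, bilaterally twirl to obtain $\omega_{\hat{A}\hat{B}}=\tfrac{1}{d^2}\Phi_{\hat{A}\hat{B}}^{d^2}+(1-\tfrac{1}{d^2})(I-\Phi_{\hat{A}\hat{B}}^{d^2})/(d^4-1)$, isometrically split off two $d$-dimensional maximally entangled states, and perform standard teleportation in both directions. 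This is a bona fide three-party LOCC protocol in which Charlie simply does nothing, and the data-processing computation in \eqref{eq:err-eval-no-res-1}--\eqref{eq:err-eval-no-res-last} shows that its diamond-norm simulation error is exactly $1-1/d^2$, closing the sandwich.

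The main potential obstacle is bookkeeping: I have to be careful that ``trivial $\hat{C}$'' really collapses the additional PPT constraints in Proposition~\ref{prop:swap-sdp-simplify-BCQT} (the ones of the form $T_{\hat{C}}(K_{\hat{A}\hat{B}\hat{C}})\geq 0$, etc.) to the scalar nonnegativity already imposed, rather than introducing spurious new conditions; but since transposition on a one-dimensional factor is the identity, this reduction is automatic. No genuinely new calculation is required beyond what already appears in the proof of Proposition~\ref{prop:no-res-sim-err}.
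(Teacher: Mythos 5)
Your proposal is correct and follows essentially the same route as the paper's proof: collapse the semi-definite program of Proposition~\ref{prop:swap-sdp-simplify-BCQT} to the scalar linear program in \eqref{eq:LP-no-res-primal}--\eqref{eq:LP-no-res-primal-last} when the resource state is trivial, and then reuse the primal/dual solutions and the LOCC achievability scheme (with Charlie idle) from the proof of Proposition~\ref{prop:no-res-sim-err}. The paper states this reduction more tersely, but the content is identical.
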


\begin{proof}
This follows from the same proof given for
Proposition~\ref{prop:no-res-sim-err}. When there is no resource state, the
state $\rho_{\hat{A}\hat{B}\hat{C}}$ collapses to the number one, and the
operators $K_{\hat{A}\hat{B}\hat{C}}$, $L_{\hat{A}\hat{B}\hat{C}}$, and
$N_{\hat{A}\hat{B}\hat{C}}$ collapse to real numbers as well. So the
optimization in Proposition~\ref{prop:swap-sdp-simplify-BCQT} collapses to the
linear program in
\eqref{eq:LP-no-res-primal}--\eqref{eq:LP-no-res-primal-last}, and we conclude
the statement above from the rest of the proof of
Proposition~\ref{prop:no-res-sim-err}.
\end{proof}

\section{Conclusion}

\label{sec:conclusion}In this paper, we have provided a systematic approach
for quantifying the performance of bidirectional teleportation and
bidirectional controlled teleportation. We have established a benchmark for
classical versus quantum bidirectional teleportation, and we have evaluated
semi-definite programming lower bounds on the simulation error for some key
examples of resource states. More generally, we have demonstrated that
semi-definite programs are possible when using the channel infidelity as an
error measure, which addresses an open question from
\cite{WW19states,Wang2019a}\ and should have applications more generally in
quantum resource theories.

Going forward from here, there are several avenues for future work. First, we
can consider other unitary channels besides the swap channel, and the line of
thinking developed here could be useful for related scenarios considered in
\cite{STM11,WSM19}. We can also consider applying the framework used here to
analyze multidirectional teleportation between more than two parties. We also
wonder whether there is an LOCC\ simulation that achieves a performance
matching the lower bound found here, for all parameter values for isotropic
and Werner states. As mentioned in the introduction of our paper, there have
been many proposals for bidirectional (controlled) teleportation. One could
also evaluate the semi-definite programming bounds from this paper for
imperfect versions of the resource states in those works in order to determine
how robust those entangled resource states are to noise.

\begin{acknowledgments}
We would like to thank Jonathan~P.~Dowling for being a catalyst for this
paper. He never stopped believing in us and our potential as researchers. His
strength and perseverance has inspired us and lives within us. May he never be forgotten.

MMW\ acknowledges Moein Sarvaghad-Moghaddam for introducing him to the topic
of bidirectional teleportation. We acknowledge many insightful discussions
with Justin Champagne, Sumeet Khatri, Margarite LaBorde, Soorya Rethinasamy,
and Kunal Sharma. AUS\ acknowledges support from the LSU Discover Research
Grant, the National Science Foundation under Grant No.~OAC-1852454, and the
LSU\ Center for Computation and Technology. We also acknowledge support from
the National Science Foundation under Grant No.~1907615.
\end{acknowledgments}

\bibliographystyle{alpha}
\bibliography{Ref}

\appendix\onecolumngrid

\section{Proof of Proposition~\ref{prop:err-collapse-LOCC}}

\label{app:simplify-LOCC-err-swap}

\subsection{Exploiting symmetries of the unitary swap channel}

The main idea of the proof is to simplify the optimization problems in
\eqref{eq:sim-err-swap-DD} and \eqref{eq:sim-err-swap-ch-infid}\ by exploiting
the symmetries of the unitary swap channel, as given in
\eqref{eq:swap-ch-symmetry-for-LOCC} and
\eqref{eq:swap-ch-add-symmetry-for-LOCC}. To begin with, let us note that the
Choi operator of the SWAP\ channel can be written as
\begin{align}
\Gamma_{ABA^{\prime}B^{\prime}}^{\mathcal{S}^{d}}  &  =\sum_{i,j,k,\ell
}|i\rangle\!\langle j|_{A}\otimes|k\rangle\!\langle\ell|_{B}\otimes
|k\rangle\!\langle\ell|_{A^{\prime}}\otimes|i\rangle\!\langle j|_{B^{\prime}%
}\label{eq:Choi-swap-1}\\
&  =\Gamma_{AB^{\prime}}\otimes\Gamma_{BA^{\prime}}, \label{eq:Choi-swap-2}%
\end{align}
for orthonormal bases $\{|i\rangle_{A}\}_{i}$, $\{|k\rangle_{B}\}_{k}$,
$\{|k\rangle_{A^{\prime}}\}_{k}$, and $\{|i\rangle_{B^{\prime}}\}_{i}$. Let us
define the unitary channel $\mathcal{U}(\cdot)=U(\cdot)U^{\dag}$. To exploit
symmetries of the unitary SWAP\ channel, recall from
\eqref{eq:swap-ch-symmetry-for-LOCC} that the channel $\mathcal{S}_{AB}^{d}%
$\ is covariant in the following way:%
\begin{equation}
\mathcal{S}_{AB}^{d}=(\mathcal{V}_{A}^{\dag}\otimes\mathcal{U}_{B}^{\dag
})\circ\mathcal{S}_{AB}^{d}\circ(\mathcal{U}_{A}\otimes\mathcal{V}_{B}).
\label{eq:swap-symmetry-simplify-1}%
\end{equation}
Let $\mathcal{A}_{\hat{A}\hat{B}}^{\rho}$ denote the channel that appends the
bipartite state $\rho_{\hat{A}\hat{B}}$ to its input:%
\begin{equation}
\mathcal{A}_{\hat{A}\hat{B}}^{\rho}(\omega_{AB})=\omega_{AB}\otimes\rho
_{\hat{A}\hat{B}}.
\end{equation}

Let us start with the diamond distance, but note that the reasoning employed
in the first part of the proof (just below) applies equally well to channel
infidelity. Let $\mathcal{L}_{AB\hat{A}\hat{B}\rightarrow AB}$ be an arbitrary
LOCC\ channel to consider for the optimization problem in
\eqref{eq:sim-err-swap-DD}. Exploiting the unitary invariance of the diamond
distance with respect to input and output unitaries \cite[Proposition~3.44]%
{Watrous2018}, we find the following:%
\begin{align}
&  \left\Vert \mathcal{L}_{AB\hat{A}\hat{B}\rightarrow AB}\circ\mathcal{A}%
_{\hat{A}\hat{B}}^{\rho}-\mathcal{S}_{AB}^{d}\right\Vert _{\diamond
}\nonumber\\
&  =\left\Vert (\mathcal{V}_{A}^{\dag}\otimes\mathcal{U}_{B}^{\dag}%
)\circ\lbrack\mathcal{L}_{AB\hat{A}\hat{B}\rightarrow AB}\circ\mathcal{A}%
_{\hat{A}\hat{B}}^{\rho}-\mathcal{S}_{AB}^{d}]\circ(\mathcal{U}_{A}%
\otimes\mathcal{V}_{B})\right\Vert _{\diamond}\label{eq:app-swap-analysis-1}\\
&  =\left\Vert [(\mathcal{V}_{A}^{\dag}\otimes\mathcal{U}_{B}^{\dag}%
)\circ\mathcal{L}_{AB\hat{A}\hat{B}\rightarrow AB}\circ(\mathcal{U}_{A}%
\otimes\mathcal{V}_{B})]\circ\mathcal{A}_{\hat{A}\hat{B}}^{\rho}%
-(\mathcal{V}_{A}^{\dag}\otimes\mathcal{U}_{B}^{\dag})\circ\mathcal{S}%
_{AB}^{d}\circ(\mathcal{U}_{A}\otimes\mathcal{V}_{B})\right\Vert _{\diamond}\\
&  =\left\Vert [(\mathcal{V}_{A}^{\dag}\otimes\mathcal{U}_{B}^{\dag}%
)\circ\mathcal{L}_{AB\hat{A}\hat{B}\rightarrow AB}\circ(\mathcal{U}_{A}%
\otimes\mathcal{V}_{B})]\circ\mathcal{A}_{\hat{A}\hat{B}}^{\rho}%
-\mathcal{S}_{AB}^{d}\right\Vert _{\diamond}.
\end{align}
Thus, the channels $\mathcal{L}_{AB\hat{A}\hat{B}\rightarrow AB}$ and
$(\mathcal{V}_{A}^{\dag}\otimes\mathcal{U}_{B}^{\dag})\circ\mathcal{L}%
_{AB\hat{A}\hat{B}\rightarrow AB}\circ(\mathcal{U}_{A}\otimes\mathcal{V}_{B})$
perform equally well for the optimization. Now we exploit the convexity of the
diamond distance with respect to one of the channels \cite{Watrous2018}, as
well as the Haar probability measure over the unitary group, to conclude that%
\begin{align}
&  \left\Vert \mathcal{L}_{AB\hat{A}\hat{B}\rightarrow AB}\circ\mathcal{A}%
_{\hat{A}\hat{B}}^{\rho}-\mathcal{S}_{AB}^{d}\right\Vert _{\diamond
}\nonumber\\
&  =\int\int dU\ dV\ \left\Vert [(\mathcal{V}_{A}^{\dag}\otimes\mathcal{U}%
_{B}^{\dag})\circ\mathcal{L}_{AB\hat{A}\hat{B}\rightarrow AB}\circ
(\mathcal{U}_{A}\otimes\mathcal{V}_{B})]\circ\mathcal{A}_{\hat{A}\hat{B}%
}^{\rho}-\mathcal{S}_{AB}^{d}\right\Vert _{\diamond}\\
&  \geq\ \left\Vert \widetilde{\mathcal{L}}_{AB\hat{A}\hat{B}\rightarrow
AB}\circ\mathcal{A}_{\hat{A}\hat{B}}^{\rho}-\mathcal{S}_{AB}^{d}\right\Vert
_{\diamond},
\end{align}
where%
\begin{equation}
\widetilde{\mathcal{L}}_{AB\hat{A}\hat{B}\rightarrow AB}\coloneqq \int\int
dU\ dV\ (\mathcal{V}_{A}^{\dag}\otimes\mathcal{U}_{B}^{\dag})\circ
\mathcal{L}_{AB\hat{A}\hat{B}\rightarrow AB}\circ(\mathcal{U}_{A}%
\otimes\mathcal{V}_{B}). \label{eq:app-swap-analysis-last}%
\end{equation}
Thus, we conclude that it suffices to optimize \eqref{eq:c-ppt-p-error} over
LOCC channels that possess this symmetry. Critical to this argument is the
observation that the channel twirl in \eqref{eq:app-swap-analysis-last} can be
realized by LOCC, so that $\widetilde{\mathcal{L}}_{AB\hat{A}\hat
{B}\rightarrow AB}$ is an LOCC channel if $\mathcal{L}_{AB\hat{A}\hat
{B}\rightarrow AB}$ is.

What is the form of LOCC\ channels possessing this symmetry? Let $L_{AB\hat
{A}\hat{B}A^{\prime}B^{\prime}}$ denote the Choi operator of the channel
$\mathcal{L}_{AB\hat{A}\hat{B}\rightarrow AB}$, where $A^{\prime}$ and
$B^{\prime}$ denote the output systems. Then consider that the Choi operator
$\widetilde{L}_{AB\hat{A}\hat{B}A^{\prime}B^{\prime}}$\ of the double twirled
channel $\widetilde{\mathcal{L}}_{AB\hat{A}\hat{B}\rightarrow AB}$ is as
follows:%
\begin{equation}
\widetilde{L}_{AB\hat{A}\hat{B}A^{\prime}B^{\prime}}\coloneqq \int\int
dU\ dV\ (\mathcal{U}_{A}\otimes\mathcal{V}_{B}\otimes\overline{\mathcal{V}%
}_{A^{\prime}}\otimes\overline{\mathcal{U}}_{B^{\prime}})(L_{AB\hat{A}\hat
{B}A^{\prime}B^{\prime}}), \label{eq:twirled-choi}%
\end{equation}
where $\overline{\mathcal{U}}(\cdot)=\overline{U}(\cdot)U^{T}$ (with the
overbar denoting complex conjugate). Now recall the following identity from
\cite{W89,Horodecki99,Watrous2018}:%
\begin{align}
\widetilde{\mathcal{T}}_{CD}(X_{CD})  &  \coloneqq \int dU\ (\mathcal{U}%
_{C}\otimes\overline{\mathcal{U}}_{D})(X_{CD})\nonumber\\
&  =\Phi_{CD}\operatorname{Tr}_{CD}[\Phi_{CD}X_{CD}]+\frac{I_{CD}-\Phi_{CD}%
}{d^{2}-1}\operatorname{Tr}_{CD}[(I_{CD}-\Phi_{CD})X_{CD}].
\end{align}
We apply this to \eqref{eq:twirled-choi} to write it as%
\begin{align}
\widetilde{L}_{AB\hat{A}\hat{B}A^{\prime}B^{\prime}}  &  =(\widetilde
{\mathcal{T}}_{AB^{\prime}}\otimes\widetilde{\mathcal{T}}_{BA^{\prime}%
})(L_{AB\hat{A}\hat{B}A^{\prime}B^{\prime}})\nonumber\\
&  =(\Phi_{AB^{\prime}}\otimes\Phi_{BA^{\prime}})\operatorname{Tr}%
_{ABA^{\prime}B^{\prime}}[(\Phi_{AB^{\prime}}\otimes\Phi_{BA^{\prime}%
})L_{AB\hat{A}\hat{B}A^{\prime}B^{\prime}}]\nonumber\\
&  \qquad+\left(  \Phi_{AB^{\prime}}\otimes\frac{I_{BA^{\prime}}%
-\Phi_{BA^{\prime}}}{d^{2}-1}\right)  \operatorname{Tr}_{ABA^{\prime}%
B^{\prime}}[\left(  \Phi_{AB^{\prime}}\otimes\lbrack I_{BA^{\prime}}%
-\Phi_{BA^{\prime}}]\right)  L_{AB\hat{A}\hat{B}A^{\prime}B^{\prime}%
}]\nonumber\\
&  \qquad+\left(  \frac{I_{AB^{\prime}}-\Phi_{AB^{\prime}}}{d^{2}-1}%
\otimes\Phi_{BA^{\prime}}\right)  \operatorname{Tr}_{ABA^{\prime}B^{\prime}%
}[\left(  [I_{AB^{\prime}}-\Phi_{AB^{\prime}}]\otimes\Phi_{BA^{\prime}%
}\right)  L_{AB\hat{A}\hat{B}A^{\prime}B^{\prime}}]\nonumber\\
&  \qquad+\left(  \frac{I_{AB^{\prime}}-\Phi_{AB^{\prime}}}{d^{2}-1}%
\otimes\frac{I_{BA^{\prime}}-\Phi_{BA^{\prime}}}{d^{2}-1}\right)
\operatorname{Tr}_{ABA^{\prime}B^{\prime}}[\left(  [I_{AB^{\prime}}%
-\Phi_{AB^{\prime}}]\otimes\lbrack I_{BA^{\prime}}-\Phi_{BA^{\prime}}]\right)
L_{AB\hat{A}\hat{B}A^{\prime}B^{\prime}}].\nonumber
\end{align}
Now defining%
\begin{align}
K_{\hat{A}\hat{B}}^{\prime}  &  \coloneqq \operatorname{Tr}_{ABA^{\prime
}B^{\prime}}[(\Phi_{AB^{\prime}}\otimes\Phi_{BA^{\prime}})L_{AB\hat{A}\hat
{B}A^{\prime}B^{\prime}}]\\
L_{\hat{A}\hat{B}}^{\prime}  &  \coloneqq \operatorname{Tr}_{ABA^{\prime
}B^{\prime}}[\left(  \Phi_{AB^{\prime}}\otimes\lbrack I_{BA^{\prime}}%
-\Phi_{BA^{\prime}}]\right)  L_{AB\hat{A}\hat{B}A^{\prime}B^{\prime}}]\\
M_{\hat{A}\hat{B}}^{\prime}  &  \coloneqq \operatorname{Tr}_{ABA^{\prime
}B^{\prime}}[\left(  [I_{AB^{\prime}}-\Phi_{AB^{\prime}}]\otimes
\Phi_{BA^{\prime}}\right)  L_{AB\hat{A}\hat{B}A^{\prime}B^{\prime}}]\\
N_{\hat{A}\hat{B}}^{\prime}  &  \coloneqq \operatorname{Tr}_{ABA^{\prime
}B^{\prime}}[\left(  [I_{AB^{\prime}}-\Phi_{AB^{\prime}}]\otimes\lbrack
I_{BA^{\prime}}-\Phi_{BA^{\prime}}]\right)  L_{AB\hat{A}\hat{B}A^{\prime
}B^{\prime}}],
\end{align}
we can write%
\begin{multline}
\widetilde{L}_{AB\hat{A}\hat{B}A^{\prime}B^{\prime}}=\Phi_{AB^{\prime}}%
\otimes\Phi_{BA^{\prime}}\otimes K_{\hat{A}\hat{B}}^{\prime}+\Phi_{AB^{\prime
}}\otimes\frac{I_{BA^{\prime}}-\Phi_{BA^{\prime}}}{d^{2}-1}\otimes L_{\hat
{A}\hat{B}}^{\prime}\label{eq:twirled-choi-nice}\\
+\frac{I_{AB^{\prime}}-\Phi_{AB^{\prime}}}{d^{2}-1}\otimes\Phi_{BA^{\prime}%
}\otimes M_{\hat{A}\hat{B}}^{\prime}+\frac{I_{AB^{\prime}}-\Phi_{AB^{\prime}}%
}{d^{2}-1}\otimes\frac{I_{BA^{\prime}}-\Phi_{BA^{\prime}}}{d^{2}-1}\otimes
N_{\hat{A}\hat{B}}^{\prime}.
\end{multline}
What are the conditions on the operators $K_{\hat{A}\hat{B}}^{\prime}$,
$L_{\hat{A}\hat{B}}^{\prime}$, $M_{\hat{A}\hat{B}}^{\prime}$, and $N_{\hat
{A}\hat{B}}^{\prime}$? In order for $\widetilde{L}_{AB\hat{A}\hat{B}A^{\prime
}B^{\prime}}$ to be the Choi operator of a channel, the following conditions
should hold, thus imposing conditions on the operators $K_{\hat{A}\hat{B}%
}^{\prime}$, $L_{\hat{A}\hat{B}}^{\prime}$, $M_{\hat{A}\hat{B}}^{\prime}$, and
$N_{\hat{A}\hat{B}}^{\prime}$:%
\begin{align}
\widetilde{L}_{AB\hat{A}\hat{B}A^{\prime}B^{\prime}}  &  \geq
0,\label{eq:psd-constr}\\
\operatorname{Tr}_{A^{\prime}B^{\prime}}[\widetilde{L}_{AB\hat{A}\hat
{B}A^{\prime}B^{\prime}}]  &  =I_{AB\hat{A}\hat{B}}. \label{eq:ch-constr}%
\end{align}
The first condition in \eqref{eq:psd-constr} imposes that%
\begin{equation}
K_{\hat{A}\hat{B}}^{\prime},L_{\hat{A}\hat{B}}^{\prime},M_{\hat{A}\hat{B}%
}^{\prime},N_{\hat{A}\hat{B}}^{\prime}\geq0. \label{eq:KLMN-PSD}%
\end{equation}
The second condition in \eqref{eq:ch-constr} imposes that%
\begin{equation}
\pi_{AB}\otimes K_{\hat{A}\hat{B}}^{\prime}+\pi_{AB}\otimes L_{\hat{A}\hat{B}%
}^{\prime}+\pi_{AB}\otimes M_{\hat{A}\hat{B}}^{\prime}+\pi_{AB}\otimes
N_{\hat{A}\hat{B}}^{\prime}=I_{AB\hat{A}\hat{B}},
\end{equation}
which is the same as%
\begin{equation}
K_{\hat{A}\hat{B}}^{\prime}+L_{\hat{A}\hat{B}}^{\prime}+M_{\hat{A}\hat{B}%
}^{\prime}+N_{\hat{A}\hat{B}}^{\prime}=d^{2}I_{\hat{A}\hat{B}}.
\label{eq:normalize-KLMN-prime}%
\end{equation}
Thus, we can think of the operators $K_{\hat{A}\hat{B}}^{\prime}$, $L_{\hat
{A}\hat{B}}^{\prime}$, $M_{\hat{A}\hat{B}}^{\prime}$, and $N_{\hat{A}\hat{B}%
}^{\prime}$ normalized by $d^{2}$ as measurement operators, which gives an
interesting physical interpretation to them. Let us then define%
\begin{equation}
K_{\hat{A}\hat{B}}\coloneqq \frac{1}{d^{2}}K_{\hat{A}\hat{B}}^{\prime},\qquad
L_{\hat{A}\hat{B}}\coloneqq \frac{1}{d^{2}}L_{\hat{A}\hat{B}}^{\prime},\qquad
M_{\hat{A}\hat{B}}\coloneqq \frac{1}{d^{2}}M_{\hat{A}\hat{B}}^{\prime},\qquad
N_{\hat{A}\hat{B}}\coloneqq \frac{1}{d^{2}}N_{\hat{A}\hat{B}}^{\prime},
\end{equation}
and note that \eqref{eq:normalize-KLMN-prime} is equivalent to%
\begin{equation}
K_{\hat{A}\hat{B}}+L_{\hat{A}\hat{B}}+M_{\hat{A}\hat{B}}+N_{\hat{A}\hat{B}%
}=I_{\hat{A}\hat{B}},
\end{equation}
\eqref{eq:KLMN-PSD} is equivalent to%
\begin{equation}
K_{\hat{A}\hat{B}},L_{\hat{A}\hat{B}},M_{\hat{A}\hat{B}},N_{\hat{A}\hat{B}%
}\geq0,
\end{equation}
and \eqref{eq:twirled-choi-nice}\ is equivalent to%
\begin{multline}
\widetilde{L}_{AB\hat{A}\hat{B}A^{\prime}B^{\prime}}=\Gamma_{AB^{\prime}%
}\otimes\Gamma_{BA^{\prime}}\otimes K_{\hat{A}\hat{B}}+\Gamma_{AB^{\prime}%
}\otimes\frac{dI_{BA^{\prime}}-\Gamma_{BA^{\prime}}}{d^{2}-1}\otimes
L_{\hat{A}\hat{B}}\label{eq:P-choi-simplified}\\
+\frac{dI_{AB^{\prime}}-\Gamma_{AB^{\prime}}}{d^{2}-1}\otimes\Gamma
_{BA^{\prime}}\otimes M_{\hat{A}\hat{B}}+\frac{dI_{AB^{\prime}}-\Gamma
_{AB^{\prime}}}{d^{2}-1}\otimes\frac{dI_{BA^{\prime}}-\Gamma_{BA^{\prime}}%
}{d^{2}-1}\otimes N_{\hat{A}\hat{B}}.
\end{multline}
Now consider that the Choi operator of the composite channel $\widetilde
{\mathcal{L}}_{AB\hat{A}\hat{B}\rightarrow AB}\circ\mathcal{A}_{\hat{A}\hat
{B}}^{\rho}$ is given by%
\begin{multline}
\operatorname{Tr}_{\hat{A}\hat{B}}[T_{\hat{A}\hat{B}}(\rho_{\hat{A}\hat{B}%
})\widetilde{L}_{AB\hat{A}\hat{B}A^{\prime}B^{\prime}}]=\\
\Gamma_{AB^{\prime}}\otimes\Gamma_{BA^{\prime}}\operatorname{Tr}[T_{\hat
{A}\hat{B}}(\rho_{\hat{A}\hat{B}})K_{\hat{A}\hat{B}}]+\Gamma_{AB^{\prime}%
}\otimes\frac{dI_{BA^{\prime}}-\Gamma_{BA^{\prime}}}{d^{2}-1}\operatorname{Tr}%
[T_{\hat{A}\hat{B}}(\rho_{\hat{A}\hat{B}})L_{\hat{A}\hat{B}}]\\
+\frac{dI_{AB^{\prime}}-\Gamma_{AB^{\prime}}}{d^{2}-1}\otimes\Gamma
_{BA^{\prime}}\operatorname{Tr}[T_{\hat{A}\hat{B}}(\rho_{\hat{A}\hat{B}%
})M_{\hat{A}\hat{B}}]+\frac{dI_{AB^{\prime}}-\Gamma_{AB^{\prime}}}{d^{2}%
-1}\otimes\frac{dI_{BA^{\prime}}-\Gamma_{BA^{\prime}}}{d^{2}-1}%
\operatorname{Tr}[T_{\hat{A}\hat{B}}(\rho_{\hat{A}\hat{B}})N_{\hat{A}\hat{B}%
}].
\end{multline}
By making the substitutions $K_{\hat{A}\hat{B}}\rightarrow T_{\hat{A}\hat{B}%
}(K_{\hat{A}\hat{B}})$, $L_{\hat{A}\hat{B}}\rightarrow T_{\hat{A}\hat{B}%
}(L_{\hat{A}\hat{B}})$, $M_{\hat{A}\hat{B}}\rightarrow T_{\hat{A}\hat{B}%
}(M_{\hat{A}\hat{B}})$, and $N_{\hat{A}\hat{B}}\rightarrow T_{\hat{A}\hat{B}%
}(N_{\hat{A}\hat{B}})$, and using the facts that%
\begin{equation}
K_{\hat{A}\hat{B}}\geq0\qquad\Longleftrightarrow\qquad T_{\hat{A}\hat{B}%
}(K_{\hat{A}\hat{B}})\geq0,
\end{equation}
and the same for $L_{\hat{A}\hat{B}}$, $M_{\hat{A}\hat{B}}$, and $N_{\hat
{A}\hat{B}}$, as well as%
\begin{multline}
K_{\hat{A}\hat{B}}+L_{\hat{A}\hat{B}}+M_{\hat{A}\hat{B}}+N_{\hat{A}\hat{B}%
}=I_{\hat{A}\hat{B}}\qquad\Longleftrightarrow\\
\qquad T_{\hat{A}\hat{B}}(K_{\hat{A}\hat{B}})+T_{\hat{A}\hat{B}}(L_{\hat
{A}\hat{B}})+T_{\hat{A}\hat{B}}(M_{\hat{A}\hat{B}})+T_{\hat{A}\hat{B}}%
(N_{\hat{A}\hat{B}})=T_{\hat{A}\hat{B}}(I_{\hat{A}\hat{B}})=I_{\hat{A}\hat{B}%
},
\end{multline}
and the fact that the channel $\widetilde{\mathcal{L}}_{AB\hat{A}\hat
{B}\rightarrow AB}$ remains LOCC under these changes, we conclude that the
optimization problem does not change if we make these substitions. Thus, we
can take the Choi operator of $\widetilde{\mathcal{L}}_{AB\hat{A}\hat
{B}\rightarrow AB}\circ\mathcal{A}_{\hat{A}\hat{B}}^{\rho}$ to be%
\begin{multline}
\Gamma_{AB^{\prime}}\otimes\Gamma_{BA^{\prime}}\operatorname{Tr}[\rho_{\hat
{A}\hat{B}}K_{\hat{A}\hat{B}}]+\Gamma_{AB^{\prime}}\otimes\frac{dI_{BA^{\prime
}}-\Gamma_{BA^{\prime}}}{d^{2}-1}\operatorname{Tr}[\rho_{\hat{A}\hat{B}%
}L_{\hat{A}\hat{B}}]\label{eq:app-choi-op-LOCC-sim-swap-23}\\
+\frac{dI_{AB^{\prime}}-\Gamma_{AB^{\prime}}}{d^{2}-1}\otimes\Gamma
_{BA^{\prime}}\operatorname{Tr}[\rho_{\hat{A}\hat{B}}M_{\hat{A}\hat{B}}%
]+\frac{dI_{AB^{\prime}}-\Gamma_{AB^{\prime}}}{d^{2}-1}\otimes\frac
{dI_{BA^{\prime}}-\Gamma_{BA^{\prime}}}{d^{2}-1}\operatorname{Tr}[\rho
_{\hat{A}\hat{B}}N_{\hat{A}\hat{B}}].
\end{multline}

Regarding the observation in \eqref{eq:sim-ch-LOCC-opt}, consider from
\eqref{eq:app-choi-op-LOCC-sim-swap-23} that the Choi operator of an optimal
LOCC\ simulating channel acting on the resource state $\rho_{\hat{A}\hat{B}}%
$\ is as follows:%
\begin{multline}
\Gamma_{AB^{\prime}}\otimes\Gamma_{BA^{\prime}}\operatorname{Tr}[\rho_{\hat
{A}\hat{B}}K_{\hat{A}\hat{B}}]+\Gamma_{AB^{\prime}}\otimes\frac{dI_{BA^{\prime
}}-\Gamma_{BA^{\prime}}}{d^{2}-1}\operatorname{Tr}[\rho_{\hat{A}\hat{B}%
}L_{\hat{A}\hat{B}}]\\
+\frac{dI_{AB^{\prime}}-\Gamma_{AB^{\prime}}}{d^{2}-1}\otimes\Gamma
_{BA^{\prime}}\operatorname{Tr}[\rho_{\hat{A}\hat{B}}M_{\hat{A}\hat{B}}%
]+\frac{dI_{AB^{\prime}}-\Gamma_{AB^{\prime}}}{d^{2}-1}\otimes\frac
{dI_{BA^{\prime}}-\Gamma_{BA^{\prime}}}{d^{2}-1}\operatorname{Tr}[\rho
_{\hat{A}\hat{B}}N_{\hat{A}\hat{B}}],
\end{multline}
where the operators $K_{\hat{A}\hat{B}}$, $L_{\hat{A}\hat{B}}$, $M_{\hat
{A}\hat{B}}$, and $N_{\hat{A}\hat{B}}$ obey the constraints%
\begin{align}
K_{\hat{A}\hat{B}},L_{\hat{A}\hat{B}},M_{\hat{A}\hat{B}},N_{\hat{A}\hat{B}}
&  \geq0,\\
K_{\hat{A}\hat{B}}+L_{\hat{A}\hat{B}}+M_{\hat{A}\hat{B}}+N_{\hat{A}\hat{B}}
&  =I_{\hat{A}\hat{B}},
\end{align}
such that%
\begin{multline}
\Gamma_{AB^{\prime}}\otimes\Gamma_{BA^{\prime}}\otimes T_{\hat{A}\hat{B}%
}(K_{\hat{A}\hat{B}})+\Gamma_{AB^{\prime}}\otimes\frac{dI_{BA^{\prime}}%
-\Gamma_{BA^{\prime}}}{d^{2}-1}\otimes T_{\hat{A}\hat{B}}(L_{\hat{A}\hat{B}%
})\label{eq:Choi-op-final-LOCC-app-1}\\
+\frac{dI_{AB^{\prime}}-\Gamma_{AB^{\prime}}}{d^{2}-1}\otimes\Gamma
_{BA^{\prime}}\otimes T_{\hat{A}\hat{B}}(M_{\hat{A}\hat{B}})+\frac
{dI_{AB^{\prime}}-\Gamma_{AB^{\prime}}}{d^{2}-1}\otimes\frac{dI_{BA^{\prime}%
}-\Gamma_{BA^{\prime}}}{d^{2}-1}\otimes T_{\hat{A}\hat{B}}(N_{\hat{A}\hat{B}})
\end{multline}
is the Choi operator of an LOCC channel. Thus, the set $\{K_{\hat{A}\hat{B}%
},L_{\hat{A}\hat{B}},M_{\hat{A}\hat{B}},N_{\hat{A}\hat{B}}\}$ constitutes a
POVM. Given that $\Gamma_{AB^{\prime}}$ is the Choi operator of an identity
channel from $A$ to $B^{\prime}$, $\Gamma_{BA^{\prime}}$ is the Choi operator
of an identity channel from $B$ to $A^{\prime}$, and $\frac{dI_{BA^{\prime}%
}-\Gamma_{BA^{\prime}}}{d^{2}-1}$ is the Choi operator of the generalized
Pauli channel in \eqref{eq:gen-Pauli-channel}, the interpretation in
Remark~\ref{rem:channel-form} follows. The last observation about
$\frac{dI_{BA^{\prime}}-\Gamma_{BA^{\prime}}}{d^{2}-1}$ follows because%
\begin{align}
\frac{dI_{BA^{\prime}}-\Gamma_{BA^{\prime}}}{d^{2}-1}  &  =\frac{d}{d^{2}%
-1}\left(  I_{BA^{\prime}}-\frac{1}{d}\Gamma_{BA^{\prime}}\right) \\
&  =\frac{d}{d^{2}-1}\left(  I_{BA^{\prime}}-\Phi_{BA^{\prime}}\right) \\
&  =\frac{d}{d^{2}-1}\sum_{\left(  x,z\right)  \neq\left(  0,0\right)
}W_{A^{\prime}}^{z,x}\Phi_{BA^{\prime}}(W_{A^{\prime}}^{z,x})^{\dag}\\
&  =\frac{1}{d^{2}-1}\sum_{\left(  x,z\right)  \neq\left(  0,0\right)
}W_{A^{\prime}}^{z,x}\Gamma_{BA^{\prime}}(W_{A^{\prime}}^{z,x})^{\dag}.
\end{align}

Finally, we exploit the symmetry mentioned in
\eqref{eq:swap-ch-add-symmetry-for-LOCC}. Namely, the swap channel commutes
with itself. Then we can follow the same reasoning given in
\eqref{eq:app-swap-analysis-1}--\eqref{eq:app-swap-analysis-last} to conclude
that an optimal LOCC\ channel should obey the following symmetry as well:%
\begin{equation}
\widetilde{\mathcal{L}}_{AB\hat{A}\hat{B}\rightarrow AB}=\frac{1}{2}\left(
\widetilde{\mathcal{L}}_{AB\hat{A}\hat{B}\rightarrow AB}+\mathcal{S}_{AB}%
^{d}\circ\widetilde{\mathcal{L}}_{AB\hat{A}\hat{B}\rightarrow AB}%
\circ\mathcal{S}_{AB}^{d}\right)  .
\end{equation}
Equivalently, its Choi operator $\widetilde{L}_{AB\hat{A}\hat{B}A^{\prime
}B^{\prime}}$ should satisfy%
\begin{equation}
\widetilde{L}_{AB\hat{A}\hat{B}A^{\prime}B^{\prime}}=\frac{1}{2}\left(
\widetilde{L}_{AB\hat{A}\hat{B}A^{\prime}B^{\prime}}+\left(  \mathcal{S}%
_{AB}^{d}\otimes\mathcal{S}_{A^{\prime}B^{\prime}}^{d}\right)  \left(
\widetilde{L}_{AB\hat{A}\hat{B}A^{\prime}B^{\prime}}\right)  \right)  .
\end{equation}
Applying this to \eqref{eq:Choi-op-final-LOCC-app-1}, we conclude that it has
the form%
\begin{multline}
\Gamma_{AB^{\prime}}\otimes\Gamma_{BA^{\prime}}\otimes T_{\hat{A}\hat{B}%
}(K_{\hat{A}\hat{B}})+\frac{1}{2}\left(  \Gamma_{AB^{\prime}}\otimes
\frac{dI_{BA^{\prime}}-\Gamma_{BA^{\prime}}}{d^{2}-1}+\frac{dI_{AB^{\prime}%
}-\Gamma_{AB^{\prime}}}{d^{2}-1}\otimes\Gamma_{BA^{\prime}}\right)  \otimes
T_{\hat{A}\hat{B}}(L_{\hat{A}\hat{B}}+M_{\hat{A}\hat{B}})\\
+\frac{dI_{AB^{\prime}}-\Gamma_{AB^{\prime}}}{d^{2}-1}\otimes\frac
{dI_{BA^{\prime}}-\Gamma_{BA^{\prime}}}{d^{2}-1}\otimes T_{\hat{A}\hat{B}%
}(N_{\hat{A}\hat{B}}).
\end{multline}
After defining $L_{\hat{A}\hat{B}}+M_{\hat{A}\hat{B}}$ as $L_{\hat{A}\hat{B}}%
$, we obtain the form stated in Proposition~\ref{prop:err-collapse-LOCC}.

Note that the swap itself cannot be implemented by LOCC. However, once we have
the reduction of an optimal LOCC\ to the form in
\eqref{eq:Choi-op-final-LOCC-app-1}, this corresponds to a channel of the
following form%
\begin{multline}
\mathcal{S}_{AB}^{d}(\operatorname{Tr}_{\hat{A}\hat{B}}[K_{\hat{A}\hat{B}%
}\omega_{AB\hat{A}\hat{B}}])+\left(  \operatorname{id}_{A\rightarrow B}%
\otimes\mathcal{D}_{B\rightarrow A}\right)  \operatorname{Tr}_{\hat{A}\hat{B}%
}[L_{\hat{A}\hat{B}}\omega_{AB\hat{A}\hat{B}}] +\left(  \mathcal{D}%
_{A\rightarrow B}\otimes\operatorname{id}_{B\rightarrow A}\right)
\operatorname{Tr}_{\hat{A}\hat{B}}[M_{\hat{A}\hat{B}}\omega_{AB\hat{A}\hat{B}%
}]\\
+\left(  \mathcal{D}_{A\rightarrow B}\otimes\mathcal{D}_{B\rightarrow
A}\right)  (\operatorname{Tr}_{\hat{A}\hat{B}}[N_{\hat{A}\hat{B}}%
\omega_{AB\hat{A}\hat{B}}]).
\end{multline}
Thus, the variant of the channel corresponding to $\mathcal{S}_{AB}^{d}%
\circ\widetilde{\mathcal{L}}_{AB\hat{A}\hat{B}\rightarrow AB}\circ
\mathcal{S}_{AB}^{d}$ is as follows:%
\begin{multline}
\mathcal{S}_{AB}^{d}(\operatorname{Tr}_{\hat{A}\hat{B}}[K_{\hat{A}\hat{B}%
}\omega_{AB\hat{A}\hat{B}}]) +\left(  \mathcal{D}_{A\rightarrow B}%
\otimes\operatorname{id}_{B\rightarrow A}\right)  \operatorname{Tr}_{\hat
{A}\hat{B}}[L_{\hat{A}\hat{B}}\omega_{AB\hat{A}\hat{B}}] +\left(
\operatorname{id}_{A\rightarrow B}\otimes\mathcal{D}_{B\rightarrow A}\right)
\operatorname{Tr}_{\hat{A}\hat{B}}[M_{\hat{A}\hat{B}}\omega_{AB\hat{A}\hat{B}%
}]\\
+\left(  \mathcal{D}_{A\rightarrow B}\otimes\mathcal{D}_{B\rightarrow
A}\right)  (\operatorname{Tr}_{\hat{A}\hat{B}}[N_{\hat{A}\hat{B}}%
\omega_{AB\hat{A}\hat{B}}]),
\end{multline}
which is still LOCC\ if the original channel is, because it is related to the
original merely by Alice and Bob flipping their local actions in the case of
the $L_{\hat{A}\hat{B}}$ and $M_{\hat{A}\hat{B}}$ measurement outcomes. When
we randomly apply either of these channels, the random mixture is LOCC and
corresponds to%
\begin{multline}
\mathcal{S}_{AB}^{d}(\operatorname{Tr}_{\hat{A}\hat{B}}[K_{\hat{A}\hat{B}%
}\omega_{AB\hat{A}\hat{B}}]) +\frac{1}{2}\left(  \mathcal{D}_{A\rightarrow
B}\otimes\operatorname{id}_{B\rightarrow A}+\operatorname{id}_{A\rightarrow
B}\otimes\mathcal{D}_{B\rightarrow A}\right)  \operatorname{Tr}_{\hat{A}%
\hat{B}}[\left(  L_{\hat{A}\hat{B}}+M_{\hat{A}\hat{B}}\right)  \omega
_{AB\hat{A}\hat{B}}]\\
+\left(  \mathcal{D}_{A\rightarrow B}\otimes\mathcal{D}_{B\rightarrow
A}\right)  (\operatorname{Tr}_{\hat{A}\hat{B}}[N_{\hat{A}\hat{B}}%
\omega_{AB\hat{A}\hat{B}}]),
\end{multline}
which allows us to lump together the measurement outcomes for $L_{\hat{A}%
\hat{B}}$ and $M_{\hat{A}\hat{B}}$ into a single measurement outcome.

\subsection{Evaluating normalized diamond distance}

We have now reduced the optimization problem in \eqref{eq:sim-err-swap-DD},
for the swap channel, to the following one:%
\begin{equation}
e_{\operatorname{LOCC}}(\mathcal{S}_{AB}^{d},\rho_{\hat{A}\hat{B}}%
)=\inf_{\widetilde{\mathcal{L}}_{AB\hat{A}\hat{B}\rightarrow AB}%
\in\operatorname{LOCC}}\frac{1}{2}\left\Vert \widetilde{\mathcal{L}}%
_{AB\hat{A}\hat{B}\rightarrow AB}\circ\mathcal{A}_{\hat{A}\hat{B}}^{\rho
}-\mathcal{S}_{AB}^{d}\right\Vert _{\diamond}
\label{eq:dd-obj-func-for-swap-sim}%
\end{equation}
subject to%
\begin{multline}
\widetilde{\mathcal{L}}_{AB\hat{A}\hat{B}\rightarrow AB}(\omega_{AB}%
\otimes\rho_{\hat{A}\hat{B}})=\mathcal{S}_{AB}^{d}(\omega_{AB}%
)\operatorname{Tr}[K_{\hat{A}\hat{B}}\rho_{\hat{A}\hat{B}}%
]\label{eq:app:locc-ch-symmetrized-simplified}\\
+\frac{1}{2}\left(  \operatorname{id}_{A\rightarrow B}\otimes\mathcal{D}%
_{B\rightarrow A}+\mathcal{D}_{A\rightarrow B}\otimes\operatorname{id}%
_{B\rightarrow A}\right)  (\omega_{AB})\operatorname{Tr}[L_{\hat{A}\hat{B}%
}\rho_{\hat{A}\hat{B}}]\\
+\left(  \mathcal{D}_{A\rightarrow B}\otimes\mathcal{D}_{B\rightarrow
A}\right)  (\omega_{AB})\operatorname{Tr}[N_{\hat{A}\hat{B}}\rho_{\hat{A}%
\hat{B}}],
\end{multline}%
\begin{align}
K_{\hat{A}\hat{B}},L_{\hat{A}\hat{B}},N_{\hat{A}\hat{B}}  &  \geq0,\\
K_{\hat{A}\hat{B}}+L_{\hat{A}\hat{B}}+N_{\hat{A}\hat{B}}  &  =I_{\hat{A}%
\hat{B}}.
\end{align}
Keep in mind that the operators $K_{\hat{A}\hat{B}}$, $L_{\hat{A}\hat{B}}$,
and $N_{\hat{A}\hat{B}}$ are further constrained so that $\widetilde
{\mathcal{L}}_{AB\hat{A}\hat{B}\rightarrow AB}\in\operatorname{LOCC}$, as
indicated in \eqref{eq:dd-obj-func-for-swap-sim}. We can exploit the form of
the optimization of the diamond distance from \eqref{eq:SDP-diamond}\ to
rewrite the optimization for the simulation error of a unitary swap channel as
follows:%
\begin{equation}
\inf_{\mu,Z_{ABA^{\prime}B^{\prime}},K_{\hat{A}\hat{B}},L_{\hat{A}\hat{B}%
},N_{\hat{A}\hat{B}}\geq0}\mu, \label{eq:final-steps-swap-simp-1}%
\end{equation}
subject to%
\begin{align}
\mu I_{AB}  &  \geq Z_{AB},\\
Z_{ABA^{\prime}B^{\prime}}  &  \geq\Gamma_{AB^{\prime}}\otimes\Gamma
_{BA^{\prime}}\left(  1-\operatorname{Tr}[\rho_{\hat{A}\hat{B}}K_{\hat{A}%
\hat{B}}]\right) \nonumber\\
&  \qquad-\frac{1}{2}\left(  \Gamma_{AB^{\prime}}\otimes\frac{dI_{BA^{\prime}%
}-\Gamma_{BA^{\prime}}}{d^{2}-1}+\frac{dI_{AB^{\prime}}-\Gamma_{AB^{\prime}}%
}{d^{2}-1}\otimes\Gamma_{BA^{\prime}}\right)  \operatorname{Tr}[\rho_{\hat
{A}\hat{B}}L_{\hat{A}\hat{B}}]\nonumber\\
&  \qquad-\frac{dI_{AB^{\prime}}-\Gamma_{AB^{\prime}}}{d^{2}-1}\otimes
\frac{dI_{BA^{\prime}}-\Gamma_{BA^{\prime}}}{d^{2}-1}\operatorname{Tr}%
[\rho_{\hat{A}\hat{B}}N_{\hat{A}\hat{B}}], \label{eq:Z-ineq-SDP}%
\end{align}%
\begin{equation}
K_{\hat{A}\hat{B}}+L_{\hat{A}\hat{B}}+N_{\hat{A}\hat{B}}=I_{\hat{A}\hat{B}},
\end{equation}
subject to the channel $\widetilde{\mathcal{L}}_{AB\hat{A}\hat{B}\rightarrow
AB}$ in \eqref{eq:app:locc-ch-symmetrized-simplified}\ being LOCC. Finally,
since we are trying to minimize with respect to $\mu$ and $Z_{ABA^{\prime
}B^{\prime}}$, we can choose $Z_{ABA^{\prime}B^{\prime}}$ to be the smallest
positive semi-definite operator such that the inequality in
\eqref{eq:Z-ineq-SDP} is satisfied. This is the positive part of the operator
on the right-hand side of the inequality. Since the operator on the right-hand
side has the following Jordan--Hahn decomposition%
\begin{equation}
\Gamma_{AB^{\prime}}\otimes\Gamma_{BA^{\prime}}\left(  1-\operatorname{Tr}%
[\rho_{\hat{A}\hat{B}}K_{\hat{A}\hat{B}}]\right)  -\left[
\begin{array}
[c]{c}%
\frac{1}{2}\left(  \Gamma_{AB^{\prime}}\otimes\frac{dI_{BA^{\prime}}%
-\Gamma_{BA^{\prime}}}{d^{2}-1}+\frac{dI_{AB^{\prime}}-\Gamma_{AB^{\prime}}%
}{d^{2}-1}\otimes\Gamma_{BA^{\prime}}\right)  \operatorname{Tr}[\rho_{\hat
{A}\hat{B}}L_{\hat{A}\hat{B}}]\\
+\frac{dI_{AB^{\prime}}-\Gamma_{AB^{\prime}}}{d^{2}-1}\otimes\frac
{dI_{BA^{\prime}}-\Gamma_{BA^{\prime}}}{d^{2}-1}\operatorname{Tr}[\rho
_{\hat{A}\hat{B}}N_{\hat{A}\hat{B}}]
\end{array}
\right]  ,
\end{equation}
it follows that its positive part is given by%
\begin{equation}
(1-\operatorname{Tr}[\rho_{\hat{A}\hat{B}}K_{\hat{A}\hat{B}}])\Gamma
_{AB^{\prime}}\otimes\Gamma_{BA^{\prime}}.
\end{equation}
Thus, an optimal solution is given by%
\begin{equation}
Z_{ABA^{\prime}B^{\prime}}=\Gamma_{AB^{\prime}}\otimes\Gamma_{BA^{\prime}%
}\left(  1-\operatorname{Tr}[\rho_{\hat{A}\hat{B}}K_{\hat{A}\hat{B}}]\right)
,
\end{equation}
for which the smallest $\mu$ possible is%
\begin{equation}
\mu=1-\operatorname{Tr}[\rho_{\hat{A}\hat{B}}K_{\hat{A}\hat{B}}],
\end{equation}
because%
\begin{equation}
Z_{AB}=\operatorname{Tr}_{A^{\prime}B^{\prime}}[Z_{ABA^{\prime}B^{\prime}%
}]=I_{AB}\left(  1-\operatorname{Tr}[\rho_{\hat{A}\hat{B}}K_{\hat{A}\hat{B}%
}]\right)  . \label{eq:final-steps-swap-simp-last}%
\end{equation}
We then conclude that%
\begin{equation}
e_{\operatorname{LOCC}}(\mathcal{S}_{AB}^{d},\rho_{\hat{A}\hat{B}}%
)=1-\sup_{K_{\hat{A}\hat{B}},L_{\hat{A}\hat{B}},N_{\hat{A}\hat{B}}\geq
0}\operatorname{Tr}[\rho_{\hat{A}\hat{B}}K_{\hat{A}\hat{B}}],
\end{equation}
subject to%
\begin{equation}
K_{\hat{A}\hat{B}}+L_{\hat{A}\hat{B}}+N_{\hat{A}\hat{B}}=I_{\hat{A}\hat{B}}%
\end{equation}
and the following channel is LOCC:%
\begin{multline}
\widetilde{\mathcal{L}}_{AB\hat{A}\hat{B}\rightarrow AB}(\omega_{AB\hat{A}%
\hat{B}})=\mathcal{S}_{AB}^{d}(\operatorname{Tr}_{\hat{A}\hat{B}}[K_{\hat
{A}\hat{B}}\omega_{AB\hat{A}\hat{B}}])+\frac{1}{2}\left(  \operatorname{id}%
_{A\rightarrow B}\otimes\mathcal{D}_{B\rightarrow A}+\mathcal{D}_{A\rightarrow
B}\otimes\operatorname{id}_{B\rightarrow A}\right)  (\operatorname{Tr}%
_{\hat{A}\hat{B}}[L_{\hat{A}\hat{B}}\tau_{\hat{A}\hat{B}}])\\
+\left(  \mathcal{D}_{A\rightarrow B}\otimes\mathcal{D}_{B\rightarrow
A}\right)  (\operatorname{Tr}_{\hat{A}\hat{B}}[N_{\hat{A}\hat{B}}%
\omega_{AB\hat{A}\hat{B}}]).
\end{multline}

\subsection{Evaluating channel infidelity}

\label{app:ch-infid-simplify-swap-LOCC}Let us start from
\eqref{eq:swap-symmetry-simplify-1}\ and recall the symmetries of the unitary
swap channel. This implies the following symmetry for the Choi operator in
\eqref{eq:Choi-swap-1}--\eqref{eq:Choi-swap-2}\ for the swap channel:%
\begin{align}
\Gamma_{ABA^{\prime}B^{\prime}}^{\mathcal{S}^{d}}  &  =\Gamma_{AB^{\prime}%
}\otimes\Gamma_{BA^{\prime}}\\
&  =(\overline{\mathcal{U}}_{A}\otimes\overline{\mathcal{V}}_{B}%
\otimes\mathcal{V}_{A^{\prime}}\otimes\mathcal{U}_{B^{\prime}})(\Gamma
_{ABA^{\prime}B^{\prime}}^{\mathcal{S}^{d}})
\end{align}
for all unitary channels $\mathcal{U}$ and $\mathcal{V}$. This implies that%
\begin{equation}
\Gamma_{ABA^{\prime}B^{\prime}}^{\mathcal{S}^{d}}=\int\int dU\ dV\ (\overline
{\mathcal{U}}_{A}\otimes\overline{\mathcal{V}}_{B}\otimes\mathcal{V}%
_{A^{\prime}}\otimes\mathcal{U}_{B^{\prime}})(\Gamma_{ABA^{\prime}B^{\prime}%
}^{\mathcal{S}^{d}}).
\end{equation}

By exploiting the semi-definite program in
\eqref{eq:SDP-ch-fid-1}--\eqref{eq:SDP-ch-fid-3}\ for channel infidelity, we
find that%
\begin{equation}
e_{\operatorname{LOCC}}^{F}(\mathcal{S}_{AB\rightarrow A^{\prime}B^{\prime}%
}^{d},\rho_{\hat{A}\hat{B}})=1-\left[  \sup_{\lambda\geq0,L_{AB\hat{A}\hat
{B}A^{\prime}B^{\prime}}\geq0,Q_{ABA^{\prime}B^{\prime}}}\lambda\right]  ^{2},
\label{eq:app-obj-func-infid}%
\end{equation}
subject to%
\begin{align}
\lambda I_{AB}  &  \leq\operatorname{Re}[\operatorname{Tr}_{A^{\prime
}B^{\prime}}[Q_{ABA^{\prime}B^{\prime}}]]\label{eq:lam-Q-constr-infid}\\
\operatorname{Tr}_{A^{\prime}B^{\prime}}[L_{AB\hat{A}\hat{B}A^{\prime
}B^{\prime}}]  &  =I_{AB\hat{A}\hat{B}}, \label{eq:TP-constraint-infid}%
\end{align}%
\begin{equation}%
\begin{bmatrix}
\Gamma_{ABA^{\prime}B^{\prime}}^{\mathcal{S}^{d}} & Q_{ABA^{\prime}B^{\prime}%
}^{\dag}\\
Q_{ABA^{\prime}B^{\prime}} & \operatorname{Tr}_{\hat{A}\hat{B}}[T_{\hat{A}%
\hat{B}}(\rho_{\hat{A}\hat{B}})L_{AB\hat{A}\hat{B}A^{\prime}B^{\prime}}]
\end{bmatrix}
\geq0,
\end{equation}
and $L_{AB\hat{A}\hat{B}A^{\prime}B^{\prime}}$ is the Choi operator for an
LOCC channel. Note that we can write the last constraint as%
\begin{equation}
|0\rangle\!\langle0|\otimes\Gamma_{ABA^{\prime}B^{\prime}}^{\mathcal{S}^{d}%
}+|0\rangle\!\langle1|\otimes Q_{ABA^{\prime}B^{\prime}}^{\dag}+|1\rangle
\langle0|\otimes Q_{ABA^{\prime}B^{\prime}}+|1\rangle\!\langle1|\otimes
\operatorname{Tr}_{\hat{A}\hat{B}}[T_{\hat{A}\hat{B}}(\rho_{\hat{A}\hat{B}%
})P_{AB\hat{A}\hat{B}A^{\prime}B^{\prime}}]\geq0.
\label{eq:infid-SDP-constraint-final}%
\end{equation}
Suppose that $\lambda$, $L_{AB\hat{A}\hat{B}A^{\prime}B^{\prime}}$, and
$Q_{ABA^{\prime}B^{\prime}}$ is an optimal solution. Let%
\begin{equation}
\mathcal{W}_{ABA^{\prime}B^{\prime}}\coloneqq\overline{\mathcal{U}}_{A}%
\otimes\overline{\mathcal{V}}_{B}\otimes\mathcal{V}_{A^{\prime}}%
\otimes\mathcal{U}_{B^{\prime}}.
\end{equation}
Then it follows that $\lambda$, $\mathcal{W}_{ABA^{\prime}B^{\prime}%
}(L_{AB\hat{A}\hat{B}A^{\prime}B^{\prime}})$, and $\mathcal{W}_{ABA^{\prime
}B^{\prime}}(Q_{ABA^{\prime}B^{\prime}})$ is an optimal solution. This is
because all of the constraints are satisfied for these choices while still
achieving the same optimal value. Indeed, consider that%
\begin{align}
\lambda I_{AB}  &  \leq\operatorname{Re}[\operatorname{Tr}_{A^{\prime
}B^{\prime}}[Q_{ABA^{\prime}B^{\prime}}]]\\
\Longleftrightarrow\qquad\lambda(\overline{\mathcal{U}}_{A}\otimes
\overline{\mathcal{V}}_{B})(I_{AB})  &  \leq(\overline{\mathcal{U}}_{A}%
\otimes\overline{\mathcal{V}}_{B})(\operatorname{Re}[\operatorname{Tr}%
_{A^{\prime}B^{\prime}}[Q_{ABA^{\prime}B^{\prime}}]])\\
\Longleftrightarrow\qquad\lambda I_{AB}  &  \leq\operatorname{Re}%
[\operatorname{Tr}_{A^{\prime}B^{\prime}}[(\overline{\mathcal{U}}_{A}%
\otimes\overline{\mathcal{V}}_{B})(Q_{ABA^{\prime}B^{\prime}})]]\\
\Longleftrightarrow\qquad\lambda I_{AB}  &  \leq\operatorname{Re}%
[\operatorname{Tr}_{A^{\prime}B^{\prime}}[\mathcal{W}_{ABA^{\prime}B^{\prime}%
}(Q_{ABA^{\prime}B^{\prime}})]],
\end{align}%
\begin{equation}
\operatorname{Tr}_{A^{\prime}B^{\prime}}[L_{AB\hat{A}\hat{B}A^{\prime
}B^{\prime}}]=I_{AB\hat{A}\hat{B}}\qquad\Longleftrightarrow\qquad
\operatorname{Tr}_{A^{\prime}B^{\prime}}[\mathcal{W}_{ABA^{\prime}B^{\prime}%
}(L_{AB\hat{A}\hat{B}A^{\prime}B^{\prime}})]=I_{AB\hat{A}\hat{B}},
\end{equation}
and that%
\begin{align}
|0\rangle\!\langle0|\otimes\Gamma_{ABA^{\prime}B^{\prime}}^{\mathcal{S}^{d}%
}+|0\rangle\!\langle1|\otimes Q_{ABA^{\prime}B^{\prime}}^{\dag}+|1\rangle
\langle0|\otimes Q_{ABA^{\prime}B^{\prime}}+|1\rangle\!\langle1|\otimes
\operatorname{Tr}_{\hat{A}\hat{B}}[T_{\hat{A}\hat{B}}(\rho_{\hat{A}\hat{B}%
})L_{AB\hat{A}\hat{B}A^{\prime}B^{\prime}}]  &  \geq0\\
\Longleftrightarrow\qquad(\operatorname{id}\otimes\mathcal{W}_{ABA^{\prime
}B^{\prime}})\left(  |0\rangle\!\langle0|\otimes\Gamma^{\mathcal{N}}%
+|0\rangle\!\langle1|\otimes Q^{\dag}+|1\rangle\!\langle0|\otimes
Q+|1\rangle\langle1|\otimes\operatorname{Tr}_{\hat{A}\hat{B}}[T_{\hat{A}%
\hat{B}}(\rho_{\hat{A}\hat{B}})L]\right)   &  \geq0\\
\Longleftrightarrow\qquad|0\rangle\!\langle0|\otimes\mathcal{W}(\Gamma
^{\mathcal{S}^{d}})+|0\rangle\!\langle1|\otimes\mathcal{W}(Q^{\dag}%
)+|1\rangle\!\langle0|\otimes\mathcal{W}(Q)+|1\rangle\!\langle1|\otimes
\mathcal{W}(\operatorname{Tr}_{\hat{A}\hat{B}}[T_{\hat{A}\hat{B}}(\rho
_{\hat{A}\hat{B}})L])  &  \geq0\\
\Longleftrightarrow\qquad|0\rangle\!\langle0|\otimes\Gamma^{\mathcal{S}^{d}%
}+|0\rangle\!\langle1|\otimes\lbrack\mathcal{W}(Q)]^{\dag}+|1\rangle
\langle0|\otimes\mathcal{W}(Q)+|1\rangle\!\langle1|\otimes\operatorname{Tr}%
_{\hat{A}\hat{B}}[T_{\hat{A}\hat{B}}(\rho_{\hat{A}\hat{B}})\mathcal{W}(L)]  &
\geq0.
\end{align}
Also, $\mathcal{W}_{ABA^{\prime}B^{\prime}}(L_{AB\hat{A}\hat{B}A^{\prime
}B^{\prime}})$ is the Choi operator for an LOCC channel if $L_{AB\hat{A}%
\hat{B}A^{\prime}B^{\prime}}$ is. Furthermore, due to the fact that the
objective function is linear and the constraints are linear operator
inequalities, it follows that convex combinations of solutions are solutions
as well. So this implies that if $\lambda$, $L_{AB\hat{A}\hat{B}A^{\prime
}B^{\prime}}$, and $Q_{ABA^{\prime}B^{\prime}}$ is an optimal solution, then
so is $\lambda$,%
\begin{align}
\widetilde{L}_{AB\hat{A}\hat{B}A^{\prime}B^{\prime}}  &  \coloneqq\int\int
dU\ dV\ (\overline{\mathcal{U}}_{A}\otimes\overline{\mathcal{V}}_{B}%
\otimes\mathcal{V}_{A^{\prime}}\otimes\mathcal{U}_{B^{\prime}})(L_{AB\hat
{A}\hat{B}A^{\prime}B^{\prime}}),\\
\widetilde{Q}_{ABA^{\prime}B^{\prime}}  &  \coloneqq\int\int
dU\ dV\ (\overline{\mathcal{U}}_{A}\otimes\overline{\mathcal{V}}_{B}%
\otimes\mathcal{V}_{A^{\prime}}\otimes\mathcal{U}_{B^{\prime}})Q_{ABA^{\prime
}B^{\prime}}.
\end{align}
Furthermore, $\widetilde{L}_{AB\hat{A}\hat{B}A^{\prime}B^{\prime}}$ is the
Choi operator for an LOCC channel if $L_{AB\hat{A}\hat{B}A^{\prime}B^{\prime}%
}$ is. As argued in Appendix~\ref{app:SDP-simplify-swap}, $\widetilde
{L}_{AB\hat{A}\hat{B}A^{\prime}B^{\prime}}$ has a simpler form as follows:%
\begin{multline}
\widetilde{L}_{AB\hat{A}\hat{B}A^{\prime}B^{\prime}}=\Gamma_{AB^{\prime}%
}\otimes\Gamma_{BA^{\prime}}\otimes K_{\hat{A}\hat{B}}+\Gamma_{AB^{\prime}%
}\otimes\frac{dI_{BA^{\prime}}-\Gamma_{BA^{\prime}}}{d^{2}-1}\otimes
L_{\hat{A}\hat{B}}\\
+\frac{dI_{AB^{\prime}}-\Gamma_{AB^{\prime}}}{d^{2}-1}\otimes\Gamma
_{BA^{\prime}}\otimes M_{\hat{A}\hat{B}}+\frac{dI_{AB^{\prime}}-\Gamma
_{AB^{\prime}}}{d^{2}-1}\otimes\frac{dI_{BA^{\prime}}-\Gamma_{BA^{\prime}}%
}{d^{2}-1}\otimes N_{\hat{A}\hat{B}},
\end{multline}
and the constraints in
\eqref{eq:app-obj-func-infid}--\eqref{eq:TP-constraint-infid} on
$\widetilde{L}_{AB\hat{A}\hat{B}A^{\prime}B^{\prime}}$ simplify to%
\begin{align}
K_{\hat{A}\hat{B}},L_{\hat{A}\hat{B}},M_{\hat{A}\hat{B}},N_{\hat{A}\hat{B}}
&  \geq0,\\
K_{\hat{A}\hat{B}}+L_{\hat{A}\hat{B}}+M_{\hat{A}\hat{B}}+N_{\hat{A}\hat{B}}
&  =I_{\hat{A}\hat{B}}.
\end{align}
We then find that $\widetilde{Q}_{ABA^{\prime}B^{\prime}}$ simplifies to%
\begin{multline}
\widetilde{Q}_{ABA^{\prime}B^{\prime}}=q_{1}\Gamma_{AB^{\prime}}\otimes
\Gamma_{BA^{\prime}}+q_{2}\Gamma_{AB^{\prime}}\otimes\frac{dI_{BA^{\prime}%
}-\Gamma_{BA^{\prime}}}{d^{2}-1}\\
+q_{3}\frac{dI_{AB^{\prime}}-\Gamma_{AB^{\prime}}}{d^{2}-1}\otimes
\Gamma_{BA^{\prime}}+q_{4}\frac{dI_{AB^{\prime}}-\Gamma_{AB^{\prime}}}%
{d^{2}-1}\otimes\frac{dI_{BA^{\prime}}-\Gamma_{BA^{\prime}}}{d^{2}-1},
\end{multline}
where $q_{1},q_{2},q_{3},q_{4}\in\mathbb{C}$. The constraint in
\eqref{eq:lam-Q-constr-infid} reduces to the following:%
\begin{equation}
\lambda\leq\operatorname{Re}[q_{1}+q_{2}+q_{3}+q_{4}],
\end{equation}
because%
\begin{equation}
\operatorname{Tr}_{A^{\prime}B^{\prime}}[\widetilde{Q}_{ABA^{\prime}B^{\prime
}}]=(q_{1}+q_{2}+q_{3}+q_{4})I_{AB}.
\end{equation}
Furthermore, the constraint in \eqref{eq:infid-SDP-constraint-final} then
reduces to the following:%
\begin{multline}
|0\rangle\!\langle0|\otimes\Gamma_{AB^{\prime}}\otimes\Gamma_{BA^{\prime}%
}+|0\rangle\!\langle1|\otimes\left[
\begin{array}
[c]{c}%
q_{1}\Gamma_{AB^{\prime}}\otimes\Gamma_{BA^{\prime}}+q_{2}\Gamma_{AB^{\prime}%
}\otimes\frac{dI_{BA^{\prime}}-\Gamma_{BA^{\prime}}}{d^{2}-1}\\
+q_{3}\frac{dI_{AB^{\prime}}-\Gamma_{AB^{\prime}}}{d^{2}-1}\otimes
\Gamma_{BA^{\prime}}+q_{4}\frac{dI_{AB^{\prime}}-\Gamma_{AB^{\prime}}}%
{d^{2}-1}\otimes\frac{dI_{BA^{\prime}}-\Gamma_{BA^{\prime}}}{d^{2}-1}%
\end{array}
\right]  ^{\dag}\\
+|1\rangle\!\langle0|\otimes\left[
\begin{array}
[c]{c}%
q_{1}\Gamma_{AB^{\prime}}\otimes\Gamma_{BA^{\prime}}+q_{2}\Gamma_{AB^{\prime}%
}\otimes\frac{dI_{BA^{\prime}}-\Gamma_{BA^{\prime}}}{d^{2}-1}\\
+q_{3}\frac{dI_{AB^{\prime}}-\Gamma_{AB^{\prime}}}{d^{2}-1}\otimes
\Gamma_{BA^{\prime}}+q_{4}\frac{dI_{AB^{\prime}}-\Gamma_{AB^{\prime}}}%
{d^{2}-1}\otimes\frac{dI_{BA^{\prime}}-\Gamma_{BA^{\prime}}}{d^{2}-1}%
\end{array}
\right] \\
+|1\rangle\!\langle1|\otimes\left[
\begin{array}
[c]{c}%
\Gamma_{AB^{\prime}}\otimes\Gamma_{BA^{\prime}}\otimes\operatorname{Tr}%
[T_{\hat{A}\hat{B}}(\rho_{\hat{A}\hat{B}})K_{\hat{A}\hat{B}}]+\Gamma
_{AB^{\prime}}\otimes\frac{dI_{BA^{\prime}}-\Gamma_{BA^{\prime}}}{d^{2}%
-1}\otimes\operatorname{Tr}[T_{\hat{A}\hat{B}}(\rho_{\hat{A}\hat{B}}%
)L_{\hat{A}\hat{B}}]\\
\frac{dI_{AB^{\prime}}-\Gamma_{AB^{\prime}}}{d^{2}-1}\otimes\Gamma
_{BA^{\prime}}\otimes\operatorname{Tr}[T_{\hat{A}\hat{B}}(\rho_{\hat{A}\hat
{B}})M_{\hat{A}\hat{B}}]+\frac{dI_{AB^{\prime}}-\Gamma_{AB^{\prime}}}{d^{2}%
-1}\otimes\frac{dI_{BA^{\prime}}-\Gamma_{BA^{\prime}}}{d^{2}-1}\otimes
\operatorname{Tr}[T_{\hat{A}\hat{B}}(\rho_{\hat{A}\hat{B}})N_{\hat{A}\hat{B}}]
\end{array}
\right]  \geq0.
\end{multline}
Now exploiting the orthogonality of the operators $\Gamma_{AB^{\prime}}%
\otimes\Gamma_{BA^{\prime}}$, $\Gamma_{AB^{\prime}}\otimes\frac{dI_{BA^{\prime
}}-\Gamma_{BA^{\prime}}}{d^{2}-1}$, $\frac{dI_{AB^{\prime}}-\Gamma
_{AB^{\prime}}}{d^{2}-1}\otimes\Gamma_{BA^{\prime}}$, and $\frac
{dI_{AB^{\prime}}-\Gamma_{AB^{\prime}}}{d^{2}-1}\otimes\frac{dI_{BA^{\prime}%
}-\Gamma_{BA^{\prime}}}{d^{2}-1}$, we conclude that the single constraint
above is equivalent to the following four constraints:%
\begin{align}
|0\rangle\!\langle0|+q_{1}^{\ast}|0\rangle\!\langle1|+q_{1}|1\rangle
\langle0|+\operatorname{Tr}[T_{\hat{A}\hat{B}}(\rho_{\hat{A}\hat{B}}%
)K_{\hat{A}\hat{B}}]|1\rangle\!\langle1|  &  \geq0,\\
q_{2}^{\ast}|0\rangle\!\langle1|+q_{2}|1\rangle\!\langle0|+\operatorname{Tr}%
[T_{\hat{A}\hat{B}}(\rho_{\hat{A}\hat{B}})L_{\hat{A}\hat{B}}]|1\rangle
\langle1|  &  \geq0,\\
q_{3}^{\ast}|0\rangle\!\langle1|+q_{3}|1\rangle\!\langle0|+\operatorname{Tr}%
[T_{\hat{A}\hat{B}}(\rho_{\hat{A}\hat{B}})M_{\hat{A}\hat{B}}]|1\rangle
\langle1|  &  \geq0,\\
q_{4}^{\ast}|0\rangle\!\langle1|+q_{4}|1\rangle\!\langle0|+\operatorname{Tr}%
[T_{\hat{A}\hat{B}}(\rho_{\hat{A}\hat{B}})N_{\hat{A}\hat{B}}]|1\rangle
\langle1|  &  \geq0.
\end{align}
These constraints can in turn be expressed as the following four matrix
inequalities:%
\begin{align}%
\begin{bmatrix}
1 & q_{1}^{\ast}\\
q_{1} & \operatorname{Tr}[T_{\hat{A}\hat{B}}(\rho_{\hat{A}\hat{B}})K_{\hat
{A}\hat{B}}]
\end{bmatrix}
&  \geq0,\\%
\begin{bmatrix}
0 & q_{2}^{\ast}\\
q_{2} & \operatorname{Tr}[T_{\hat{A}\hat{B}}(\rho_{\hat{A}\hat{B}})L_{\hat
{A}\hat{B}}]
\end{bmatrix}
&  \geq0,\\%
\begin{bmatrix}
0 & q_{3}^{\ast}\\
q_{3} & \operatorname{Tr}[T_{\hat{A}\hat{B}}(\rho_{\hat{A}\hat{B}})M_{\hat
{A}\hat{B}}]
\end{bmatrix}
&  \geq0,\\%
\begin{bmatrix}
0 & q_{4}^{\ast}\\
q_{4} & \operatorname{Tr}[T_{\hat{A}\hat{B}}(\rho_{\hat{A}\hat{B}})N_{\hat
{A}\hat{B}}]
\end{bmatrix}
&  \geq0.
\end{align}
Since $\operatorname{Tr}[T_{\hat{A}\hat{B}}(\rho_{\hat{A}\hat{B}})K_{\hat
{A}\hat{B}}],\operatorname{Tr}[T_{\hat{A}\hat{B}}(\rho_{\hat{A}\hat{B}%
})L_{\hat{A}\hat{B}}],\operatorname{Tr}[T_{\hat{A}\hat{B}}(\rho_{\hat{A}%
\hat{B}})M_{\hat{A}\hat{B}}],\operatorname{Tr}[T_{\hat{A}\hat{B}}(\rho
_{\hat{A}\hat{B}})N_{\hat{A}\hat{B}}]\geq0$, the inequalities above hold if
and only if%
\begin{equation}
\operatorname{Tr}[T_{\hat{A}\hat{B}}(\rho_{\hat{A}\hat{B}})K_{\hat{A}\hat{B}%
}]\geq\left\vert q_{1}\right\vert ^{2},\quad q_{2}=q_{3}=q_{4}=0.
\end{equation}
Note that we can perform the following substitions as we did
previously:$\ K_{\hat{A}\hat{B}}\rightarrow T_{\hat{A}\hat{B}}(K_{\hat{A}%
\hat{B}})$, $L_{\hat{A}\hat{B}}\rightarrow T_{\hat{A}\hat{B}}(L_{\hat{A}%
\hat{B}})$, $M_{\hat{A}\hat{B}}\rightarrow T_{\hat{A}\hat{B}}(M_{\hat{A}%
\hat{B}})$, and $N_{\hat{A}\hat{B}}\rightarrow T_{\hat{A}\hat{B}}(N_{\hat
{A}\hat{B}})$. The objective function does not change under these
substitutions. Thus, the optimization problem in \eqref{eq:app-obj-func-infid}
reduces to the following:%
\begin{equation}
1-\left[  \sup_{\lambda\geq0,K_{\hat{A}\hat{B}},L_{\hat{A}\hat{B}},M_{\hat
{A}\hat{B}},N_{\hat{A}\hat{B}}\geq0,q_{1}\in\mathbb{C}}\lambda\right]  ^{2},
\end{equation}
subject to%
\begin{align}
\operatorname{Tr}[\rho_{\hat{A}\hat{B}}K_{\hat{A}\hat{B}}]  &  \geq\left\vert
q_{1}\right\vert ^{2},\\
\lambda &  \leq\operatorname{Re}[q_{1}],
\end{align}%
\begin{equation}
K_{\hat{A}\hat{B}}+L_{\hat{A}\hat{B}}+M_{\hat{A}\hat{B}}+N_{\hat{A}\hat{B}%
}=I_{\hat{A}\hat{B}},
\end{equation}
and the following channel is LOCC:%
\begin{multline}
\widetilde{\mathcal{L}}_{AB\hat{A}\hat{B}\rightarrow AB}(\omega_{AB\hat{A}%
\hat{B}})=\mathcal{S}_{AB}^{d}(\operatorname{Tr}_{\hat{A}\hat{B}}[K_{\hat
{A}\hat{B}}\omega_{AB\hat{A}\hat{B}}])+\left(  \operatorname{id}_{A\rightarrow
B}\otimes\mathcal{D}_{B\rightarrow A}\right)  (\operatorname{Tr}_{\hat{A}%
\hat{B}}[L_{\hat{A}\hat{B}}\tau_{\hat{A}\hat{B}}])\\
+\left(  \mathcal{D}_{A\rightarrow B}\otimes\operatorname{id}_{B\rightarrow
A}\right)  (\operatorname{Tr}_{\hat{A}\hat{B}}[M_{\hat{A}\hat{B}}%
\omega_{AB\hat{A}\hat{B}}])+\left(  \mathcal{D}_{A\rightarrow B}%
\otimes\mathcal{D}_{B\rightarrow A}\right)  (\operatorname{Tr}_{\hat{A}\hat
{B}}[N_{\hat{A}\hat{B}}\omega_{AB\hat{A}\hat{B}}]).
\end{multline}
Since we are trying to maximize the value of $\lambda$ subject to these
constraints, it is clear that we should pick $\lambda=q_{1}=\sqrt
{\operatorname{Tr}[\rho_{\hat{A}\hat{B}}K_{\hat{A}\hat{B}}]}$. We can
furthermore apply the other symmetry in
\eqref{eq:swap-ch-add-symmetry-for-LOCC} to get a similar reduced form for the
optimization problem. This concludes the proof.

\section{ Proof of Eq.~\eqref{eq:SDP-dual}}

\label{sec:SDP-dual-gen-bi}

Let $A$ and $B$ be Hermitian operators, and let $\Phi$ be a
Hermiticity-preserving map. Recall from \cite{Watrous2018}\ that if a primal
semi-definite program (SDP)\ is given by%
\begin{equation}
\inf_{Y\geq0}\left\{  \operatorname{Tr}[BY]:\Phi(Y)\geq A\right\}  ,
\label{eq:SDP-standard-form-primal}%
\end{equation}
then its dual is given by%
\begin{equation}
\sup_{X\geq0}\left\{  \operatorname{Tr}[AX]:\Phi^{\dag}(X)\leq B\right\}  ,
\label{eq:dual-sdp-generic}%
\end{equation}
where $\Phi^{\dag}$ is the Hilbert--Schmidt adjoint of $\Phi$. For our
semi-definite program of interest in \eqref{eq:SDP-sim-error}, we find that
\eqref{eq:SDP-sim-error} can be written in the standard form in
\eqref{eq:SDP-standard-form-primal} with%
\begin{align}
Y  &  =\text{diag}(\mu,Z_{ABA^{\prime}B^{\prime}},P_{AB\hat{A}\hat{B}%
A^{\prime}B^{\prime}}),\\
B  &  =\text{diag}(1,0,0),\\
A  &  =\text{diag}(0,\Gamma_{ABA^{\prime}B^{\prime}}^{\mathcal{N}}%
,0,I_{AB\hat{A}\hat{B}},-I_{AB\hat{A}\hat{B}}),\\
\Phi(Y)  &  =\text{diag}(\mu I_{AB}-Z_{AB},\nonumber\\
&  \qquad\qquad Z_{ABA^{\prime}B^{\prime}}+\operatorname{Tr}_{\hat{A}\hat{B}%
}[T_{\hat{A}\hat{B}}(\rho_{\hat{A}\hat{B}}P_{AB\hat{A}\hat{B}A^{\prime
}B^{\prime}})],\nonumber\\
&  \qquad\qquad T_{B\hat{B}B^{\prime}}(P_{AB\hat{A}\hat{B}A^{\prime}B^{\prime
}}),\operatorname{Tr}_{A^{\prime}B^{\prime}}[P_{AB\hat{A}\hat{B}A^{\prime
}B^{\prime}}],\nonumber\\
&  \qquad\qquad-\operatorname{Tr}_{A^{\prime}B^{\prime}}[P_{AB\hat{A}\hat
{B}A^{\prime}B^{\prime}}]).
\end{align}
So we need to compute the Hilbert--Schmidt adjoint of $\Phi$, which is defined
for a Hermiticity-preserving map by the equation%
\begin{equation}
\operatorname{Tr}[X\Phi(Y)]=\operatorname{Tr}[\Phi^{\dag}(X)Y],
\end{equation}
holding for all Hermitian $X$ and $Y$. By defining
\begin{equation}
X=\text{diag}(X_{AB}^{1},X_{ABA^{\prime}B^{\prime}}^{2},X_{AB\hat{A}\hat
{B}A^{\prime}B^{\prime}}^{3},X_{AB\hat{A}\hat{B}}^{4},X_{AB\hat{A}\hat{B}}%
^{5}),
\end{equation}
consider that%
\begin{align}
\operatorname{Tr}[X\Phi(Y)]  &  =\operatorname{Tr}[X_{AB}^{1}(\mu
I_{AB}-Z_{AB})]\nonumber\\
&  \qquad\qquad+\operatorname{Tr}[X_{ABA^{\prime}B^{\prime}}^{2}%
(Z_{ABA^{\prime}B^{\prime}}+\operatorname{Tr}_{\hat{A}\hat{B}}[T_{\hat{A}%
\hat{B}}(\rho_{\hat{A}\hat{B}}P_{AB\hat{A}\hat{B}A^{\prime}B^{\prime}%
})])]\nonumber\\
&  \qquad\qquad+\operatorname{Tr}[X_{AB\hat{A}\hat{B}A^{\prime}B^{\prime}}%
^{3}T_{B\hat{B}B^{\prime}}(P_{AB\hat{A}\hat{B}A^{\prime}B^{\prime}%
})]\nonumber\\
&  \qquad\qquad+\operatorname{Tr}[X_{AB\hat{A}\hat{B}}^{4}\operatorname{Tr}%
_{A^{\prime}B^{\prime}}[P_{AB\hat{A}\hat{B}A^{\prime}B^{\prime}}]]\nonumber\\
&  \qquad\qquad-\operatorname{Tr}[X_{AB\hat{A}\hat{B}}^{5}\operatorname{Tr}%
_{A^{\prime}B^{\prime}}[P_{AB\hat{A}\hat{B}A^{\prime}B^{\prime}}]]\\
&  =\mu\operatorname{Tr}[X_{AB}^{1}]+\operatorname{Tr}[(-X_{AB}^{1}\otimes
I_{A^{\prime}B^{\prime}}+X_{ABA^{\prime}B^{\prime}}^{2})Z_{ABA^{\prime
}B^{\prime}}]\nonumber\\
&  \qquad\qquad+\operatorname{Tr}[X_{AB\hat{A}\hat{B}A^{\prime}B^{\prime}%
}^{\prime}P_{AB\hat{A}\hat{B}A^{\prime}B^{\prime}}],
\end{align}
where%
\begin{equation}
X_{AB\hat{A}\hat{B}A^{\prime}B^{\prime}}^{\prime}\coloneqq X_{ABA^{\prime
}B^{\prime}}^{2}\otimes T_{\hat{A}\hat{B}}(\rho_{\hat{A}\hat{B}})+T_{B\hat
{B}B^{\prime}}(X_{AB\hat{A}\hat{B}A^{\prime}B^{\prime}}^{3})+(X_{AB\hat{A}%
\hat{B}}^{4}-X_{AB\hat{A}\hat{B}}^{5})\otimes I_{A^{\prime}B^{\prime}}.
\end{equation}
So this means that%
\begin{multline}
\Phi^{\dag}(X)=\text{diag}(\operatorname{Tr}[X_{AB}^{1}],-X_{AB}^{1}\otimes
I_{A^{\prime}B^{\prime}}+X_{ABA^{\prime}B^{\prime}}^{2},\\
X_{ABA^{\prime}B^{\prime}}^{2}\otimes T_{\hat{A}\hat{B}}(\rho_{\hat{A}\hat{B}%
})+T_{B\hat{B}B^{\prime}}(X_{AB\hat{A}\hat{B}A^{\prime}B^{\prime}}^{3})\\
+(X_{AB\hat{A}\hat{B}}^{4}-X_{AB\hat{A}\hat{B}}^{5})\otimes I_{A^{\prime
}B^{\prime}}).
\end{multline}
Then the dual SDP\ is found by plugging into \eqref{eq:dual-sdp-generic}:%
\begin{equation}
\sup_{X^{1},\ldots,X^{5}\geq0}\operatorname{Tr}[\Gamma_{ABA^{\prime}B^{\prime
}}^{\mathcal{N}}X_{ABA^{\prime}B^{\prime}}^{2}]+\operatorname{Tr}[X_{AB\hat
{A}\hat{B}}^{4}-X_{AB\hat{A}\hat{B}}^{5}],
\end{equation}
subject to%
\begin{align}
\operatorname{Tr}[X_{AB}^{1}]  &  \leq1,\\
X_{ABA^{\prime}B^{\prime}}^{2}  &  \leq X_{AB}^{1}\otimes I_{A^{\prime
}B^{\prime}},\\
X_{ABA^{\prime}B^{\prime}}^{2}\otimes T_{\hat{A}\hat{B}}(\rho_{\hat{A}\hat{B}%
})+T_{B\hat{B}B^{\prime}}(X_{AB\hat{A}\hat{B}A^{\prime}B^{\prime}}^{3})  &
\leq-(X_{AB\hat{A}\hat{B}}^{4}-X_{AB\hat{A}\hat{B}}^{5})\otimes I_{A^{\prime
}B^{\prime}}.
\end{align}
This can then be rewritten as%
\begin{equation}
\sup_{\substack{X^{1},X^{2},X^{3}\geq0,\\W\in\text{Herm}}}\operatorname{Tr}%
[\Gamma_{ABA^{\prime}B^{\prime}}^{\mathcal{N}}X_{ABA^{\prime}B^{\prime}}%
^{2}]-\operatorname{Tr}[W_{AB\hat{A}\hat{B}}],
\end{equation}
subject to%
\begin{align}
\operatorname{Tr}[X_{AB}^{1}]  &  \leq1,\\
X_{ABA^{\prime}B^{\prime}}^{2}  &  \leq X_{AB}^{1}\otimes I_{A^{\prime
}B^{\prime}},\\
X_{ABA^{\prime}B^{\prime}}^{2}\otimes T_{\hat{A}\hat{B}}(\rho_{\hat{A}\hat{B}%
})+T_{B\hat{B}B^{\prime}}(X_{AB\hat{A}\hat{B}A^{\prime}B^{\prime}}^{3})  &
\leq W_{AB\hat{A}\hat{B}}\otimes I_{A^{\prime}B^{\prime}},
\end{align}
concluding the proof.

\section{Proof of Proposition~\ref{prop:swap-sdp-simplify}%
\label{app:SDP-simplify-swap}}

By following precisely the same reasoning given in
Appendix~\ref{app:simplify-LOCC-err-swap}, we conclude that the desired
optimization can be written as%
\begin{equation}
e_{\text{PPT}}(\mathcal{S}_{AB}^{d},\rho_{\hat{A}\hat{B}})=1-\sup_{K_{\hat
{A}\hat{B}},L_{\hat{A}\hat{B}},N_{\hat{A}\hat{B}}\geq0}\operatorname{Tr}%
[\rho_{\hat{A}\hat{B}}K_{\hat{A}\hat{B}}], \label{eq:opt-C-PPT-P-SDP-start-1}%
\end{equation}
subject to%
\begin{equation}
K_{\hat{A}\hat{B}}+L_{\hat{A}\hat{B}}+N_{\hat{A}\hat{B}}=I_{\hat{A}\hat{B}}%
\end{equation}
and the following channel is C-PPT-P:%
\begin{multline}
\widetilde{\mathcal{P}}_{AB\hat{A}\hat{B}\rightarrow AB}(\omega_{AB\hat{A}%
\hat{B}})=\mathcal{S}_{AB}^{d}(\operatorname{Tr}_{\hat{A}\hat{B}}[K_{\hat
{A}\hat{B}}\omega_{AB\hat{A}\hat{B}}])+\frac{1}{2}\left(  \operatorname{id}%
_{A\rightarrow B}\otimes\mathcal{D}_{B\rightarrow A}+\mathcal{D}_{A\rightarrow
B}\otimes\operatorname{id}_{B\rightarrow A}\right)  (\operatorname{Tr}%
_{\hat{A}\hat{B}}[L_{\hat{A}\hat{B}}\tau_{\hat{A}\hat{B}}%
])\label{eq:opt-C-PPT-P-SDP-start-3}\\
+\left(  \mathcal{D}_{A\rightarrow B}\otimes\mathcal{D}_{B\rightarrow
A}\right)  (\operatorname{Tr}_{\hat{A}\hat{B}}[N_{\hat{A}\hat{B}}%
\omega_{AB\hat{A}\hat{B}}]).
\end{multline}
Recall that the Choi operator of $\widetilde{\mathcal{P}}_{AB\hat{A}\hat
{B}\rightarrow AB}$ is given by%
\begin{multline}
\widetilde{P}_{AB\hat{A}\hat{B}A^{\prime}B^{\prime}}=\Gamma_{AB^{\prime}%
}\otimes\Gamma_{BA^{\prime}}\otimes T_{\hat{A}\hat{B}}(K_{\hat{A}\hat{B}%
})+\frac{1}{2}\left(  \Gamma_{AB^{\prime}}\otimes\frac{dI_{BA^{\prime}}%
-\Gamma_{BA^{\prime}}}{d^{2}-1}+\frac{dI_{AB^{\prime}}-\Gamma_{AB^{\prime}}%
}{d^{2}-1}\otimes\Gamma_{BA^{\prime}}\right)  \otimes T_{\hat{A}\hat{B}%
}(L_{\hat{A}\hat{B}})\\
+\frac{dI_{AB^{\prime}}-\Gamma_{AB^{\prime}}}{d^{2}-1}\otimes\frac
{dI_{BA^{\prime}}-\Gamma_{BA^{\prime}}}{d^{2}-1}\otimes T_{\hat{A}\hat{B}%
}(N_{\hat{A}\hat{B}}).
\end{multline}
The conditions $K_{\hat{A}\hat{B}},L_{\hat{A}\hat{B}},N_{\hat{A}\hat{B}}\geq0$
and $K_{\hat{A}\hat{B}}+L_{\hat{A}\hat{B}}+N_{\hat{A}\hat{B}}=I_{\hat{A}%
\hat{B}}$ guarantee that $\widetilde{P}_{AB\hat{A}\hat{B}A^{\prime}B^{\prime}%
}$ is the Choi operator of a channel. In order for $\widetilde{P}_{AB\hat
{A}\hat{B}A^{\prime}B^{\prime}}$ to be the Choi operator of a C-PPT-P channel,
the following condition should hold%
\begin{equation}
T_{B\hat{B}B^{\prime}}(\widetilde{P}_{AB\hat{A}\hat{B}A^{\prime}B^{\prime}%
})\geq0. \label{eq:ppt-psd-constr}%
\end{equation}
In what follows, we explore what this condition imposes on the operators
$K_{\hat{A}\hat{B}},L_{\hat{A}\hat{B}},N_{\hat{A}\hat{B}}$. Consider that%
\begin{align}
&  T_{B\hat{B}B^{\prime}}(\widetilde{P}_{AB\hat{A}\hat{B}A^{\prime}B^{\prime}%
})\nonumber\\
&  =T_{B^{\prime}}(\Gamma_{AB^{\prime}})\otimes T_{B}(\Gamma_{BA^{\prime}%
})\otimes T_{\hat{A}}(K_{\hat{A}\hat{B}})\nonumber\\
&  \qquad+\frac{1}{2}\left(  T_{B^{\prime}}(\Gamma_{AB^{\prime}})\otimes
T_{B}\left(  \frac{dI_{BA^{\prime}}-\Gamma_{BA^{\prime}}}{d^{2}-1}\right)
+T_{B^{\prime}}\left(  \frac{dI_{AB^{\prime}}-\Gamma_{AB^{\prime}}}{d^{2}%
-1}\right)  \otimes T_{B}(\Gamma_{BA^{\prime}})\right)  \otimes T_{\hat{A}%
}(L_{\hat{A}\hat{B}})\nonumber\\
&  \qquad+T_{B^{\prime}}\left(  \frac{dI_{AB^{\prime}}-\Gamma_{AB^{\prime}}%
}{d^{2}-1}\right)  \otimes T_{B}\left(  \frac{dI_{BA^{\prime}}-\Gamma
_{BA^{\prime}}}{d^{2}-1}\right)  \otimes T_{\hat{A}}(N_{\hat{A}\hat{B}%
}).\label{eq:partial-transpose-cond-1-swap-simp}\\
&  =F_{AB^{\prime}}\otimes F_{BA^{\prime}}\otimes T_{\hat{A}}(K_{\hat{A}%
\hat{B}})\nonumber\\
&  \qquad+\frac{1}{2}\left(  F_{AB^{\prime}}\otimes\frac{dI_{BA^{\prime}%
}-F_{BA^{\prime}}}{d^{2}-1}+\frac{dI_{AB^{\prime}}-F_{AB^{\prime}}}{d^{2}%
-1}\otimes F_{BA^{\prime}}\right)  \otimes T_{\hat{A}}(L_{\hat{A}\hat{B}%
})\nonumber\\
&  \qquad+\frac{dI_{AB^{\prime}}-F_{AB^{\prime}}}{d^{2}-1}\otimes
\frac{dI_{BA^{\prime}}-F_{BA^{\prime}}}{d^{2}-1}\otimes T_{\hat{A}}(N_{\hat
{A}\hat{B}}). \label{eq:partial-transpose-cond-1-swap-simp-2}%
\end{align}
Now consider that the SWAP\ operator $F_{CD}$ and the identity $I_{CD}$ can be
written in terms of the projections $\Pi_{CD}^{\mathcal{S}}$ and $\Pi
_{CD}^{\mathcal{A}}$ onto the respective symmetric and antisymmetric subspaces
as%
\begin{align}
F_{CD}  &  =\Pi_{CD}^{\mathcal{S}}-\Pi_{CD}^{\mathcal{A}},\\
I_{CD}  &  =\Pi_{CD}^{\mathcal{S}}+\Pi_{CD}^{\mathcal{A}},
\end{align}
which implies that%
\begin{align}
\frac{dI_{CD}-F_{CD}}{d^{2}-1}  &  =\frac{1}{d^{2}-1}\left(  d\Pi
_{CD}^{\mathcal{S}}+d\Pi_{CD}^{\mathcal{A}}-\left(  \Pi_{CD}^{\mathcal{S}}%
-\Pi_{CD}^{\mathcal{A}}\right)  \right) \\
&  =\frac{d-1}{d^{2}-1}\Pi_{CD}^{\mathcal{S}}+\frac{d+1}{d^{2}-1}\Pi
_{CD}^{\mathcal{A}}\\
&  =\frac{1}{d+1}\Pi_{CD}^{\mathcal{S}}+\frac{1}{d-1}\Pi_{CD}^{\mathcal{A}}.
\end{align}
Continuing, we find that%
\begin{align}
\text{Eq.~\eqref{eq:partial-transpose-cond-1-swap-simp-2}}  &  =\left(
\Pi_{AB^{\prime}}^{\mathcal{S}}-\Pi_{AB^{\prime}}^{\mathcal{A}}\right)
\otimes\left(  \Pi_{BA^{\prime}}^{\mathcal{S}}-\Pi_{BA^{\prime}}^{\mathcal{A}%
}\right)  \otimes T_{\hat{A}}(K_{\hat{A}\hat{B}})\nonumber\\
&  \qquad+\frac{1}{2}\left[
\begin{array}
[c]{c}%
\left(  \Pi_{AB^{\prime}}^{\mathcal{S}}-\Pi_{AB^{\prime}}^{\mathcal{A}%
}\right)  \otimes\left(  \frac{1}{d+1}\Pi_{BA^{\prime}}^{\mathcal{S}}+\frac
{1}{d-1}\Pi_{BA^{\prime}}^{\mathcal{A}}\right) \\
+\left(  \frac{1}{d+1}\Pi_{AB^{\prime}}^{\mathcal{S}}+\frac{1}{d-1}%
\Pi_{AB^{\prime}}^{\mathcal{A}}\right)  \otimes\left(  \Pi_{BA^{\prime}%
}^{\mathcal{S}}-\Pi_{BA^{\prime}}^{\mathcal{A}}\right)
\end{array}
\right]  \otimes T_{\hat{A}}(L_{\hat{A}\hat{B}})\nonumber\\
&  \qquad+\left(  \frac{1}{d+1}\Pi_{AB^{\prime}}^{\mathcal{S}}+\frac{1}%
{d-1}\Pi_{AB^{\prime}}^{\mathcal{A}}\right)  \otimes\left(  \frac{1}{d+1}%
\Pi_{BA^{\prime}}^{\mathcal{S}}+\frac{1}{d-1}\Pi_{BA^{\prime}}^{\mathcal{A}%
}\right)  \otimes T_{\hat{A}}(N_{\hat{A}\hat{B}})\\
&  =\Pi_{AB^{\prime}}^{\mathcal{S}}\otimes\Pi_{BA^{\prime}}^{\mathcal{S}%
}\otimes\left[  T_{\hat{A}}(K_{\hat{A}\hat{B}})+\frac{T_{\hat{A}}(L_{\hat
{A}\hat{B}})}{d+1}+\frac{T_{\hat{A}}(N_{\hat{A}\hat{B}})}{\left(  d+1\right)
^{2}}\right] \nonumber\\
&  \qquad+\Pi_{AB^{\prime}}^{\mathcal{S}}\otimes\Pi_{BA^{\prime}}%
^{\mathcal{A}}\otimes\left[  -T_{\hat{A}}(K_{\hat{A}\hat{B}})+\frac{T_{\hat
{A}}(L_{\hat{A}\hat{B}})}{2\left(  d-1\right)  }-\frac{T_{\hat{A}}(L_{\hat
{A}\hat{B}})}{2\left(  d+1\right)  }+\frac{T_{\hat{A}}(N_{\hat{A}\hat{B}}%
)}{\left(  d+1\right)  \left(  d-1\right)  }\right] \nonumber\\
&  \qquad+\Pi_{AB^{\prime}}^{\mathcal{A}}\otimes\Pi_{BA^{\prime}}%
^{\mathcal{S}}\otimes\left[  -T_{\hat{A}}(K_{\hat{A}\hat{B}})-\frac{T_{\hat
{A}}(L_{\hat{A}\hat{B}})}{2\left(  d+1\right)  }+\frac{T_{\hat{A}}(L_{\hat
{A}\hat{B}})}{2\left(  d-1\right)  }+\frac{T_{\hat{A}}(N_{\hat{A}\hat{B}}%
)}{\left(  d+1\right)  \left(  d-1\right)  }\right] \nonumber\\
&  \qquad+\Pi_{AB^{\prime}}^{\mathcal{A}}\otimes\Pi_{BA^{\prime}}%
^{\mathcal{A}}\otimes\left[  T_{\hat{A}}(K_{\hat{A}\hat{B}})-\frac{T_{\hat{A}%
}(L_{\hat{A}\hat{B}})}{d-1}+\frac{T_{\hat{A}}(N_{\hat{A}\hat{B}})}{\left(
d-1\right)  ^{2}}\right]  .
\end{align}
Since the operators in the first two factors of the tensor product are
orthogonal projectors, we then see that the condition in
\eqref{eq:ppt-psd-constr} is equivalent to the following four conditions:%
\begin{align}
T_{\hat{B}}(K_{\hat{A}\hat{B}})+\frac{T_{\hat{B}}(L_{\hat{A}\hat{B}})}%
{d+1}+\frac{T_{\hat{B}}(N_{\hat{A}\hat{B}})}{\left(  d+1\right)  ^{2}}  &
\geq0,\\
-T_{\hat{B}}(K_{\hat{A}\hat{B}})+\frac{T_{\hat{B}}(L_{\hat{A}\hat{B}}%
)}{2\left(  d-1\right)  }-\frac{T_{\hat{B}}(L_{\hat{A}\hat{B}})}{2\left(
d+1\right)  }+\frac{T_{\hat{B}}(N_{\hat{A}\hat{B}})}{\left(  d+1\right)
\left(  d-1\right)  }  &  \geq0,\\
-T_{\hat{B}}(K_{\hat{A}\hat{B}})-\frac{T_{\hat{B}}(L_{\hat{A}\hat{B}}%
)}{2\left(  d+1\right)  }+\frac{T_{\hat{B}}(L_{\hat{A}\hat{B}})}{2\left(
d-1\right)  }+\frac{T_{\hat{B}}(N_{\hat{A}\hat{B}})}{\left(  d+1\right)
\left(  d-1\right)  }  &  \geq0,\\
T_{\hat{B}}(K_{\hat{A}\hat{B}})-\frac{T_{\hat{B}}(L_{\hat{A}\hat{B}})}%
{d-1}+\frac{T_{\hat{B}}(N_{\hat{A}\hat{B}})}{\left(  d-1\right)  ^{2}}  &
\geq0.
\end{align}
It is clear that the third one is redundant. These in turn are equivalent to
the following three conditions:%
\begin{align}
T_{\hat{A}}(K_{\hat{A}\hat{B}})+\frac{T_{\hat{A}}(L_{\hat{A}\hat{B}})}%
{d+1}+\frac{T_{\hat{A}}(N_{\hat{A}\hat{B}})}{\left(  d+1\right)  ^{2}}  &
\geq0,\\
\frac{1}{d-1}\left(  \frac{T_{\hat{A}}(L_{\hat{A}\hat{B}})}{2}+\frac
{T_{\hat{A}}(N_{\hat{A}\hat{B}})}{d+1}\right)   &  \geq T_{\hat{A}}(K_{\hat
{A}\hat{B}})+\frac{T_{\hat{A}}(L_{\hat{A}\hat{B}})}{2\left(  d+1\right)  },\\
T_{\hat{A}}(K_{\hat{A}\hat{B}})+\frac{T_{\hat{A}}(N_{\hat{A}\hat{B}})}{\left(
d-1\right)  ^{2}}  &  \geq\frac{1}{d-1}\left[  T_{\hat{A}}(L_{\hat{A}\hat{B}%
})\right]  .
\end{align}
Using the linearity of the transpose operation, we can rewrite these
conditions a final time as%
\begin{align}
T_{\hat{A}}\!\left(  K_{\hat{A}\hat{B}}+\frac{L_{\hat{A}\hat{B}}}{d+1}%
+\frac{N_{\hat{A}\hat{B}}}{\left(  d+1\right)  ^{2}}\right)   &  \geq0,\\
\frac{1}{d-1}T_{\hat{A}}\!\left(  \frac{L_{\hat{A}\hat{B}}}{2}+\frac
{N_{\hat{A}\hat{B}}}{d+1}\right)   &  \geq T_{\hat{A}}\!\left(  K_{\hat{A}%
\hat{B}}+\frac{L_{\hat{A}\hat{B}}}{2\left(  d+1\right)  }\right)
,\label{eq:constraint-to-rewrite}\\
T_{\hat{A}}\!\left(  K_{\hat{A}\hat{B}}+\frac{N_{\hat{A}\hat{B}}}{\left(
d-1\right)  ^{2}}\right)   &  \geq\frac{1}{d-1}T_{\hat{A}}\!\left(  L_{\hat
{A}\hat{B}}\right)  . \label{eq:partial-transpose-cond-swap-simp-last}%
\end{align}
Since $T_{\hat{A}}(G_{\hat{A}\hat{B}})\geq0$ if and only $T_{\hat{B}}%
(G_{\hat{A}\hat{B}})\geq0$, we can equivalently write the partial transpose
constraints with respect to $T_{\hat{B}}$. Some simple algebra reduces the
inequality in \eqref{eq:constraint-to-rewrite} to
\begin{equation}
\frac{1}{d^{2}-1}T_{\hat{A}}\!\left(  L_{\hat{A}\hat{B}}+N_{\hat{A}\hat{B}%
}\right)  \geq T_{\hat{A}}\!\left(  K_{\hat{A}\hat{B}}\right)  .
\end{equation}

By combining with
\eqref{eq:opt-C-PPT-P-SDP-start-1}--\eqref{eq:opt-C-PPT-P-SDP-start-3}, we
conclude the proof of Proposition~\ref{prop:swap-sdp-simplify}.

We note here that, if we do not employ the symmetry in
\eqref{eq:swap-ch-add-symmetry-for-LOCC}, then we arrive at the following
SDP:
\begin{equation}
e_{\operatorname{PPT}}(\mathcal{S}_{AB}^{d},\rho_{\hat{A}\hat{B}}%
)=1-\sup_{\substack{K_{\hat{A}\hat{B}},L_{\hat{A}\hat{B}},\\M_{\hat{A}\hat{B}%
},N_{\hat{A}\hat{B}}\geq0}}\operatorname{Tr}[\rho_{\hat{A}\hat{B}}K_{\hat
{A}\hat{B}}], \label{eq:SDP-no-extra-symmetry}%
\end{equation}
subject to%
\begin{align}
T_{\hat{B}}\!\left(  K_{\hat{A}\hat{B}}+\frac{L_{\hat{A}\hat{B}}}{d+1}%
+\frac{M_{\hat{A}\hat{B}}}{d+1}+\frac{N_{\hat{A}\hat{B}}}{\left(  d+1\right)
^{2}}\right)   &  \geq0,\\
\frac{1}{d-1}T_{\hat{B}}\!\left(  L_{\hat{A}\hat{B}}+\frac{N_{\hat{A}\hat{B}}%
}{d+1}\right)   &  \geq T_{\hat{B}}\!\left(  K_{\hat{A}\hat{B}}+\frac
{M_{\hat{A}\hat{B}}}{d+1}\right)  ,\\
\frac{1}{d-1}T_{\hat{B}}\!\left(  M_{\hat{A}\hat{B}}+\frac{N_{\hat{A}\hat{B}}%
}{d+1}\right)   &  \geq T_{\hat{B}}\!\left(  K_{\hat{A}\hat{B}}+\frac
{L_{\hat{A}\hat{B}}}{d+1}\right)  ,\label{eq:extra-constraint}\\
T_{\hat{B}}\!\left(  K_{\hat{A}\hat{B}}+\frac{N_{\hat{A}\hat{B}}}{\left(
d-1\right)  ^{2}}\right)   &  \geq\frac{1}{d-1}T_{\hat{B}}\!\left(  L_{\hat
{A}\hat{B}}+M_{\hat{A}\hat{B}}\right)  ,\\
K_{\hat{A}\hat{B}}+L_{\hat{A}\hat{B}}+M_{\hat{A}\hat{B}}+N_{\hat{A}\hat{B}}
&  =I_{\hat{A}\hat{B}},
\end{align}
where $K_{\hat{A}\hat{B}}, L_{\hat{A}\hat{B}}, M_{\hat{A}\hat{B}}$, and
$N_{\hat{A}\hat{B}}$ are positive semi-definite Hermitian matrices and
elements of a POVM and $d$ is the dimension of the SWAP channel. This SDP was
derived in an earlier version of our paper, available at \cite{siddiqui2020},
before we noticed the extra symmetry in
\eqref{eq:swap-ch-add-symmetry-for-LOCC}. We have used this latter SDP to
evaluate some examples in our paper. The main difference between the SDP in
\eqref{eq:main-SDP-paper} and that in \eqref{eq:SDP-no-extra-symmetry} is
that, in the latter, by exploiting the symmetry in
\eqref{eq:swap-ch-add-symmetry-for-LOCC}, we can set $M_{\hat{A}\hat{B}} =
L_{\hat{A}\hat{B}}$ and eliminate the constraint in
\eqref{eq:extra-constraint}. Then we set $L_{\hat{A}\hat{B}}:= 2 L_{\hat
{A}\hat{B}}$.

\section{Channel box transformation using infidelity}

\label{app:channel-box-transform-infid}Although the topic of this appendix is
tangential to the main theme of this paper, we think it is nevertheless
important to go through a different example in which the channel infidelity as
a measure of error leads to a semi-definite program. We suspect that this
observation will have wide application in the context of quantum resource
theories \cite{CG18}.

Recall from \cite{Wang2019a} that the channel box transformation problem
refers to the task of converting a pair $(\mathcal{N}_{A\rightarrow
B},\mathcal{M}_{A\rightarrow B})$\ of channels to another pair $(\mathcal{K}%
_{C\rightarrow D},\mathcal{L}_{C\rightarrow D})$ by means of a superchannel
$\Theta_{\left(  A\rightarrow B\right)  \rightarrow\left(  C\rightarrow
D\right)  }$. In particular, the goal is to minimize the error $\varepsilon$
in the following:%
\begin{align}
\Theta(\mathcal{N}_{A\rightarrow B})  &  \approx_{\varepsilon}\mathcal{K}%
_{C\rightarrow D},\label{eq:approx-convert-ch-box}\\
\Theta(\mathcal{M}_{A\rightarrow B})  &  =\mathcal{L}_{C\rightarrow D}.
\end{align}
See \cite{Wang2019a} for a full exposition of this problem. The problem can be
phrased in a more mathematical way as follows:%
\begin{equation}
\varepsilon((\mathcal{N},\mathcal{M})\rightarrow(\mathcal{K},\mathcal{L}%
))\coloneqq \inf_{\Theta\in\text{SC}}\left\{  \varepsilon\in\left[
0,1\right]  :\Theta(\mathcal{N}_{A\rightarrow B})\approx_{\varepsilon
}\mathcal{K}_{C\rightarrow D},\Theta(\mathcal{M}_{A\rightarrow B}%
)=\mathcal{L}_{C\rightarrow D}\right\}  ,
\end{equation}
where SC\ is the set of superchannels.

In \cite{Wang2019a}, the approximation error in
\eqref{eq:approx-convert-ch-box} was taken to be with respect to normalized
diamond distance. In addition to various operational motivations for doing so,
an additional implicit motivation was that doing so led to a semi-definite program.

What we show here is that if we take the approximation error in
\eqref{eq:approx-convert-ch-box} to be with respect to channel infidelity, so
that%
\begin{equation}
\Theta(\mathcal{N}_{A\rightarrow B})\approx_{\varepsilon}\mathcal{K}%
_{C\rightarrow D}\qquad\Longleftrightarrow\qquad1-F(\Theta(\mathcal{N}%
_{A\rightarrow B}),\mathcal{K}_{C\rightarrow D})\leq\varepsilon,
\end{equation}
then we still arrive at a semi-definite program to perform the optimization.
The semi-definite program is as follows:%
\begin{equation}
\varepsilon((\mathcal{N},\mathcal{M})\rightarrow(\mathcal{K},\mathcal{L}%
))=1-\left[  \sup_{\lambda\geq0,\Gamma_{CBAD}^{\Theta}\geq0,Q_{CD}}%
\lambda\right]  ^{2}%
\end{equation}
subject to%
\begin{align}
\Gamma_{CB}^{\Theta}  &  =I_{CB},\\
\Gamma_{CBA}^{\Theta}  &  =\Gamma_{CA}^{\Theta}\otimes\frac{I_{B}}{d_{B}},\\
\Gamma_{CD}^{\mathcal{L}}  &  =\operatorname{Tr}_{AB}[T_{AB}(\Gamma
_{AB}^{\mathcal{M}})\Gamma_{CBAD}^{\Theta}],\\
\lambda I_{C}  &  \leq\operatorname{Re}[\operatorname{Tr}_{D}[Q_{CD}]],\\%
\begin{bmatrix}
\Gamma_{CD}^{\mathcal{K}} & Q_{CD}^{\dag}\\
Q_{CD} & \operatorname{Tr}_{AB}[T_{AB}(\Gamma_{AB}^{\mathcal{N}})\Gamma
_{CBAD}^{\Theta}]
\end{bmatrix}
&  \geq0.
\end{align}
The constraints $\Gamma_{CBAD}^{\Theta}\geq0$, $\Gamma_{CB}^{\Theta}=I_{CB}$,
and $\Gamma_{CBA}^{\Theta}=\Gamma_{CA}^{\Theta}\otimes\frac{I_{B}}{d_{B}}$
correspond to $\Gamma_{CBAD}^{\Theta}$ being a Choi operator for a
superchannel. The constraint $\Gamma_{CD}^{\mathcal{L}}=\operatorname{Tr}%
_{AB}[T_{AB}(\Gamma_{AB}^{\mathcal{M}})\Gamma_{CBAD}^{\Theta}]$ ensures that
the superchannel $\Theta$ transforms $\mathcal{M}_{A\rightarrow B}$\ exactly
to $\mathcal{L}_{C\rightarrow D}$. The other constraints ensure that the
superchannel $\Theta$ transforms $\mathcal{N}_{A\rightarrow B}$\ approximately
to $\mathcal{K}_{C\rightarrow D}$ with channel infidelity as the error
measure, making use of the semi-definite program for root channel fidelity
from \eqref{eq:SDP-ch-fid-1}--\eqref{eq:SDP-ch-fid-3}\ (see also \cite{KW20}).

\section{Proof of Proposition~\ref{prop:isotropic-sim-perf}%
\label{app:isotropic-proof}}

In this appendix, we establish how the semi-definite program in
\eqref{eq:SDP-no-extra-symmetry} simplifies for isotropic states. Recall from
\eqref{eq:isotropic-def}\ that an isotropic state has the following form:%
\begin{equation}
\rho_{\hat{A}\hat{B}}^{(F,d_{\hat{A}})}\coloneqq F\Phi_{\hat{A}\hat{B}%
}+\left(  1-F\right)  \frac{I_{\hat{A}\hat{B}}-\Phi_{\hat{A}\hat{B}}}%
{d_{\hat{A}}^{2}-1}.
\end{equation}
We prove that%
\begin{equation}
e_{\operatorname{PPT}}(\mathcal{S}_{AB}^{d},\rho_{\hat{A}\hat{B}}%
^{(F,d_{\hat{A}})})=\left\{
\begin{array}
[c]{cc}%
1-\frac{1}{d^{2}} & \text{if }F\leq\frac{1}{d_{\hat{A}}}\\
1-\frac{Fd_{\hat{A}}}{d^{2}} & \text{if }F>\frac{1}{d_{\hat{A}}}\text{ and
}d_{\hat{A}}\leq d^{2}\\
\frac{\left(  1-\frac{1}{d^{2}}\right)  \left(  1-F\right)  }{1-\frac
{1}{d_{\hat{A}}}} & \text{if }F>\frac{1}{d_{\hat{A}}}\text{ and }d_{\hat{A}%
}>d^{2}%
\end{array}
\right.  , \label{eq:error-ppt-isotropic-app}%
\end{equation}
as stated in Proposition~\ref{prop:isotropic-sim-perf}.

To begin with, if $F\leq1/d_{\hat{A}}$, then the resource state is separable
\cite{Horodecki2009a,Watrous2018}. So the resource state can be prepared by
LOCC, and then our previous result from Proposition~\ref{prop:no-res-sim-err}%
\ applies, so that we can conclude that the simulation error is%
\begin{equation}
1-\frac{1}{d^{2}}%
\end{equation}
in this case. So in what follows, we focus exclusively on the case when
$F>1/d_{\hat{A}}$.

Due to the objective function in \eqref{eq:SDP-no-extra-symmetry} being linear
and the constraints being linear inequalities, it follows that a convex
combination of optimal solutions is also optimal. Since the resource state is
isotropic, then this means that if $K_{\hat{A}\hat{B}}$, $L_{\hat{A}\hat{B}}$,
$M_{\hat{A}\hat{B}}$, and $N_{\hat{A}\hat{B}}$ is an optimal solution, then
$(\mathcal{U}_{\hat{A}}\otimes\overline{\mathcal{U}}_{\hat{B}})(K_{\hat{A}%
\hat{B}})$, $(\mathcal{U}_{\hat{A}}\otimes\overline{\mathcal{U}}_{\hat{B}%
})(L_{\hat{A}\hat{B}})$, $(\mathcal{U}_{\hat{A}}\otimes\overline{\mathcal{U}%
}_{\hat{B}})(M_{\hat{A}\hat{B}})$, and $(\mathcal{U}_{\hat{A}}\otimes
\overline{\mathcal{U}}_{\hat{B}})(N_{\hat{A}\hat{B}})$ is too, where
$\mathcal{U}_{\hat{A}}$ and $\mathcal{U}_{\hat{B}}$ unitary channels
corresponding to an arbitrary unitary operator $U$. Following the reasoning
above, it follows that the twirled versions of these operators is also
optimal, so that it suffices for each of $K_{\hat{A}\hat{B}}$, $L_{\hat{A}%
\hat{B}}$, $M_{\hat{A}\hat{B}}$, and $N_{\hat{A}\hat{B}}$ to have an isotropic
form:%
\begin{align}
K_{\hat{A}\hat{B}}  &  =k_{1}\Phi_{\hat{A}\hat{B}}+k_{2}\left(  I_{\hat{A}%
\hat{B}}-\Phi_{\hat{A}\hat{B}}\right)  ,\label{eq:isotropic-reduction-meas}\\
L_{\hat{A}\hat{B}}  &  =l_{1}\Phi_{\hat{A}\hat{B}}+l_{2}\left(  I_{\hat{A}%
\hat{B}}-\Phi_{\hat{A}\hat{B}}\right)  ,\\
M_{\hat{A}\hat{B}}  &  =m_{1}\Phi_{\hat{A}\hat{B}}+m_{2}\left(  I_{\hat{A}%
\hat{B}}-\Phi_{\hat{A}\hat{B}}\right)  ,\\
N_{\hat{A}\hat{B}}  &  =n_{1}\Phi_{\hat{A}\hat{B}}+n_{2}\left(  I_{\hat{A}%
\hat{B}}-\Phi_{\hat{A}\hat{B}}\right)  .
\end{align}
The objective function then evaluates to%
\begin{equation}
\operatorname{Tr}[K_{\hat{A}\hat{B}}\rho_{\hat{A}\hat{B}}^{(F,d_{\hat{A}}%
)}]=k_{1}F+k_{2}\left(  1-F\right)  .
\end{equation}
Consider that the equality constraint%
\begin{equation}
K_{\hat{A}\hat{B}}+L_{\hat{A}\hat{B}}+M_{\hat{A}\hat{B}}+N_{\hat{A}\hat{B}%
}=I_{\hat{A}\hat{B}}%
\end{equation}
implies that%
\begin{align}
k_{1}+l_{1}+m_{1}+n_{1}  &  =1,\\
k_{2}+l_{2}+m_{2}+n_{2}  &  =1.
\end{align}
We also have that all coefficients are non-negative:%
\begin{equation}
k_{1},l_{1},m_{1},n_{1},k_{2},l_{2},m_{2},n_{2}\geq0,
\label{eq:isotropic-all-coeffs-non-neg}%
\end{equation}
which follows from the constraints $K_{\hat{A}\hat{B}},L_{\hat{A}\hat{B}%
},M_{\hat{A}\hat{B}},N_{\hat{A}\hat{B}}\geq0$. All of the inequality
constraints involve the terms $T_{\hat{B}}(K_{\hat{A}\hat{B}})$, $T_{\hat{B}%
}(L_{\hat{A}\hat{B}})$, $T_{\hat{B}}(M_{\hat{A}\hat{B}})$, and $T_{\hat{B}%
}(N_{\hat{A}\hat{B}})$. Let us evaluate the first of these and the rest follow
similarly:%
\begin{align}
T_{\hat{B}}(K_{\hat{A}\hat{B}})  &  =\frac{k_{1}}{d_{\hat{A}}}F_{\hat{A}%
\hat{B}}+k_{2}\left(  I_{\hat{A}\hat{B}}-F_{\hat{A}\hat{B}}/d_{\hat{A}}\right)
\\
&  =\frac{k_{1}}{d_{\hat{A}}}\left(  \Pi_{\hat{A}\hat{B}}^{\mathcal{S}}%
-\Pi_{\hat{A}\hat{B}}^{\mathcal{A}}\right)  +k_{2}\left(  \Pi_{\hat{A}\hat{B}%
}^{\mathcal{S}}+\Pi_{\hat{A}\hat{B}}^{\mathcal{A}}\right)  -\frac{k_{2}%
}{d_{\hat{A}}}\left(  \Pi_{\hat{A}\hat{B}}^{\mathcal{S}}-\Pi_{\hat{A}\hat{B}%
}^{\mathcal{A}}\right) \\
&  =\left(  \frac{k_{1}}{d_{\hat{A}}}+k_{2}-\frac{k_{2}}{d_{\hat{A}}}\right)
\Pi_{\hat{A}\hat{B}}^{\mathcal{S}}+\left(  k_{2}+\frac{k_{2}}{d_{\hat{A}}%
}-\frac{k_{1}}{d_{\hat{A}}}\right)  \Pi_{\hat{A}\hat{B}}^{\mathcal{A}}\\
&  =\frac{1}{d_{\hat{A}}}\left[  \left(  k_{1}+k_{2}\left(  d_{\hat{A}%
}-1\right)  \right)  \Pi_{\hat{A}\hat{B}}^{\mathcal{S}}+\left(  k_{2}\left(
d_{\hat{A}}+1\right)  -k_{1}\right)  \Pi_{\hat{A}\hat{B}}^{\mathcal{A}%
}\right]  .
\end{align}
where we used the facts that%
\begin{align}
T_{\hat{B}}(\Phi_{\hat{A}\hat{B}})  &  =\frac{1}{d_{\hat{A}}}F_{\hat{A}\hat
{B}},\\
F_{\hat{A}\hat{B}}  &  =\Pi_{\hat{A}\hat{B}}^{\mathcal{S}}-\Pi_{\hat{A}\hat
{B}}^{\mathcal{A}},\\
I_{\hat{A}\hat{B}}  &  =\Pi_{\hat{A}\hat{B}}^{\mathcal{S}}+\Pi_{\hat{A}\hat
{B}}^{\mathcal{A}}.
\end{align}
Similarly,%
\begin{align}
T_{\hat{B}}(L_{\hat{A}\hat{B}})  &  =\frac{1}{d_{\hat{A}}}\left[  \left(
l_{1}+l_{2}\left(  d_{\hat{A}}-1\right)  \right)  \Pi_{\hat{A}\hat{B}%
}^{\mathcal{S}}+\left(  l_{2}\left(  d_{\hat{A}}+1\right)  -l_{1}\right)
\Pi_{\hat{A}\hat{B}}^{\mathcal{A}}\right]  ,\\
T_{\hat{B}}(M_{\hat{A}\hat{B}})  &  =\frac{1}{d_{\hat{A}}}\left[  \left(
m_{1}+m_{2}\left(  d_{\hat{A}}-1\right)  \right)  \Pi_{\hat{A}\hat{B}%
}^{\mathcal{S}}+\left(  m_{2}\left(  d_{\hat{A}}+1\right)  -m_{1}\right)
\Pi_{\hat{A}\hat{B}}^{\mathcal{A}}\right]  ,\\
T_{\hat{B}}(N_{\hat{A}\hat{B}})  &  =\frac{1}{d_{\hat{A}}}\left[  \left(
n_{1}+n_{2}\left(  d_{\hat{A}}-1\right)  \right)  \Pi_{\hat{A}\hat{B}%
}^{\mathcal{S}}+\left(  n_{2}\left(  d_{\hat{A}}+1\right)  -n_{1}\right)
\Pi_{\hat{A}\hat{B}}^{\mathcal{A}}\right]  .
\end{align}
Let us consider how each of the constraints simplify. Consider that the first
inequality constraint%
\[
T_{\hat{B}}\!\left(  K_{\hat{A}\hat{B}}+\frac{L_{\hat{A}\hat{B}}}{d+1}%
+\frac{M_{\hat{A}\hat{B}}}{d+1}+\frac{N_{\hat{A}\hat{B}}}{\left(  d+1\right)
^{2}}\right)  \geq0,
\]
simplifies to the following two inequalities:%
\begin{align}
\left(  k_{1}+k_{2}\left(  d_{\hat{A}}-1\right)  \right)  +\frac{l_{1}%
+l_{2}\left(  d_{\hat{A}}-1\right)  +m_{1}+m_{2}\left(  d_{\hat{A}}-1\right)
}{d+1}+\frac{n_{1}+n_{2}\left(  d_{\hat{A}}-1\right)  }{\left(  d+1\right)
^{2}}  &  \geq0,\label{eq:1st-ineq-actually-redund}\\
k_{2}\left(  d_{\hat{A}}+1\right)  -k_{1}+\frac{l_{2}\left(  d_{\hat{A}%
}+1\right)  -l_{1}+m_{2}\left(  d_{\hat{A}}+1\right)  -m_{1}}{(d+1)}%
+\frac{n_{2}\left(  d_{\hat{A}}+1\right)  -n_{1}}{\left(  d+1\right)  ^{2}}
&  \geq0.
\end{align}
The second inequality constraint%
\[
\frac{1}{d-1}T_{\hat{B}}\!\left(  L_{\hat{A}\hat{B}}+\frac{N_{\hat{A}\hat{B}}%
}{d+1}\right)  \geq T_{\hat{B}}\!\left(  K_{\hat{A}\hat{B}}+\frac{M_{\hat
{A}\hat{B}}}{d+1}\right)  ,
\]
simplifies to the following two inequalities:%
\begin{align}
\frac{1}{d-1}\left(  l_{1}+l_{2}\left(  d_{\hat{A}}-1\right)  +\frac
{n_{1}+n_{2}\left(  d_{\hat{A}}-1\right)  }{d+1}\right)   &  \geq k_{1}%
+k_{2}\left(  d_{\hat{A}}-1\right)  +\frac{m_{1}+m_{2}\left(  d_{\hat{A}%
}-1\right)  }{d+1},\\
\frac{1}{d-1}\left(  l_{2}\left(  d_{\hat{A}}+1\right)  -l_{1}+\frac
{n_{2}\left(  d_{\hat{A}}+1\right)  -n_{1}}{d+1}\right)   &  \geq k_{2}\left(
d_{\hat{A}}+1\right)  -k_{1}+\frac{m_{2}\left(  d_{\hat{A}}+1\right)  -m_{1}%
}{d+1}.
\end{align}
The third inequality constraint
\begin{equation}
\frac{1}{d-1}T_{\hat{B}}\!\left(  M_{\hat{A}\hat{B}}+\frac{N_{\hat{A}\hat{B}}%
}{d+1}\right)  \geq T_{\hat{B}}\!\left(  K_{\hat{A}\hat{B}}+\frac{L_{\hat
{A}\hat{B}}}{d+1}\right)  ,
\end{equation}
simplifies to the following two inequalities:%
\begin{align}
\frac{1}{d-1}\left(  m_{1}+m_{2}\left(  d_{\hat{A}}-1\right)  +\frac
{n_{1}+n_{2}\left(  d_{\hat{A}}-1\right)  }{d+1}\right)   &  \geq k_{1}%
+k_{2}\left(  d_{\hat{A}}-1\right)  +\frac{l_{1}+l_{2}\left(  d_{\hat{A}%
}-1\right)  }{d+1},\\
\frac{1}{d-1}\left(  m_{2}\left(  d_{\hat{A}}+1\right)  -m_{1}+\frac
{n_{2}\left(  d_{\hat{A}}+1\right)  -n_{1}}{d+1}\right)   &  \geq k_{2}\left(
d_{\hat{A}}+1\right)  -k_{1}+\frac{l_{2}\left(  d_{\hat{A}}+1\right)  -l_{1}%
}{d+1}.
\end{align}
The final inequality constraint%
\begin{equation}
T_{\hat{B}}\!\left(  K_{\hat{A}\hat{B}}+\frac{N_{\hat{A}\hat{B}}}{\left(
d-1\right)  ^{2}}\right)  \geq\frac{1}{d-1}T_{\hat{B}}\!\left(  L_{\hat{A}%
\hat{B}}+M_{\hat{A}\hat{B}}\right)  ,
\end{equation}
simplifies to the following two inequalities:%
\begin{align}
k_{1}+k_{2}\left(  d_{\hat{A}}-1\right)  +\frac{n_{1}+n_{2}\left(  d_{\hat{A}%
}-1\right)  }{\left(  d-1\right)  ^{2}}  &  \geq\frac{1}{d-1}\left(
l_{1}+l_{2}\left(  d_{\hat{A}}-1\right)  +m_{1}+m_{2}\left(  d_{\hat{A}%
}-1\right)  \right)  ,\\
k_{2}\left(  d_{\hat{A}}+1\right)  -k_{1}+\frac{n_{2}\left(  d_{\hat{A}%
}+1\right)  -n_{1}}{\left(  d-1\right)  ^{2}}  &  \geq\frac{1}{d-1}\left(
l_{2}\left(  d_{\hat{A}}+1\right)  -l_{1}+m_{2}\left(  d_{\hat{A}}+1\right)
-m_{1}\right)  .
\end{align}

Consider that the first inequality constraint in
\eqref{eq:1st-ineq-actually-redund} is redundant, following from
\eqref{eq:isotropic-all-coeffs-non-neg}. So it can be eliminated.

Summarizing, we have reduced the semi-definite program in
\eqref{eq:SDP-no-extra-symmetry} to the following linear program:%
\begin{equation}
1-\sup_{\substack{k_{1},l_{1},m_{1},n_{1},\\k_{2},l_{2},m_{2},n_{2}\geq
0}}\left[  k_{1}F+k_{2}\left(  1-F\right)  \right]  ,
\label{eq:app-LP-iso-simplify-1}%
\end{equation}
subject to%
\begin{align}
k_{1}+l_{1}+m_{1}+n_{1}  &  =1,\\
k_{2}+l_{2}+m_{2}+n_{2}  &  =1,
\end{align}%
\begin{equation}
k_{2}\left(  d_{\hat{A}}+1\right)  -k_{1}+\frac{l_{2}\left(  d_{\hat{A}%
}+1\right)  -l_{1}+m_{2}\left(  d_{\hat{A}}+1\right)  -m_{1}}{(d+1)}%
+\frac{n_{2}\left(  d_{\hat{A}}+1\right)  -n_{1}}{\left(  d+1\right)  ^{2}%
}\geq0,
\end{equation}%
\begin{align}
\frac{1}{d-1}\left(  l_{1}+l_{2}\left(  d_{\hat{A}}-1\right)  +\frac
{n_{1}+n_{2}\left(  d_{\hat{A}}-1\right)  }{d+1}\right)   &  \geq k_{1}%
+k_{2}\left(  d_{\hat{A}}-1\right)  +\frac{m_{1}+m_{2}\left(  d_{\hat{A}%
}-1\right)  }{d+1},\\
\frac{1}{d-1}\left(  l_{2}\left(  d_{\hat{A}}+1\right)  -l_{1}+\frac
{n_{2}\left(  d_{\hat{A}}+1\right)  -n_{1}}{d+1}\right)   &  \geq k_{2}\left(
d_{\hat{A}}+1\right)  -k_{1}+\frac{m_{2}\left(  d_{\hat{A}}+1\right)  -m_{1}%
}{d+1},
\end{align}%
\begin{align}
\frac{1}{d-1}\left(  m_{1}+m_{2}\left(  d_{\hat{A}}-1\right)  +\frac
{n_{1}+n_{2}\left(  d_{\hat{A}}-1\right)  }{d+1}\right)   &  \geq k_{1}%
+k_{2}\left(  d_{\hat{A}}-1\right)  +\frac{l_{1}+l_{2}\left(  d_{\hat{A}%
}-1\right)  }{d+1},\\
\frac{1}{d-1}\left(  m_{2}\left(  d_{\hat{A}}+1\right)  -m_{1}+\frac
{n_{2}\left(  d_{\hat{A}}+1\right)  -n_{1}}{d+1}\right)   &  \geq k_{2}\left(
d_{\hat{A}}+1\right)  -k_{1}+\frac{l_{2}\left(  d_{\hat{A}}+1\right)  -l_{1}%
}{d+1}.
\end{align}%
\begin{align}
k_{1}+k_{2}\left(  d_{\hat{A}}-1\right)  +\frac{n_{1}+n_{2}\left(  d_{\hat{A}%
}-1\right)  }{\left(  d-1\right)  ^{2}}  &  \geq\frac{1}{d-1}\left(
l_{1}+l_{2}\left(  d_{\hat{A}}-1\right)  +m_{1}+m_{2}\left(  d_{\hat{A}%
}-1\right)  \right)  ,\\
k_{2}\left(  d_{\hat{A}}+1\right)  -k_{1}+\frac{n_{2}\left(  d_{\hat{A}%
}+1\right)  -n_{1}}{\left(  d-1\right)  ^{2}}  &  \geq\frac{1}{d-1}\left(
l_{2}\left(  d_{\hat{A}}+1\right)  -l_{1}+m_{2}\left(  d_{\hat{A}}+1\right)
-m_{1}\right)  . \label{eq:app-LP-iso-simplify-final}%
\end{align}
The standard form of a linear program is as follows \cite{BV04}:%
\begin{equation}
1-\sup_{x\geq0}\left\{  c^{T}x:Ax\leq b\right\}  .
\label{eq:LP-isotropic-primal}%
\end{equation}
We can write
\eqref{eq:app-LP-iso-simplify-1}--\eqref{eq:app-LP-iso-simplify-final} in this
way, with
\begin{align}
x^{T}  &  =%
\begin{bmatrix}
k_{1} & l_{1} & m_{1} & n_{1} & k_{2} & l_{2} & m_{2} & n_{2}%
\end{bmatrix}
,\\
c^{T}  &  =%
\begin{bmatrix}
F & 0 & 0 & 0 & 1-F & 0 & 0 & 0
\end{bmatrix}
,\\
b^{T}  &  =%
\begin{bmatrix}
1 & -1 & 1 & -1 & 0 & 0 & 0 & 0 & 0 & 0 & 0
\end{bmatrix}
,
\end{align}%
\begin{equation}
A=%
\begin{bmatrix}
1 & 1 & 1 & 1 & 0 & 0 & 0 & 0\\
-1 & -1 & -1 & -1 & 0 & 0 & 0 & 0\\
0 & 0 & 0 & 0 & 1 & 1 & 1 & 1\\
0 & 0 & 0 & 0 & -1 & -1 & -1 & -1\\
1 & \frac{1}{d+1} & \frac{1}{d+1} & \frac{1}{\left(  d+1\right)  ^{2}} &
-\left(  d_{\hat{A}}+1\right)  & -\frac{d_{\hat{A}}+1}{d+1} & -\frac
{d_{\hat{A}}+1}{d+1} & -\frac{d_{\hat{A}}+1}{\left(  d+1\right)  ^{2}}\\
1 & -\frac{1}{d-1} & \frac{1}{d+1} & -\frac{1}{d^{2}-1} & d_{\hat{A}}-1 &
-\frac{d_{\hat{A}}-1}{d-1} & \frac{d_{\hat{A}}-1}{d+1} & -\frac{d_{\hat{A}}%
-1}{d^{2}-1}\\
-1 & \frac{1}{d-1} & -\frac{1}{d+1} & \frac{1}{d^{2}-1} & d_{\hat{A}}+1 &
-\frac{d_{\hat{A}}+1}{d-1} & \frac{d_{\hat{A}}+1}{d+1} & -\frac{d_{\hat{A}}%
+1}{d^{2}-1}\\
1 & \frac{1}{d+1} & -\frac{1}{d-1} & -\frac{1}{d^{2}-1} & d_{\hat{A}}-1 &
\frac{d_{\hat{A}}-1}{d+1} & -\frac{d_{\hat{A}}-1}{d-1} & -\frac{d_{\hat{A}}%
-1}{d^{2}-1}\\
-1 & -\frac{1}{d+1} & \frac{1}{d-1} & \frac{1}{d^{2}-1} & d_{\hat{A}}+1 &
\frac{d_{\hat{A}}+1}{d+1} & -\frac{d_{\hat{A}}+1}{d-1} & -\frac{d_{\hat{A}}%
+1}{d^{2}-1}\\
-1 & \frac{1}{d-1} & \frac{1}{d-1} & -\frac{1}{\left(  d-1\right)  ^{2}} &
-\left(  d_{\hat{A}}-1\right)  & \frac{d_{\hat{A}}-1}{d-1} & \frac{d_{\hat{A}%
}-1}{d-1} & -\frac{d_{\hat{A}}-1}{\left(  d-1\right)  ^{2}}\\
1 & -\frac{1}{d-1} & -\frac{1}{d-1} & \frac{1}{\left(  d-1\right)  ^{2}} &
-\left(  d_{\hat{A}}+1\right)  & \frac{d_{\hat{A}}+1}{d-1} & \frac{d_{\hat{A}%
}+1}{d-1} & -\frac{d_{\hat{A}}+1}{\left(  d-1\right)  ^{2}}%
\end{bmatrix}
.
\end{equation}

If $F>\frac{1}{d_{\hat{A}}}$ and $d_{\hat{A}}\leq d^{2}$, then the following
choices are feasible for the primal linear program in
\eqref{eq:LP-isotropic-primal}%
\begin{align}
k_{1}  &  =\frac{d_{\hat{A}}}{d^{2}},\qquad l_{1}=m_{1}=0,\qquad n_{1}%
=1-\frac{d_{\hat{A}}}{d^{2}},\qquad k_{2}=0,\\
l_{2}  &  =m_{2}=\frac{\left(  d+1\right)  ^{2}k_{1}+n_{1}-\left(  d_{\hat{A}%
}+1\right)  }{2d\left(  d_{\hat{A}}+1\right)  }=\frac{d_{\hat{A}}}%
{d^{2}\left(  d_{\hat{A}}+1\right)  },\\
n_{2}  &  =1-\frac{2d_{\hat{A}}}{d^{2}\left(  d_{\hat{A}}+1\right)  }.
\end{align}
Thus, it follows that, in the case that $F>\frac{1}{d_{\hat{A}}}$ and
$d_{\hat{A}}\leq d^{2}$,%
\begin{equation}
e_{\operatorname{PPT}}(\mathcal{S}_{AB}^{d},\rho_{\hat{A}\hat{B}}%
^{(F,d_{\hat{A}})})\leq1-\frac{Fd_{\hat{A}}}{d^{2}}.
\end{equation}

Now we turn to the dual linear program. It is given by%
\begin{equation}
1-\inf_{y\geq0}\left\{  b^{T}y:A^{T}y\geq c\right\}  .
\end{equation}
A feasible choice for the dual linear program is as follows:%
\begin{align}
y_{1}  &  \in\left[  \frac{1}{d_{\hat{A}}d^{2}},\frac{F}{d^{2}}\right]  ,\quad
y_{2}=0,\quad y_{3}=\frac{Fd_{\hat{A}}}{d^{2}}-y_{1},\\
y_{4}  &  =0,\quad y_{5}=\frac{1}{4d^{2}}\left(  d+1\right)  ^{2}\left(
F-d^{2}y_{1}\right)  ,\\
y_{6}  &  =\frac{1}{4d^{2}}\left(  d^{2}-1\right)  \left(  F+d^{2}%
y_{1}\right)  ,\quad y_{7}=0,\\
y_{8}  &  =y_{6},\quad y_{9}=0,\quad y_{10}=0,\quad y_{11}=\frac{1}{4d^{2}%
}\left(  d-1\right)  ^{2}\left(  F-d^{2}y_{1}\right)  .
\end{align}
This implies that%
\begin{equation}
e_{\operatorname{PPT}}(\mathcal{S}_{AB}^{d},\rho_{\hat{A}\hat{B}}%
^{(F,d_{\hat{A}})})\geq1-\frac{Fd_{\hat{A}}}{d^{2}}%
\end{equation}

For the final case of $F>\frac{1}{d_{\hat{A}}}$ and $d_{\hat{A}}>d^{2}$, for
the primal, one can check that the following choices are feasible%
\begin{align}
k_{1}  &  =1,\quad l_{1}=0,\quad m_{1}=0,\quad n_{1}=0,\\
k_{2}  &  =\frac{d_{\hat{A}}-d^{2}}{d^{2}(d_{\hat{A}}-1)},\\
l_{2}  &  \in\left\{
\begin{array}
[c]{cc}%
\left[  \frac{(1+d)(d^{2}+d-\left[  d_{\hat{A}}+1\right]  )d_{\hat{A}}}%
{d^{2}(d_{\hat{A}}^{2}-1)},\frac{(d-1)d_{\hat{A}}(d_{\hat{A}}+1-d^{2}%
+d)}{d^{2}(d_{\hat{A}}^{2}-1)}\right]  & \text{if }d_{\hat{A}}+1<d+d^{2}\\
\left[  0,\frac{(d-1)d_{\hat{A}}(d_{\hat{A}}+1-d^{2}+d)}{d^{2}(-1+d_{\hat{A}%
}^{2})}\right]  & \text{if }d_{\hat{A}}+1\geq d+d^{2}%
\end{array}
\right. \\
m_{2}  &  =l_{2},\\
n_{2}  &  =1-k_{2}-l_{2}-n_{2}.
\end{align}
This implies that%
\begin{align}
e_{\operatorname{PPT}}(\mathcal{S}_{AB}^{d},\rho_{\hat{A}\hat{B}}%
^{(F,d_{\hat{A}})})  &  \leq1-k_{1}F-k_{2}\left(  1-F\right) \\
&  =1-F-\frac{d_{\hat{A}}-d^{2}}{d^{2}(d_{\hat{A}}-1)}\left(  1-F\right) \\
&  =\frac{\left(  1-\frac{1}{d^{2}}\right)  \left(  1-F\right)  }{1-\frac
{1}{d_{\hat{A}}}}.
\end{align}

Now we consider the dual program for the case $F>\frac{1}{d_{\hat{A}}}$ and
$d_{\hat{A}}>d^{2}$. A feasible solution is given by%
\begin{align}
y_{1}  &  =\frac{1-F+d^{2}(d_{\hat{A}}F-1)}{d^{2}(d_{\hat{A}}-1)},\quad
y_{2}=0,\quad y_{3}=\frac{(d_{\hat{A}}-1)(F-y_{1})}{d^{2}-1},\\
y_{4}  &  =0,\quad y_{5}=0,\quad y_{6}=\frac{F-y_{1}}{2},\quad y_{7}=0,\quad
y_{8}=y_{6},\\
y_{9}  &  =y_{10}=y_{11}=0.
\end{align}
Note that%
\begin{equation}
F-y_{1}=\frac{\left(  d^{2}-1\right)  \left(  1-F\right)  }{d^{2}\left(
d_{\hat{A}}-1\right)  }\geq0.
\end{equation}
Thus, we find for the case $F>\frac{1}{d_{\hat{A}}}$ and $d_{\hat{A}}>d^{2}$
that%
\begin{align}
e_{\operatorname{PPT}}(\mathcal{S}_{AB}^{d},\rho_{\hat{A}\hat{B}}%
^{(F,d_{\hat{A}})})  &  \geq1-\left(  y_{1}+y_{3}\right) \\
&  =\frac{\left(  1-\frac{1}{d^{2}}\right)  \left(  1-F\right)  }{1-\frac
{1}{d_{\hat{A}}}}.
\end{align}
This completes the proof of the equality in \eqref{eq:error-ppt-isotropic-app}.

We now consider the achievability of the PPT\ simulation error by means of an
LOCC-assisted scheme, for the case $F>\frac{1}{d_{\hat{A}}}$ and $d_{\hat{A}%
}\leq d^{2}$. The basic idea is the same as that which we employed previously:
perform a bilateral twirl of the resource state $\rho_{\hat{A}\hat{B}%
}^{(F,d_{\hat{A}})}$ and then perform teleportation in opposite directions.
Consider that the bilateral twirl in \eqref{eq:bilat-twirl}\ realizes the
following evolution:%
\begin{equation}
X\rightarrow\operatorname{Tr}[\Phi_{AB}^{d^{2}}X]\Phi_{AB}^{d^{2}%
}+\operatorname{Tr}[(I_{AB}-\Phi_{AB}^{d^{2}})X]\frac{I_{AB}-\Phi_{AB}^{d^{2}%
}}{d^{2}-1}.
\end{equation}
Consider that%
\begin{align}
\operatorname{Tr}[\Phi_{AB}^{d^{2}}\rho_{\hat{A}\hat{B}}^{(F,d_{\hat{A}})}]
&  =F\operatorname{Tr}[\Phi_{AB}^{d^{2}}\Phi_{\hat{A}\hat{B}}]+\left(
1-F\right)  \frac{\operatorname{Tr}[\Phi_{AB}^{d^{2}}(I_{\hat{A}\hat{B}}%
-\Phi_{\hat{A}\hat{B}})]}{d_{\hat{A}}^{2}-1}\\
&  =F\frac{d_{\hat{A}}}{d^{2}}+\left(  1-F\right)  \left(  \frac{\frac
{d_{\hat{A}}}{d^{2}}-\frac{d_{\hat{A}}}{d^{2}}}{d^{2}\left(  d_{\hat{A}}%
^{2}-1\right)  }\right) \\
&  =F\frac{d_{\hat{A}}}{d^{2}},
\end{align}
where we are embedding the space $\hat{A}\hat{B}$ in the larger space $AB$ and
we used the facts that%
\begin{align}
\operatorname{Tr}[\Phi_{AB}^{d^{2}}\Phi_{\hat{A}\hat{B}}]  &  =\frac
{d_{\hat{A}}}{d^{2}},\\
\operatorname{Tr}[\Phi_{AB}^{d^{2}}I_{\hat{A}\hat{B}}]  &  =\frac{d_{\hat{A}}%
}{d^{2}}.
\end{align}
Thus,%
\begin{align}
\rho_{\hat{A}\hat{B}}^{(F,d_{\hat{A}})}  &  \rightarrow\operatorname{Tr}%
[\Phi_{AB}^{d^{2}}\rho_{\hat{A}\hat{B}}^{(F,d_{\hat{A}})}]\Phi_{AB}^{d^{2}%
}+\operatorname{Tr}[(I_{AB}-\Phi_{AB}^{d^{2}})\rho_{\hat{A}\hat{B}%
}^{(F,d_{\hat{A}})}]\frac{I_{AB}-\Phi_{AB}^{d^{2}}}{d^{2}-1}\\
&  =F\frac{d_{\hat{A}}}{d^{2}}\Phi_{AB}^{d^{2}}+\left(  1-F\frac{d_{\hat{A}}%
}{d^{2}}\right)  \frac{I_{AB}-\Phi_{AB}^{d^{2}}}{d^{2}-1}.
\end{align}
Following a similar analysis as given in
\eqref{eq:err-eval-no-res-1}--\eqref{eq:err-eval-no-res-last}, we bound the
normalized trace distance between the ideal state $\Phi_{AB}^{d^{2}}$ and this
one as follows:%
\begin{align}
&  \frac{1}{2}\left\Vert \Phi_{AB}^{d^{2}}-\left(  F\frac{d_{\hat{A}}}{d^{2}%
}\Phi_{AB}^{d^{2}}+\left(  1-F\frac{d_{\hat{A}}}{d^{2}}\right)  \frac
{I_{AB}-\Phi_{AB}^{d^{2}}}{d^{2}-1}\right)  \right\Vert _{1}\nonumber\\
&  =\frac{1}{2}\left\Vert \left(  1-F\frac{d_{\hat{A}}}{d^{2}}\right)
\Phi^{d^{2}}-\left(  1-F\frac{d_{\hat{A}}}{d^{2}}\right)  \frac{I_{AB}%
-\Phi^{d^{2}}}{d^{2}-1}\right\Vert _{1}\\
&  =1-F\frac{d_{\hat{A}}}{d^{2}}.
\end{align}
This implies that%
\begin{equation}
e_{\operatorname{LOCC}}(\mathcal{S}_{AB}^{d},\rho_{\hat{A}\hat{B}}%
^{(F,d_{\hat{A}})})\leq1-F\frac{d_{\hat{A}}}{d^{2}},
\end{equation}
and combined with the general inequality in \eqref{eq:ppt-err-locc-err}, we
conclude that%
\begin{equation}
e_{\operatorname{LOCC}}(\mathcal{S}_{AB}^{d},\rho_{\hat{A}\hat{B}}%
^{(F,d_{\hat{A}})})=1-F\frac{d_{\hat{A}}}{d^{2}},
\end{equation}
for the case $F>\frac{1}{d_{\hat{A}}}$ and $d_{\hat{A}}\leq d^{2}$.

For the case $F>\frac{1}{d_{\hat{A}}}$ and $d_{\hat{A}}>d^{2}$, it is unclear
to us at the moment how to achieve the PPT\ simulation error by means of an
LOCC-assisted scheme.

\section{Proof of Proposition~\ref{prop:werner-sim-perf}%
\label{app:werner-state-proof}}

In this appendix, we establish how the semi-definite program in
\eqref{eq:SDP-no-extra-symmetry} simplifies for Werner states. The analysis
bears some structural similarities to that in
Appendix~\ref{app:isotropic-proof}. Recall from
\eqref{eq:Werner-state-def}\ that a Werner state has the following form:%
\begin{equation}
W_{\hat{A}\hat{B}}^{(p,d_{\hat{A}})}\coloneqq \left(  1-p\right)  \frac
{2}{d_{\hat{A}}\left(  d_{\hat{A}}+1\right)  }\Pi_{\hat{A}\hat{B}%
}^{\mathcal{S}}+p\frac{2}{d_{\hat{A}}\left(  d_{\hat{A}}-1\right)  }\Pi
_{\hat{A}\hat{B}}^{\mathcal{A}},
\end{equation}
where $p\in\left[  0,1\right]  $ and $\Pi_{\hat{A}\hat{B}}^{\mathcal{S}%
}\coloneqq (I_{\hat{A}\hat{B}}+F_{\hat{A}\hat{B}})/2$ and $\Pi_{\hat{A}\hat
{B}}^{\mathcal{A}}\coloneqq (I_{\hat{A}\hat{B}}-F_{\hat{A}\hat{B}})/2$.

We prove that%
\begin{equation}
e_{\operatorname{PPT}}(\mathcal{S}_{AB},W_{\hat{A}\hat{B}}^{(p,d_{\hat{A}}%
)})=\left\{
\begin{array}
[c]{cc}%
1-\frac{1}{d^{2}} & \text{if }p\leq\frac{1}{2}\\
1-\frac{4p-2+d_{\hat{A}}}{d^{2}d_{\hat{A}}} & \text{if }p>\frac{1}{2}%
\end{array}
\right.  .
\end{equation}
If $p\leq\frac{1}{2}$, the state is separable
\cite{Horodecki2009a,Watrous2018}. So the resource state can be prepared by
LOCC, and then our previous result from Proposition~\ref{prop:no-res-sim-err}%
\ applies, so that we can conclude that the simulation error is%
\begin{equation}
1-\frac{1}{d^{2}}%
\end{equation}
in this case. So in what follows, we focus exclusively on the case when
$p>\frac{1}{2}$.

By applying a similar argument as given above
\eqref{eq:isotropic-reduction-meas} but using
\eqref{eq:werner-twirl-1}--\eqref{eq:werner-twirl-2}\ instead, it suffices for
each of $K_{\hat{A}\hat{B}}$, $L_{\hat{A}\hat{B}}$, $M_{\hat{A}\hat{B}}$, and
$N_{\hat{A}\hat{B}}$ to have the Werner form:%
\begin{align}
K_{\hat{A}\hat{B}}  &  =k_{1}\Pi_{\hat{A}\hat{B}}^{\mathcal{S}}+k_{2}\Pi
_{\hat{A}\hat{B}}^{\mathcal{A}},\\
L_{\hat{A}\hat{B}}  &  =l_{1}\Pi_{\hat{A}\hat{B}}^{\mathcal{S}}+l_{2}\Pi
_{\hat{A}\hat{B}}^{\mathcal{A}},\\
M_{\hat{A}\hat{B}}  &  =m_{1}\Pi_{\hat{A}\hat{B}}^{\mathcal{S}}+m_{2}\Pi
_{\hat{A}\hat{B}}^{\mathcal{A}},\\
N_{\hat{A}\hat{B}}  &  =n_{1}\Pi_{\hat{A}\hat{B}}^{\mathcal{S}}+n_{2}\Pi
_{\hat{A}\hat{B}}^{\mathcal{A}}.
\end{align}
The objective function evaluates to%
\begin{equation}
\operatorname{Tr}[K_{\hat{A}\hat{B}}W_{\hat{A}\hat{B}}^{(p,d_{\hat{A}}%
)}]=k_{1}\left(  1-p\right)  +k_{2}p.
\end{equation}
Consider that the equality constraint%
\begin{equation}
K_{\hat{A}\hat{B}}+L_{\hat{A}\hat{B}}+M_{\hat{A}\hat{B}}+N_{\hat{A}\hat{B}%
}=I_{\hat{A}\hat{B}}%
\end{equation}
implies that%
\begin{align}
k_{1}+l_{1}+m_{1}+n_{1}  &  =1,\\
k_{2}+l_{2}+m_{2}+n_{2}  &  =1.
\end{align}
We also have that all coefficients are non-negative:%
\begin{equation}
k_{1},l_{1},m_{1},n_{1},k_{2},l_{2},m_{2},n_{2}\geq0.
\label{eq:werner-vars-non-neg}%
\end{equation}
All of the inequality constraints involve the terms $T_{\hat{B}}(K_{\hat
{A}\hat{B}})$, $T_{\hat{B}}(L_{\hat{A}\hat{B}})$, $T_{\hat{B}}(M_{\hat{A}%
\hat{B}})$, and $T_{\hat{B}}(N_{\hat{A}\hat{B}})$. Let us evaluate the first
of these and the rest follow similarly:%
\begin{align}
T_{\hat{B}}(K_{\hat{A}\hat{B}})  &  =T_{\hat{B}}(k_{1}\Pi_{\hat{A}\hat{B}%
}^{\mathcal{S}}+k_{2}\Pi_{\hat{A}\hat{B}}^{\mathcal{A}})\\
&  =k_{1}T_{\hat{B}}(\Pi_{\hat{A}\hat{B}}^{\mathcal{S}})+k_{2}T_{\hat{B}}%
(\Pi_{\hat{A}\hat{B}}^{\mathcal{A}})\\
&  =k_{1}T_{\hat{B}}\left(  \frac{I_{\hat{A}\hat{B}}+F_{\hat{A}\hat{B}}}%
{2}\right)  +k_{2}T_{\hat{B}}\left(  \frac{I_{\hat{A}\hat{B}}-F_{\hat{A}%
\hat{B}}}{2}\right) \\
&  =\frac{k_{1}}{2}I_{\hat{A}\hat{B}}+\frac{k_{1}d_{\hat{A}}}{2}\Phi_{\hat
{A}\hat{B}}+\frac{k_{2}}{2}I_{\hat{A}\hat{B}}-\frac{k_{2}d_{\hat{A}}}{2}%
\Phi_{\hat{A}\hat{B}}\\
&  =\frac{d_{\hat{A}}}{2}\left(  k_{1}-k_{2}\right)  \Phi_{\hat{A}\hat{B}%
}+\frac{1}{2}\left(  k_{1}+k_{2}\right)  I_{\hat{A}\hat{B}}\\
&  =\frac{d_{\hat{A}}}{2}\left(  k_{1}-k_{2}\right)  \Phi_{\hat{A}\hat{B}%
}+\frac{1}{2}\left(  k_{1}+k_{2}\right)  \left(  I_{\hat{A}\hat{B}}-\Phi
_{\hat{A}\hat{B}}+\Phi_{\hat{A}\hat{B}}\right) \\
&  =\left[  \frac{d_{\hat{A}}}{2}\left(  k_{1}-k_{2}\right)  +\frac{1}%
{2}\left(  k_{1}+k_{2}\right)  \right]  \Phi_{\hat{A}\hat{B}}+\frac{1}%
{2}\left(  k_{1}+k_{2}\right)  \left(  I_{\hat{A}\hat{B}}-\Phi_{\hat{A}\hat
{B}}\right) \\
&  =\left[  \frac{d_{\hat{A}}+1}{2}k_{1}-\frac{d_{\hat{A}}-1}{2}k_{2}\right]
\Phi_{\hat{A}\hat{B}}+\frac{1}{2}\left(  k_{1}+k_{2}\right)  \left(
I_{\hat{A}\hat{B}}-\Phi_{\hat{A}\hat{B}}\right)  .
\end{align}
For the other operators $L_{\hat{A}\hat{B}}$, $M_{\hat{A}\hat{B}}$, and
$N_{\hat{A}\hat{B}}$, we have that%
\begin{align}
T_{\hat{B}}(L_{\hat{A}\hat{B}})  &  =\left[  \frac{d_{\hat{A}}+1}{2}%
l_{1}-\frac{d_{\hat{A}}-1}{2}l_{2}\right]  \Phi_{\hat{A}\hat{B}}+\frac{1}%
{2}\left(  l_{1}+l_{2}\right)  \left(  I_{\hat{A}\hat{B}}-\Phi_{\hat{A}\hat
{B}}\right)  ,\\
T_{\hat{B}}(M_{\hat{A}\hat{B}})  &  =\left[  \frac{d_{\hat{A}}+1}{2}%
m_{1}-\frac{d_{\hat{A}}-1}{2}m_{2}\right]  \Phi_{\hat{A}\hat{B}}+\frac{1}%
{2}\left(  m_{1}+m_{2}\right)  \left(  I_{\hat{A}\hat{B}}-\Phi_{\hat{A}\hat
{B}}\right)  ,\\
T_{\hat{B}}(N_{\hat{A}\hat{B}})  &  =\left[  \frac{d_{\hat{A}}+1}{2}%
n_{1}-\frac{d_{\hat{A}}-1}{2}n_{2}\right]  \Phi_{\hat{A}\hat{B}}+\frac{1}%
{2}\left(  n_{1}+n_{2}\right)  \left(  I_{\hat{A}\hat{B}}-\Phi_{\hat{A}\hat
{B}}\right)  .
\end{align}
Now we evaluate the inequalities from \eqref{eq:SDP-no-extra-symmetry}. We
begin with the following one%
\[
T_{\hat{B}}\!\left(  K_{\hat{A}\hat{B}}+\frac{L_{\hat{A}\hat{B}}}{d+1}%
+\frac{M_{\hat{A}\hat{B}}}{d+1}+\frac{N_{\hat{A}\hat{B}}}{\left(  d+1\right)
^{2}}\right)  \geq0,
\]
and observe that it reduces to the following two inequalities:%
\begin{multline}
\frac{d_{\hat{A}}+1}{2}k_{1}-\frac{d_{\hat{A}}-1}{2}k_{2}+\frac{1}{\left(
d+1\right)  ^{2}}\left(  \frac{d_{\hat{A}}+1}{2}n_{1}-\frac{d_{\hat{A}}-1}%
{2}n_{2}\right) \\
+\frac{1}{d+1}\left(  \frac{d_{\hat{A}}+1}{2}l_{1}-\frac{d_{\hat{A}}-1}%
{2}l_{2}+\frac{d_{\hat{A}}+1}{2}m_{1}-\frac{d_{\hat{A}}-1}{2}m_{2}\right)
\geq0
\end{multline}%
\begin{equation}
\frac{1}{2}\left(  k_{1}+k_{2}+\frac{l_{1}+l_{2}+m_{1}+m_{2}}{d+1}+\frac
{n_{1}+n_{2}}{\left(  d+1\right)  ^{2}}\right)  \geq0.
\end{equation}
The first one simplifies as follows:%
\begin{equation}
\left(  d_{\hat{A}}+1\right)  \left(  k_{1}+\frac{l_{1}+m_{1}}{d+1}%
+\frac{n_{1}}{\left(  d+1\right)  ^{2}}\right)  \geq\left(  d_{\hat{A}%
}-1\right)  \left(  k_{2}+\frac{l_{2}+m_{2}}{d+1}+\frac{n_{2}}{\left(
d+1\right)  ^{2}}\right)  .
\end{equation}
The second inequality is redundant, as a consequence of
\eqref{eq:werner-vars-non-neg}. The following inequality%
\[
\frac{1}{d-1}T_{\hat{B}}\!\left(  L_{\hat{A}\hat{B}}+\frac{N_{\hat{A}\hat{B}}%
}{d+1}\right)  \geq T_{\hat{B}}\!\left(  K_{\hat{A}\hat{B}}+\frac{M_{\hat
{A}\hat{B}}}{d+1}\right)  ,
\]
reduces to the following two inequalities:%
\begin{equation}
\frac{1}{d-1}\left(  \frac{d_{\hat{A}}+1}{2}l_{1}-\frac{d_{\hat{A}}-1}{2}%
l_{2}+\frac{\frac{d_{\hat{A}}+1}{2}n_{1}-\frac{d_{\hat{A}}-1}{2}n_{2}}%
{d+1}\right)  \geq\frac{d_{\hat{A}}+1}{2}k_{1}-\frac{d_{\hat{A}}-1}{2}%
k_{2}+\frac{\frac{d_{\hat{A}}+1}{2}m_{1}-\frac{d_{\hat{A}}-1}{2}m_{2}}{d+1}%
\end{equation}%
\begin{equation}
\frac{1}{2\left(  d-1\right)  }\left(  l_{1}+l_{2}+\frac{n_{1}+n_{2}}%
{d+1}\right)  \geq\frac{1}{2}\left(  k_{1}+k_{2}+\frac{m_{1}+m_{2}}%
{d+1}\right)  .
\end{equation}
The first inequality simplifies as follows:%
\begin{equation}
\left(  \frac{d_{\hat{A}}+1}{d-1}\right)  \left(  l_{1}+\frac{n_{1}}%
{d+1}\right)  +\left(  d_{\hat{A}}-1\right)  \left(  k_{2}+\frac{m_{2}}%
{d+1}\right)  \geq\left(  d_{\hat{A}}+1\right)  \left(  k_{1}+\frac{m_{1}%
}{d+1}\right)  +\left(  \frac{d_{\hat{A}}-1}{d-1}\right)  \left(  l_{2}%
+\frac{n_{2}}{d+1}\right)
\end{equation}
and the second one as%
\begin{equation}
\left(  \frac{1}{d-1}\right)  \left(  l_{1}+l_{2}+\frac{n_{1}+n_{2}}%
{d+1}\right)  \geq k_{1}+k_{2}+\frac{m_{1}+m_{2}}{d+1}%
\end{equation}
The following inequality%
\begin{equation}
\frac{1}{d-1}T_{\hat{B}}\!\left(  M_{\hat{A}\hat{B}}+\frac{N_{\hat{A}\hat{B}}%
}{d+1}\right)  \geq T_{\hat{B}}\!\left(  K_{\hat{A}\hat{B}}+\frac{L_{\hat
{A}\hat{B}}}{d+1}\right)
\end{equation}
reduces to the following two inequalities:%
\begin{equation}
\frac{1}{d-1}\left(  \frac{d_{\hat{A}}+1}{2}m_{1}-\frac{d_{\hat{A}}-1}{2}%
m_{2}+\frac{\frac{d_{\hat{A}}+1}{2}n_{1}-\frac{d_{\hat{A}}-1}{2}n_{2}}%
{d+1}\right)  \geq\frac{d_{\hat{A}}+1}{2}k_{1}-\frac{d_{\hat{A}}-1}{2}%
k_{2}+\frac{\frac{d_{\hat{A}}+1}{2}l_{1}-\frac{d_{\hat{A}}-1}{2}l_{2}}{d+1}%
\end{equation}%
\begin{equation}
\frac{1}{2\left(  d-1\right)  }\left(  m_{1}+m_{2}+\frac{n_{1}+n_{2}}%
{d+1}\right)  \geq\frac{1}{2}\left(  k_{1}+k_{2}+\frac{l_{1}+l_{2}}%
{d+1}\right)
\end{equation}
The first simplifies as follows:%
\begin{equation}
\frac{d_{\hat{A}}+1}{d-1}\left(  m_{1}+\frac{n_{1}}{d+1}\right)  +\left(
d_{\hat{A}}-1\right)  \left(  k_{2}+\frac{l_{2}}{d+1}\right)  \geq
\frac{d_{\hat{A}}-1}{d-1}\left(  m_{2}+\frac{n_{2}}{d+1}\right)  +\left(
d_{\hat{A}}+1\right)  \left(  k_{1}+\frac{l_{1}}{d+1}\right)
\end{equation}
and the second as%
\begin{equation}
\frac{1}{d-1}\left(  m_{1}+m_{2}+\frac{n_{1}+n_{2}}{d+1}\right)  \geq
k_{1}+k_{2}+\frac{l_{1}+l_{2}}{d+1}.
\end{equation}
The final inequality%
\begin{equation}
T_{\hat{B}}\!\left(  K_{\hat{A}\hat{B}}+\frac{N_{\hat{A}\hat{B}}}{\left(
d-1\right)  ^{2}}\right)  \geq\frac{1}{d-1}T_{\hat{B}}\!\left(  L_{\hat{A}%
\hat{B}}+M_{\hat{A}\hat{B}}\right)
\end{equation}
reduces to the following two inequalities:%
\begin{equation}
\frac{d_{\hat{A}}+1}{2}k_{1}-\frac{d_{\hat{A}}-1}{2}k_{2}+\frac{\frac
{d_{\hat{A}}+1}{2}n_{1}-\frac{d_{\hat{A}}-1}{2}n_{2}}{\left(  d-1\right)
^{2}}\geq\frac{1}{d-1}\left(  \frac{d_{\hat{A}}+1}{2}l_{1}-\frac{d_{\hat{A}%
}-1}{2}l_{2}+\frac{d_{\hat{A}}+1}{2}m_{1}-\frac{d_{\hat{A}}-1}{2}m_{2}\right)
\end{equation}%
\begin{equation}
\frac{1}{2}\left(  k_{1}+k_{2}+\frac{n_{1}+n_{2}}{\left(  d-1\right)  ^{2}%
}\right)  \geq\frac{1}{2\left(  d-1\right)  }\left(  l_{1}+l_{2}+m_{1}%
+m_{2}\right)  .
\end{equation}
The first inequality simplifies as follows:%
\begin{equation}
\left(  d_{\hat{A}}+1\right)  \left(  k_{1}+\frac{n_{1}}{\left(  d-1\right)
^{2}}\right)  +\frac{d_{\hat{A}}-1}{d-1}\left(  l_{2}+m_{2}\right)
\geq\left(  d_{\hat{A}}-1\right)  \left(  k_{2}+\frac{n_{2}}{\left(
d-1\right)  ^{2}}\right)  +\frac{d_{\hat{A}}+1}{d-1}\left(  l_{1}%
+m_{1}\right)
\end{equation}
and the second as%
\begin{equation}
k_{1}+k_{2}+\frac{n_{1}+n_{2}}{\left(  d-1\right)  ^{2}}\geq\frac{1}%
{d-1}\left(  l_{1}+l_{2}+m_{1}+m_{2}\right)  .
\end{equation}
Summarizing, we have reduced the optimization problem to the following linear
program:%
\begin{equation}
1-\sup_{\substack{k_{1},l_{1},m_{1},n_{1},\\k_{2},l_{2},m_{2},n_{2}\geq
0}}\left[  k_{1}\left(  1-p\right)  +k_{2}p\right]
\end{equation}
subject to%
\begin{align}
k_{1}+l_{1}+m_{1}+n_{1}  &  =1,\\
k_{2}+l_{2}+m_{2}+n_{2}  &  =1,
\end{align}%
\begin{align}
\left(  d_{\hat{A}}+1\right)  \left(  k_{1}+\frac{l_{1}+m_{1}}{d+1}%
+\frac{n_{1}}{\left(  d+1\right)  ^{2}}\right)   &  \geq\left(  d_{\hat{A}%
}-1\right)  \left(  k_{2}+\frac{l_{2}+m_{2}}{d+1}+\frac{n_{2}}{\left(
d+1\right)  ^{2}}\right)  ,\\
\left(  \frac{d_{\hat{A}}+1}{d-1}\right)  \left(  l_{1}+\frac{n_{1}}%
{d+1}\right)  +\left(  d_{\hat{A}}-1\right)  \left(  k_{2}+\frac{m_{2}}%
{d+1}\right)   &  \geq\left(  d_{\hat{A}}+1\right)  \left(  k_{1}+\frac{m_{1}%
}{d+1}\right)  +\left(  \frac{d_{\hat{A}}-1}{d-1}\right)  \left(  l_{2}%
+\frac{n_{2}}{d+1}\right)  ,\\
\left(  \frac{1}{d-1}\right)  \left(  l_{1}+l_{2}+\frac{n_{1}+n_{2}}%
{d+1}\right)   &  \geq k_{1}+k_{2}+\frac{m_{1}+m_{2}}{d+1},\\
\frac{d_{\hat{A}}+1}{d-1}\left(  m_{1}+\frac{n_{1}}{d+1}\right)  +\left(
d_{\hat{A}}-1\right)  \left(  k_{2}+\frac{l_{2}}{d+1}\right)   &  \geq
\frac{d_{\hat{A}}-1}{d-1}\left(  m_{2}+\frac{n_{2}}{d+1}\right)  +\left(
d_{\hat{A}}+1\right)  \left(  k_{1}+\frac{l_{1}}{d+1}\right)  ,\\
\frac{1}{d-1}\left(  m_{1}+m_{2}+\frac{n_{1}+n_{2}}{d+1}\right)   &  \geq
k_{1}+k_{2}+\frac{l_{1}+l_{2}}{d+1},\\
\left(  d_{\hat{A}}+1\right)  \left(  k_{1}+\frac{n_{1}}{\left(  d-1\right)
^{2}}\right)  +\frac{d_{\hat{A}}-1}{d-1}\left(  l_{2}+m_{2}\right)   &
\geq\left(  d_{\hat{A}}-1\right)  \left(  k_{2}+\frac{n_{2}}{\left(
d-1\right)  ^{2}}\right)  +\frac{d_{\hat{A}}+1}{d-1}\left(  l_{1}%
+m_{1}\right)  ,\\
k_{1}+k_{2}+\frac{n_{1}+n_{2}}{\left(  d-1\right)  ^{2}}  &  \geq\frac{1}%
{d-1}\left(  l_{1}+l_{2}+m_{1}+m_{2}\right)  .
\end{align}
We can write this in the standard form of a linear program as follows:%
\begin{equation}
1-\sup_{x\geq0}\left\{  c^{T}x:Ax\leq b\right\}  ,
\end{equation}
where%
\begin{align}
x^{T}  &  =%
\begin{bmatrix}
k_{1} & l_{1} & m_{1} & n_{1} & k_{2} & l_{2} & m_{2} & n_{2}%
\end{bmatrix}
,\\
c^{T}  &  =%
\begin{bmatrix}
1-p & 0 & 0 & 0 & p & 0 & 0 & 0
\end{bmatrix}
,\\
b^{T}  &  =%
\begin{bmatrix}
1 & -1 & 1 & -1 & 0 & 0 & 0 & 0 & 0 & 0 & 0
\end{bmatrix}
,
\end{align}%
\begin{equation}
A=%
\begin{bmatrix}
1 & 1 & 1 & 1 & 0 & 0 & 0 & 0\\
-1 & -1 & -1 & -1 & 0 & 0 & 0 & 0\\
0 & 0 & 0 & 0 & 1 & 1 & 1 & 1\\
0 & 0 & 0 & 0 & -1 & -1 & -1 & -1\\
-\left(  d_{\hat{A}}+1\right)  & -\frac{d_{\hat{A}}+1}{d+1} & -\frac
{d_{\hat{A}}+1}{d+1} & -\frac{d_{\hat{A}}+1}{\left(  d+1\right)  ^{2}} &
\left(  d_{\hat{A}}-1\right)  & \frac{d_{\hat{A}}-1}{d+1} & \frac{d_{\hat{A}%
}-1}{d+1} & \frac{d_{\hat{A}}-1}{\left(  d+1\right)  ^{2}}\\
\left(  d_{\hat{A}}+1\right)  & -\frac{d_{\hat{A}}+1}{d-1} & \frac{d_{\hat{A}%
}+1}{d+1} & -\frac{d_{\hat{A}}+1}{d^{2}-1} & -\left(  d_{\hat{A}}-1\right)  &
\frac{d_{\hat{A}}-1}{d-1} & -\frac{d_{\hat{A}}-1}{d+1} & \frac{d_{\hat{A}}%
+1}{d^{2}-1}\\
1 & -\frac{1}{d-1} & \frac{1}{d+1} & -\frac{1}{d^{2}-1} & 1 & -\frac{1}{d-1} &
\frac{1}{d+1} & -\frac{1}{d^{2}-1}\\
\left(  d_{\hat{A}}+1\right)  & \frac{d_{\hat{A}}+1}{d+1} & -\frac{d_{\hat{A}%
}+1}{d-1} & -\frac{d_{\hat{A}}+1}{d^{2}-1} & -\left(  d_{\hat{A}}-1\right)  &
-\frac{d_{\hat{A}}-1}{d+1} & \frac{d_{\hat{A}}-1}{d-1} & \frac{d_{\hat{A}}%
-1}{d^{2}-1}\\
1 & \frac{1}{d+1} & -\frac{1}{d-1} & -\frac{1}{d^{2}-1} & 1 & \frac{1}{d+1} &
-\frac{1}{d-1} & -\frac{1}{d^{2}-1}\\
-\left(  d_{\hat{A}}+1\right)  & \frac{d_{\hat{A}}+1}{d-1} & \frac{d_{\hat{A}%
}+1}{d-1} & -\frac{d_{\hat{A}}+1}{\left(  d-1\right)  ^{2}} & \left(
d_{\hat{A}}-1\right)  & -\frac{d_{\hat{A}}-1}{d-1} & -\frac{d_{\hat{A}}%
-1}{d-1} & \frac{d_{\hat{A}}-1}{\left(  d-1\right)  ^{2}}\\
-1 & \frac{1}{d-1} & \frac{1}{d-1} & -\frac{1}{\left(  d-1\right)  ^{2}} &
-1 & \frac{1}{d-1} & \frac{1}{d-1} & -\frac{1}{\left(  d-1\right)  ^{2}}%
\end{bmatrix}
.
\end{equation}
Let us refer to this as the primal linear program. A feasible solution for the
primal linear program is as follows:%
\begin{align}
k_{1}  &  =\frac{d_{\hat{A}}-2}{d^{2}d_{\hat{A}}},\\
l_{1}  &  \in\left[  \frac{2}{d^{2}(d_{\hat{A}}+1)},\frac{2+d(d_{\hat{A}%
}-1)-d_{\hat{A}}}{d^{2}d_{\hat{A}}}\right]  ,\\
m_{1}  &  =l_{1},\quad n_{1}=1-k_{1}-l_{1}-m_{1},\\
k_{2}  &  =\frac{d_{\hat{A}}+2}{d^{2}d_{\hat{A}}},\\
l_{2}  &  =\frac{d^{2}(d_{\hat{A}}+1)l_{1}-2}{d^{2}(d_{\hat{A}}-1)},\quad
m_{2}=l_{2},\quad n_{2}=1-k_{2}-m_{2}-n_{2}.
\end{align}
Then we conclude that%
\begin{align}
e_{\operatorname{PPT}}(\mathcal{S}_{AB},W_{\hat{A}\hat{B}}^{(p,d_{\hat{A}})})
&  \leq1-\left[  \left(  1-p\right)  k_{1}+pk_{2}\right] \\
&  =1-\frac{d_{\hat{A}}-2+4p}{d^{2}d_{\hat{A}}}.
\end{align}

The dual linear program is given by%
\begin{equation}
1-\inf_{y\geq0}\left\{  b^{T}y:A^{T}y\geq c\right\}  .
\end{equation}
A feasible choice for the dual variables is%
\begin{align}
y_{1}  &  =\frac{(d_{\hat{A}}+2)p-1}{d^{2}d_{\hat{A}}},\qquad y_{2}=0,\\
y_{3}  &  =\frac{d_{\hat{A}}\left(  1-p\right)  +2p-1}{d^{2}d_{\hat{A}}%
},\qquad y_{4}=0,\\
y_{5}  &  =\frac{(d+1)^{2}(2p-1)}{4d^{2}d_{\hat{A}}},\qquad y_{6}=0,\\
y_{7}  &  =\frac{(d^{2}-1)(2p-1+d_{\hat{A}})}{4d^{2}d_{\hat{A}}},\\
y_{8}  &  =0,\qquad y_{9}=y_{7},\\
y_{10}  &  =\frac{(d-1)^{2}(2p-1)}{4d^{2}d_{\hat{A}}},\qquad y_{11}=0.
\end{align}
Then we find that%
\begin{align}
e_{\operatorname{PPT}}(\mathcal{S}_{AB},W_{\hat{A}\hat{B}}^{(p,d_{\hat{A}})})
&  \geq1-\left(  y_{1}+y_{3}\right) \\
&  =1-\frac{4p-2+d_{\hat{A}}}{d^{2}d_{\hat{A}}},
\end{align}
and finally conclude that%
\begin{equation}
e_{\operatorname{PPT}}(\mathcal{S}_{AB},W_{\hat{A}\hat{B}}^{(p,d_{\hat{A}}%
)})=1-\frac{4p-2+d_{\hat{A}}}{d^{2}d_{\hat{A}}}.
\end{equation}

\section{Proof of Proposition~\ref{prop:swap-sdp-simplify-BCQT}}

\label{app:simplified-SDP-BCQT}

A proof for Proposition~\ref{prop:swap-sdp-simplify-BCQT}\ proceeds similarly
to the proof for Proposition~\ref{prop:swap-sdp-simplify}, as given in
Appendix~\ref{app:SDP-simplify-swap}. The main goal is to simplify the
semi-definite program from Proposition~\ref{prop:gen-MP-SDP}.

Let $\mathcal{P}_{AB\hat{A}\hat{B}\hat{C}\rightarrow A^{\prime}B^{\prime}}$ be
an arbitrary multipartite C-PPT-P\ channel, to be considered for bidirectional
controlled teleportation. Its Choi operator $P_{AB\hat{A}\hat{B}\hat
{C}A^{\prime}B^{\prime}}$\ obeys the following conditions:%
\begin{align}
P_{AB\hat{A}\hat{B}\hat{C}A^{\prime}B^{\prime}}  &  \geq
0,\label{eq:MP-swap-simp-CP}\\
\operatorname{Tr}_{A^{\prime}B^{\prime}}[P_{AB\hat{A}\hat{B}\hat{C}A^{\prime
}B^{\prime}}]  &  =I_{AB\hat{A}\hat{B}\hat{C}},\label{eq:MP-swap-simp-TP}\\
T_{A\hat{A}A^{\prime}}(P_{AB\hat{A}\hat{B}\hat{C}A^{\prime}B^{\prime}})  &
\geq0,\label{eq:MP-swap-simp-CPPTP-1}\\
T_{B\hat{B}B^{\prime}}(P_{AB\hat{A}\hat{B}\hat{C}A^{\prime}B^{\prime}})  &
\geq0,\label{eq:MP-swap-simp-CPPTP-2}\\
T_{\hat{C}}(P_{AB\hat{A}\hat{B}\hat{C}A^{\prime}B^{\prime}})  &  \geq0.
\label{eq:MP-swap-simp-CPPTP-3}%
\end{align}
The simulation error for a particular channel $\mathcal{P}_{AB\hat{A}\hat
{B}\hat{C}\rightarrow A^{\prime}B^{\prime}}$\ is as follows:%
\begin{equation}
\frac{1}{2}\left\Vert \mathcal{S}_{AB\rightarrow A^{\prime}B^{\prime}}%
^{d}-\mathcal{P}_{AB\hat{A}\hat{B}\hat{C}\rightarrow A^{\prime}B^{\prime}%
}\circ\mathcal{A}_{\hat{A}\hat{B}\hat{C}}^{\rho}\right\Vert _{\diamond}.
\end{equation}
By following the same reasoning in
\eqref{eq:swap-symmetry-simplify-1}--\eqref{eq:P-choi-simplified}, it suffices
to optimize over symmetrized channels $\widetilde{\mathcal{P}}_{AB\hat{A}%
\hat{B}\hat{C}\rightarrow A^{\prime}B^{\prime}}$ with Choi operator given by%
\begin{multline}
\widetilde{P}_{AB\hat{A}\hat{B}\hat{C}A^{\prime}B^{\prime}}=\Gamma
_{AB^{\prime}}\otimes\Gamma_{BA^{\prime}}\otimes K_{\hat{A}\hat{B}\hat{C}%
}+\Gamma_{AB^{\prime}}\otimes\frac{dI_{BA^{\prime}}-\Gamma_{BA^{\prime}}%
}{d^{2}-1}\otimes L_{\hat{A}\hat{B}\hat{C}}\\
+\frac{dI_{AB^{\prime}}-\Gamma_{AB^{\prime}}}{d^{2}-1}\otimes\Gamma
_{BA^{\prime}}\otimes M_{\hat{A}\hat{B}\hat{C}}+\frac{dI_{AB^{\prime}}%
-\Gamma_{AB^{\prime}}}{d^{2}-1}\otimes\frac{dI_{BA^{\prime}}-\Gamma
_{BA^{\prime}}}{d^{2}-1}\otimes N_{\hat{A}\hat{B}\hat{C}},
\end{multline}
where%
\begin{align}
K_{\hat{A}\hat{B}\hat{C}},L_{\hat{A}\hat{B}\hat{C}},M_{\hat{A}\hat{B}\hat{C}%
},N_{\hat{A}\hat{B}\hat{C}}  &  \geq0,\\
K_{\hat{A}\hat{B}\hat{C}}+L_{\hat{A}\hat{B}\hat{C}}+M_{\hat{A}\hat{B}\hat{C}%
}+N_{\hat{A}\hat{B}\hat{C}}  &  =I_{\hat{A}\hat{B}\hat{C}}.
\end{align}
This Choi operator satisfies the conditions in \eqref{eq:MP-swap-simp-CP} and
\eqref{eq:MP-swap-simp-TP}. Then we apply the conditions in
\eqref{eq:MP-swap-simp-CPPTP-1}--\eqref{eq:MP-swap-simp-CPPTP-3} to determine
further conditions on $K_{\hat{A}\hat{B}\hat{C}}$, $L_{\hat{A}\hat{B}\hat{C}}%
$, $M_{\hat{A}\hat{B}\hat{C}}$, and $N_{\hat{A}\hat{B}\hat{C}}$. Following the
same reasoning given in
\eqref{eq:partial-transpose-cond-1-swap-simp}--\eqref{eq:partial-transpose-cond-swap-simp-last},
we conclude that the following conditions hold%
\begin{align*}
T_{\hat{S}}\!\left(  K_{\hat{A}\hat{B}\hat{C}}+\frac{L_{\hat{A}\hat{B}\hat{C}%
}}{d+1}+\frac{M_{\hat{A}\hat{B}\hat{C}}}{d+1}+\frac{N_{\hat{A}\hat{B}\hat{C}}%
}{\left(  d+1\right)  ^{2}}\right)   &  \geq0,\\
\frac{1}{d-1}T_{\hat{S}}\!\left(  L_{\hat{A}\hat{B}\hat{C}}+\frac{N_{\hat
{A}\hat{B}\hat{C}}}{d+1}\right)   &  \geq T_{\hat{S}}\!\left(  K_{\hat{A}%
\hat{B}\hat{C}}+\frac{M_{\hat{A}\hat{B}\hat{C}}}{d+1}\right)  ,\\
\frac{1}{d-1}T_{\hat{S}}\!\left(  M_{\hat{A}\hat{B}\hat{C}}+\frac{N_{\hat
{A}\hat{B}\hat{C}}}{d+1}\right)   &  \geq T_{\hat{S}}\!\left(  K_{\hat{A}%
\hat{B}\hat{C}}+\frac{L_{\hat{A}\hat{B}\hat{C}}}{d+1}\right)  ,\\
T_{\hat{S}}\!\left(  K_{\hat{A}\hat{B}\hat{C}}+\frac{N_{\hat{A}\hat{B}\hat{C}%
}}{\left(  d-1\right)  ^{2}}\right)   &  \geq\frac{1}{d-1}T_{\hat{S}}\!\left(
L_{\hat{A}\hat{B}\hat{C}}+M_{\hat{A}\hat{B}\hat{C}}\right)  .
\end{align*}
for $\hat{S}\in\{\hat{A},\hat{B}\}$. Finally imposing the condition in
\eqref{eq:MP-swap-simp-CPPTP-3} and using the orthogonality of $\Gamma
_{AB^{\prime}}\otimes\Gamma_{BA^{\prime}}$, $\Gamma_{AB^{\prime}}\otimes
\frac{dI_{BA^{\prime}}-\Gamma_{BA^{\prime}}}{d^{2}-1}$, $\frac{dI_{AB^{\prime
}}-\Gamma_{AB^{\prime}}}{d^{2}-1}\otimes\Gamma_{BA^{\prime}}$, and
$\frac{dI_{AB^{\prime}}-\Gamma_{AB^{\prime}}}{d^{2}-1}\otimes\frac
{dI_{BA^{\prime}}-\Gamma_{BA^{\prime}}}{d^{2}-1}$, we conclude that%
\begin{equation}
T_{\hat{C}}(K_{\hat{A}\hat{B}\hat{C}}),T_{\hat{C}}(L_{\hat{A}\hat{B}\hat{C}%
}),T_{\hat{C}}(M_{\hat{A}\hat{B}\hat{C}}),T_{\hat{C}}(N_{\hat{A}\hat{B}\hat
{C}})\geq0.
\end{equation}
The rest of the proof then proceeds as in
\eqref{eq:final-steps-swap-simp-1}--\eqref{eq:final-steps-swap-simp-last}. We can furthermore exploit the extra symmetry in \eqref{eq:swap-ch-add-symmetry-for-LOCC} to simplify the SDP, as stated in Proposition~\ref{prop:swap-sdp-simplify-BCQT}. The
proof that $e_{\operatorname{PPT}}^{F}(\mathcal{S}_{AB}^{d},\rho_{\hat{A}%
\hat{B}\hat{C}})$ can be calculated by the same semi-definite program follows
from reasoning similar to that given in
Appendix~\ref{app:ch-infid-simplify-swap-LOCC}.

\end{document}